\documentclass[pra, onecolumn,superscriptaddress,nofootinbib]{revtex4}
\usepackage[a4paper, left=1in, right=1in, top=1in, bottom=1in]{geometry}

% Makes ligatured fonts searchable and copyable in pdf readers
\usepackage{cmap} % Load before fontenc 
\usepackage[utf8]{inputenc}
\usepackage[english]{babel}
\usepackage[T1]{fontenc}
\usepackage{appendix}
\usepackage{braket}
\usepackage{url, amsfonts, tikz, physics, listings, xcolor, graphicx, float}
\usepackage[shortlabels]{enumitem}
\usepackage{dsfont}
\usepackage{amsmath, amssymb, amstext, amscd, amsthm, makeidx, graphicx, url, mathrsfs, mathtools, longdivision, polynom, bbm, complexity}
\usepackage{algorithm2e}
\RestyleAlgo{ruled}
\usepackage{threeparttable}
\usepackage{tablefootnote}

\definecolor{blueviolet}{rgb}{0.2, 0.2, 0.6}
\definecolor{webgreen}{rgb}{0,.5,0}
\definecolor{webbrown}{rgb}{.6,0,0}
\usepackage[pdftex,
  bookmarks=false,
  colorlinks=true, %allcolors=blueviolet,
  urlcolor=webbrown,
  linkcolor=blueviolet, 
  citecolor=webgreen,
  pdfstartpage=1,
  pdfstartview={FitH},  % FitBH
  bookmarksopen=false
  ]{hyperref}

\makeatletter

\newcommand\RedeclareMathOperator{%
  \@ifstar{\def\rmo@s{m}\rmo@redeclare}{\def\rmo@s{o}\rmo@redeclare}%
}
% this is taken from \renew@command
\newcommand\rmo@redeclare[2]{%
  \begingroup \escapechar\m@ne\xdef\@gtempa{{\string#1}}\endgroup
  \expandafter\@ifundefined\@gtempa
     {\@latex@error{\noexpand#1undefined}\@ehc}%
     \relax
  \expandafter\rmo@declmathop\rmo@s{#1}{#2}}
% This is just \@declmathop without \@ifdefinable
\newcommand\rmo@declmathop[3]{%
  \DeclareRobustCommand{#2}{\qopname\newmcodes@#1{#3}}%
}
\@onlypreamble\RedeclareMathOperator
\makeatother

\allowdisplaybreaks

\RedeclareMathOperator*{\E}{{\mathbb{E}}}

\numberwithin{equation}{section}

\newtheorem{theorem}{Theorem}
\newtheorem{prop}{Proposition}
\newtheorem{lemma}{Lemma}
\newtheorem{corollary}{Corollary}
\newtheorem{definition}{Definition}

\newtheorem{remark}{Remark}

\newcommand*\diff{\mathop{}\!\mathrm{d}}

% crypto
\newcommand{\Gen}{\mathsf{Gen}}
\newcommand{\Enc}{\mathsf{Enc}}
\newcommand{\Dec}{\mathsf{Dec}}

\newcommand{\sk}{\mathsf{sk}}

\begin{document}

\title{Predicting adaptively chosen observables in quantum systems}

\author{Jerry Huang}
\email{jyhuang@alumni.caltech.edu}
\affiliation{Institute for Quantum Information and Matter, Caltech, Pasadena, CA, USA}

\author{Laura Lewis}
\affiliation{Institute for Quantum Information and Matter, Caltech, Pasadena, CA, USA}
\affiliation{Google Quantum AI, Venice, CA, USA}
\affiliation{University of Cambridge, Cambridge, UK}

\author{Hsin-Yuan Huang}
\affiliation{Institute for Quantum Information and Matter, Caltech, Pasadena, CA, USA}
\affiliation{Google Quantum AI, Venice, CA, USA}
\affiliation{Massachusetts Institute of Technology, Cambridge, MA, USA}

\author{John Preskill}
\affiliation{Institute for Quantum Information and Matter, Caltech, Pasadena, CA, USA}
\affiliation{AWS Center for Quantum Computing, Pasadena, CA, USA}

\date{\today}

\begin{abstract}
Recent advances have demonstrated that $\mathcal{O}(\log M)$ measurements suffice to predict $M$ properties of arbitrarily large quantum many-body systems. However, these remarkable findings assume that the properties to be predicted are chosen independently of the data. This assumption can be violated in practice, where scientists adaptively select properties after looking at previous predictions. This work investigates the adaptive setting for three classes of observables: local, Pauli, and bounded-Frobenius-norm observables.
We prove that $\Omega(\sqrt{M})$ samples of an arbitrarily large unknown quantum state are necessary to predict expectation values of $M$ adaptively chosen local and Pauli observables. We also present computationally-efficient algorithms that achieve this information-theoretic lower bound. In contrast, for bounded-Frobenius-norm observables, we devise an algorithm requiring only $\mathcal{O}(\log M)$ samples, independent of system size.
Our results highlight the potential pitfalls of adaptivity in analyzing data from quantum experiments and provide new algorithmic tools to safeguard against erroneous predictions in quantum experiments.
\end{abstract}

\maketitle

{\renewcommand\addcontentsline[3]{} \section{Introduction}}

Estimating properties of quantum systems using data collected from physical experiments is fundamental to the advancement of quantum information science. However, the required resources often scale unfavorably with system size $n$. For instance, full tomography of an $n$-qubit state requires a number of samples that grows exponentially in $n$ \cite{banaszek2013focus,blume2010optimal,gross2010quantum,hradil1997quantum,haah2017sample,o2016efficient}.
Recent results have addressed this scaling problem by demonstrating that many properties can be predicted from very few samples, even for very large $n$ \cite{aaronson2018shadow,huang2020predicting,paini2019approximate,cotler2019quantum,elben2022randomized}.
These sample-efficient protocols are known as shadow tomography.
In particular, the classical shadow formalism \cite{huang2020predicting} constructs a succinct classical description of an unknown state $\rho$ that accurately captures expectation values $\{o_i\}$ of a set of observables $\{O_i\}$,
\begin{equation}
    o_i(\rho)=\trace(O_i\rho), \quad 1\leq i \leq M,
\end{equation}
using only $\mathcal{O}(\log M)$ samples\footnote{The $\mathcal{O}(\log M)$ sample complexity applies when the observables have a shadow norm independent of $n$, which is true for many classes of physically-relevant observables.}. This achieves an exponential improvement in sample complexity compared to direct measurement of each observable, even when $M = \mathrm{poly}(n)$.

However, this efficiency relies on a critical assumption: the observables $\{O_i\}$ are chosen \textit{non-adaptively}. This means that the next observable $O_{i+1}$ cannot be selected based on the prediction outcomes for observables $O_1, O_2, \ldots, O_i$. While classical shadows can be acquired using randomized measurements \cite{elben2022randomized} with no knowledge of the observables, the prediction performance guarantee derived in \cite{huang2020predicting} assumes that the set of observables $\{ O_i \}$ is chosen independently of the measurement outcomes and the predicted values of other observables. In contrast, scientific research is often conducted adaptively: hypotheses are formulated from experimental data, tested using the same data, and then adaptively modified and retested. This process extends to quantum experiments, where one may learn certain properties of a quantum state and, inspired by the results, decide on additional properties to predict using the same data. However, there is currently no rigorous guarantee maintaining the $\mathrm{poly\,log}(M)$ sample complexity achieved by the classical shadow formalism when observables are chosen \emph{adaptively}.

This work addresses this gap by exploring the reusability of quantum data. We focus on answering the following central question:
\vspace{1em}
\begin{center}
    \emph{Can we still predict $M$ properties of an arbitrarily large quantum system\\from $\mathrm{poly\,log}(M)$ samples when the properties are chosen adaptively?}
\end{center}
Specifically, we seek to maintain the exponential reduction in sample complexity compared to direct observable measurement (as achieved by classical shadows in the non-adaptive setting) even when the number of properties $M = \mathrm{poly}(n)$. This necessitates avoiding any polynomial dependence of the sample complexity on the system size $n$. While existing protocols in shadow tomography \cite{aaronson2018shadow,aaronson2019gentle,badescu2020improved} achieve a polynomial reduction in sample complexity compared to direct measurement, with a sample complexity of $\mathcal{O}(n \log^2 n)$ for $M = \mathrm{poly}(n)$, they do not fully address our central question, as we aim for a superpolynomially reduced sample complexity of $\mathrm{poly\,log}(M)$.

Our work relates closely to adaptive data analysis in classical statistics \cite{DBLP:journals/corr/DworkFHPRR14,dwork2015generalization, bassily2015algorithmic,pmlr-v51-russo16,feldman2017generalization,feldman2018calibrating,jung2019new,ganesh2020privately,dagan2021boundednoise,ghazi2020avoiding,blanc2023subsampling}, where similar challenges arise in maintaining statistical validity under adaptive hypothesis testing \cite{ioannidis2005contradicted,ioannidis2005most,prinz2011believe,begley2012raise,gelman2016statistical}. In the classical setting, the best known sample complexity for predicting $M$ adaptively chosen queries is $\mathcal{O}(\sqrt{M})$ \cite{dagan2021boundednoise,ghazi2020avoiding,bassily2015algorithmic} when dimension dependence is not allowed, or $\mathcal{O}(\sqrt{d}\log M)$ samples where $d$ is the dataset dimension \cite{5670948,bassily2015algorithmic,steinke2015interactive}.
Despite significant progress in classical statistics, it is not immediately clear whether these results apply to physically relevant classes of observables in the quantum setting, and whether the classical shadow formalism is compromised by adaptively choosing the observables.

By building on techniques in shadow tomography and adaptive data analysis, we present both positive results and impossibility theorems for three physically-relevant classes of observables: local, Pauli, and bounded-Frobenius-norm observables. Estimation of these observables are key subroutines in algorithms for quantum chemistry \cite{peruzzo2014variational,crawford2020efficient,huggins2019efficient,izmaylov2019unitary,kandala2017hardware}, quantum field theory \cite{kokail2019self}, and linear algebra \cite{huang2019near}, as well as protocols for estimating fidelities \cite{flammia2011direct,da2011practical, huang2020predicting} and verifying entanglement \cite{guhne2009entanglement, elben2020mixed, elben2022randomized}. Our main findings are:
\begin{enumerate}
    \item For local and Pauli observables, we prove a lower bound of $\Omega(\sqrt{M})$ samples for any algorithm that accurately predicts adaptively chosen properties. We complement this with computationally-efficient algorithms achieving matching upper bounds.
    \item For bounded-Frobenius-norm observables, we devise a (computationally inefficient) algorithm that accurately predicts expectation values with a sample complexity of $\mathcal{O}(\log M)$, independent of system size. This improves upon existing shadow tomography protocols by exploiting the low-dimensional nature of these observables.
\end{enumerate}
The $\Omega(\sqrt{M})$ lower bound demonstrates that it is impossible to achieve the desired $\mathrm{poly} \log(M)$ sample complexity even for predicting local and Pauli observables. We note that this surprising result requires the system size to be exponential in $M$ for local observables and polynomial in $M$ for Pauli observables. Hence, this impossibility result only applies to systems containing many more qubits than the number of observables~$M$. For bounded-Frobenius-norm observables, the $\mathcal{O}(\log M)$ sample complexity improves upon the best-known scaling of $\mathcal{O}(n\log^2 M)$ \cite{badescu2020improved}.

Our work makes several key contributions. First, we demonstrate that a non-trivial extension of lower bounds in classical adaptive data analysis can yield lower bounds for adaptively-chosen local and Pauli observables in quantum systems. Second, by identifying and leveraging the connection between existing classical adaptive data analysis algorithms and classical shadows, we obtain upper bounds for local and Pauli observables in the quantum setting. Finally, we introduce a new (computationally inefficient) algorithm for predicting expectation values of bounded-Frobenius-norm observables that achieves logarithmic sample complexity in the adaptive setting.
In the process, we develop new algorithms for quantum threshold search~\cite{badescu2020improved} and shadow tomography which may be of independent interest.
In the following sections, we present our results in detail, including lower bounds, matching upper bounds, and our novel algorithm for bounded-Frobenius-norm observables. We conclude with a discussion of the broader implications of our work and potential directions for future research.

{\renewcommand\addcontentsline[3]{} \section{The Adaptive Model} \label{mainsec: preliminaries}} 

\begin{figure}[t]
    \centering
    \includegraphics[scale=0.6]{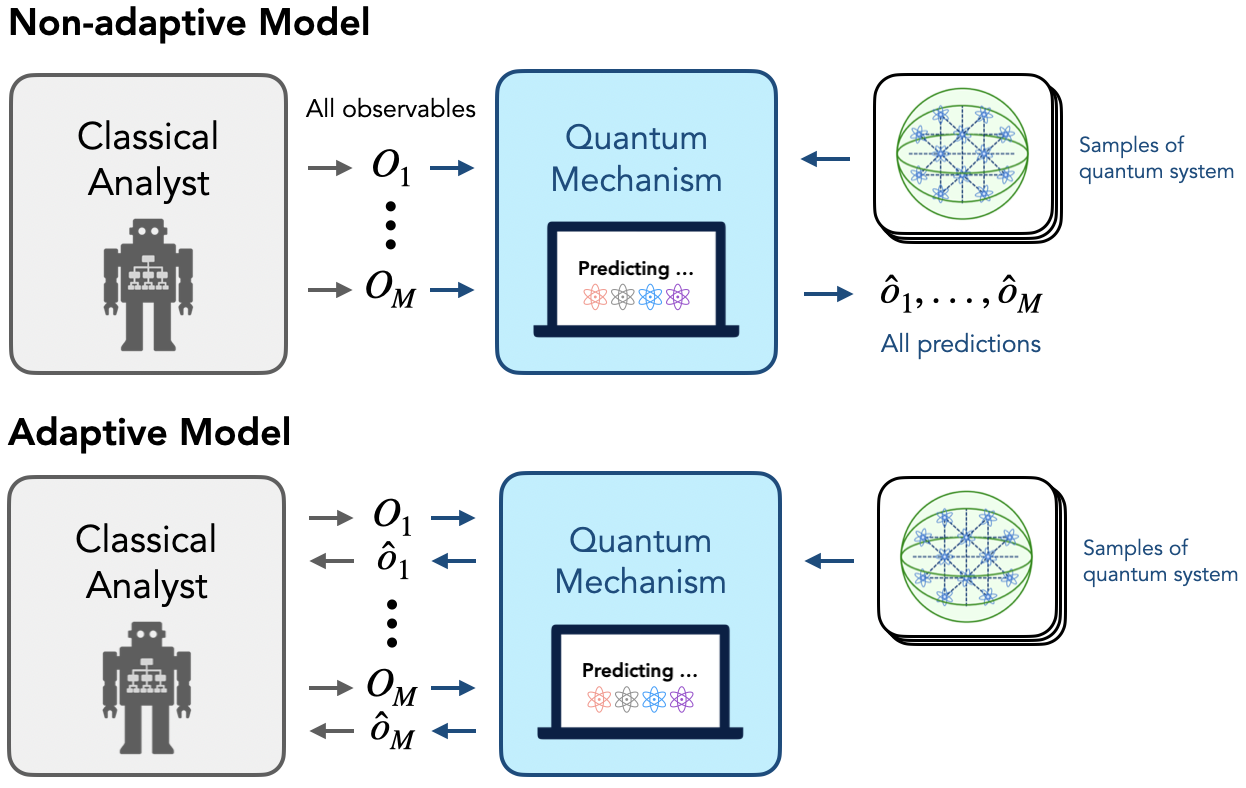}
    \caption{\textbf{Non-adaptive vs. adaptive model.} The analyst queries observables $O_1,...,O_M$ to the mechanism. The mechanism is given access to samples of a quantum state $\rho$ and can conduct quantum processing and measurements in order to output predictions $\hat{o_i}$ for the expectation values $\tr(O_i\rho)$. In the adaptive model (bottom), the analyst can condition on all prior observed predictions $\hat{o}_1,...,\hat{o}_{i-1}$ in order to select the next observable $O_i$. In contrast, a non-adaptive analyst (top) cannot observe prior predictions before querying observables.}
    \label{fig:adaptive-model}
\end{figure}
We introduce the model describing the adaptive interaction, with the full formal description provided in Appendix \ref{app: interaction model}. We conceptualize the predictive task as a game between an \emph{analyst} (e.g., a human scientist) and a \emph{mechanism} (e.g., a prediction algorithm), where only the mechanism has access to samples of the unknown quantum state $\rho$. This framework aligns with standard models in classical adaptive data analysis~\cite{DBLP:journals/corr/DworkFHPRR14}.
We consider the analyst to be classical, restricted to classical memory and processing, while the mechanism has access to quantum memory and has quantum processing capabilities.
The analyst-mechanism interaction unfolds over $M$ rounds, where in each round $i \in [M] \triangleq \{1,\ldots, M\}$: (1) the analyst selects an observable $O_i$ to query, and (2) the mechanism performs quantum processing on the samples and responds with an estimate $\hat{o}_i\approx\trace(O_i\rho)$. Crucially, the analyst can observe all prior responses $\hat{o}_1,\ldots,\hat{o}_{i-1}$ and utilize this information to determine the next observable $O_i$ to query.
This adaptive decision-making process characterizes the \emph{adaptive} nature of the interaction.
A visual representation of this interaction model is presented in Figure~\ref{fig:adaptive-model}.

% \vspace{2em}
{\renewcommand\addcontentsline[3]{} \subsection{Adaptive attack on classical shadows} \label{mainsec:attack}}

To illustrate the vulnerabilities introduced by adaptivity, we construct a simple adaptive attack executed by a classical analyst. This attack demonstrates how adversarial behavior can rapidly compromise the original classical shadows protocol~\cite{huang2020predicting}. For a review of the classical shadow formalism, we refer readers to Appendix~\ref{sec:non-adapt}.
Our proposed attack is inspired by the linear classification attack from \cite{DBLP:journals/corr/DworkFHPRR14}, which in turn draws from Freedman's Paradox~\cite{freedman1983note} --- a well-known example of standard statistical procedures yielding highly misleading results. Freedman's Paradox considers a regression problem where, given access to a large group of variables predicting a response variable, we (1) select a smaller group of variables correlated with the response variable and (2) fit a linear regression model on these variables. Surprisingly, standard statistical tests (e.g., an F-test) report the model as a good fit, even when no correlation exists between the input variables and the response variable.

A similar failure can occur in the quantum setting. Consider $N$ samples of a diagonal density matrix $\rho$ of system size $n=M+2^M$, where the first $M$ qubits represent variables and the last $2^M$ qubits correspond to all possible models that can be fit, each conditioned on a unique subset of the $M$ variables. The qubits are correlated such that the spins of the last $2^M$ qubits are determined by a majority vote of the spins of a subset of the first $M$ qubits.

We examine an analyst's adaptive attack querying only single-qubit observables from the set $\{Z_i:i \in [n]\}$, where $Z_i$ is the Pauli $Z$ observable acting solely on the $i$-th qubit. While the classical shadows protocol can accurately estimate expectation values of exponentially many observables in the number of samples $N$ for non-adaptive queries~\cite{huang2020predicting}, our adaptive attack causes the protocol to fail with high probability when the number of queries is only $M=\mathcal{O}(N)$, even for single-qubit observables.

The attack mimics Freedman's paradox in a two-step process:
\vspace{1em}
\begin{enumerate}
    \item Query $Z_i$ for each of the first $M$ qubits.
    \item Select a subset of qubits whose estimated expectation values are suitably biased towards $1$, and query $Z_j$ for the qubit correlated with this subset.
\end{enumerate}
This process yields a similar outcome to Freedman's paradox: the classical shadows estimate for $\trace(Z_j\rho)$ will likely be greater than $0.99$, even if the true value is $0$. Notably, only the last query is adaptive and is conditioned on the results of the first $M$ non-adaptive queries. We empirically demonstrate this phenomenon in Figure~\ref{fig:adaptive-attack}. Because the density matrix is diagonal and we only query $Z_i$ observables, we can simulate this protocol classically. The full description and proof of the attack are provided in Appendix~\ref{sec:attack}, with details of the numerical experiment in Appendix~\ref{sec:numerics}.

\begin{figure}[t]
    \centering
    \includegraphics[scale=0.8]{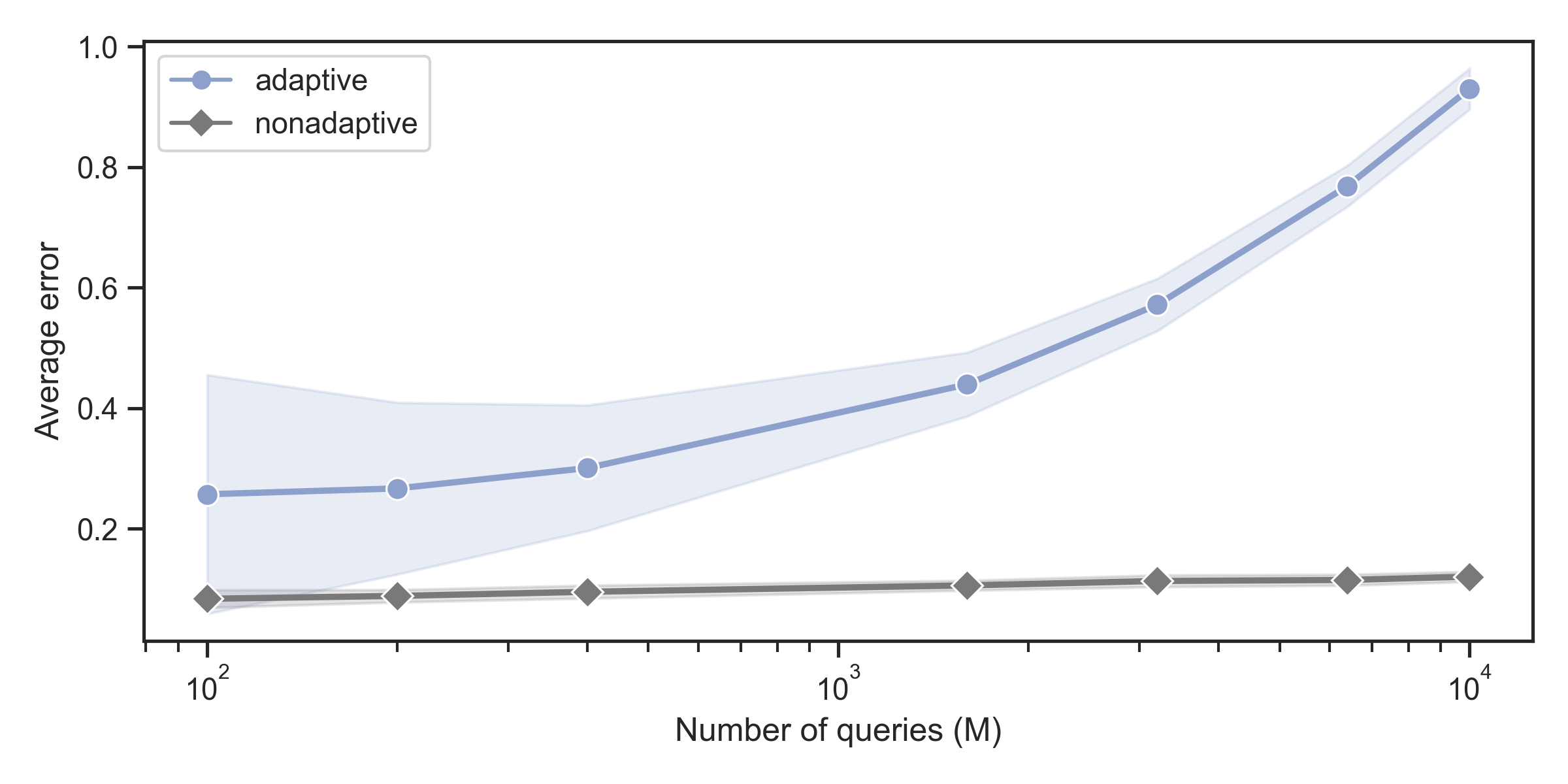}
    \caption{\textbf{The dangers of adaptivity.} Each point indicates the average error (averaged over $100$ independent runs) of the classical shadows protocol after being asked $M$ queries from the analyst for a fixed sample size of $N = 10000$. The shaded regions show the standard deviation over the independent runs. We see that after many adaptively chosen queries, the average error of classical shadows is much higher than for nonadaptive queries.}
    \label{fig:adaptive-attack}
\end{figure}

This example underscores the potentially detrimental effect of adaptivity on the accuracy of predictive algorithms: for single-qubit observable queries, a single adaptive query suffices to induce false predictions. Our goal is to prevent such failures and derive rigorous guarantees against false discovery in quantum experiments for specific classes of observables.

% \vspace{2em}
{\renewcommand\addcontentsline[3]{} \section{Main Results}}

Our primary focus is on the task of predicting properties $\tr(O_i\rho)$, for $i = 1,\dots, M$, where $\rho$ is an unknown $n$-qubit quantum state and $O_i$ is an observable. We aim to design a mechanism that can accurately predict these properties to within $\epsilon$ error using minimal samples/copies of $\rho$, even when the properties are chosen \textit{adaptively}. In the following subsections, we address this challenge for three distinct classes of observables: local, Pauli, and bounded-Frobenius-norm observables.

\vspace{2em}
{\renewcommand\addcontentsline[3]{} \subsection{Local observables}}

We define local observables as those acting on a constant number $k = \mathcal{O}(1)$ of qubits. Recall that the adaptive attack described in Section~\ref{mainsec:attack} can cause the classical shadows protocol to fail by querying only $M = \mathcal{O}(N)$ local observables, where $N$ is the number of samples of $\rho$. Our main result in this section demonstrates that $\mathcal{O}(N^2)$ local observables suffice to cause \textit{any} quantum mechanism to fail, given a sufficiently large system size $n$.
In other words, we prove an information-theoretic sample complexity lower bound of $N=\Omega(\sqrt{M})$ for predicting $M$ adaptively chosen local observables when the system size $n$ is exponential in $M$. We also show that this lower bound can be saturated by a computationally-efficient quantum mechanism. The formal statement is given as follows.

\vspace{1em}
\begin{theorem}[Local observables]
    \label{thm:local}
    For predicting expectation values $\trace(O_i\rho)$ of $M$ adaptively chosen local observables $O_1,\dots, O_M$, any quantum mechanism must use $N=\Omega(\min\{\sqrt{M},\log n\})$ samples. Moreover, there exists a computationally-efficient quantum mechanism using $N={\mathcal{O}}(\min\{\sqrt{M},\log n\})$ samples, running in time $\mathrm{poly}(N, n)$ per observable queried.
\end{theorem}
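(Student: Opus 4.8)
**The plan splits into two parts: a lower bound ($N = \Omega(\min\{\sqrt{M},\log n\})$) and a matching constructive upper bound.**\textbf{Proof proposal.}

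\emph{Lower bound.} The plan is to reduce from the classical adaptive data analysis lower bound of $\Omega(\sqrt{M})$ (cf.\ \cite{dagan2021boundednoise,ghazi2020avoiding,bassily2015algorithmic}, building on the fingerprinting/Freedman's paradox attack of \cite{DBLP:journals/corr/DworkFHPRR14}) and encode the hard instance into a quantum state accessible only through single-qubit $Z$-type local observables. The key observation is that the attack sketched in Section~\ref{mainsec:attack} is almost a full lower bound already: one takes $\rho$ to be diagonal in the computational basis, so that the $N$ quantum samples carry no more information than $N$ i.i.d.\ classical samples from the corresponding distribution, and a query $Z_i$ returns exactly the ``mean of coordinate $i$'' statistical query. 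First I would fix a distribution over data universes $\{\pm1\}^{M}$ augmented with the $2^{M}$ ``model'' coordinates as in the attack; then I would argue that \emph{any} quantum mechanism answering $M$ adaptive $Z$-queries to within $\epsilon$, using $N$ copies of the diagonal $\rho$, yields a classical mechanism answering $M$ adaptive linear (statistical) queries over $N$ classical samples with the same accuracy --- because measuring diagonal states in the computational basis is the optimal (indeed only informative) strategy, a point one can make rigorous via a Holevo-type or data-processing argument, or simply by noting the whole interaction can be purified and then dephased without loss. Invoking the classical $\Omega(\sqrt M)$ bound then forces $N=\Omega(\sqrt M)$. The $\log n$ cap appears because the hard instance needs $n \gtrsim M + 2^{M}$ qubits to host all $2^M$ model coordinates, so $M \lesssim \log n$; hence the bound is $\Omega(\min\{\sqrt M,\log n\})$. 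The Pauli case is analogous but the universe only needs $\mathrm{poly}(M)$ coordinates, which is why the paper states the system size requirement there is merely polynomial --- but for Theorem~\ref{thm:local} the exponential blow-up is intrinsic to locality.

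\emph{Upper bound.} For the algorithm, the plan is to exploit the connection (advertised in the introduction) between classical shadows and the classical ``query-by-committee / stability'' algorithms for adaptive data analysis. Concretely: take $N = \mathcal{O}(\min\{\sqrt M,\log n\})$ copies of $\rho$, measure each in a random single-qubit basis (the standard classical-shadow measurement primitive of \cite{huang2020predicting}), and store the resulting classical shadow. When the analyst queries a $k$-local observable $O_i$, do \emph{not} output the raw shadow estimate; instead, split the shadow into batches and output the batch-median (``median-of-means''), and --- crucially for adaptive robustness --- add a carefully calibrated amount of Laplace/Gaussian noise and/or round to a grid, exactly as in the classical transcript-compression arguments. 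The point is that the transcript of answers has bounded mutual information with the raw data: each answer is an $\epsilon$-accurate, $O(\log(1/\epsilon))$-bit, differentially-private-in-the-dataset function of the shadow, so a standard information-theoretic generalization bound (Dwork--Feldman--Hardt--Pitassi--Reingold--Roth / Russo--Zou style) shows that $M$ adaptively chosen queries behave like $M$ non-adaptively chosen ones up to an $O(\sqrt M)$ loss in effective accuracy, which the $O(\sqrt M)$ sample budget absorbs. One then checks that for $k = O(1)$ the shadow norm of each $O_i$ is $O(1)$ (independent of $n$), so each individual median-of-means estimate is accurate with the stated sample count, and that the per-query runtime is $\mathrm{poly}(N,n)$: computing a $k$-local expectation from a product-measurement classical shadow costs $O(N \cdot 3^k \cdot \mathrm{poly}(n)) = \mathrm{poly}(N,n)$, and the noise addition is trivial.

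\textbf{Main obstacle.} The hard part, I expect, is the lower-bound reduction: making precise the claim that a quantum mechanism with coherent access to $N$ copies of a \emph{diagonal} $\rho$, answering adaptive $Z$-observable queries, gains nothing over a classical statistical-query algorithm with $N$ classical samples. One must rule out cleverness such as entangling the $N$ copies, delaying measurement, or using the $2^M$ ``model'' qubits in some quantum-advantageous way; the clean way is a purify-then-dephase argument showing the optimal mechanism may as well measure all copies in the computational basis up front, reducing everything to classical data, after which the known $\Omega(\sqrt M)$ attack transcript applies verbatim. A secondary subtlety is bookkeeping the two regimes $\sqrt M \lessgtr \log n$ consistently on both the lower- and upper-bound sides so that the $\min$ appears symmetrically. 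The upper bound is comparatively routine once the right classical ADA algorithm is imported, since it is essentially ``classical shadows + median-of-means + transcript noise,'' all of whose pieces are off-the-shelf.
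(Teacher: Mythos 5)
Your upper-bound sketch is in the right spirit (classical shadows fed to a classical ADA algorithm), though you omit the $\log n$ branch of the $\min$ --- that term comes from a separate, simpler argument: estimate all $3^k\binom{n}{k}=\mathcal{O}(n^k)$ Pauli strings non-adaptively using ordinary classical shadows, then answer any adaptively chosen $k$-local query by linear combination --- and your ``median-of-means $+$ transcript noise $+$ Russo--Zou'' route incurs an extra $\log M$ factor, whereas the paper plugs classical shadows into the tighter Dagan--Kur mechanism to get a clean $\sqrt{M}$. These are fixable. The lower bound, however, has a structural gap.

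You propose to ``invoke the classical $\Omega(\sqrt{M})$ bound'' after reducing to a diagonal $\rho$. But the classical lower bounds you cite hold for analysts allowed to ask \emph{arbitrary} statistical queries; here the analyst is restricted to single-coordinate queries $q_i(x)=x_i$ (the classical shadow of a $Z_i$ query). No off-the-shelf $\Omega(\sqrt M)$ lower bound for this $1$-sparse query class existed, which is precisely why the paper must re-build the interactive-fingerprinting-code (IFPC) embedding from scratch: each of the $d=\Theta(\sqrt M)$ colluding ``users'' is represented by a computational basis state, each of the $M$ rounds gets a fresh block of $2^d$ qubits so that \emph{every} possible watermark vector $c\in\{0,1\}^d$ is realizable by querying a single qubit $Z_{(\sigma_j(c),j)}$, and a uniformly random permutation $\sigma_j$ (the quantum analogue of encryption in \cite{steinke2015interactive}) hides which coordinate encodes which watermark, guaranteeing the mechanism only ever learns the watermarks of non-accused colluders --- the IFPC side condition. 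The paper then \emph{derives} the restricted-query classical bound as Corollary~\ref{cor:local_queries}; relying on it as an input would be circular. Relatedly, the obstacle you flag as hard --- ruling out a quantum advantage from entanglement/delayed measurement --- is actually handled for free: IFPC guarantees hold against computationally unbounded adversaries, so any quantum mechanism, being simulable classically with no runtime constraint, is covered. Your purify-then-dephase intuition is sound but not the crux. Finally, the $n\gtrsim M+2^M$ estimate you give comes from the warmup attack of Section~\ref{mainsec:attack}, which only breaks the specific classical-shadows protocol (after $O(N)$ queries) and is not a lower bound against arbitrary mechanisms; it would also give $\log n=\Theta(M)$ rather than $\Theta(\sqrt M)$, breaking the $\min\{\sqrt M,\log n\}$ matching. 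The IFPC construction uses $n=\Theta(M\,2^{\Theta(\sqrt M)})$, for which $\log n=\Theta(\sqrt M)$, and that is why both sides of the theorem meet at $\min\{\sqrt M,\log n\}$.
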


To prove the lower bound, we construct a density matrix $\rho$ and an analyst strategy for adaptively querying $M$ local observables, inspired by lower bounds in classical adaptive data analysis~\cite{hardt2014preventing, dwork2015preserving,steinke2015interactive} based on interactive fingerprinting codes~\cite{fiat2001dynamic}. The proof ideas are presented in Section~\ref{mainsec: proof ideas}, with detailed proofs in Appendix~\ref{sec:lb}.

The matching upper bound is achieved by applying classical adaptive data analysis algorithms~\cite{bassily2015algorithmic,feldman2017generalization} to classical shadow data~\cite{huang2020predicting}. We provide two computationally-efficient algorithms in Appendix~\ref{sec:classical shadow ub} and Appendix~\ref{app: acs}, both running in time $\mathrm{poly}(N,n)$ per observable. The first algorithm, specific to local observables, achieves a sample complexity of $\mathcal{O}(\sqrt{M})$. The second, more general algorithm applies to any class of observables and measurement primitive, achieving an upper bound of $\mathcal{O}(\sqrt{M}\log M \cdot \|O\|^2_\mathrm{shadow})$, where $\|\cdot \|_\mathrm{shadow}$ is the shadow norm dependent on the observables and measurement primitive.
For local observables, $\|O\|^2_\mathrm{shadow}$ is constant, so we achieve nearly the same $\tilde{\mathcal{O}}(\sqrt{M})$ upper bound. We expect this second algorithm to be of independent interest, as it applies to any observable.

{\renewcommand\addcontentsline[3]{} \subsection{Pauli observables}}

For Pauli observables $P \in \{I,X,Y,Z\}^{\otimes n}$, we demonstrate that any quantum mechanism requires at least $\Omega(\sqrt{M})$ samples, assuming a sufficiently large system size. Unlike local observables, the $\Omega(\sqrt{M})$ lower bound holds when the system size is polynomial in $M$.

\vspace{1em}
\begin{theorem}[Pauli observables]
    \label{thm:pauli}
    For predicting expectation values $\trace(P_i\rho)$ of $M$ adaptively chosen Pauli observables $P_1,\dots, P_M \in \{I, X, Y, Z\}^{\otimes n}$, any quantum mechanism must use $N=\Omega(\min\{\sqrt{M},n^{1/3}\})$ samples. Moreover, there exists a computationally-efficient quantum mechanism using $N=\mathcal{O}(\min\{\sqrt{M},n\})$ samples, running in time $\mathrm{poly}(N, n)$ per observable queried.
\end{theorem}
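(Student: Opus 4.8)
The plan is to prove the two halves of Theorem~\ref{thm:pauli} separately: for the lower bound I would reuse the interactive-fingerprinting-code machinery behind Theorem~\ref{thm:local} but replace the exponentially large "gadget" construction by a compact encoding tailored to Pauli observables, and for the upper bound I would combine a classical-shadow-based algorithm with an adaptive shadow-tomography routine.

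\textbf{Lower bound.} First I would fix an interactive fingerprinting code $(\mathrm{Gen},\mathrm{Trace})$~\cite{fiat2001dynamic,steinke2015interactive} for $N'$ users whose tracing guarantee kicks in after $\tilde{\mathcal{O}}(N'^2)$ adaptively chosen rounds. Next I would encode a freshly generated codebook into a quantum state $\rho=\tfrac{1}{N'}\sum_{i=1}^{N'} |\psi_i\rangle\langle\psi_i|$ on $n=\mathrm{poly}(N')$ qubits, where each $|\psi_i\rangle$ is a tensor product of single-qubit stabilizer states recording the $i$-th codeword, arranged so that for every coordinate subset $S$ the Pauli observable $\bigotimes_{j\in S} Z_j$ reports exactly the subset-parity "marginal" of the codebook that the code's analyst needs. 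This is the point that distinguishes the Pauli case from the local one: a single Pauli string computes the parity over an arbitrary subset on a product state, so no extra qubit is required per adaptive branch, and $n$ stays polynomial in $M$ rather than exponential. A classical analyst then simulates the interactive code, translating each adaptively chosen query into the corresponding Pauli observable and feeding the mechanism's estimate to $\mathrm{Trace}$. If the mechanism is $\epsilon$-accurate its estimates are close to the true subset-parities, the completeness of the code outputs some user $i^*$, and — via a hybrid argument showing that for $N'$ sufficiently large $\rho^{\otimes N}$ is statistically close to a uniform mixture over $N$-tuples of distinct codewords, so that any (possibly entangled) $N$-copy measurement can be simulated from $N$ independent uniformly random codewords — the soundness of the code forces $N=\tilde{\Omega}(\sqrt M)$, recovering the classical bound. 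Tracking how large $N'$, and hence $n$, must be for this closeness step yields the $n^{1/3}$ cutoff, giving $N=\Omega(\min\{\sqrt M, n^{1/3}\})$.

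\textbf{Upper bound.} I would provision $N=\mathcal{O}(\min\{\sqrt M, n\})$ copies and run whichever of two algorithms is cheaper. When $n\lesssim\sqrt M$, use an adaptive shadow-tomography procedure built on the new quantum threshold-search subroutine~\cite{badescu2020improved}: maintain a hypothesis state and update it multiplicatively whenever a queried Pauli is inconsistent with it, using fresh copies to test consistency, so that an entropy/potential argument caps the number of updates at $\tilde{\mathcal{O}}(n)$ and the dependence on $M$ is only logarithmic. When $n\gtrsim\sqrt M$, instead take $N$ classical shadows of $\rho$ and answer each query by running a differential-privacy-based adaptive-data-analysis algorithm~\cite{bassily2015algorithmic,feldman2017generalization} on the shadow dataset, viewing ``estimate $\tr(P_i\rho)$'' as a linear query on the shadows; with an appropriate randomized-measurement primitive and a median-of-means wrapper making the per-shadow query value bounded, this costs $N=\tilde{\mathcal{O}}(\sqrt M)$ samples and $\mathrm{poly}(N,n)$ time per query. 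Combining the two regimes gives $N=\mathcal{O}(\min\{\sqrt M, n\})$ with the claimed per-observable running time.

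\textbf{Main obstacle.} I expect the lower bound to be the hard part, and within it the encoding step: one must design $\rho$ and the query map so that (a) the analyst only ever needs genuine Pauli observables to drive the interactive code, (b) $n$ stays polynomial in $M$, and (c) the entangled-$N$-copy-to-$N$-independent-samples hybrid step loses only a polynomial factor, which is what keeps the $\sqrt M$ bound intact rather than degrading it. On the upper-bound side the delicate point is guaranteeing that the classical-shadow queries handed to the adaptive-data-analysis algorithm have range/sensitivity independent of $n$ for arbitrary Pauli observables, since this is precisely what decouples the $\sqrt M$ branch from the system size.
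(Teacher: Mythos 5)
Your high-level framework (interactive fingerprinting codes for the lower bound, differentially private adaptive data analysis for the upper bound) matches the paper, but both halves have a missing ingredient that would cause the argument to fail.

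On the lower bound, the gap is the secrecy condition of the IFPC game. In \textsf{IFPC} the adversary must see only $c^j_{S^j}$, the codeword bits of \emph{non-accused colluders}. But in the quantum interaction, the mechanism also sees the query string $b^j$ that the analyst sends, and if $Z_{b^j}$ directly encodes the fingerprinting challenge $c^j$ then the mechanism learns $c^j$ for \emph{all} users from the queries alone, breaking the reduction. Your "hybrid argument" that $\rho^{\otimes N}$ is close to $N$ independent codewords addresses only leakage through the samples, not leakage through the queries, and so does not establish the IFPC interaction model. The paper handles this with cryptography: a fresh one-time-pad key per round is encoded into $\rho$, the analyst sends $Z_{\mathrm{Enc}_{\sk^j}(q^j)}$, and perfect secrecy guarantees the mechanism can decrypt only the coordinates whose keys are present in its samples — precisely the colluding users. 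This is also what makes the $n$ accounting concrete: $n = \Theta(\log d + 2Md)$ with $d = \Theta(N)$ and $M = \mathcal{O}(N^2)$ gives $n = \Omega(M^{3/2})$, hence $N = \Omega(\min\{\sqrt{M}, n^{1/3}\})$.

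On the upper bound, the $\sqrt{M}$ branch via classical shadows does not go through for general Pauli observables. For a Pauli string of weight $w$ the Pauli-measurement shadow estimator has variance $\Theta(3^w)$, and with Clifford shadows the relevant norm is $\tr(P^2) = 2^n$, so there is no measurement primitive for which the per-shadow query value of $\tr(P\hat\rho)$ has $n$-independent range or variance; the "appropriate primitive + median-of-means wrapper" you gesture at does not exist for this observable class. The paper instead uses the Bell-sampling algorithm of~\cite{huang2021information}: the statistic $q_P(S^{(t)}) = \prod_k \tr((\sigma_k\otimes\sigma_k)S_k^{(t)})$ is \emph{deterministically} in $\{\pm 1\}$ and estimates $|\tr(P\rho)|^2$, so Theorem~\ref{thm:adapt algo} with $C=1$ applies directly to give the $\tilde{\mathcal{O}}(\sqrt{M}/\epsilon^4)$ branch, while the sign-estimation step already reuses copies safely via the quantum union bound. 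Your $n$ branch (adaptive shadow tomography with threshold search) is a valid alternative route, but it carries an extra $\log^2 M$ factor; the paper simply invokes the non-adaptive guarantee to estimate all $4^n$ Paulis with $\mathcal{O}(n/\epsilon^4)$ samples, which is both simpler and tighter.
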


Similarly to the local observables case, the lower bound is obtained by constructing an adversarial choice of density matrix and analyst strategy, leveraging interactive fingerprinting codes~\cite{fiat2001dynamic}.
However, the particular construction differs from the case of predicting local observables in order to embed the interaction within a polynomial number of qubits. Proof ideas are presented in Section~\ref{mainsec: proof ideas}, with detailed proofs in Appendix~\ref{app: pauli}.

The $\mathcal{O}(\sqrt{M})$ upper bound is achieved by applying a computationally-efficient classical adaptive data analysis algorithm~\cite{bassily2015algorithmic} to the quantum machine learning (ML) algorithm from \cite{huang2021information}. The $\mathcal{O}(n)$ upper bound uses the non-adaptive quantum ML algorithm~\cite{huang2021information} to predict all $4^n$ Pauli observables.
As our main focus is on the $M$ scaling, we leave open the problem of closing the gap between $\mathcal{O}(n)$ and $\Omega(n^{1/3})$.

{\renewcommand\addcontentsline[3]{} \subsection{Bounded-Frobenius-norm observables}}

We define bounded-Frobenius-norm observables as those observables $O$ satisfying $\trace(O^2) \leq B$, where $B = \mathcal{O}(1)$.
In this case, we demonstrate that $M$ adaptively chosen observables can be predicted with only $\mathcal{O}(\log(M))$ samples.

\vspace{1em}
\begin{theorem}[Bounded-Frobenius-norm observables] \label{thm: bfo1}
    For any sequence of adaptively chosen observables $O_1,...,O_M$ satisfying $\trace(O_i^2) = \mathcal{O}(1)$ for all $i$, there exists a (computationally-inefficient) mechanism using $N=\Tilde{\mathcal{O}}(\log M / \epsilon^4)$ samples that can predict $\trace(O_1\rho),...,\trace(O_M\rho)$ to $\epsilon$ additive error.
\end{theorem}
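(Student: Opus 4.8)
Proof proposal for Theorem~\ref{thm: bfo1}.

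The plan is to realize the mechanism as an instance of \emph{online shadow tomography} --- maintaining a hypothesis state and only updating it when the data reveals a mistake --- and to observe that the bounded Frobenius norm yields a \emph{dimension-free} bound on the number of updates, which is precisely what removes the factor $\log d = n$ present in general shadow tomography \cite{aaronson2018shadow,badescu2020improved}. The mechanism keeps a classical description of a hypothesis density matrix $\sigma_t$ on $d = 2^n$ dimensions, initialized to $\sigma_1 = I/d$. When the analyst submits $O_i$, the mechanism replies with the hypothesis value $\trace(O_i\sigma_t)$; before committing this reply it \emph{validates} it against $\rho$ using a gentle quantum threshold-search primitive \cite{badescu2020improved} applied to the stored batch of copies of $\rho$, reusing the same copies across validations. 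If a violation $|\trace(O\sigma_t)-\trace(O\rho)| > \epsilon$ is detected for some previously queried $O$, the mechanism performs an update $\sigma_t \mapsto \sigma_{t+1}$ and continues. Because the whole interaction is processed sequentially and the update bound below is worst-case over the observable sequence, the adaptivity of the analyst --- conditioning $O_i$ on the earlier replies --- causes no difficulty beyond standard bookkeeping.

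The heart of the argument is the update bound. I would use the potential $\Phi_t = \|\sigma_t - \rho\|_F^2 = \trace\!\big((\sigma_t-\rho)^2\big)$; since $\sigma_1,\rho$ are density matrices, $\Phi_1 = \trace(\sigma_1^2)+\trace(\rho^2)-2\trace(\sigma_1\rho) \le 2$. An update takes the form $\sigma_{t+1} = \argmin_{\sigma \in K}\|\sigma-\sigma_t\|_F$, where $K$ is the convex intersection of the set of density matrices with the slab $\{\sigma : |\trace(O\sigma) - m| \le \epsilon/2\}$ and $m$ is an $\epsilon/4$-accurate estimate of $\trace(O\rho)$ returned by the threshold search. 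Both $\rho$ and $\sigma_{t+1}$ lie in $K$, so the Pythagorean inequality for projection onto a convex set gives $\Phi_{t+1} \le \Phi_t - \|\sigma_{t+1}-\sigma_t\|_F^2$. Moreover the distance from $\sigma_t$ to the slab is $(|\trace(O\sigma_t)-m| - \epsilon/2)/\|O\|_F \ge \Omega(\epsilon/\sqrt{B})$, using that the functional $\sigma\mapsto\trace(O\sigma)$ has gradient $O$ of Frobenius norm $\|O\|_F = \sqrt{\trace(O^2)} \le \sqrt{B}$. Hence each update decreases $\Phi_t$ by $\Omega(\epsilon^2/B)$, so there are at most $\mathcal{O}(B/\epsilon^2) = \mathcal{O}(1/\epsilon^2)$ updates --- independent of $n$. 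This is exactly where the bounded Frobenius norm is used: for local or Pauli observables $\trace(O^2) = \Theta(2^n)$ and the analogous bound is vacuous; equivalently, we are replacing the relative-entropy potential of general shadow tomography, whose diameter over states is $\log d = n$, with the Frobenius potential, whose diameter is $\mathcal{O}(1)$.

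For the sample count, I would invoke a gentle quantum threshold-search routine (the variant developed in this paper, building on \cite{badescu2020improved}): given a hypothesis $\sigma_t$ and the at most $M$ observables queried so far, it consumes $\tilde{\mathcal{O}}(\log M/\epsilon^2)$ copies of $\rho$ and, with high probability, either certifies that all committed replies are $\epsilon$-accurate --- and, being gentle in this case, leaves the remaining copies essentially undisturbed --- or returns a violated $O$ together with an $\epsilon/4$-accurate estimate of $\trace(O\rho)$ to feed the update. Since the update bound caps the number of update-triggering invocations at $\mathcal{O}(1/\epsilon^2)$ and each search costs $\tilde{\mathcal{O}}(\log M/\epsilon^2)$ copies, a single batch of $N = \tilde{\mathcal{O}}(\log M/\epsilon^4)$ copies supports the entire adaptive interaction; the $\log M$ is the union-bound cost of searching the query set, and the computational inefficiency is inherent in manipulating the $d\times d$ objects $\sigma_t$ and in implementing the threshold-search measurement.

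The main obstacle I anticipate is not the (essentially convex-geometric) update bound but the quantum bookkeeping: the one batch of $N$ copies of $\rho$ must survive all $M$ validations and all $\mathcal{O}(1/\epsilon^2)$ updates, so each threshold search must be \emph{gentle} --- it must barely disturb $\rho^{\otimes N}$ whenever it reports no violation, which is the typical case --- and these gentleness errors must compose favorably across all rounds. This is what dictates the precise design of the threshold-search subroutine and ultimately accounts for the $1/\epsilon^4$ rather than $1/\epsilon^2$ dependence. A secondary technical point is verifying that conditioning the analyst's choices on the committed replies, which are themselves outcomes of measurements on $\rho^{\otimes N}$, does not spoil the success probabilities of the remaining searches; this follows from a union bound over the $\mathcal{O}(1/\epsilon^2)$ rounds together with the worst-case nature of the update bound.
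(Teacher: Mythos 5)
Your approach is correct in spirit and takes a genuinely different route from the paper. The paper maintains a low-dimensional subspace $S$ (growing eigenstate-by-eigenstate on each mistake), feeds projected observables to a separate low-dimensional shadow-tomography subroutine based on private multiplicative weights (Theorem~\ref{thm: tomography protocol}), and validates with the new threshold-search algorithm; the mistake bound $\mathcal{O}(B^2/\epsilon^4)$ comes from applying Lemma~\ref{lem:mistake bound} one eigenstate at a time, with accuracy parameter degraded to $\epsilon' = \Theta(\epsilon^2/B)$ by Cauchy--Schwarz over the truncated spectrum. You instead maintain a full-dimensional hypothesis $\sigma_t$ updated by Frobenius projection onto a slab, with the potential $\Phi_t = \|\sigma_t - \rho\|_F^2 \leq 2$. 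This is the $\ell_2$ analogue of Aaronson's relative-entropy argument: where relative entropy pays $\log d = n$ in the potential's diameter, the Frobenius potential is $\mathcal{O}(1)$ because the slab normal has $\|O\|_F \leq \sqrt{B}$. The two approaches really are distinct --- yours needs no inner tomography subroutine at all (the hypothesis $\sigma_t$ is purely classical postprocessing), while the paper's subspace machinery is what allows a low-dimensional shadow-tomography instance.

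Two quantitative points worth flagging. First, a small constant slip: with the teacher's guarantees (declare a mistake only if $|\trace(O_i\rho)-\theta_i| > \tfrac{3}{4}\epsilon$, and return $m$ with $|m - \trace(O_i\rho)| \leq \epsilon/4$), a declared mistake only gives $|\trace(O\sigma_t) - m| > \epsilon/2$. Your slab of radius $\epsilon/2$ then yields distance $> 0$ but no useful lower bound; take slab radius $\epsilon/4$ instead (which still contains $\rho$), giving distance $> (\epsilon/2 - \epsilon/4)/\sqrt{B} = \epsilon/(4\sqrt{B})$ and potential decrease $\geq \epsilon^2/(16B)$, hence at most $32B/\epsilon^2$ updates. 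Second, and more interesting: your mistake bound $\mathcal{O}(B/\epsilon^2)$ is actually \emph{better} than the paper's $\mathcal{O}(B^2/\epsilon^4)$ --- the paper's eigenstate-by-eigenstate analysis is lossy compared to the direct Frobenius geometry. You then compose linearly (updates $\times$ per-search cost), getting $\tilde{\mathcal{O}}(\log M/\epsilon^4)$. But if you instead instantiate the paper's threshold search (Corollary~\ref{lem:nts}) once with budget $\ell = \mathcal{O}(B/\epsilon^2)$, its $\sqrt{\ell}$ composition gives $N_{TS} = \tilde{\mathcal{O}}(B^{3/2}\log M/\epsilon^3)$, and since your mechanism has no quantum-sample cost beyond the teacher, you would in fact obtain $\tilde{\mathcal{O}}(\log M/\epsilon^3)$, improving the theorem's $\epsilon$-dependence. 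Also note that your anticipated ``gentleness'' obstacle evaporates when the teacher is implemented via the paper's classical-shadow threshold search rather than a gentle coherent measurement: classical shadows are stored in classical memory and can be reused freely; the union bound and adaptivity issues are handled by the underlying differential-privacy composition, not by the gentle-measurement lemma.
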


Our algorithm builds upon existing shadow tomography protocols~\cite{aaronson2018shadow,badescu2020improved}, incorporating two key components: a threshold search subroutine\footnote{Threshold search, introduced as ``secret acceptor'' in~\cite{aaronson2016complexity}, aims to identify whether $\trace(O_j\rho) > \theta_j - \epsilon$ for some $j$ or $\tr(O_i\rho) \leq \theta_i$ for all $i$, given $\rho$, observables $O_1,\dots, O_M$, and thresholds $\theta_1,\dots, \theta_M$.} and a prediction subroutine. These components operate synergistically: the prediction subroutine provides estimates for the expectation values, while the threshold search subroutine validates these predictions against true values $\trace(O\rho)$. When predictions deviate significantly from true values, the prediction subroutine refines its model. We implement these components with novel algorithms (detailed in Appendix~\ref{app: acs}) that are applicable to any observables and provide robustness against adaptivity.

Our key insight is that predicting expectations of bounded-Frobenius-norm observables is equivalent to predicting an adaptively chosen observable after ``projecting'' onto an unknown low-dimensional subspace determined by the unknown state $\rho$. By learning to predict observables in this subspace, we achieve a system-size-independent $\mathcal{O}(\log M)$ sample complexity, improving upon the best-known adaptive shadow tomography protocol, which has a sample complexity of $\tilde{\mathcal{O}}(n\log^2 M/\epsilon^4)$~\cite{badescu2020improved}. Proof ideas are presented in Section \ref{mainsec: proof ideas}, with full proofs in Appendix \ref{app: bfo}.
For low-rank observables, we can further improve the sample complexity.

\begin{theorem}[Low-rank observables] \label{thm: bfo2}
    For any sequence of adaptively chosen observables $O_1,...,O_M$ with rank at most $R$, there exists a (computationally inefficient) mechanism using $N=\Tilde{\mathcal{O}}(R^2\log M / \epsilon^3)$ samples that can predict $\trace(O_1\rho),...,\trace(O_M\rho)$ to $\epsilon$ additive error.
\end{theorem}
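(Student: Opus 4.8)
The plan is to reduce Theorem~\ref{thm: bfo2} to the machinery behind Theorem~\ref{thm: bfo1}, while invoking the rank constraint at two separate points so as to beat the $1/\epsilon^4$ scaling. First, note that an observable $O$ with $\|O\|_\infty\le 1$ and rank at most $R$ automatically satisfies $\trace(O^2)\le R$, so it is a bounded-Frobenius-norm observable with $B=R$; applying Theorem~\ref{thm: bfo1} as a black box would already give \emph{a} bound, but with $R$-dependence entering only through $B$ and the $1/\epsilon^4$ left untouched. To do better I would sharpen the ``projection onto a low-dimensional subspace determined by $\rho$'' step. Writing $\lambda_1\ge\lambda_2\ge\cdots$ for the eigenvalues of $\rho$ and $\Pi_{\le k}$ for the projector onto its top-$k$ eigenspace, and letting $\Pi_O$ denote the rank-$R$ range projector of $O$, we have $|\trace(O(\rho-\Pi_{\le k}\rho\Pi_{\le k}))|=|\trace(O\,\Pi_O(\rho-\Pi_{\le k}\rho\Pi_{\le k})\Pi_O)|\le \|O\|_\infty\,\trace(\Pi_O(\rho-\Pi_{\le k}\rho\Pi_{\le k}))\le \lambda_{k+1}\trace(\Pi_O)\le R/(k+1)$. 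Hence $k=\mathcal{O}(R/\epsilon)$ already suffices for $\epsilon$ accuracy, versus the $k=\mathcal{O}(R/\epsilon^2)$ that the generic Frobenius bound $|\trace(OX)|\le\|O\|_F\|X\|_F$ would force. This cuts the ``effective dimension'' of the problem down to $\tilde{\mathcal{O}}(R/\epsilon)$.

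Next I would run the threshold-search $+$ prediction framework of Theorem~\ref{thm: bfo1} on this $\tilde{\mathcal{O}}(R/\epsilon)$-dimensional effective problem, but with a prediction subroutine tailored to the rank constraint. Since $\trace(O\rho)=\trace(\Pi_O O\Pi_O\cdot\Pi_O\bar\rho\Pi_O)$ for the truncated state $\bar\rho=\Pi_{\le k}\rho\Pi_{\le k}$, a query only probes $\bar\rho$ through its compression to the $R$-dimensional range of $O$; equivalently, the subroutine only needs a hypothesis $\hat\rho$ with $|\trace(\bar O(\bar\rho-\hat\rho))|\le\mathcal{O}(\epsilon)$ for every rank-$R$, $\|\cdot\|_\infty\le1$ test operator, i.e.\ a hypothesis accurate in the Ky-Fan $R$-norm. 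Learning a state on a $\tilde{\mathcal{O}}(R/\epsilon)$-dimensional space to this ``rank-$R$-level'' accuracy should cost, by the usual dimension-times-rank scaling of low-rank state learning, about $\tilde{\mathcal{O}}(R\cdot(R/\epsilon)/\epsilon^2)=\tilde{\mathcal{O}}(R^2/\epsilon^3)$ copies. The threshold-search subroutine validates every released estimate against the true $\trace(O_i\rho)$, so nothing beyond already-certified answers leaks to the analyst — this is exactly what licenses the adaptivity — and composing the validation over the $M$ rounds contributes the $\log M$ factor (with further polylogarithmic factors in $M$ and $1/\epsilon$ absorbed into $\tilde{\mathcal{O}}$). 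Collecting these pieces gives $N=\tilde{\mathcal{O}}(R^2\log M/\epsilon^3)$. The bookkeeping for the fact that the learner does not know $\Pi_{\le k}$ — only an upper bound on $k$, which can be grid-searched — is inherited verbatim from Theorem~\ref{thm: bfo1}.

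I expect the technical crux to be the in-subspace learning step, specifically extracting exactly the $R^2/\epsilon^3$ rather than a worse exponent. The two off-the-shelf analyses both miss the target: the standard shadow-tomography / online-learning-with-threshold-search pipeline yields a mistake bound of $\mathcal{O}(R/\epsilon^2)$ times a threshold-search cost of $\tilde{\mathcal{O}}(\log^2 M/\epsilon^2)$, i.e.\ $\tilde{\mathcal{O}}(R\log^2 M/\epsilon^4)$, while naive full tomography of the $\tilde{\mathcal{O}}(R/\epsilon)$-dimensional subspace to trace distance $\epsilon$ costs $\tilde{\mathcal{O}}(R^2/\epsilon^4)$. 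Reaching $\epsilon^{-3}$ requires using \emph{simultaneously} (i) the sharpened rank-aware truncation to dimension $\tilde{\mathcal{O}}(R/\epsilon)$ rather than $\tilde{\mathcal{O}}(R/\epsilon^2)$, and (ii) that accuracy on rank-$R$ observables demands only Ky-Fan-$R$-norm (not trace-norm) closeness of the truncated state, and then checking that the adaptivity/threshold-search overhead can be organized — e.g.\ by interleaving the validation with a direct low-rank estimator rather than a per-mistake online learner — so that it does not reintroduce an extra $1/\epsilon^2$. Confirming that this reorganization is compatible with robustness against the adaptive analyst, and that the low-rank-estimation sample bound holds with the stated polylogarithmic slack, is where the main effort will go.
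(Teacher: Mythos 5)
Your proposal is a genuinely different route from the paper, and it has a gap at the central step. The paper does not truncate $\rho$ to its top-$k$ eigenspace at all. Instead, it reruns Algorithm~\ref{alg:bf} verbatim (with the eigenvalue truncation step for $O$ simply dropped, since $O$ already has $\le R$ eigenstates), and the subspace $S$ consists of eigenvectors of the \emph{mistake-triggering observables}, built up online. The one change from the bounded-Frobenius-norm case is the mistake-detection threshold: because $\sum_i |w_i| \le R$ for a rank-$R$ observable with $\|O\|_\infty\le 1$ (a trace-norm bound, not the Cauchy–Schwarz/Frobenius argument), a declared mistake guarantees at least one eigenstate $\ket{\psi^j}$ with $|\bra{\psi^j_S}\rho\ket{\psi^j_S}-\bra{\psi^j}\rho\ket{\psi^j}|>\Omega(\epsilon/R)$, rather than $\Omega(\epsilon^2/B)$. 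Plugging into Lemma~\ref{lem:mistake bound} gives a mistake budget $\ell = \mathcal{O}(R^2/\epsilon^2)$ rather than $\mathcal{O}(B^2/\epsilon^4)$, and this — not a smaller subspace dimension — is where one factor of $1/\epsilon$ is saved. The remaining $\epsilon^{-3}$ comes from combining $\ell = R^2/\epsilon^2$ with the sparse-vector-based threshold search (Theorem~\ref{thm:nts}/Corollary~\ref{lem:nts}) whose cost scales as $\tilde{\mathcal{O}}(B\sqrt{\ell}/\epsilon^2) = \tilde{\mathcal{O}}(R^2/\epsilon^3)$, i.e.\ from the $\sqrt{\ell}$ composition that you correctly intuit must replace the naive per-mistake $\ell$-composition, but which is realized here by the differential-privacy-style $\sqrt{\ell}$ law rather than by ``interleaving a direct low-rank estimator.''

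Your alternative plan has two unfilled holes. First, the truncation $\Pi_{\le k}\rho\Pi_{\le k}$ with $k = \mathcal{O}(R/\epsilon)$ is a sound a-priori statement but is circular as an algorithmic step: $\Pi_{\le k}$ is determined by $\rho$, which is the thing being learned, and the ``grid-search inherited from Theorem~\ref{thm: bfo1}'' does not exist — Theorem~\ref{thm: bfo1}'s algorithm never identifies $\rho$'s top eigenspace either, it accumulates $S$ from eigenvectors of mis-predicted observables, which need bear no relation to $\rho$'s eigenbasis. Second, the claim that learning $\bar\rho$ to Ky-Fan-$R$-norm accuracy in an $\tilde{\mathcal{O}}(R/\epsilon)$-dimensional space adaptively over $M$ queries costs $\tilde{\mathcal{O}}(R^2\log M/\epsilon^3)$ is asserted, not derived: the ``dimension $\times$ rank'' rule you cite is for non-adaptive trace-distance tomography and does not supply the $\log M$ factor or protection against adaptivity, and the truncated state has rank $\sim R/\epsilon$ rather than $R$, so even the non-adaptive count would need a new argument. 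Your diagnostic paragraph correctly identifies \emph{where} the difficulty sits, but the paper resolves it by the trace-norm-sharpened mistake bound plus $\sqrt{\ell}$-composed threshold search, not by any truncation of $\rho$ or Ky-Fan-norm tomography.
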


\noindent The proof for this theorem is provided in Appendix~\ref{app: bfo}.

{\renewcommand\addcontentsline[3]{} \section{Proof Ideas} \label{mainsec: proof ideas}}

In this section, we describe the key ideas behind the proofs of our results.
The lower bound proofs for local and Pauli observables are provided in Appendices \ref{app: local} and \ref{app: pauli}. Both proofs leverage interactive fingerprinting codes~\cite{fiat2001dynamic} (see Appendix~\ref{sec:fingerprint} for an overview) to establish their guarantees, a technique also employed by \cite{steinke2015interactive} in classical adaptive data analysis. To prove upper bounds for bounded-Frobenius-norm observables, we first introduce a new threshold search algorithm and a shadow tomography algorithm based on classical shadows in Appendix~\ref{app: acs}. Subsequently, in Appendix~\ref{app: bfo}, we present our main algorithm, which integrates these two algorithms as the major subroutines, and prove that it achieves the desired $\log M$ upper bound.
We give an overview in the following sections.

\vspace{2em}
{\renewcommand\addcontentsline[3]{}
\subsection{Lower bounds for local and Pauli observables}}

We establish the lower bounds in Theorems~\ref{thm:local} and~\ref{thm:pauli} by reducing them to the guarantees of interactive fingerprinting codes (IFPCs)~\cite{fiat2001dynamic}. While the relationship between classical adaptive data analysis~\cite{hardt2014preventing, dwork2015preserving,steinke2015interactive} and fingerprinting codes is well-established, its extension to quantum adaptive data analysis remains largely unexplored, particularly when the queried observables are constrained to specific classes. Our primary technical contribution in this section is demonstrating that the requisite IFPC conditions can be satisfied within the quantum interaction model, even when the analyst is limited to querying only local and Pauli observables.

It is worth noting that our techniques can be adapted to derive new classical lower bounds for analogous problems in classical adaptive data analysis (see Corollaries~\ref{cor:local_queries} and~\ref{cor:parity_queries} in Appendices \ref{sec:lb} and \ref{app:pauli-lower}, respectively). This underscores the broader applicability of our approach beyond the quantum realm.

Interactive fingerprinting codes serve as a crucial component in establishing lower bounds for classical adaptive data analysis. Our proofs for Theorems~\ref{thm:local} and~\ref{thm:pauli} involve embedding IFPCs into a quantum interaction game with local and Pauli observables. To elucidate the underlying principles of IFPCs, we provide a concise overview (for a more comprehensive treatment, see Appendix~\ref{sec:fingerprint}).

Consider a scenario with $d$ users, some of whom form a colluding group creating illegal content. The primary objective of an IFPC is to embed watermarks into the content with sufficient sophistication to enable tracing of any illegally created material back to its source. IFPCs function within an interactive game involving an adversary $\mathcal{P}$ (representing the colluding group) and a fingerprinting code $\mathcal{F}$. In this setup, $\mathcal{F}$ distributes watermarks to all users, while $\mathcal{P}$ aims to generate illegal content without getting caught.
The game unfolds over multiple rounds. In each round, $\mathcal{F}$ examines the illegal content and identifies potential members of the colluding group. A fingerprinting code is deemed \emph{collusion resilient} if it can accurately identify colluding users while avoiding false accusations. The IFPC guarantee asserts the existence of a collusion resilient interactive fingerprinting code capable of adaptively tracing a colluding group of size $N$ within $\mathcal{O}(N^2)$ rounds.

Let us revisit the interaction model between an analyst and a mechanism, where the mechanism exclusively has access to $N$ copies of the unknown quantum state $\rho$, and the analyst queries observables for their expectation values. In this scenario, the mechanism strives to respond to the analyst's queries without divulging excessive information about the samples, while the analyst attempts to adaptively extract this information to force the mechanism into generating incorrect answers.
The parallel between this model and the IFPC setting can be drawn as follows:
\vspace{0.6em}
\begin{itemize}
    \item The analyst assumes the role of the fingerprinting code $\mathcal{F}$, selecting queries strategically.
    \item The mechanism corresponds to the adversary $\mathcal{P}$ in the IFPC game.
    \item The colluding users are encoded into the unknown quantum state $\rho$.
    \item The queried observables correspond to the watermarks in the fingerprinting code.
\end{itemize}
Provided that the conditions of the IFPC game are maintained (i.e., the adversary $\mathcal{P}$, and by extension the mechanism, only accesses information from non-accused members of the colluding group), the collusion resilient guarantees of the IFPC remain valid. Consequently, the mechanism can be compelled to provide inaccurate answers after $\mathcal{O}(N^2)$ queries. This directly aligns with our objective in proving the sample complexity lower bound. Therefore, our task reduces to constructing the density matrix $\rho$ and a strategy for the analyst in a manner that satisfies the conditions of the IFPC game.

The IFPC game requires the following two conditions to hold, which we refer to as the IFPC conditions.
\begin{enumerate}
    \item The fingerprinting code $\mathcal{F}$ (i.e., the analyst) must have the capability to query any watermarking assignment from the set $\{-1,1\}^d$, where each user is assigned either $-1$ or $1$.
    \item The adversary $\mathcal{P}$ (i.e., the mechanism) can only access the watermarks of non-accused users in the colluding group.
\end{enumerate}
To satisfy these conditions in our quantum setting, we construct $\rho$ as a mixture of $d=2000N$ computational basis states, each representing a user. The analyst's objective is to recover the computational basis states in the sample. This is achieved by measuring specific qubits in the $Z$ basis, which deterministically assigns each user a value of $-1$ or $1$, effectively serving as the watermark.
% More precisely, $\rho$ will be constructed as a mixture of $d=2000N$ computational basis states, each corresponding to a user. The analyst wishes to recover the $N$ computational basis states in the sample. To do this, they will choose to measure certain qubits in the $Z$ basis, which deterministically assigns each user a value of $-1$ or $1$ (serving as the watermark).
To fulfill the first IFPC condition, we consider the following. For the local observables case:
\begin{itemize}
    \item The analyst queries observables from the set $\{Z_i:i\in [n]\}$, where $Z_i$ is the Pauli $Z$ observable acting solely on the $i$th qubit.
    \item The number of qubits scales as $2^d$, with each qubit encoding one possible watermarking assignment from $\{-1,1\}^d$.
\end{itemize}
For the Pauli observables case:
\begin{itemize}
    \item The analyst queries observables from the set $Z_b \in \{I, Z\}^{\otimes n}$, where $b \in \{0,1\}^n$ indicates which qubits are acted on by the Pauli-$Z$ operator.
    \item The number of qubits scales as $\mathcal{O}(d)$, which significantly reduced the system size needed to encode all watermarking assignments.
\end{itemize}
To satisfy the second IFPC condition, we employ distinct strategies for local and Pauli observables.
For local observables, we implement randomization in the selection of $\rho$. This randomization conceals information from the mechanism, ensuring it only accesses data from non-accused members of the colluding group. For Pauli observables, we incorporate an encryption scheme. This encryption similarly restricts the mechanism's access to information from non-accused colluding group members.

For both cases, we allocate new qubits to encode watermarking assignments for each of the $M$ rounds. This crucial step ensures that information remains hidden across different rounds of the interaction. Consequently, we introduce an additional factor of $M$ in the number of qubits required for both the local and Pauli observables lower bounds.
These strategies collectively guarantee that the mechanism's access to information aligns with the IFPC game's constraints, thereby preserving the integrity of our quantum implementation of the IFPC framework.

With these conditions established, an analyst using the fingerprinting code $\mathcal{F}$ in the IFPC game inherits its guarantees, forcing any mechanism to fail rapidly. The interaction proceeds as follows:
\begin{enumerate}
    \item In round $j \in [M]$ of the interaction, the analyst queries observables $Z_x$\footnote{For local observables, $x \in [n]$, while for Pauli observables, $x \in \{0,1\}^n$.}, where $x$ corresponds to the watermarks sent by $\mathcal{F}$ to the adversary $\mathcal{P}$ in round $j$ of the IFPC game.
    \item Through adaptive querying of $Z_x$ observables, the analyst accumulates information, enabling the recovery of computational basis states in the sample provided to the mechanism.
    \item Upon gathering sufficient information about the sample, the analyst forces the mechanism to produce inaccurate responses. This strategy bears similarity to the adaptive attack in Section~\ref{mainsec:attack}.
\end{enumerate}
The IFPC guarantees ensure that the analyst can force any mechanism to respond inaccurately within $\mathcal{O}(N^2)$ queries. This directly yields our sample complexity lower bound of $N=\Omega(\sqrt{M})$, establishing a fundamental limit on the efficiency of quantum mechanisms.
The IFPC conditions also yields the requisite system size for our lower bounds. For local observables, the required number of qubits is $n = \Omega(M2^{\sqrt{M}})$, which scales exponentially in the number of observables queried $M$.
Meanwhile, for Pauli observables, the system size is much smaller, i.e., $n = \Omega(M^{3/2})$.

% \vspace{2em}
{\renewcommand\addcontentsline[3]{}
\subsection{Adaptive algorithms for classical shadows}}

We establish rigorous guarantees for new adaptivity-resistant algorithms using the classical shadow formalism~\cite{huang2020predicting} for shadow tomography and threshold search, which we black-box apply as subroutines to our main algorithm for predicting bounded-Frobenius-norm observables.
% To this end, we devise a new threshold search algorithm and a new shadow tomography algorithm. 
While quantum threshold search~\cite{badescu2020improved} and Aaronson's shadow tomography~\cite{aaronson2018shadow} may also be substituted to yield system-size-independent upper bounds, we achieve improved scaling by integrating classical shadows~\cite{huang2020predicting} with classical adaptive data analysis tools (see Remark~\ref{rmk: substitute}).
Moreover, these algorithms achieve sample complexity bounds that scale better with certain parameters, so we believe them to be of independent interest.

By combining classical shadows and the private multiplicative weights algorithm~\cite{5670948}, we derive a shadow tomography algorithm requiring
\begin{equation} \label{eq: st}
    N_{ST} = \Tilde{\mathcal{O}}\left(\frac{\sqrt{m}\cdot \log M}{\epsilon^3} \cdot \max_{i\in [M]} \|O_i\|^3_{\mathrm{shadow}}\right)
\end{equation}
samples. Here, $m$ is the number of bits needed for the classical shadow description and $\|O_i\|_\mathrm{shadow}$ is the shadow norm of observable $O_i$~\cite{huang2020predicting}. For $m=\mathcal{O}(n^2)$ (applicable to local and bounded-Frobenius-norm observables), this improves upon the best-known scaling~\cite{badescu2020improved} of $\tilde{\mathcal{O}}(n\log^2 M / \epsilon^4)$ by a factor of $\log M / \epsilon$, at the cost of the shadow norm term. Notably, the shadow norm is constant for bounded-Frobenius-norm observables.

We develop a new threshold search algorithm by incorporating the sparse vector algorithm~\cite{10.1145/1536414.1536467} from classical adaptive data analysis and introducing a preprocessing step for classical shadows. We also employ a new variant of classical shadows derived from uniform POVM measurements~\cite{kueng2019classical}, which offers enhanced concentration properties, leading to improved $\epsilon$ scaling.

The main new feature of our threshold search algorithm is its ``strong composition'' property, enabling sub-linear sample complexity growth when adaptively combining multiple instances of threshold search. This contrasts with previous quantum threshold search algorithms, such as~\cite{badescu2020improved}, which compose linearly. In those algorithms, each instance runs until an above-threshold event occurs, necessitating fresh copies of $\rho$ for each new instance. Consequently, for a budget of $\ell$ above-threshold events (termed "mistakes" in~\cite{badescu2020improved}), the sample complexity includes a linear factor of $\ell$.

In contrast, classical adaptive data analysis algorithms, particularly the sparse vector algorithm, leverage differential privacy to achieve a \textit{quadratic} improvement in composing multiple instances, introducing only a factor of $\sqrt{\ell}$. By utilizing classical shadows, our threshold search algorithm inherits this property, requiring
\begin{equation} \label{eq: ts}
    N_{TS} = \Tilde{\mathcal{O}}\left(\frac{\sqrt{\ell}\log M}{\epsilon^2} \cdot \max_i \sqrt{\tr(O_i^2)}\right)
\end{equation}
samples. This improves over the best-known quantum threshold search algorithm~\cite{badescu2020improved}, which has a sample complexity of $\mathcal{O}(\ell\log^2 M/ \epsilon^2)$. Our approach reduces the sample complexity by a factor of $\sqrt{\ell}\log M$, albeit at the cost of introducing a bound on the Frobenius-norm.

\vspace{2em}
{\renewcommand\addcontentsline[3]{}
\subsection{Upper bound for bounded-Frobenius-norm observables}}

We leverage our new shadow tomography and threshold search algorithms to design an improved algorithm for predicting bounded-Frobenius-norm observables. To build intuition, we first consider the case of single-rank observables before extending to bounded-Frobenius-norm observables. We adopt the terminology of ``student'' and ``teacher'' from \cite{badescu2020improved} to describe the two main components of the mechanism. For any query $\ketbra{\psi}$, the student predicts $\hat{o}$ for $\bra{\psi}\rho\ket{\psi}$, while the teacher verifies this prediction. If $\hat{o}$ is $\epsilon$-close to $\bra{\psi}\rho\ket{\psi}$, the teacher outputs ``pass''; otherwise, it declares a ``mistake'' and provides the correct value.

In our algorithm, the student and teacher use the shadow tomography and threshold search algorithms, respectively.
This is different from Aaronson's shadow tomography protocol~\cite{aaronson2018shadow}, which uses an online learning algorithm\footnote{Online learning refers to a setting where one is tasked to answer a series of (potentially adversarial) requests in a sequential fashion while minimizing error/loss.
This is similar to our setting, where we must predict expectation values for a sequence of observables.}~\cite{aaronson2018online} to implement the student. 
% Aaronson's shadow tomography protocol~\cite{aaronson2018shadow} uses a similar two-component framework, where quantum threshold search \cite{badescu2020improved} is used to implement the teacher and an online learning algorithm\footnote{Online learning refers to a setting where one is tasked to answer a series of (potentially adversarial) requests in a sequential fashion while minimizing error/loss.
% This is similar to our setting, where we must predict expectation values for a sequence of observables.}~\cite{aaronson2018online} is used to implement the student.
In Aaronson's protocol, the student learns a full $2^n\times 2^n$ representation of the $n$-qubit quantum state $\rho$, which introduces the dimension dependence in its sample complexity.
Our approach replaces the student with an algorithm that learns a low-dimensional representation of $\rho$ via iterative dimension expansion.
Here, the student only needs to run a low-dimensional instance of shadow tomography to learn this low-dimensional representation, avoiding the dimension dependence.
This is sufficient for predicting expectation values of bounded-Frobenius-norm (and low rank) observables and enables learning with few samples.

We make two key observations:
\begin{enumerate}
    \item Given any quantum state $\rho$, there exists a low-dimensional subspace $S$ such that for all pure state $\ket{\psi}$ and $\ket{\psi_S}$ (the projection of $\ket{\psi}$ to $S$),
    \begin{equation}
        |\bra{\psi}\rho\ket{\psi}-\bra{\psi_S}\rho\ket{\psi_S}|\leq \epsilon.
    \end{equation}
    \item Given $S$, there is a low-dimensional quantum state $\hat{\rho}$ such that 
    \begin{equation}
        \bra{\psi_S}\hat{\rho}\ket{\psi_S}=\bra{\psi_S}\rho\ket{\psi_S}.
    \end{equation}
    Note that $\hat{\rho}$ can indeed be constructed to satisfy the required properties of a quantum state (i.e., Hermitian, positive semi-definite, and unit trace).
\end{enumerate}

In light of these two key observations, our algorithm tasks the student with \textit{simultaneously} learning (1) the subspace $S$ and (2) an approximation of the low-dimensional quantum state $\hat{\rho}$ (which is used to predict $\bra{\psi_S}\hat{\rho}\ket{\psi_S}$).
At a high level, the student runs a low-dimensional instance of shadow tomography (using our new adaptive shadow tomography algorithm from the previous section) in order to learn $\hat{\rho}$ and produce predictions. At the same time, the student maintains a running guess of $S$. Every time the teacher declares "mistake" on the student's prediction, the student will apply a correction on $S$. Notably, the teacher only plays an \textit{indirect} role in the learning of $\hat{\rho}$ (in the sense that a "mistake" declaration may not necessarily cause the representation of $\hat{\rho}$ to update)\footnote{To be precise, we are running a (low-dimensional) shadow tomography algorithm as a subroutine within our overall shadow tomography algorithm for bounded Frobenius-norm observables. The low-dimensional subroutine itself contains a teacher and student iteratively interacting to learn $\hat{\rho}$. So one can say that there are two teachers within our algorithm!}. This is another key difference with Aaronson's protocol, where the teacher's main role is to "teach" the student the quantum state representation, whereas in our algorithm, the student is taught the low-dimensional subspace representation.

We outline the steps in the algorithm below:
\begin{enumerate}
    \item For each received query $\ket{\psi}$, the student projects $\ket{\psi}$ into $S$ and uses the new shadow tomography protocol to predict $\hat{o}=\bra{\psi_S}\hat{\rho}\ket{\psi_S}$.
    \item The student asks the teacher (the new threshold search algorithm) to check if $|\hat{o}-\bra{\psi}\rho\ket{\psi}|<\epsilon$.
    \item If the teacher declares a mistake, the teacher will produce the correct expectation value $o'$ using classical shadows.
    \item In the case that the teacher declares a mistake, the student will apply a correction by adding the query $\ket{\psi}$ to $S$, where the new subspace is now defined as $\mathrm{span}(S \cup \{\ket{\psi}\})$.
\end{enumerate}
The key result to showing that the sample complexity is dimension-independent is the following lemma.

\begin{lemma} [Mistake bound]
    A student algorithm that correctly predicts $\hat{o}=\bra{\psi_S}\rho\ket{\psi_S}$ will make at most $\mathcal{O}(1/\epsilon^2)$ mistakes.
\end{lemma}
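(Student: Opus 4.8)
The plan is to exploit the fact that every time the teacher declares a mistake, the student enlarges the subspace $S$ by adding the current query $\ket{\psi}$, and that this enlargement captures a fixed amount of ``new'' weight of $\rho$ outside the old subspace. Concretely, suppose a mistake is declared on query $\ket{\psi}$ with current subspace $S$. Since the student's prediction $\hat o = \bra{\psi_S}\rho\ket{\psi_S}$ is wrong by more than $\epsilon$, we have $|\bra{\psi}\rho\ket{\psi} - \bra{\psi_S}\rho\ket{\psi_S}| > \epsilon$, where $\ket{\psi_S} = \Pi_S\ket{\psi}$ is the (unnormalized) projection onto $S$. The first step is to show that this gap forces the component of $\ket{\psi}$ orthogonal to $S$, call it $\ket{\psi_\perp} = \ket{\psi} - \ket{\psi_S}$, to have nontrivial overlap with $\rho$: expanding $\bra{\psi}\rho\ket{\psi} = \bra{\psi_S}\rho\ket{\psi_S} + 2\,\mathrm{Re}\bra{\psi_S}\rho\ket{\psi_\perp} + \bra{\psi_\perp}\rho\ket{\psi_\perp}$ and using Cauchy--Schwarz on the cross term, a gap of at least $\epsilon$ implies that $\bra{\psi_\perp}\rho\ket{\psi_\perp}$ together with the cross term is at least $\epsilon$ in magnitude, which in turn (after bounding the cross term by $2\sqrt{\bra{\psi_S}\rho\ket{\psi_S}}\sqrt{\bra{\psi_\perp}\rho\ket{\psi_\perp}} \le 2\sqrt{\bra{\psi_\perp}\rho\ket{\psi_\perp}}$) yields a lower bound $\bra{\psi_\perp}\rho\ket{\psi_\perp} \ge c\,\epsilon^2$ for an absolute constant $c$.

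The second step is a potential-function argument. Let $\Pi_t$ denote the projector onto the subspace $S$ after $t$ mistakes, so $\Pi_0 \subseteq \Pi_1 \subseteq \cdots$ is a nested increasing chain. Take as potential $\Phi_t = \trace(\Pi_t \rho)$, which is monotone nondecreasing in $t$ and bounded above by $\trace(\rho) = 1$. The claim is that each mistake increases $\Phi$ by at least $c\,\epsilon^2$. To see this, let $\ket{u}$ be the normalized direction of $\ket{\psi_\perp}$ for the mistake that takes $\Pi_{t}$ to $\Pi_{t+1}$; since $\ket{u}$ is orthogonal to the range of $\Pi_t$ and lies in the range of $\Pi_{t+1}$, we get $\Phi_{t+1} - \Phi_t \ge \bra{u}\rho\ket{u} = \bra{\psi_\perp}\rho\ket{\psi_\perp}/\|\ket{\psi_\perp}\|^2 \ge \bra{\psi_\perp}\rho\ket{\psi_\perp} \ge c\,\epsilon^2$, where I used $\|\ket{\psi_\perp}\| \le \|\ket{\psi}\| = 1$. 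Summing over mistakes, if $T$ mistakes occur then $1 \ge \Phi_T \ge \Phi_0 + T c \epsilon^2 \ge T c \epsilon^2$, giving $T \le 1/(c\epsilon^2) = \mathcal{O}(1/\epsilon^2)$.

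I expect the main obstacle to be the first step — handling the cross term $2\,\mathrm{Re}\bra{\psi_S}\rho\ket{\psi_\perp}$ carefully enough to extract a clean $\Omega(\epsilon^2)$ lower bound on $\bra{\psi_\perp}\rho\ket{\psi_\perp}$ rather than something weaker. One has to be a little careful about whether $\ket{\psi_S}$ is normalized and about the sign of the cross term; the robust way is to note $\bra{\psi}\rho\ket{\psi} - \bra{\psi_S}\rho\ket{\psi_S} = \bra{\psi_\perp}\rho\ket{\psi}+\bra{\psi}\rho\ket{\psi_\perp} - \bra{\psi_\perp}\rho\ket{\psi_\perp}$, bound each term using $|\bra{a}\rho\ket{b}| \le \sqrt{\bra{a}\rho\ket{a}}\sqrt{\bra{b}\rho\ket{b}}$ and $\bra{\psi}\rho\ket{\psi}\le 1$, and solve the resulting quadratic inequality in $\sqrt{\bra{\psi_\perp}\rho\ket{\psi_\perp}}$. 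A secondary subtlety is the phrase ``correctly predicts $\hat o = \bra{\psi_S}\rho\ket{\psi_S}$'': in the actual algorithm the student only predicts this approximately (via the low-dimensional shadow-tomography subroutine), so in the full proof one must track the extra slack, but under the idealized hypothesis of the lemma as stated this is not an issue and the argument above is essentially tight up to the constant $c$.
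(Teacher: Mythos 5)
Your proof is correct and follows essentially the same route as the paper's: the paper likewise decomposes $\ket{\psi}=\ket{\psi_S}+\ket{\psi_\perp}$, applies Cauchy--Schwarz to bound the cross term and deduce $\bra{\psi_\perp}\rho\ket{\psi_\perp}=\Omega(\epsilon^2)$ per mistake, and then counts mistakes by the fact that the orthonormal directions added to $S$ have expectation values summing to at most $\tr(\rho)=1$. Your potential function $\Phi_t=\tr(\Pi_t\rho)$ is just a repackaging of the paper's Lemma~\ref{lem:kbound}, and your note about the approximate-prediction slack is also handled in the paper (Lemma~\ref{lem:mistake bound}) by the same bookkeeping.
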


Using this lemma, one can show that since the number of mistakes is not very large, the threshold search algorithm will not need to make too many corrections (i.e., the budget parameter $\ell$ is not too large) and can thus be implemented sample-efficiently. 
Together with the sample complexity bounds in Equations~\eqref{eq: st} and \eqref{eq: ts} this yields that
\begin{equation}
    N = \Tilde{\mathcal{O}}\left(\frac{\log M}{\epsilon^3}\right)
\end{equation}
samples are sufficient to predict single rank observables.

To extend to bounded-Frobenius-norm and low-rank observables, we only need to make the following modifications: (1) define a processing step that ``projects'' the queried observable $O$ to $S$, and (2) when a mistake is made, add every eigenstate of $O$ with a sufficiently large eigenvalue (in magnitude) to $S$. In the case of observables with Frobenius norm $\Tr(O^2) \leq B$, the number of mistakes is $\mathcal{O}(B^2/\epsilon^4)$, yielding the corresponding increase in the sample complexity stated in Theorem~\ref{thm: bfo1}. For low-rank observables, the number of mistakes is only $\mathcal{O}(R^2/\epsilon^2)$, in which case we maintain the same $\epsilon$ dependence as for single rank observables.

{\renewcommand\addcontentsline[3]{} \section{Outlook}}

The classical shadow formalism \cite{huang2020predicting} provides an experimentally feasible alternative to full quantum state tomography that allows one to predict many properties of an unknown quantum state from very few samples of the state.
However, its rigorous guarantees require that these properties are chosen non-adaptively, whereas adaptivity is a natural component of scientific research.
This work makes progress in understanding adaptivity in analyzing data from quantum experiments.
We show that obtaining similar sample complexity guarantees as the classical shadow formalism in the adaptive setting is difficult even in highly restricted cases.
For local and Pauli observables, we prove that it is not possible in general to prevent an adaptive attack using $\mathcal{O}(\log M)$ samples, where $M$ is the number of queried observables. Moreover, this holds even when the system size is only polynomial in $M$.
However, in the case of bounded-Frobenius-norm observables, we show that it becomes possible to predict many properties with few samples, maintaining the same scaling as in the classical shadow formalism with respect to $M$.
We remark that other works such as \cite{badescu2020improved, aaronson2019gentle} have previously suggested a connection between adaptive data analysis and shadow tomography.
Our work provides extensive evidence of this connection, demonstrating that results from adaptive data analysis can be used to prove new sample complexity and impossibility theorems for shadow tomography, and vice versa.

Despite the progress made in this work, many questions remain open.
For local and Pauli observables we showed that if the system size is large enough, then no algorithm can prevent an adaptive attack.
What are some other families of observables, where adaptive queries can lead to erroneous predictions?
While we were able to design a sample-efficient algorithm for predicting adaptively chosen observables with bounded-Frobenius-norm, our algorithm is computationally inefficient.
The main bottleneck in our algorithm is a projection onto a low-dimensional subspace.
Are there computationally efficient algorithms for predicting bounded-Frobenius-norm observables with a sample complexity of $\mathrm{poly log}(M)$?
Or is there a computational lower bound that cannot be surpassed?
We remark that there are well known computational hardness conjectures in classical adaptive data analysis, where any algorithm with a $N = \mathrm{poly\,log}(M)$ sample complexity is expected to be computationally inefficient.
However, bounded-Frobenius-norm observables are sufficiently structured such that assuming these computational hardness conjectures does not yet rule out the presence of computationally efficient quantum algorithms.
We also remark that if one allows system size dependence $N = \mathrm{poly}(n, \log M)$, then the algorithms in \cite{brandao2019sdp,van_Apeldoorn_2020} achieve a computationally efficient algorithm for low-rank observables.

\vspace{2em}
{\renewcommand\addcontentsline[3]{} \section*{Acknowledgments}}
The authors thank Yu Tong for valuable and inspiring discussions.
The authors also thank Ruohan Shen, Haimeng Zhao, Mehdi Soleimanifar, Tai-Hsuan Yang, Yiyi Cai, Charles Cao, Nadine Meister, and Chris Pattison for insightful comments and feedback.
JH is supported by a Caltech Summer Undergraduate Fellowship and a Saul and Joan Cogen Memorial SURF fund.
LL is supported by a Mellon Mays Undergraduate Fellowship and a Marshall Scholarship.
HH was supported by a Google PhD fellowship and a MediaTek Research Young Scholarship.
HH acknowledges the visiting associate position at the Massachusetts Institute of Technology.
JP acknowledges support from the U.S. Department of Energy Office of Science, Office of Advanced Scientific Computing Research (DE-NA0003525, DE-SC0020290), the U.S. Department of Energy Office of Science, National Quantum Information Science Research Centers, Quantum Systems Accelerator, and the National Science Foundation (PHY-1733907).
This work was done (in part) while a subset of the authors visited the Simons Institute for the Theory of Computing.
The Institute for Quantum Information and Matter is an NSF Physics Frontiers Center.

% \newpage

\bibliographystyle{unsrt}
{\renewcommand\addcontentsline[3]{} \bibliography{refs}}

\newpage
\appendix
\appendixpage

\tableofcontents

\section{Preliminaries}
\label{app: interaction model}

In Appendix~\ref{sec:int-mod}, we formally define the interaction models used in classical adaptive data analysis (Appendix~\ref{sec:stat-query}) and our setting of quantum adaptive data analysis (Appendix~\ref{sec:quantum-interact}).

In Appendix~\ref{sec:non-adapt}, we review several previous non-adaptive methods for predicting expectation values with favorable sample complexity.
Specifically, in Appendix~\ref{sec:classical-shadow}, we review the classical shadow formalism~\cite{huang2020predicting}.
Classical shadows are present in many of our algorithms to achieve sample complexity upper bounds, so it is helpful to understand this formalism.
We also review an algorithm~\cite{huang2021information} for predicting expectation values of non-adaptively chosen Pauli observables in Appendix~\ref{sec:nonadapt-pauli}, which we adapt in one of our sample complexity upper bounds.
In Appendix~\ref{sec:uniform-povm}, we also describe a different version of classical shadows in which the shadow is created via independent measurements of the uniform POVM~\cite{kueng2019classical} rather than the random measurement scheme in~\cite{huang2020predicting}.

In addition, we review tools that will be useful throughout our proofs.
In Appendix~\ref{sec:fingerprint}, we describe fingerprinting codes~\cite{705568}, which are utilized heavily in our proofs of the lower bounds.
Another tool used in our lower bounds is private-key cryptography, which we review in Appendix~\ref{app:crypto}.

Throughout the appendices, we use the notation $[m]$ for some $m \in \mathbb{N}$ to denote $[m] \triangleq \{1,\dots, m\}$.
We also assume that the spectral norm of the observables we want to predict is bounded by one: $\norm{O}_\infty \leq 1$.

\subsection{Interaction Models}
\label{sec:int-mod}
\subsubsection{Statistical Query Model}
\label{sec:stat-query}
We first introduce a classical model common in the adaptive data analysis literature, which we will refer to as the adaptive statistical query model~\cite{dwork2015preserving}.
In the next section, we will establish an analogue in the quantum setting.
Suppose there is an unknown distribution $\mathcal{D}$ over a data universe $\mathcal{X}$ with dimension $n$. The goal is to design an algorithm to answer \textit{statistical queries}~\cite{kearns1998efficient,feldman2017statistical}, which are specified by a function $q:\mathcal{X}\rightarrow [0,1]$. The true value of the query $q$ is denoted as \begin{equation}
    q(\mathcal{D})\triangleq \E_{x \sim \mathcal{D}}[q(x)].
\end{equation}
We would like to estimate the true expectation value $q(\mathcal{D})$ as accurately as possible, but we typically do not have direct access to $\mathcal{D}$ and are instead granted access to a sample $\mathcal{S}=(x_1,...,x_N) \sim \mathcal{D}^N$, i.e., each $x_i \in \mathcal{S}$ is sampled i.i.d. from $\mathcal{D}$. The most straightforward way to estimate $q(\mathcal{D})$ is to compute the empirical mean $q(\mathcal{S}) = \frac{1}{N}\sum_{x_i \in \mathcal{S}} q(x_i)$.
Note that statistical queries must satisfy
\begin{equation}
    \E_{S \sim \mathcal{D}^N} [q(\mathcal{S})] = q(\mathcal{D}).
\end{equation}
However, an algorithm simply outputting $q(\mathcal{S})$ may fail after only a few queries if the queries are selected adaptively (e.g. see the feature selection attack in \cite{dwork2015generalization} and Lecture 3 of \cite{roth2017adaptive}). In general, we will need to design a \textit{mechanism} that outputs a response based on $\mathcal{S}$ to protect against adaptivity.

Formally, the interaction can be described as a game between an analyst $\mathcal{A}$ and mechanism $\mathcal{M}$. In general, the analyst and mechanism do not have access to the exact true distribution $\mathcal{D}$, and only the mechanism has access to the sample $\mathcal{S}$. $\mathcal{A}$ and $\mathcal{M}$ interact over $k$ rounds, where in each round $j$, (1) $\mathcal{A}$ asks query $q^j$ and (2) $\mathcal{M}$ uses $\mathcal{S}$ to generate a response $a^j\in [0,1]$. Here, the analyst's access to the sample $\mathcal{S}$ in the form of answers to statistical queries is the same as that of statistical query learning~\cite{kearns1998efficient}.
However, in this case, $\mathcal{A}$ and $\mathcal{M}$ can depend on prior history. Specifically, the analyst can choose queries \textit{adaptively} based on the transcript of prior interactions $\mathcal{T}^j=(q^1,a^1,...,q^{j-1},a^{j-1})$.
This is in contrast to the typical setting in which the queries $q^j$ are specified beforehand.
Again, the goal is to design a mechanism that can answer these adaptive queries correctly, where we characterize its accuracy as follows.

\begin{definition}
    A mechanism $\mathcal{M}$ is \emph{$(\epsilon,\delta)$-accurate} for $k$ queries $q^1,...,q^k$ if for every analyst $\mathcal{A}$ and distribution $\mathcal{D}$,\begin{equation}
        \Pr[\max_{j\in [k]}|q^j(\mathcal{D})-a^j|\leq \epsilon]\geq 1-\delta
    \end{equation}
    where the probability is over the randomness of $\mathcal{M}$, $\mathcal{A}$, and the dataset $S\sim \mathcal{D}^N$. Here, $a^j$ denotes the mechanism $\mathcal{M}$'s response to the query $q^j$.
\end{definition}

\subsubsection{Quantum Interaction Model}
\label{sec:quantum-interact}
With this motivation from classical adaptive data analysis, we now discuss the interaction model in the quantum setting.
Consider a game between a classical analyst $\mathcal{A}$ and a quantum mechanism $\mathcal{M}$. The interaction can be described as follows. Suppose $\rho$ is an $n$-qubit unknown quantum state and that $\mathcal{M}$ has access to $N$ copies $\rho^{\otimes N}$. First, $\mathcal{M}$ applies an initial unitary $U_0$ to $\rho^{\otimes N}$ and a set of ancilla qubits:
\begin{equation}
    \mathcal{W}^0(\rho^{\otimes N})=U_0(\rho^{\otimes N} \otimes \ketbra{0^k})U_0^\dag,
\end{equation}
where $\mathcal{W}^j(\rho^{\otimes N})$ denotes the mechanism's quantum state at round $j$ given access to the samples $\rho^{\otimes N}$ stored in quantum memory.
This executes any preprocessing on the input data.
After this initial preprocessing, the analyst and the mechanism will alternate turns. The analyst adaptively generates a classical query $q^j$ at round $j$ (e.g., a classical description of an observable $O$).
Sometimes, we will also denote the query as $o^j$ to align with the notation of~\cite{huang2020predicting}.
Moreover, we assume throughout the work that the observable $O$ corresponding to this query has $\norm{O}_\infty \leq 1$, an assumption that is also present in~\cite{huang2020predicting}.
This query can be viewed as an additional input in the mechanism's quantum state.
Then, the mechanism responds to the query by performing a unitary on its current state, measuring some of the qubits, and returning the classical output $a^j$ to the analyst. 
This interaction is described at a high level in Algorithm~\ref{alg:interact}.
More formally, let $\mathcal{T}^j$ be the transcript of prior interactions up to round $j$, i.e. $\mathcal{T}^j=(q^1,a^1,...,q^{j-1},a^{j-1})$. For each round $j=1,...,M$,
\begin{enumerate}
    \item $\mathcal{A}$ generates query $q^j$ based on $\mathcal{T}^j$ and gives it to $\mathcal{M}$.
    \item $\mathcal{M}$ takes $q^j$ as input and applies a unitary $U_j$ to its current quantum state $\mathcal{W}^{j-1}$ before performing a partial measurement $\Pi_j=\ketbra{a^j} \otimes I$. Here, the mechanism only measures some subset of the qubits so that the measurement operator is identity on the remaining unmeasured qubits.
    The classical outcome $a^j$ from this measurement is observed and given to $\mathcal{A}$ with probability 
    \begin{equation}
        \tr(\Pi_j(U_j(\mathcal{W}^{j-1}\otimes \ketbra{q^j})U_j^\dag)).
    \end{equation}
    The mechanism's post-measurement state is
    \begin{equation}
        \mathcal{W}^j(\rho^{\otimes N})\triangleq\frac{\Pi_jU_j(\mathcal{W}^{j-1}\otimes \ketbra{q^j})U_j^\dag\Pi_j^\dag}{\tr(\Pi_j(U_j(\mathcal{W}^{j-1}\otimes \ketbra{q^j})U_j^\dag))}.
    \end{equation}
\end{enumerate}
We now discuss what it means for a mechanism to be robust against any adaptive analyst (or for an analyst to force the mechanism to fail). We first note that for any quantum mechanism $\mathcal{M}$ and classical analyst $\mathcal{A}$, the interaction can be encoded in the following state:
\begin{equation}
\sigma\triangleq\sum_{\substack{q^1,...,q^M\\a^1,...,a^M}}p_\mathcal{A}(q_1|\mathcal{T}^1)...p_\mathcal{A}(q_M|\mathcal{T}^M)\ketbra{\mathcal{T}^{M+1}},
\end{equation}
where $p_{\mathcal{A}}(q_j|T^j)$ denotes the probability that the analyst $\mathcal{A}$ selects query $q^j$ conditioned on the past history of interactions. We can define an observable $O_\rho$ dependent on $\rho$ that encodes a mechanism's error in a given interaction, i.e. \begin{equation}
    \trace(O_\rho\ketbra{\mathcal{T}^{M+1}})=\max_{j\in [M]} |q_j(\rho)-a^j|
\end{equation} is the mechanism's maximum error under transcript $\mathcal{T}^{M+1}$. So the expected error of the mechanism is $\trace(O_\rho \sigma)$. Denote
\begin{equation}
    P_{\mathcal{T}^{M+1}}^\mathcal{A}\triangleq\left(\prod_{j=1}^M p_\mathcal{A}(q_j|\mathcal{T}^j)\right)
\end{equation}
as the probability of transcript $\mathcal{T}^{M+1}$ occurring. We say that a mechanism can always answer $M$ adaptive queries to $\epsilon$ error with probability $1-\delta$ if for any analyst $\mathcal{A}$,
\begin{equation}
    \sum_{\mathcal{T}: \trace(O_\rho \ketbra{\mathcal{T}})\leq\epsilon} P_{\mathcal{T}}^\mathcal{A} \geq 1-\delta.
\end{equation}
\begin{algorithm}
\caption{Quantum Interaction Model}\label{alg:interact}
Let $\rho$ be an $n$-qubit density matrix and $M$ be the number of properties to be tested.\\
Mechanism obtains samples $\rho^{\otimes N}$.\\
\For{$j=1$ to $M(N)$}{
Let $\mathcal{T}^j$ be the transcript of prior interactions $\mathcal{T}^j=(q^1,a^1,...,q^{j-1},a^{j-1})$.\\
Analyst $\mathcal{A}(\mathcal{T}^j$) generates query $q^j$ corresponding to a property to be tested such that $q^j(\rho)\in [-1,1]$.\\
Mechanism $\mathcal{M}(\mathcal{T}^j,q^j,\rho^{\otimes N})$ generates response $a^j\in [-1,1]$.
}
\end{algorithm}

\subsection{Review of Previous Non-Adaptive Methods}
\label{sec:non-adapt}

In this section, we provide a brief overview of previous work~\cite{huang2020predicting,huang2021information,kueng2019classical} that can be used to predict expectation values of non-adaptively chosen observables.
In particular, we review the classical shadow formalism (which can predict expectation values of, e.g., local observables and observables with bounded Frobenius norm) in Appendix~\ref{sec:classical-shadow}, an algorithm for predicting expectation values of Pauli observables in Appendix~\ref{sec:nonadapt-pauli}, and a version of the classical shadow formalism using measurements of the uniform POVM in Appendix~\ref{sec:uniform-povm}.
As we discuss later, all of these methods implicitly assume that the observables are chosen non-adaptively.
We review them here because we build on them in our new algorithms.

\subsubsection{Classical Shadow Formalism}
\label{sec:classical-shadow}

In this section, we briefly review the classical shadow formalism with an emphasis on aspects that are important in our proofs.
For a more detailed presentation, we direct the reader to~\cite{huang2020predicting}.
We note that there exist several other techniques for constructing efficient classical representations of quantum systems to predict properties~\cite{chen2020robust,cotler2020quantum,evans2019scalable,huang2021efficient,koh2020classical,paini2019approximate}, but we focus on the classical shadow formalism in this work.

Recall that the goal of the classical shadow formalism is to construct a succinct classical representation of an unknown $n$-qubit quantum state $\rho$ that suffices to predict properties $\tr(O_i \rho)$ for observables $O_1,\dots, O_M$.
One can construct this classical representation, called the \emph{classical shadow}, by applying a unitary transformation $\rho \mapsto U \rho U^\dagger$, measuring all qubits in the computational basis to obtain an outcome $\hat{\ket{b}}$ for $\hat{b} \in \{0,1\}^n$, and repeating this process several times.
Here, $U$ is a unitary selected at random from a fixed ensemble of unitaries $\mathcal{U}$.
Different choices of $\mathcal{U}$ lead to different guarantees, as we discuss later.
After measuring in the computational basis, we can apply the inverse of $U$ to the outcome state to obtain $U^\dagger \hat{\ket{b}}\hat{\bra{b}}U$.
In expectation (over the random choice of unitary and the measurement), this quantity contains can be viewed as a quantum channel applied to $\rho$:
\begin{equation}
    \mathcal{M}(\rho) = \mathbb{E}\left[U^\dagger \hat{\ket{b}}\hat{\bra{b}}U\right].
\end{equation}
Thus, applying the inverse of this quantum channel to our classical data $U^\dagger \hat{\ket{b}}\hat{\bra{b}}U$ gives us an approximation of $\rho$:
\begin{equation}
    \hat{\rho} = \mathcal{M}^{-1}\left(U^\dagger \hat{\ket{b}}\hat{\bra{b}}U\right).
\end{equation}
The classical shadow representation is then a collection of such $\hat{\rho}$ after repeating this measurement procedure several times:
\begin{equation}
    \mathsf{S}(\rho; T) \triangleq \{\hat{\rho}_1,\dots, \hat{\rho}_T\}.
\end{equation}
We can then predict properties $\tr(O_i \rho)$ using the classical shadow via median-of-means~\cite{nemirovskij1983problem,jerrum1986random}.
Specifically, one can split the classical shadow into $K$ equally sized parts, take the empirical mean of each of these parts to obtain $\hat{\rho}_{(k)}$, compute the property $\tr(O_i \hat{\rho}_{(k)})$ for each of these empirical means, and then take median $\mathrm{median}(\tr(O_i \hat{\rho}_{(1)}),\dots, \tr(O_i \hat{\rho}_{(K)}))$ of the results.
For this simple procedure, the classical shadow formalism achieves the following rigorous guarantee for predicting expectation values.

\begin{theorem}[Theorem 1 in~\cite{huang2020predicting}]
    \label{thm:classical-shadow}
    Let $O_1,\dots,O_M$ be Hermitian $2^n \times 2^n$ matrices, and let $\epsilon, \delta \in [0,1]$. Then,
    \begin{equation}
        N = \mathcal{O}\left(\frac{\log(M/\delta)}{\epsilon^2}\max_{1\leq i\leq M} \norm{O_i - \frac{\tr(O_i)}{2^n}\mathbb{I}}^2_{\mathrm{shadow}}\right)
    \end{equation}
    copies of an unknown quantum state $\rho$ suffice to predict $\hat{o}_i$ such that
    \begin{equation}
        |\hat{o}_i - \tr(O_i \rho)| \leq \epsilon
    \end{equation}
    for all $1 \leq i \leq M$, with probability at least $1-\delta$.
\end{theorem}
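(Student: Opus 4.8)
The plan is to run the standard median-of-means analysis for the single-shot classical shadow estimator $\hat\rho = \mathcal{M}^{-1}(U^\dagger \ket{\hat b}\bra{\hat b} U)$ described above, so the first step is to record its two elementary properties. It is \emph{unbiased}: by definition of $\mathcal{M}$ we have $\E[U^\dagger \ket{\hat b}\bra{\hat b}U] = \mathcal{M}(\rho)$, and since $\mathcal{M}^{-1}$ is linear, $\E[\hat\rho] = \rho$, hence $\E[\tr(O_i\hat\rho)] = \tr(O_i\rho)$. It is also \emph{trace-preserving}: $\mathcal{M}$, being an average of (unitary conjugation)$\circ$(computational-basis dephasing)$\circ$(unitary conjugation), is trace-preserving, so $\tr(\hat\rho) = 1$. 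Writing $O_{i,0} \triangleq O_i - \tfrac{\tr(O_i)}{2^n}\mathbb{I}$ for the traceless part, trace preservation gives $\tr(O_i\hat\rho) = \tr(O_{i,0}\hat\rho) + \tfrac{\tr(O_i)}{2^n}$, so the random fluctuation of $\tr(O_i\hat\rho)$ about $\tr(O_i\rho)$ coincides with that of $\tr(O_{i,0}\hat\rho)$ about $\tr(O_{i,0}\rho)$. This is exactly why only $\|O_{i,0}\|_{\mathrm{shadow}}$ enters the bound.

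The technical heart is the single-shot second-moment bound $\Var[\tr(O_{i,0}\hat\rho)] \le \E[\tr(O_{i,0}\hat\rho)^2] \le \|O_{i,0}\|_{\mathrm{shadow}}^2$. Using self-adjointness of the superoperator $\mathcal{M}^{-1}$ one rewrites $\tr(A\hat\rho) = \bra{\hat b} U\,\mathcal{M}^{-1}(A)\,U^\dagger \ket{\hat b}$, and then, averaging over $U\sim\mathcal{U}$ and the Born-rule outcome $\hat b$ (which occurs with probability $\bra{b}U\rho U^\dagger\ket{b}$),
\begin{equation}
    \E\!\left[\tr(A\hat\rho)^2\right] = \E_{U\sim\mathcal{U}} \sum_{b\in\{0,1\}^n} \bra{b}U\rho U^\dagger\ket{b}\,\bra{b}U\,\mathcal{M}^{-1}(A)\,U^\dagger\ket{b}^2 \le \|A\|_{\mathrm{shadow}}^2,
\end{equation}
where by definition $\|A\|_{\mathrm{shadow}}^2 = \max_{\sigma \text{ a state}} \E_{U\sim\mathcal{U}} \sum_{b} \bra{b}U\sigma U^\dagger\ket{b}\,\bra{b}U\,\mathcal{M}^{-1}(A)\,U^\dagger\ket{b}^2$. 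To make this meaningful for the concrete ensembles of interest one also checks that $\mathcal{M}$ is invertible on traceless operators: for global random Cliffords $\mathcal{M}^{-1}(X) = (2^n+1)X - \tr(X)\mathbb{I}$, and for tensor products of single-qubit Cliffords $\mathcal{M}^{-1} = \bigotimes_{i=1}^n \mathcal{M}_i^{-1}$ with $\mathcal{M}_i^{-1}(X) = 3X - \tr(X)\mathbb{I}$; in both cases $\hat\rho$ and $\|\cdot\|_{\mathrm{shadow}}$ are well-defined and finite, and the worst-case variance admits the familiar explicit bounds (controlled by $4^k\|A\|_\infty^2$-type quantities for $k$-local $A$ under the local ensemble, and by $3\tr(A^2)$ under the global ensemble).

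Given the variance bound, median-of-means finishes the argument. Partition the $N$ samples into $K$ groups of size $N/K$, form the group averages $\hat\rho_{(1)},\dots,\hat\rho_{(K)}$, and output $\hat o_i = \median_{k}\big(\tr(O_i\hat\rho_{(k)})\big)$. Each $\tr(O_i\hat\rho_{(k)})$ is unbiased for $\tr(O_i\rho)$ with variance at most $\tfrac{K}{N}\max_i\|O_{i,0}\|_{\mathrm{shadow}}^2$, so choosing $N/K = \Theta\!\big(\epsilon^{-2}\max_i\|O_{i,0}\|_{\mathrm{shadow}}^2\big)$ makes each group estimate $\epsilon$-accurate with probability at least $3/4$ by Chebyshev. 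The median deviates by more than $\epsilon$ only if at least half of the $K$ independent groups do, which by a Chernoff/Hoeffding bound has probability $e^{-\Omega(K)}$; taking $K = \Theta(\log(M/\delta))$ makes this at most $\delta/M$, and a union bound over $i = 1,\dots,M$ (using the common worst-case shadow norm to fix a single $N/K$) gives simultaneous $\epsilon$-accuracy with probability at least $1-\delta$. The total number of copies is $N = K\cdot(N/K) = \mathcal{O}\!\big(\epsilon^{-2}\log(M/\delta)\max_i\|O_{i,0}\|_{\mathrm{shadow}}^2\big)$, as claimed.

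The routine parts are the median-of-means concentration and the union bound; the real content is the single-shot second-moment bound, and behind it the verification — for each ensemble $\mathcal{U}$ actually used — that $\mathcal{M}^{-1}$ is well-defined and that $\|A\|_{\mathrm{shadow}}$ is finite with the stated explicit bounds. That ensemble-specific computation (the closed form of $\mathcal{M}$, its self-adjointness, and the worst-case-state optimization defining the shadow norm) is the main obstacle; everything downstream is ensemble-agnostic.
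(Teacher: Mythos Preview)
The paper does not actually prove this theorem; it is quoted verbatim as Theorem~1 of~\cite{huang2020predicting} in the preliminaries and used as a black box thereafter. Your proposal correctly reconstructs the standard proof from that reference --- unbiasedness of $\hat\rho$, the single-shot variance bound $\Var[\tr(O_{i,0}\hat\rho)]\le\|O_{i,0}\|_{\mathrm{shadow}}^2$ via self-adjointness of $\mathcal{M}^{-1}$, and median-of-means with $K=\Theta(\log(M/\delta))$ groups followed by a union bound --- so there is nothing to compare against and your argument is sound.
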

Here, $\norm{\cdot}_{\mathrm{shadow}}$ denotes the shadow norm, which depends on the ensemble of unitary transformations used to create the classical shadow. Explicitly, the shadow norm is given by
\begin{equation}
    \norm{O}_{\mathrm{shadow}} = \max_{\sigma : \text{state}}\left(\operatornamewithlimits{\mathbb{E}}_{U \sim \mathcal{U}} \sum_{b \in \{0,1\}^n} \expval{U\sigma U^\dagger}{b} \expval{U\mathcal{M}^{-1}(O)U^\dagger}{b}^2\right)^{1/2}.
\end{equation}

There are two examples of unitary ensembles emphasized in~\cite{huang2020predicting} that we also focus on here.
These are the cases of tensor products of random single-qubit Pauli gates and random $n$-qubit Clifford circuits.
Random Pauli measurements allow us to predict properties for local observables while random Clifford measurements allow us to predict properties for observables with bounded Frobenius norm $\tr(O^2)$.
In what follows, we provide sample complexity upper and lower bounds for predicting properties with respect to local observables and observables with bounded Frobenius norm.

Moreover, in the local observable case, there is a particularly simple form for the inverse quantum channel $\mathcal{M}^{-1}$, which we make use of in the later sections.
In particular, when predicting local observables, the random unitary $U = U_1 \otimes \cdots \otimes U_n$ can be written as a tensor product of random single-qubit Pauli gates.
Then, the inverse quantum channel is given by
\begin{equation}
    \label{eq:local-channel}
    \hat{\rho} = \bigotimes_{j=1}^n \left(3U_j^\dagger \hat{\ket{b_j}}\hat{\bra{b_j}}U_j-\mathbb{I}\right),
\end{equation}
where $\hat{b}_1,\dots,\hat{b}_n \in \{0,1\}$.
Note that for local observables, the classical shadow of a quantum state can be stored efficiently in classical memory because
\begin{equation}
    U_j^\dagger \hat{\ket{b_j}}\hat{\bra{b_j}}U_j \in \{\ket{0}, \ket{1},\ket{+},\ket{-},\ket{+i}, \ket{-i}\}.
\end{equation}
This is clear because each of the $U_j$ are Pauli gates.
Thus, we can represent each classical snapshot $\hat{\rho}_i$ using $\mathcal{O}(n)$ bits ($\mathcal{O}(1)$ bits to store which of the $6$ Pauli eigenstates $U_j^\dagger \hat{\ket{b_j}}\hat{\bra{b_j}}U_j$ is for each of the $n$ qubits).
The entire classical shadow $\mathsf{S}(\rho; T) = \{\hat{\rho}_1,\dots, \hat{\rho}_T\}$ can then be stored in $\mathcal{O}(nT)$ bits.

\subsubsection{Non-adaptively chosen Pauli Observables}
\label{sec:nonadapt-pauli}

In this section, we give an overview of the algorithm from~\cite{huang2021information}.
We utilize this in Appendix~\ref{app:pauli-upper} to prove the sample complexity upper bound for predicting expectation values of adaptively chosen Pauli observables.
For a more detailed presentation, we direct the reader to~\cite{huang2021information}.

Given copies of an unknown $n$-qubit quantum state, the goal of the algorithm from~\cite{huang2021information} is to predict expectation values $\tr(P_i\rho)$, of $M$ different (non-adaptively chosen) $n$-qubit Pauli observables $P_1,\dots, P_M$, i.e., $P_i \in \{I, X, Y, Z\}^{\otimes n}$.
For this task, \cite{huang2021information} achieves the following rigorous guarantee.

\begin{theorem}[Theorem 4 in~\cite{huang2021information}]
    \label{thm:nonadapt-pauli}
    Let $\epsilon, \delta > 0$.
    Let $P_1,\dots, P_M \in \{I,X,Y,Z\}^{\otimes n}$ be $M$ $n$-qubit Pauli observables.
    Then,
    \begin{equation}
        N = \mathcal{O}\left(\frac{\log(M/\delta)}{\epsilon^4}\right)
    \end{equation}
    copies of an unknown quantum state $\rho$ suffice to predict $\hat{p}_i$ such that
    \begin{equation}
        |\hat{p}_i - \tr(P_i \rho)| \leq \epsilon
    \end{equation}
    for all $1 \leq i \leq M$, with probability at least $1-\delta$.
\end{theorem}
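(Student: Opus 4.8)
The plan is to follow the approach of~\cite{huang2021information}, whose essential feature is that the mechanism is allowed to perform \emph{entangled} measurements across several copies of $\rho$ at once. This is necessary: a mechanism restricted to measuring one copy at a time cannot predict all Pauli expectation values with a copy complexity independent of $n$ (this is the matching lower bound also proved in~\cite{huang2021information}). The way around this is the observation that the \emph{square} $\tr(P\rho)^2$ of any Pauli expectation value can be extracted from a single joint measurement on two copies --- the Bell measurement --- with no need to match any measurement basis to $P$, which is precisely what removes the dimension dependence.

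The first step handles magnitudes. Measuring $\rho\otimes\rho$ in the $n$-qubit Bell basis (with the standard conjugation relabeling of outcomes) yields an outcome identified with a Pauli $Q'\in\{I,X,Y,Z\}^{\otimes n}$ whose distribution $\mu$ satisfies, for every Pauli $P$,
\begin{equation}
    \E_{Q'\sim\mu}\left[(-1)^{\langle P,\,Q'\rangle}\right] = \tr(P\rho)^2,
\end{equation}
where $\langle P,Q'\rangle\in\{0,1\}$ is $1$ iff $P$ and $Q'$ anticommute. Hence each Bell-sampled pair of copies supplies, simultaneously for all $P$, a $\pm1$-valued unbiased estimator of $\tr(P\rho)^2$. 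Because the observables $P_1,\dots,P_M$ are fixed in advance, I would take $L=\mathcal{O}(\log(M/\delta)/\epsilon^4)$ independent Bell samples, average the statistic $(-1)^{\langle P_i,\cdot\rangle}$ for each $i$, and use a Chernoff--Hoeffding bound together with a union bound over $i\in[M]$ to pin down every $\tr(P_i\rho)^2$ to additive error $\epsilon^2$; taking square roots (clipped below at $0$) and using $|\sqrt{x}-\sqrt{y}|\le\sqrt{|x-y|}$ then gives $|\tr(P_i\rho)|$ to additive error $\epsilon$. For any query whose estimated magnitude is at most $2\epsilon$, output $0$, which is already within $\mathcal{O}(\epsilon)$ of $\tr(P_i\rho)$.

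It remains to recover the sign of $\tr(P_i\rho)$ for the ``heavy'' queries (those with $|\tr(P_i\rho)|\gtrsim\epsilon$). Bell sampling alone cannot do this: its outcome distribution is unchanged if the sign of any $\tr(R\rho)$ is flipped (for instance it cannot distinguish $\ket{0}^{\otimes n}$ from $\ket{1}^{\otimes n}$), so an additional entangled measurement is required. The plan is an auxiliary procedure that outputs, for each heavy queried $P_i$, a bounded random variable whose expectation has the same sign as $\tr(P_i\rho)$ and magnitude $\gtrsim\mathrm{poly}(\epsilon)$ --- e.g.\ a measurement on triples of copies estimating the sign-carrying surrogate $\tr(P_i\rho)^3$, or a measurement on Clifford-rotated pairs $C\rho C^\dagger\otimes\rho$ whose Bell statistic estimates $\tr(C^\dagger P_i C\,\rho)\,\tr(P_i\rho)$, with such relations chained to a reference of known sign --- so that the sign of each heavy query is decided from $\mathcal{O}(\log(M/\delta)/\epsilon^4)$ further copies. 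Combining, the mechanism outputs $\hat p_i=0$ when the magnitude estimate is small and $\hat p_i=\widehat{\mathrm{sign}}_i\cdot\sqrt{\max\{0,\widehat{\tr(P_i\rho)^2}\}}$ otherwise; rescaling $\epsilon$ by an absolute constant gives the claimed $(\epsilon,\delta)$ guarantee with $N=\mathcal{O}(\log(M/\delta)/\epsilon^4)$ copies.

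The main obstacle is the sign step: one must exhibit a multi-copy measurement that (i) is implementable with only two- and three-copy coherent processing, (ii) remains within the $\mathcal{O}(\log(M/\delta)/\epsilon^4)$ copy budget after the union bound over all $M$ queries, and (iii) genuinely fixes every sign --- in particular it must resolve the global sign ambiguity that the magnitudes leave undetermined, which forces the choice of a correct reference. By contrast, once the Bell-sampling identity is in hand, the magnitude step is a routine concentration argument.
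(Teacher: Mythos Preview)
Your magnitude step matches the paper's review of~\cite{huang2021information} exactly: Bell sampling on $\rho\otimes\rho$ yields, simultaneously for every Pauli $P$, a $\pm 1$ unbiased estimator of $\tr(P\rho)^2$, and $\mathcal{O}(\log(M/\delta)/\epsilon^4)$ pairs suffice by Hoeffding plus a union bound over the $M$ queries.

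The sign step, which you correctly flag as the obstacle, is where your proposal has a real gap. Your first suggestion --- a fixed measurement on triples estimating $\tr(P\rho)^3$ for all $P$ at once --- cannot work in the way Bell sampling does: the reason Bell sampling succeeds is that the operators $\{P\otimes P\}_P$ are mutually commuting (they are simultaneously diagonal in the Bell basis), whereas the family $\{P\otimes P\otimes P\}_P$ is not (e.g.\ $X^{\otimes 3}$ anticommutes with $Z^{\otimes 3}$), so no single three-copy measurement basis diagonalizes them all. Your Clifford-chaining idea is left vague, and it is unclear how to fix a global reference and control accumulated error within the stated budget.

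The actual sign procedure in~\cite{huang2021information}, as summarized in the paper, is different and cleaner: store $N_2=\mathcal{O}(\log(M/\delta)/\epsilon^2)$ copies of $\rho$ coherently in quantum memory and, for each $P_i$, perform the single two-outcome coherent measurement that returns the majority vote of the $P_i$-eigenvalues across those $N_2$ copies. When $|\tr(P_i\rho)|\gtrsim\epsilon$ this measurement has one outcome with probability exponentially close to $1$, so by the gentle-measurement lemma / quantum union bound the post-measurement state is barely disturbed and the \emph{same} $N_2$ copies can be reused for all $M$ observables. This is the missing ingredient in your sketch, and it is also why the paper later observes that the sign step is already robust to adaptivity in its Pauli upper bound (Proposition~\ref{prop:pauli-upper}).
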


The algorithm that achieves this sample complexity upper bound estimates $\tr(P\rho)$ in two steps.
First, the algorithm estimates the magnitude $\lvert\tr(P\rho)\rvert^2$.
Second, the algorithm determines $\mathrm{sign}(\tr(P\rho))$.

In the first step, one performs $N_1$ rounds of Bell measurements on $\rho \otimes \rho$.
In this way, one can obtain classical measurement data $S^{(t)}=\{S_k^{(t)}\}_{k\in [n]}$, where $t \in [N_1]$ corresponds to the $t$'th round of measurement and $k \in [n]$ corresponds to the $k$th qubit.
Then, the algorithm estimates the magnitude $\lvert\tr(P\rho)\rvert^2$ via the following classical processing.
Let $P = \sigma_1 \otimes \cdots \otimes \sigma_n$.
We can compute
\begin{equation}
    \label{eq:pauli-query}
    q_P(S^{(t)})=\prod_{k=1}^n \trace\left((\sigma_k\otimes\sigma_k)S^{(t)}_k\right) \in \{\pm 1\}
\end{equation}
for each $t\in [N_1]$, where we suggestively denote this quantity by $q_P$, similar to our query notation from Appendix~\ref{sec:int-mod}.
\cite{huang2021information} shows that the expectation of $q_P$ is the value we wish to estimate: 
\begin{equation}
    \mathbb{E}\left[q_P(S^{(t)})\right] = \lvert\trace(P\rho)\rvert^2.
\end{equation}
Thus, one can estimate this expectation value with the empirical mean
\begin{equation}
    q_P(\{S^{(t)}\}_{t\in [N_1]})=\frac{1}{N_1}\sum_{t=1}^{N_1} q_P(S^{(t)}).
\end{equation}
Then,~\cite{huang2021information} shows that if the Pauli observables $P_1,\dots,P_M$ are selected non-adaptively, then 
\begin{equation}
    2N_1=\mathcal{O}\left(\frac{\log(M/\delta)}{\epsilon^4}\right)
\end{equation}
copies of the unknown quantum state $\rho$ suffice to estimate $\lvert\trace(P_i\rho)\rvert^2$ with error at most $\epsilon^2$ for all $1\leq i \leq M$.

In the second step to determine $\mathrm{sign}(\tr(P\rho))$, \cite{huang2021information} performs a coherent measurement on $N_2$ copies of $\rho$.
This measurement measures the eigenvalues of the Pauli operator $P$ on each copy of $\rho$ and takes the majority vote, yielding an estimate for the sign of $\tr(P\rho)$.
For predicting $M$ Pauli observables $P_1,\dots, P_M$, this requires
\begin{equation}
    N_2 = \mathcal{O}\left(\frac{\log(M/\delta)}{\epsilon^2}\right)
\end{equation}
copies of $\rho$ in quantum memory.
Moreover, the argument in~\cite{huang2021information} allows one to reuse these $N_2$ copies of $\rho$ stored in quantum memory for each queried Pauli observable $P_i$.
This is due to the quantum union bound~\cite{aaronson2006qma} (a generalization of the gentle measurement lemma~\cite{aaronson2004limitations,wildebook}), since performing this coherent measurement does not disturb the copies of the quantum state much so that one can perform the coherent measurement repeatedly, for each observable $P_i$ queried.
Notice that this in fact makes this step of the algorithm resistant to adaptivity.

\subsubsection{Classical shadows from uniform POVM measurements}
\label{sec:uniform-povm}
In this section, we discuss a different version of the classical shadow formalism introduced in Appendix~\ref{sec:classical-shadow} in which the shadows are generated from independent measurements in the uniform POVM.
This version was first introduced in~\cite{kueng2019classical}.
These classical shadows from uniform POVM measurements exhibit stronger concentration properties than the canonical classical shadow formalism~\cite{huang2020predicting} using Clifford and Pauli measurements.
This property will be important for our proof of Theorem~\ref{thm:nts} in Appendix~\ref{sec:thresh}.

The classical shadow is obtained by performing the following measurement procedure: apply a Haar random unitary to the state $\rho \mapsto U\rho U^\dag$ and then perform a computational-basis measurement.
Upon receiving the $n$-bit measurement outcome $\hat{\ket{b}} \in \{0,1\}^n$, we can store an (inefficient) classical description of $U^\dag \hat{\ketbra{b}}U$ in classical memory.
Applying an inverse quantum channel, as in the canonical classical shadow formalism~\cite{huang2020predicting}, gives us an approximation of $\rho$
\begin{equation} \label{eq: classical shadow}
    \hat{\rho}=\mathcal{M}^{-1}\left(U^\dag \ketbra{\hat{b}}U\right),
\end{equation}
where the inverted quantum channel is $\mathcal{M}^{-1}(X)=(2^n+1)X-\mathbb{I}$.
Note that $\mathbb{E}[\hat{\rho}] = \rho$, where the expectation is over the random choice of unitary and the randomness in the measurement results.
The classical shadow representation is then a collection of such $\hat{\rho}$ after repeating this measurement procedure several times:
\begin{equation}
    \mathsf{S}(\rho; N) = \{\hat{\rho}_1,\dots, \hat{\rho}_N\}.
\end{equation}

The procedure above is equivalent to repeatedly measuring the unknown state $\rho$ with the uniform POVM $\{d \ketbra{v}\, \mathrm{d}v\}$, where $\mathrm{d}v$ is the unique unitarily invariant measure on the complex unit sphere induced by the Haar measure, and applying the same inverted quantum channel to the measurement results.
This is similar to the tomography methods in~\cite{guctua2020fast}.
This is why we refer to this version of classical shadow as using uniform POVM measurements.

We can then predict properties $\tr(O \rho)$ using the classical shadow by simply taking the empirical mean.
Specifically, we can produce an estimate $\hat{o}$ of $\tr(O\rho)$ as follows.
\begin{equation}
    \label{eq:ohat}
    \hat{o} = \frac{1}{N} \sum_{i=1}^N \tr(O \hat{\rho}_i).
\end{equation}
Because $\mathbb{E}[\hat{\rho}_i] = \rho$, this estimate clearly reproduces $\tr(O\rho)$ in expectation.
In what follows, we consider the case where $N =1$, i.e., we take $\hat{\rho}$ in Equation~\eqref{eq: classical shadow} to be our classical shadow.
The analysis easily generalizes to arbitrary $N$, but we will need the $N = 1$ case in Appendix~\ref{sec:thresh}.

We now prove the key property that this classical shadows estimate for the expectation value of an observable exhibits exponential concentration.

\begin{prop} \label{prop: concentration}
    Fix an observable $O$ and let $\hat{o} = \tr(O\hat{\rho})$, where $\hat{\rho}$ is an estimate of $\rho$ as in Equation~\eqref{eq: classical shadow}. Then for any $0 \leq \tau \leq 1$,
    \begin{equation}
        \Pr[|\hat{o}-\mathbb{E}[\hat{o}]|\geq \tau] \leq 2\exp\left(-\frac{\tau^2}{16B+4\sqrt{B}\tau}\right).
    \end{equation}
\end{prop}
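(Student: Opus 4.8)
The plan is to establish the concentration of $\hat{o} = \tr(O\hat{\rho})$ by computing its moments and invoking a Bernstein-type argument, mirroring the analysis of classical shadows from uniform POVM measurements in~\cite{kueng2019classical}. First I would write $\hat{o} = \tr(O\hat{\rho}) = (2^n+1)\bra{\hat{b}}U O U^\dagger \ket{\hat{b}} - \tr(O)$, using $\mathcal{M}^{-1}(X) = (2^n+1)X - \mathbb{I}$. Thus up to the deterministic shift $-\tr(O)$, the random variable is $(2^n+1)\langle v | O | v\rangle$ where $\ket{v} = U^\dagger \ket{\hat{b}}$ is Haar-distributed on the complex unit sphere in $\mathbb{C}^{2^n}$. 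The centered variable $X \triangleq \hat{o} - \mathbb{E}[\hat{o}]$ then has all the structure needed; the key is to bound $\mathbb{E}[|X|^p]$ for integer $p$.

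The main technical step is moment estimation: I would compute $\mathbb{E}[(\langle v|O|v\rangle)^p]$ using Weingarten calculus / the known formula for moments of $\langle v|A|v\rangle$ under the Haar measure, which yields expressions in terms of $\tr(O), \tr(O^2),\dots,\tr(O^p)$ with the combinatorial coefficients $\frac{(d-1)!}{(d-1+k)!}$ for $d = 2^n$. The crucial simplifications are that (i) the leading $d$-dependence cancels against the $(2^n+1)^p$ prefactor, so the moments remain bounded as $n\to\infty$, and (ii) using $\|O\|_\infty \le 1$ one has $|\tr(O^k)| \le \tr(O^2)$ for all $k \ge 2$, so every trace term is controlled by $B = \tr(O^2)$ (note $B \ge \tr(O^2)$ is the relevant quantity; the statement as written uses $B$ as an upper bound on $\tr(O^2)$). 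Carrying this through, one shows that the variance satisfies $\Var(\hat{o}) = \mathcal{O}(B)$ and, more importantly, that the $p$-th central moment obeys a bound of the form $\mathbb{E}[|X|^p] \le p! \cdot (cB)\cdot(c\sqrt{B})^{p-2}$ or an analogous sub-exponential moment growth, i.e. $X$ satisfies a Bernstein moment condition with variance proxy $\sim B$ and scale parameter $\sim \sqrt{B}$.

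Given such a moment bound, the tail bound follows from the standard Bernstein inequality: $\Pr[|X| \ge \tau] \le 2\exp\!\left(-\frac{\tau^2}{2(\sigma^2 + c\sqrt{B}\,\tau)}\right)$, and plugging in $\sigma^2 = \mathcal{O}(B)$ with the explicit constants tracked through the moment computation gives exactly $2\exp\!\left(-\frac{\tau^2}{16B + 4\sqrt{B}\tau}\right)$. The restriction $0 \le \tau \le 1$ enters because $\|O\|_\infty \le 1$ forces $|\hat{o} - \mathbb{E}[\hat{o}]|$ to live on a bounded scale relative to the moment estimates, so the Bernstein bound is only claimed in that regime (and is trivially true, or vacuous, outside it).

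I expect the main obstacle to be step two: getting the moment bounds with \emph{explicit} constants clean enough to produce the stated $16B + 4\sqrt{B}\tau$ denominator, rather than just an $\mathcal{O}(\cdot)$ statement. The Weingarten/Haar-moment computation for $\langle v|O|v\rangle^p$ produces a sum over permutations whose cycle structure contributes products of traces of powers of $O$; bounding this sum uniformly in $p$ and $n$ while extracting tight numerical constants (and verifying the leading-order cancellation of the $(2^n+1)^p$ factor) is the delicate part. A cleaner route, which I would try first, is to avoid the full moment calculation by instead bounding $\hat{o}$ via a direct argument: write $\langle v | O | v \rangle$ in terms of the Gaussian representation $\ket{v} = g/\|g\|$ with $g$ a standard complex Gaussian vector, use that $\|g\|^2$ concentrates sharply around $d$, and apply Hanson–Wright to the quadratic form $g^\dagger O g$ — this gives the sub-exponential tail with the sub-gaussian part governed by $\|O\|_F^2 = B$ and the sub-exponential part by $\|O\|_\infty \le 1 \le \sqrt{B}$, which is precisely the shape of the claimed bound, with somewhat more transparent constant-tracking.
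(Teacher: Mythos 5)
Your overall plan (compute moments, then apply a Bernstein-type inequality) is the paper's route, and the observation that a symmetric-subspace / Weingarten moment computation is the right tool also matches. But there is a genuine error in your first step that would make the whole computation come out wrong.

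You assert that $\ket{v} = U^\dagger\ket{\hat{b}}$ is Haar-distributed on the complex unit sphere. It is not, unless $\rho = \mathbb{I}/d$. The probability of observing outcome $\hat{b}$ after applying $U$ is $\bra{\hat b}U\rho U^\dagger\ket{\hat b}$, so the distribution of $\ket{v}$ is tilted by the state: its density with respect to the uniform measure $\mathrm{d}v$ is $d\,\bra{v}\rho\ket{v}$. This is not a harmless technicality --- it is precisely what makes $\hat o = (d+1)\bra{v}O\ket{v} - \tr(O)$ unbiased for $\tr(O\rho)$. If you do a uniform-Haar moment computation you get $\mathbb{E}[\hat o] = \tr(O)/d$, not $\tr(O\rho)$, and the rest of the calculation goes off the rails. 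The paper's actual moment lemma handles this by carrying the $\rho$ factor inside the symmetric-subspace trace, i.e.\ it evaluates $\tr\!\bigl(P_{\mathrm{Sym}^{(k+1)}}\,\rho\otimes E^{\otimes k}\bigr)$ for a carefully recentred operator $E = O - \frac{\tr(O)+\tr(O\rho)}{d+1}\mathbb{I}$, rather than $\tr\!\bigl(P_{\mathrm{Sym}^{(k)}} O^{\otimes k}\bigr)$. Your Gaussian/Hanson--Wright alternative has the same problem: writing $\ket{v}=g/\|g\|$ with $g$ standard Gaussian gives the uniform sphere measure, so Hanson--Wright on $g^\dagger O g$ would bound the wrong random variable. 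To fix either route you must reweight by $\bra{v}\rho\ket{v}$ (or equivalently absorb one $\rho$ into the tensor being contracted against the symmetric projector), and re-centre around $\tr(O\rho)$ rather than $\tr(O)$; once this is done, the bounds $\tr(E)\le 2$ and $\|E\|_F^2 \le B$ yield exactly the moment growth $\mathbb{E}[|\bar X|^k]\le k!(4B)^{k/2}$ and hence, via Bernstein with $\sigma^2 = 8B$ and $R = 2\sqrt{B}$, the stated constants $16B + 4\sqrt{B}\tau$.
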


% I think we don't need the subsections
% \subsubsection{Intermediate results}

In order to prove the proposition, we require the following two well-known results.
First, recall a subexponential formulation of the scalar Bernstein Inequality.
This is a well-known result in probability (see, e.g., Lemma 5.4 in~\cite{rinaldo2019advanced}).

\begin{theorem}[Scalar Bernstein Inequality]
    \label{thm: bernstein}
    Let $X_1,...,X_m$ be independent random variables with zero mean and variance $\sigma_i^2$ such that for all integers $k\geq 2$, there exists an $R > 0$ such that
    \begin{equation}
        \mathbb{E}|X_i|^k \leq k!R^{k-2}\frac{\sigma_i^2}{2} \quad \textrm{for all} \quad 1 \leq i \leq m.
    \end{equation}
    Set $\sigma^2 = \sum_{i=1}^m \sigma_i^2$. Then, for all $\tau > 0$
    \begin{equation}
        \Pr[\left|\sum_{i=1}^m X_i\right| \geq \tau] \leq 2\exp\left(-\frac{\tau^2/2}{\sigma^2+R\tau}\right).
    \end{equation}
\end{theorem}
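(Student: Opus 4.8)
\textbf{Proof proposal for Proposition~\ref{prop: concentration}.}

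The plan is to apply the Scalar Bernstein Inequality (Theorem~\ref{thm: bernstein}) to the single random variable $X = \hat{o} - \mathbb{E}[\hat{o}]$, which has zero mean. So the task reduces to (i) computing the variance $\sigma^2 = \Var(\hat{o})$ and (ii) bounding the higher moments $\mathbb{E}|X|^k$ in the subexponential form $k!\,R^{k-2}\sigma^2/2$ for an appropriate $R$. Since we work with $N=1$, we have $\hat{o} = \tr(O\hat{\rho})$ with $\hat{\rho} = (2^n+1)\,U^\dagger \ketbra{\hat{b}} U - \mathbb{I}$, so $\hat{o} = (2^n+1)\bra{\hat{b}} U O U^\dagger \ket{\hat{b}} - \tr(O)$. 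Writing $\ket{v} = U^\dagger \ket{\hat{b}}$, the vector $\ket{v}$ is Haar-distributed on the complex unit sphere in dimension $d = 2^n$ (after marginalizing over the measurement outcome), so $\hat{o} = (d+1)\bra{v}O\ket{v} - \tr(O)$ and we must control the fluctuations of $\bra{v}O\ket{v}$ for Haar-random $\ket{v}$.

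First I would reduce to a traceless observable: since shifting $O$ by a multiple of the identity only shifts $\hat{o}$ deterministically and the proposition concerns $\hat{o} - \mathbb{E}[\hat{o}]$, I may assume $\tr(O) = 0$ (note this only decreases $B \geq \tr(O^2)$, or rather we use $B$ as an upper bound on $\tr(O^2)$ throughout). Then $\hat{o} = (d+1)\bra{v}O\ket{v}$. Next I would compute the moments of $\bra{v}O\ket{v}$ using standard formulas for Haar integration over the sphere: $\mathbb{E}[\bra{v}O\ket{v}] = \tr(O)/d = 0$ and $\mathbb{E}[\bra{v}O\ket{v}^2] = \frac{\tr(O)^2 + \tr(O^2)}{d(d+1)} = \frac{\tr(O^2)}{d(d+1)}$, giving $\Var(\hat{o}) = (d+1)^2 \cdot \frac{\tr(O^2)}{d(d+1)} = \frac{(d+1)\tr(O^2)}{d} \leq 2\tr(O^2) \leq 2B$. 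For the higher moments, the key structural fact is that $Y \triangleq d\,\bra{v}O\ket{v}$ can be viewed (by diagonalizing $O = \sum_i \lambda_i \ketbra{e_i}$) as $Y = \sum_i \lambda_i \cdot d|\langle v | e_i\rangle|^2$, and the vector $(|\langle v|e_1\rangle|^2,\dots,|\langle v|e_d\rangle|^2)$ is Dirichlet-distributed, so $d|\langle v|e_i\rangle|^2$ is (marginally) approximately Exponential(1). I would bound $\mathbb{E}|Y|^k$ by relating it to a weighted sum of near-exponential coordinates: using the operator bound $\|O\|_\infty \leq \sqrt{B}$ (since $\|O\|_\infty^2 \leq \tr(O^2) \leq B$) together with the variance bound, one shows $\mathbb{E}|\hat{o}|^k \leq k!\,(c\sqrt{B})^{k-2}\cdot \frac{C B}{2}$ for suitable constants — i.e., $\hat{o}$ is subexponential with variance proxy $O(B)$ and scale parameter $R = O(\sqrt{B})$. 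The cleanest route is probably: write $\hat{o} - \mathbb{E}[\hat{o}]$ in terms of a shifted sum of independent-ish exponential-type variables each scaled by at most $\|O\|_\infty \leq \sqrt{B}$, and invoke the known subexponential moment growth of exponential random variables, $\mathbb{E}[|E - 1|^k] \leq k!$, carrying the weights through.

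I expect the main obstacle to be step (ii): cleanly controlling the higher moments $\mathbb{E}|\hat{o} - \mathbb{E}[\hat{o}]|^k$ with the right constants so that after plugging $\sigma^2 \leq 2B$ (or a constant multiple) and $R$ proportional to $\sqrt{B}$ into Bernstein's inequality, the exponent comes out as exactly $-\tau^2 / (16B + 4\sqrt{B}\tau)$. The subtlety is that the $|\langle v|e_i\rangle|^2$ are not independent (they sum to $1$), so I cannot directly cite a sum-of-independent-exponentials bound; instead I would either (a) use the Beta-distribution marginals and a union-type argument over the spectral decomposition, or (b) invoke a known subexponential concentration result for linear statistics $\bra{v}O\ket{v}$ of Haar-random vectors (e.g., via the fact that such a quantity is a Lipschitz-type function after an appropriate change of variables, or via the Dirichlet concentration). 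Once the subexponential moment bound $\mathbb{E}|\hat{o}-\mathbb{E}[\hat{o}]|^k \leq k!\,R^{k-2}\sigma^2/2$ is established with $\sigma^2 = 8B$ and $R = 4\sqrt{B}$ (choosing constants to match the stated bound), applying Theorem~\ref{thm: bernstein} with $m = 1$ gives $\Pr[|\hat{o}-\mathbb{E}[\hat{o}]| \geq \tau] \leq 2\exp(-\frac{\tau^2/2}{8B + 4\sqrt{B}\tau}) = 2\exp(-\frac{\tau^2}{16B + 8\sqrt{B}\tau})$, and a mild re-tuning of the constants in the moment bound yields precisely the claimed $2\exp(-\tau^2/(16B + 4\sqrt{B}\tau))$. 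The restriction $0 \leq \tau \leq 1$ should enter only to keep the argument of the exponential controlled / to absorb lower-order terms when matching constants.
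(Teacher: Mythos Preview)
Your proposal has a genuine gap at the very first step: the post-measurement vector $\ket{v} = U^\dagger\ket{\hat{b}}$ is \emph{not} Haar-distributed on the sphere. After applying Haar-random $U$ and measuring $U\rho U^\dagger$ in the computational basis, the outcome $\hat{b}$ occurs with probability $\bra{\hat{b}}U\rho U^\dagger\ket{\hat{b}} = \bra{v}\rho\ket{v}$, so the induced law of $\ket{v}$ has density $d\,\bra{v}\rho\ket{v}$ with respect to the Haar measure. Your variance formula $\mathbb{E}[\bra{v}O\ket{v}^2] = (\tr(O)^2 + \tr(O^2))/(d(d+1))$ and the entire Dirichlet-coordinates picture both presuppose the unweighted Haar law, and hence are only valid for $\rho = \mathbb{I}/d$. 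For general $\rho$ the higher moments of $\hat{o}-\mathbb{E}[\hat{o}]$ genuinely depend on $\rho$, and your reduction to a traceless $O$ does not remove this dependence.

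The paper's route around this is different and cleaner. It writes $\bar{X} = (d+1)\bra{v}E\ket{v}$ with $E = O - \tfrac{\tr(O)+\tr(O\rho)}{d+1}\mathbb{I}$, so that the $\rho$-weight is carried explicitly, and then computes the $k$-th moment as
\[
\mathbb{E}[\bar{X}^k] \;=\; d(d+1)^k \int (\bra{v}\rho\ket{v})\,\bra{v}E\ket{v}^k\,\mathrm{d}v
\;=\; d(d+1)^k\,\binom{d+k}{k+1}^{-1}\,\tr\!\bigl(P_{\mathrm{Sym}^{(k+1)}}\,\rho\otimes E^{\otimes k}\bigr),
\]
using the standard identity $\int (\ketbra{v})^{\otimes (k+1)}\,\mathrm{d}v = \binom{d+k}{k+1}^{-1}P_{\mathrm{Sym}^{(k+1)}}$. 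The prefactor is at most $(k+1)!$, and the symmetric-subspace trace is bounded by $\max\{\tr(E),\|E\|_F\}^k \le B^{k/2}$ (using both $\tr(O^2)\le B$ and $\|O\|_\infty\le 1$). This yields $\mathbb{E}|\bar{X}|^k \le k!\,(4B)^{k/2}$ uniformly in $\rho$, after which Bernstein with $\sigma^2 = 8B$ and $R = 2\sqrt{B}$ gives exactly the stated exponent. Your Dirichlet/exponential approach, even if the Haar issue were repaired, would still need to cope with the $\rho$-weighting, and the symmetric-subspace computation is the tool that handles both the dependence among coordinates and the state weighting in one stroke.
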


Second, we need the following result about the moments of the uniform POVM, which follows from arguments in representation theory (see, e.g.,~\cite{scott2006tight,gross2015partial,guctua2020fast}).

\begin{theorem} [Moments of the uniform POVM] \label{thm: moment}
    Fix $k\in \mathbb{N}$ and let $P_{\mathrm{Sym}^{(k)}}$ denote the projector onto the totally symmetric subspace $\mathrm{Sym}^{(k)} \subset (\mathbb{C}^d)^{\otimes k}$. Then, 
    \begin{equation*}
        \int_{\mathbb{S}^{d-1}} (\ketbra{v})^{\otimes k} \mathrm{d}v = {d+k-1 \choose k}^{-1} P_{\mathrm{Sym}^{(k)}}.
    \end{equation*}
\end{theorem}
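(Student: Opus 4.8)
The plan is to exploit the unitary invariance of the measure $\mathrm{d}v$ together with Schur's lemma. Write $A \triangleq \int_{\mathbb{S}^{d-1}} (\ketbra{v})^{\otimes k}\,\mathrm{d}v$. First, for every $v \in \mathbb{S}^{d-1}$ the vector $\ket{v}^{\otimes k}$ lies in the totally symmetric subspace, so the range of $(\ketbra{v})^{\otimes k}$ is contained in $\mathrm{Sym}^{(k)}$; integrating, $P_{\mathrm{Sym}^{(k)}} A = A = A\,P_{\mathrm{Sym}^{(k)}}$, i.e. $A$ may be regarded as an operator on $\mathrm{Sym}^{(k)}$. Second, for any unitary $U \in \mathrm{U}(d)$ the change of variables $v \mapsto Uv$ preserves $\mathrm{d}v$, and since $\ketbra{Uv} = U\ketbra{v}U^\dagger$,
\begin{equation}
    U^{\otimes k} A \,(U^\dagger)^{\otimes k} = \int_{\mathbb{S}^{d-1}} (U\ketbra{v}U^\dagger)^{\otimes k}\,\mathrm{d}v = \int_{\mathbb{S}^{d-1}} (\ketbra{Uv})^{\otimes k}\,\mathrm{d}v = A .
\end{equation}
Hence $A$ commutes with the action of $U^{\otimes k}$ on $\mathrm{Sym}^{(k)}$ for all $U$.

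Next I would invoke the classical fact that $\mathrm{Sym}^{(k)}(\mathbb{C}^d)$ carries an irreducible representation of $\mathrm{U}(d)$ (equivalently $\mathrm{SU}(d)$): it is the irreducible $\mathrm{GL}(d)$-module of highest weight $(k,0,\dots,0)$, a standard consequence of highest-weight theory or of Schur--Weyl duality. (If one wants to be self-contained, one can instead check directly that the set $\{\ket{v}^{\otimes k}:v\in\mathbb{S}^{d-1}\}$ spans $\mathrm{Sym}^{(k)}$ via a polarization/derivative argument and that no proper nonzero subspace is $U^{\otimes k}$-invariant.) By Schur's lemma, any operator on an irreducible representation that commutes with all group elements is a scalar multiple of the identity on that representation, so there is a constant $c = c(d,k)$ with
\begin{equation}
    A = c\, P_{\mathrm{Sym}^{(k)}} .
\end{equation}

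Finally I would fix $c$ by taking traces. On the left, $\tr\!\big[(\ketbra{v})^{\otimes k}\big] = \big(\tr[\ketbra{v}]\big)^k = 1$ for every unit vector, and $\int_{\mathbb{S}^{d-1}}\mathrm{d}v = 1$ (the measure is a probability measure, which follows from the POVM normalization $\int d\ketbra{v}\,\mathrm{d}v = \mathbb{I}$ by taking a trace), so $\tr[A] = 1$. On the right, $\tr[P_{\mathrm{Sym}^{(k)}}] = \dim \mathrm{Sym}^{(k)} = \binom{d+k-1}{k}$, the number of degree-$k$ monomials in $d$ variables. Equating yields $c = \binom{d+k-1}{k}^{-1}$, which is exactly the claimed identity. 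The only step that is not routine bookkeeping is the irreducibility of $\mathrm{Sym}^{(k)}$; I expect that to be the main obstacle for a fully self-contained write-up, but since it is a well-known representation-theoretic fact it can simply be cited, making the overall argument quite short.
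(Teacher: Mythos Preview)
Your proposal is correct and follows essentially the same approach the paper indicates: the paper only sketches that one uses the unitary invariance of $\mathrm{d}v$ to show the left-hand side commutes with $U^{\otimes k}$ for all unitaries $U$ and then applies Schur's lemma, which is precisely what you carry out in detail (including the trace normalization to pin down the constant).
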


One can show this by using the unitary invariance of $\mathrm{d}v$ to show that the left hand side commutes with $U^{\otimes k}$ for all $d\times d$ unitaries combined with Schur's lemma.

We can use Theorem~\ref{thm: moment} to prove bounds on the moments of a random variable based on the classical shadows estimate, which will then allow us to apply Theorem~\ref{thm: bernstein} to obtain exponential concentration.

\begin{lemma} \label{lem: moment bound}
    Fix density matrix $\rho$ of dimension $d$ and observable $O$ such that $tr(O^2)\leq B$. For $\hat{\rho}$ given in \ref{eq: classical shadow}, define the random variable $\Bar{X}=\tr(O\hat{\rho})-\tr(O\rho)$. Then, for $k\geq 2$,
    \begin{equation*}
        \mathbb{E}[|\Bar{X}|^k] \leq k! (4B)^{k/2}.
    \end{equation*}
\end{lemma}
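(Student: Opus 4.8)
\textbf{Proof proposal for Lemma~\ref{lem: moment bound}.}

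The plan is to compute the $k$-th absolute moment of $\bar X = \tr(O\hat\rho) - \tr(O\rho)$ by expanding $\hat\rho$ using its explicit form from Equation~\eqref{eq: classical shadow}, $\hat\rho = \mathcal{M}^{-1}(U^\dagger\ketbra{\hat b}U) = (d+1)U^\dagger\ketbra{\hat b}U - \mathbb{I}$, and then reducing the required expectations to moments of the uniform POVM via Theorem~\ref{thm: moment}. First I would note that since the mean is subtracted, $\mathbb{E}[\bar X] = 0$, and I can bound $\mathbb{E}[|\bar X|^k]$ by first bounding the raw moment of $\tr(O\hat\rho)$ and absorbing the shift; more cleanly, I would use $\mathbb{E}[|\bar X|^k] \le 2^{k-1}\big(\mathbb{E}[|\tr(O\hat\rho)|^k] + |\tr(O\rho)|^k\big)$, or better, directly analyze $|\tr(O\hat\rho) - \tr(O\rho)| \le |\tr(O\hat\rho)| + |\tr(O\rho)|$ and control each term. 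Writing $Y = \tr(O\hat\rho) = (d+1)\bra{\hat b}UOU^\dagger\ket{\hat b} - \tr(O)$, the task becomes bounding $\mathbb{E}[|Y|^k]$ where the expectation is over the Haar-random $U$ and the computational-basis outcome $\hat b$ (which, conditioned on $U$, is distributed as $\bra{\hat b}U\rho U^\dagger\ket{\hat b}$).

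The key computational step is the following: for the uniform POVM, the measurement produces outcome $\ket{v}$ (a Haar-random pure state reweighted by $\tr(\rho\ketbra{v})$), and $Y$ as a function of $\ket v$ is $Y(v) = (d+1)\bra v O\ket v - \tr O$ — this is (up to the $\mathcal{M}^{-1}$ normalization) a degree-one polynomial in $\ketbra v$. So $\mathbb{E}[Y^k]$ is an integral of a degree-$(k{+}1)$ polynomial in $\ketbra v$ against the reweighted measure, i.e. a contraction of $\int (\ketbra v)^{\otimes(k+1)}\,\mathrm dv$ with $\rho \otimes O^{\otimes k}$ (roughly). By Theorem~\ref{thm: moment}, this integral equals $\binom{d+k}{k+1}^{-1}P_{\mathrm{Sym}^{(k+1)}}$, and $\binom{d+k}{k+1}^{-1} \sim k!/d^{k+1}$ cancels most of the $(d+1)^k$ factors, leaving a bound that depends on $O$ only through traces of products like $\tr(O^j)$ for $j \le k$. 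I would then bound each such trace by $\tr(O^j) \le (\tr O^2)^{j/2}\cdot(\text{something}) \le B^{j/2}$ using $\|O\|_\infty \le 1$ and $\tr(O^2) \le B$ (so $\tr(O^j) \le \|O\|_\infty^{j-2}\tr(O^2) \le B$ for $j \ge 2$, and $|\tr O| \le \sqrt{d\,\tr O^2}$ must be handled using the $d+1$ factors — this is where the cancellation with $\binom{d+k}{k+1}^{-1}$ is essential). Carefully collecting the combinatorial factors from the symmetric-subspace projector and the multinomial expansion of $((d+1)\bra vO\ket v - \tr O)^k$ should yield $\mathbb{E}[|\bar X|^k] \le k!(4B)^{k/2}$.

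The main obstacle I anticipate is the bookkeeping of the combinatorial constants: expanding $((d+1)\bra vO\ket v - \tr O)^k$ gives $\sum_j \binom kj (d+1)^j (-\tr O)^{k-j} \mathbb{E}[(\bra vO\ket v)^j \cdot (\text{from reweighting})]$, and each term must be paired with the right piece of $P_{\mathrm{Sym}^{(k+1)}}$ so that the potentially dangerous $d$-dependence (both the explicit $(d+1)^j$ and the $d$-growth hidden in $\tr O$ and in $\tr(O^j)$ vs. $d\cdot\|O\|_\infty^j$) fully cancels against $\binom{d+k}{k+1}^{-1}$. I expect one needs the crude but robust bounds $|\tr O| \le \sqrt d \cdot \sqrt{\tr O^2} \le \sqrt{dB}$ and, for the contraction of $P_{\mathrm{Sym}^{(k+1)}}$ with a product of $O$'s interleaved with $\rho$, a bound of the form $\le d \cdot B^{(k+1)/2}$ or similar; tracking whether the final constant comes out as $(4B)^{k/2}$ rather than some larger $(CB)^{k/2}$ will require care, but since the lemma only needs \emph{some} explicit constant of this shape, a slightly lossy argument (e.g. bounding $\binom kj \le 2^k$, grouping terms, and using $j! \le k!$) should suffice. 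A cleaner alternative worth trying is to first establish the $k=2$ case (the variance bound $\mathbb{E}[\bar X^2] \le 4B$, which likely appears implicitly in~\cite{kueng2019classical}) and then bound higher moments by a direct induction or by the fact that $\bar X$ is a bounded-degree polynomial in a Haar-random state, invoking hypercontractivity-type estimates on the symmetric subspace — but the direct Theorem~\ref{thm: moment} route is the most self-contained.
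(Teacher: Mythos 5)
Your high-level plan — reduce $\mathbb{E}[|\bar X|^k]$ to contractions of $\int(\ketbra v)^{\otimes(k+1)}\,\mathrm dv$ against a $(k{+}1)$-fold tensor and invoke Theorem~\ref{thm: moment} — is the right engine, and you correctly identify that the prefactor $d(d+1)^k\binom{d+k}{k+1}^{-1}\lesssim (k+1)!$ kills the explicit $(d+1)^k$. But the specific routes you propose for handling the centering would fail, and the gap is exactly at the place you flag as ``requiring care.''

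The problem is that your proposed splittings — $\mathbb{E}[|\bar X|^k]\le 2^{k-1}(\mathbb{E}[|\tr(O\hat\rho)|^k]+|\tr(O\rho)|^k)$, or a multinomial expansion of $((d+1)\bra vO\ket v-\tr O)^k$ with crude term-by-term bounds — destroy a sign-dependent cancellation that is essential. To see this concretely: the $j=0$ term in your multinomial expansion contributes $(-\tr O)^k$, and $|\tr O|$ can be as large as $\sqrt{dB}$ (take $O=\mathbb{I}/\sqrt d$, which has $\|O\|_\infty\le 1$, $\tr(O^2)=1$, $\tr O=\sqrt d$); there is nothing left in the $\binom{d+k}{k+1}^{-1}$ prefactor to absorb this $d^{k/2}$, since that prefactor was already entirely consumed cancelling the $(d+1)^j$ from the $\mathcal{M}^{-1}$ rescaling. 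The only way $\mathbb{E}[|\bar X|^k]$ ends up $d$-independent is that the $(-\tr O)^{k-j}$ factors cancel, term by term, against $\tr(O)$-pieces hidden inside $\tr(P_{\mathrm{Sym}^{(j+1)}}\rho\otimes O^{\otimes j})$; once you take absolute values on each piece, that cancellation is gone and you get a bound polluted by powers of $\sqrt d$, not $(4B)^{k/2}$.

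The paper sidesteps this by choosing the centering in one stroke: set $E = O - \frac{\tr(O)+\tr(O\rho)}{d+1}\mathbb{I}$, and observe that $\bar X = (d+1)\bra v E\ket v$ \emph{exactly} (this absorbs both the $-\tr O$ from $\mathcal{M}^{-1}$ and the $-\tr(O\rho)$ from subtracting the mean). Then $\mathbb{E}[\bar X^k]=d(d+1)^k\binom{d+k}{k+1}^{-1}\tr(P_{\mathrm{Sym}^{(k+1)}}\rho\otimes E^{\otimes k})$ with \emph{no} residual multinomial sum, and the point of $E$ is that both invariants needed to bound the symmetric-subspace contraction are $d$-uniform: $\tr(E)\le 2$ and $\|E\|_F\le\sqrt B$, so $\tr(P_{\mathrm{Sym}^{(k+1)}}\rho\otimes E^{\otimes k})\le\max\{\tr E,\|E\|_F\}^k\le B^{k/2}$. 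Neither of your proposed centerings (none, or around $\tr O$ alone) has this property. A second, smaller omission: this argument gives $\mathbb{E}[\bar X^k]$ for even $k$ only, since $\bra vE\ket v$ can be negative; the paper handles odd $k$ by Jensen, $\mathbb{E}[|\bar X|^k]\le(\mathbb{E}[|\bar X|^{k+1}])^{k/(k+1)}$, which your proposal does not address. Your fallback via hypercontractivity on the symmetric subspace might also work, but you would still need the $E$-centering (or an equivalent) to get a $d$-free base case, so that route does not avoid the missing idea.
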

\begin{proof}
    Set
    \begin{equation}
        E = O - \frac{\tr(O)+\tr(O\rho)}{d+1}\mathbb{I}.
    \end{equation}
    Then, we can reformulate $\Bar{X}$ as
    \begin{equation}
        \Bar{X} = (d+1)\bra{v}E\ket{v} \quad \textrm{with prob} \quad d\bra{v}\rho\ket{v}.
    \end{equation}
    We first focus on bounding even moments, i.e., $k$ even. Using Theorem~\ref{thm: moment} to compute the expectation value,
    \begin{align}
        \mathbb{E}[|\Bar{X}|^k] &= \mathbb{E}[\bar{X}^k] = \int_{\mathbb{S}^{d-1}} d\bra{v}\rho\ket{v} ((d+1)\bra{v}E\ket{v})^k \diff v\\
        &= d(d+1)^k \tr\left(\left(\int_{\mathbb{S}^{d-1}} (\ketbra{v})^{\otimes k+1}\right)\rho \otimes E^{\otimes k}\right)\\
        &= d(d+1)^k {d+k \choose k+1}^{-1} \tr(P_{\mathrm{Sym}^{(k+1)}}\rho \otimes E^{\otimes k}).
    \end{align}
    We bound the prefactor and the trace term separately. The prefactor can be bounded as
    \begin{equation*}
        d(d+1)^k {d+k \choose k+1}^{-1} = \frac{(k+1)!d(d+1)^k}{(d+k)...(d+1)d} \leq (k+1)!
    \end{equation*}
    The trace evaluates all possible contractions of the tensor $\rho \otimes E \otimes ... \otimes E$, yielding the bound
    \begin{equation}
    \label{eq:sym}
    \tr(P_{\mathrm{Sym}^{(k+1)}}\rho \otimes E^{\otimes k}) \leq \max\{\tr(E),\|E\|_F\}^k.
    \end{equation}
    This inequality holds by using the definition of the projector onto the symmetric subspace, submultiplicativity of the trace, and $\rho$ having unit trace.
    We compute this explicitly for a few examples to convince the reader.
    Recall that, following the presentation in~\cite{mele2023introduction}, the projection onto the symmetric subspace can be written as
    \begin{equation}
        P_{\mathrm{Sym}^{(k)}} = \frac{1}{k!} \sum_{\pi \in S_k} V_d(\pi),
    \end{equation}
    where
    \begin{equation}
        V_d(\pi) = \sum_{i_1,\dots, i_k \in [d]^k} \ket{i_{\pi^{-1}(1)},\dots, i_{\pi^{-1}(k}}\bra{i_1,\dots, i_k}
    \end{equation}
    for $\pi \in S_k$ with $S_k$ the symmetric group over $k$ elements.
    Then, the inequality in Equation~\eqref{eq:sym} is clear for $k = 1$ since $S_2 = \{(1)(2), (1\;2)\}$:
    \begin{equation}
        \tr(P_{\mathrm{Sym}^{(2)}} \rho \otimes E) = \frac{1}{2}(\tr(\rho) \tr(E) + \tr(E\rho)) \leq \tr(E).
    \end{equation}
    Similarly, we can also show this for $k = 2$, where $S_3 = \{(1)(2)(3),(1)(2\;3),(3)(1\;2), (1\;2\;3),(1\;3\;2), (2)(1\;3)\}$:
    \begin{align}
        &\tr(P_{\mathrm{Sym}^{(3)}}\rho \otimes E \otimes E) \\
        &= \frac{1}{6}(\tr(\rho)\tr(E)\tr(E) + \tr(\rho) \tr(E^2) + \tr(E)\tr(E\rho) + \tr(E\rho E) + \tr(E^2 \rho) + \tr(E)\tr(E\rho))\\
        &= \frac{1}{6}(\tr(E)^2 + \tr(E^2) + 2\tr(E)\tr(E\rho) + 2\tr(E^2\rho)\\
        &\leq \frac{1}{6}(\tr(E)^2 + \tr(E^2) + 2\tr(E)^2 + 2\tr(E^2).
    \end{align}
    Here, if $\tr(E)^2 \leq \tr(E^2) = \norm{E}_2^2$, then the above inequality is bounded by $\tr(E^2) = \norm{E}_2^2$.
    If $\tr(E^2) \leq \tr(E)^2$, then the above inequality is bounded by $\tr(E)^2$.
    Thus, we have that
    \begin{equation}
        \tr(P_{\mathrm{Sym}^{(3)}} \rho \otimes E \otimes E) \leq \max(\tr(E), \norm{E}_2^2)^2.
    \end{equation}
    Similar calculations hold for general $k$, so it is clear that Equation~\eqref{eq:sym} holds.

    We can further bound this quantity:
    \begin{equation}
        \tr(P_{\mathrm{Sym}^{(k+1)}}\rho \otimes E^{\otimes k}) \leq \max\{\tr(E),\|E\|_F\}^k \leq B^{k/2}.
    \end{equation}
    This follows from the following computations:
    \begin{equation}
        \tr(E) = \tr(O) - \frac{d}{d+1}(\tr(O)+\tr(O\rho)) \leq \frac{\tr(O)}{d+1} -\frac{d}{d+1} \tr(O\rho) \leq \frac{d}{d+1}+1 \leq 2.
    \end{equation}
    In the first inequality, we use the assumption that $O\leq \mathbb{I}$ in PSD ordering.
    In the second inequality, we use that the spectral norm of $O$ is bounded by one.
    \begin{align}
        \norm{E}_F^2 &= \tr(E^2)\\
        &= \tr\left(\left(O - \frac{\tr(O)+\tr(O\rho)}{d+1}\mathbb{I}\right)^2\right)\\
        &= \tr(O^2) - 2 \frac{\tr(O) + \tr(O\rho)}{d + 1}\tr(O) + \left(\frac{\tr(O) + \tr(O\rho)}{d + 1}\right)^2 \tr(\mathbb{I}^2)\\
        &= \tr(O^2) - \frac{d+2}{(d+1)^2} \tr(O)^2 - \frac{2}{(d+1)^2}\tr(O)\tr(O\rho) + \frac{d}{(d+1)^2} \tr(O\rho)^2\\
        &\leq \tr(O^2) - \frac{d+2}{(d+1)^2} \tr(O)^2 - \frac{2}{(d+1)^2}\tr(O)\tr(O\rho) + \frac{d}{(d+1)^2} \tr(O)^2\\
        &= \tr(O^2) - \frac{2}{(d+1)^2} (\tr(O)^2 + \tr(O)\tr(O\rho))\\
        &\leq \tr(O^2) - \frac{2}{(d+1)^2}\tr(O)\tr(O\rho)\\
        &\leq \tr(O^2) - \frac{2}{(d+1)^2}\tr(O\rho)^2\\
        &\leq B,
    \end{align}
    where we use the submultiplicative property of the trace several times.
    In summary,
    \begin{equation} \label{eq: even moments}
        \mathbb{E}[|\Bar{X}|^k] \leq (k+1)! B^{k/2} \quad \textrm{for} \quad k=2,4,6,...
    \end{equation}
    In order to bound the odd moments, we use the following trick, which allows us to convert odd moments into even moments. The function $x \mapsto x^{\frac{k}{k+1}}$ is concave on the positive reals. Jensen's inequality then implies
    \begin{equation}
        \mathbb{E}[|\bar{X}|^k] \leq (\mathbb{E}[|\bar{X}|^{k+1}])^{\frac{k}{k+1}}.
    \end{equation}
    Using Equation~\eqref{eq: even moments},
    \begin{equation}
        \mathbb{E}[|\bar{X}|^k] \leq \left((k+2)! B^{(k+1)/2}\right)^{\frac{k}{k+1}} \leq (k+2)!B^{k/2}
    \end{equation}
    for $k\geq 3$ odd. The final claim follows from the observation:
    \begin{equation}
        (k+1)! \leq (k+2)! \leq 2^k k!
    \end{equation}
\end{proof}

With this result, we are now ready to prove Proposition~\ref{prop: concentration}, which follows by an application of Theorem~\ref{thm: bernstein}.

\begin{proof}[Proof of Proposition~\ref{prop: concentration}]
    Let $\bar{X} = \tr(O\hat{\rho}) - \tr(O\rho)$, where $\hat{\rho}$ is a classical shadow as in Equation~\eqref{eq: classical shadow}.
    Then, since $\mathbb{E}[\hat{\rho}] = \rho$, then we clearly have that $\bar{X}$ has zero mean and $\bar{X} = \hat{o} - \mathbb{E}[\hat{o}]$ when $\hat{o} = \tr(O\hat{\rho})$.
    We want to bound $\Pr[|\bar{X}| \geq \tau]$.
    By Lemma~\ref{lem: moment bound}, we found that
    \begin{equation}
        \mathbb{E}[|\bar{X}|^k] \leq k! (4B)^{k/2}.
    \end{equation}
    Note that the variance of $\bar{X}$ is $8B$ by taking the second moment bound in the above inequality so that choosing $R = 2\sqrt{B}$, $\bar{X}$ satisfies the conditions for Theorem~\ref{thm: bernstein}.
    Applying this, we have
    \begin{equation}
        \Pr[|\hat{o} - \mathbb{E}[\hat{o}]| \geq \tau] = \Pr[|\bar{X}| \geq \tau] \leq 2\exp\left(-\frac{\tau^2/2}{8B + 2\sqrt{B} \tau}\right) = 2\exp\left(-\frac{\tau^2}{16B + 4\sqrt{B} \tau}\right),
    \end{equation}
    as claimed.
\end{proof}

\subsection{Fingerprinting Codes}
\label{sec:fingerprint}

Our sample complexity lower bounds for the local and Pauli observable cases rely heavily on fingerprinting codes and their guarantees.

Fingerprinting codes were originally introduced by \cite{705568} within the context of watermarking digital content to prevent piracy.
We first describe the intuition behind these constructions and then give a formal definition.
Suppose we are distributing content to a group of $d$ users, but there is a size $N$ colluding group $\mathcal{S}$ outputting illegal copies. 
Note here that we are suggestively using the same notation as previously introduced for the adaptive data analysis setting (i.e., where $N$ was the number of points in the mechanism's dataset $\mathcal{S}$).
\cite{705568} shows that by assigning unique ``watermarks'' to the content, one can trace illegal copies back to the user in $\mathcal{S}$ that produced it.
Importantly, the illegal copy must be consistent with one of the watermarks given to members in $\mathcal{S}$.
Fingerprinting codes are a generalization of these watermarks that can identify some user in $S$ responsible even when the users in $\mathcal{S}$ collude to construct an illegal copy.
\cite{tardos2008optimal} gave optimal constructions for fingerprinting codes.
Moreover, fingerprinting codes are central to sample complexity lower bounds in the differential privacy literature~\cite{bun2014fingerprinting,ullman2013answering}.
Interactive fingerprinting codes~\cite{fiat2001dynamic} extend this idea even further, being able to adaptively identify each of the users in $\mathcal{S}$ one by one, where constructions are given in~\cite{tassa2005low,laarhoven2013dynamic}.
Interactive fingerprinting codes were used in \cite{steinke2015interactive} to prove lower bounds for general adaptive data analysis problems, where the idea is to adaptively choose queries (watermarks) such that the analyst can identify every point in the dataset $\mathcal{S}$.
Note that if the analyst knows $\mathcal{S}$, he then can easily overfit, e.g., by choosing a query $q$ such that $q(x)=1$ if $x\in \mathcal{S}$ and $0$ otherwise.\\

With this intuition, we may now describe interactive fingerprinting codes more formally.
Interactive fingerprinting codes operate according to a game between an adversary $\mathcal{P}$ and fingerprinting code $\mathcal{F}$. This game has $M$ rounds and is presented in Algorithm~\ref{alg:adaptive}.
Again, here we are suggestively using the same notation as before (i.e., where $M$ was the number of queries made by the analyst).
During round $j$, the fingerprinting code $\mathcal{F}$ broadcasts to every user $i\in[d]$ a code $c^j_i\in\{0,1\}$. The colluding group $\mathcal{S}\subset [d]$ of size $N$ produces a code $a^j$ (for their illegal content), with the requirement that it must satisfy \textit{consistency}: there exists a user $i$ such that $a^j=c^j_i$.
At the end of each round, the fingerprinting code $\mathcal{F}$ accuses a set of users and prevents them from seeing any future broadcasts from $\mathcal{F}$.
Thus, in each round, the fingerprinting code $\mathcal{F}$ iteratively identifies users in the colluding group $\mathcal{S}$.
\begin{algorithm}
\caption{$\mathsf{IFPC}_{N,d,M}(\mathcal{P},\mathcal{F})$ \cite{steinke2015interactive}}\label{alg:adaptive}
$\mathcal{P}$ selects colluding users $\mathcal{S}^1\subseteq [d]$ (unknown to $\mathcal{F}$)\\
\For{$j=1$ to $M(N)$}{
$\mathcal{F}$ outputs column vector $c^j\in\{0,1\}^d$.\\
Let $c^j_{S^j}\in\{0,1\}^{|S^j|}$ be the restriction of $c^j$ to the coordinates of $S^j$. This is given to $\mathcal{P}$.\\
$\mathcal{P}$ generates $a^j\in\{0,1\}$. This is given to $\mathcal{F}$.\\
$\mathcal{F}$ accuses a set of users $I^j\subseteq [d]$. Let $S^{j+1}=S^j \setminus I^j$.
}
\end{algorithm}

We give a security definition for an interactive fingerprinting code based on this game $\mathsf{IFPC}_{N,d,M}(\mathcal{P}, \mathcal{F})$.
Informally, we want both completeness and soundness properties: the fingerprinting code should both be able to identify the colluding users $\mathcal{S}$ and not make many false accusations.
To formalize this, we define the following quantities.
Let
\begin{equation}
    \theta^j\triangleq|\{1\leq k \leq j|\not\exists i \in [d], a^k=c_i^k\}|
\end{equation}
be the number of rounds $1,\dots,j$ such that the output of the adversary $\mathcal{P}$ is not consistent.
Also, let
\begin{equation}
     \psi^j\triangleq\left|\left(\bigcup_{1\leq k\leq j}I^k\right)\setminus S\right|
\end{equation} 
be the number of users in $I^1,...,I^j$ that are falsely accused by $\mathcal{F}$.
We use these quantities to define a collusion resilient interactive fingerprinting code.
\begin{definition}[Collusion resilient nteractive fingerprinting code]
An algorithm $\mathcal{F}$ is said to be an \emph{$N$-collusion resilient interactive fingerprinting code} of length $M$ for $d$ users with failure probability $\epsilon$ if for every adversary $\mathcal{P}$,\begin{equation}
    \Pr_{\mathsf{IFPC}[\mathcal{P},\mathcal{F}]}[\theta^M=0 \lor \psi^M > d/2000] \leq \epsilon
\end{equation}
\end{definition}

This says that a good (collusion resilient) fingerprinting code should be able to force the adversary $\mathcal{P}$ to be inconsistent (and hence identify the colluding users $\mathcal{S}$) while not making many false accusations.
Here, we allow the fingerprinting code to falsely accuse a small constant fraction $1/2000$ of the total number of users $d$.

Our lower bounds (and the classical adaptive data analysis lower bounds) require the existence of interactive fingerprinting codes.
We recall a construction of interactive fingerprinting codes from~\cite{steinke2015interactive}.

\begin{theorem} [Theorem 2.2. in~\cite{steinke2015interactive}]
    For every $1\leq N \leq d$, there is an $N$-collusion resilient interactive fingerprinting code of length $\mathcal{O}(N^2)$ for $d$ users with failure probability $\epsilon\leq \mathcal{O}(1/d)$.
\end{theorem}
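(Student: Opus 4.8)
The plan is to establish this the way Steinke and Ullman do~\cite{steinke2015interactive}: build an explicit interactive fingerprinting code by lifting Tardos' optimal non-interactive code~\cite{tardos2008optimal} to the interactive game $\mathsf{IFPC}_{N,d,M}$, then verify the two halves of collusion resilience --- soundness (few false accusations) and completeness (forced eventual inconsistency) --- separately. The code $\mathcal{F}$ I would use keeps a set of not-yet-accused users and, in each round $j$, samples a bias $p^j$ from a truncated arcsine distribution (density $\propto 1/\sqrt{p(1-p)}$ on $[\beta, 1-\beta]$ with $\beta = \Theta(1/N^2)$), assigns each such user an independent bit $c_i^j \sim \mathrm{Bernoulli}(p^j)$ (broadcasting to every user, of which only the active colluders' coordinates $c^j_{S^j}$ reach the adversary), and after receiving $a^j$ updates a Tardos score $\sigma_i^j = \sigma_i^{j-1} + (2a^j-1)\,g(c_i^j,p^j)$ with $g(1,p)=\sqrt{(1-p)/p}$ and $g(0,p)=-\sqrt{p/(1-p)}$, accusing (and freezing the score of) every user whose score reaches a threshold $\tau = \Theta(N\log d)$.

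For soundness I would fix an honest user $i \notin \mathcal{S}$ and note that $c_i^j$ never reaches the adversary, hence is independent of $(p^j, a^j)$ given the transcript; the defining identity $\E[g(c_i^j, p^j)\mid p^j] = 0$ then makes $(\sigma_i^j)_j$ a martingale with $\mathcal{O}(1)$ conditional variance per round (this is exactly what the arcsine weighting buys) and increments of size $\mathcal{O}(N)$ (from the $\beta$-truncation). A Freedman/Azuma maximal inequality over $M = \mathcal{O}(N^2)$ rounds gives $\Pr[\max_j \sigma_i^j \ge \tau] \le d^{-2}$ for the chosen $\tau$, and a union bound over the $\le d$ honest users yields $\psi^M = 0$ except with probability $\mathcal{O}(1/d)$.

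For completeness I would argue by contradiction, assuming $\theta^M = 0$ so that the adversary is consistent in every round. The crux is the \emph{interactive marking lemma}: in any round with at least one active colluder and a consistent $a^j$, $\E[\sum_{i \in S^j}(2a^j-1)\,g(c_i^j, p^j)\mid \text{transcript}] \ge c_0$ for an absolute constant $c_0 > 0$, because consistency forces the active colluders' bits to correlate positively with $a^j$ no matter how the adversary apportions the ``blame'', and the arcsine weighting converts this into a guaranteed expected score gain. The summands being $\mathcal{O}(N)$-bounded, $\sum_{i \in S^j}\sigma_i^j$ is a submartingale with drift $\ge c_0$ that concentrates, so it overtakes $N\tau \ge |S^j|\tau$ after $\mathcal{O}(\tau N) = \mathcal{O}(N^2\log d)$ rounds, forcing some active colluder above $\tau$ and hence an accusation; each accusation permanently removes an element of $\mathcal{S}$, and $|\mathcal{S}| \le N$, so within $M = \mathcal{O}(N^2\log d)$ rounds $\mathcal{S}$ is exhausted, after which no active user's bit can be matched and the next round is forced inconsistent --- contradicting $\theta^M = 0$. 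Combining the two parts gives $\Pr[\theta^M = 0 \lor \psi^M > d/2000] \le \mathcal{O}(1/d)$; the spare $\log d$ factor in the length is either removed by the sharper analysis of~\cite{steinke2015interactive} or, in all our applications, absorbed because $d$ is at most exponential in $N$.

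I expect the main obstacle to be the completeness analysis under adaptivity and mid-game deletions. Tardos' original marking lemma and score concentration are proved for a non-adaptive adversary facing a fixed code matrix with one accusation at the end; here the adversary observes each accusation set $I^j$ and can steer subsequent answers, and the active colluder set $S^j$ shrinks over time. The delicate points are (i) verifying the marking-lemma drift holds conditionally on the full adaptive transcript and for an arbitrary still-active subset of colluders, and (ii) checking that the accounting ``each accusation costs $\mathcal{O}(N)$ rounds, at most $N$ of them'' survives whether the adversary concentrates its consistency on a single colluder or spreads it across many. Choosing $\tau$ to simultaneously clear the soundness union bound and keep the total length $\mathcal{O}(N^2)$ is the quantitative heart of the argument.
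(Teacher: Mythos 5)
The paper does not prove this statement; it is cited verbatim from Theorem~2.2 of~\cite{steinke2015interactive}, and the remark immediately following it in Appendix~\ref{sec:fingerprint} notes explicitly that this is the special case of their theorem ``in which we tolerate falsely accusing a small constant fraction of users.'' Your sketch is faithful to the Steinke--Ullman approach in its broad strokes --- lifting Tardos's arcsine-biased code to the interactive game, a martingale/Freedman argument for soundness, an interactive marking lemma for completeness, and the accounting that the cumulative colluder score must eventually exceed $|\mathcal{S}|\tau$.

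However, there is a genuine quantitative gap that you flagged but then waved away incorrectly. Your soundness argument targets $\psi^M = 0$ (no false accusations at all) via a union bound over $d$ honest users, which forces the per-user failure probability down to $O(1/d^2)$ and hence the threshold up to $\tau = \Theta(N\log d)$; your completeness accounting then yields length $\Theta(N\tau) = \Theta(N^2\log d)$, not the claimed $\mathcal{O}(N^2)$. The excuse you offer --- that the $\log d$ ``is absorbed because $d$ is at most exponential in $N$'' --- is false both in general (the theorem is stated for every $1 \le N \le d$) and in this paper's own application (Algorithms~\ref{alg:real},~\ref{alg:real2} set $d = 2000N$, so $\log d = \Theta(\log N)$, which is not absorbed). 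The correct resolution is not a sharper analysis but a \emph{weaker soundness requirement}, which is precisely what the collusion-resilience definition in Appendix~\ref{sec:fingerprint} encodes: the failure event is $\psi^M > d/2000$, not $\psi^M > 0$. Under that relaxation, one only needs each honest user's false-accusation probability to be a small \emph{constant} (say $< 1/4000$), which a constant threshold $\tau = \Theta(N)$ already delivers via Freedman with $M = \Theta(N^2)$ rounds; the bound on the \emph{count} of false accusations then comes from the (conditional) independence of honest users' scores and a Chernoff or Markov step, not from a per-user union bound. That single change is what removes the $\log d$ and closes the gap to the stated $\mathcal{O}(N^2)$ length.
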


Note that in~\cite{steinke2015interactive}, they state and prove a more general theorem, but this is all that we require.
We consider the special case of their theorem in which we tolerate falsely accusing a small constant fraction of users.

\subsection{Cryptography}
\label{app:crypto}

In this section, we review some basic cryptography, which we use in the proof of Proposition~\ref{prop:pauli-lower}.
We first define a private-key encryption scheme.

\begin{definition}[Private-key encryption scheme~\cite{katz2007introduction}]
    Let $\lambda$ be the security parameter.
    Let $\mathcal{K}$ be the key space, $\mathcal{M}$ be the message space, and $\mathcal{C}$ be ciphertext space.
    A private-key encryption scheme is composed of three algorithms $(\Gen, \Enc, \Dec)$, defined as follows.
    \begin{itemize}
        \item $\Gen(1^\lambda)$: On input the security parameter $1^\lambda$, the key-generation algorithm outputs a secret key $\sk \in \mathcal{K}$ at random.
        \item $\Enc_\sk(m)$: On input the secret key $\sk \in \mathcal{K}$ and the plaintext $m \in \mathcal{M}$, the encryption algorithm outputs a ciphertext $c \in \mathcal{C}$.
        \item $\Dec_\sk(c)$: On input the secret key $\sk \in \mathcal{K}$ and the ciphertext $c \in \mathcal{C}$, the decryption algorithm outputs a plaintext $m \in \mathcal{M}$.
    \end{itemize}
    A private-key encryption scheme is correct if 
    \begin{equation}
        \Dec_\sk(\Enc_\sk(m)) = m.
    \end{equation}
\end{definition}

In addition to correctness, we want our encryption schemes to be secure.
In other words, from the encrypted ciphertext, it should be difficult to recover the corresponding plaintext.
The definition of security we need in this setting is perfect secrecy.

\begin{definition}[Perfect secrecy~\cite{katz2007introduction}]
    An encryption scheme $(\Gen, \Enc, \Dec)$ is \emph{perfectly secret} if for every pair of messages $m_0, m_1 \in \mathcal{M}$ and every ciphertext $c \in \mathcal{C}$,
    \begin{equation}
        \Pr_{\sk \sim \Gen(1^\lambda)} [\Enc_\sk(m_0) = c] = \Pr_{\sk \sim \Gen(1^\lambda)}[\Enc_\sk(m_1) = c].
    \end{equation}
\end{definition}

In words, this means that for any pair of messages $m_0, m_1$, it is impossible to distinguish the ciphertext of $m_0$ from the ciphertext of $m_1$.
In fact, one can also devise a simple perfectly secret private-key encryption scheme called the one-time pad~\cite{miller1882telegraphic}.

\begin{definition}[One-time pad]
    \label{def:otp}
    Let $\lambda$ be the security parameter.
    Let $\mathcal{K} = \mathcal{M} = \mathcal{C} = \{0,1\}^\lambda$ be the key, message, and ciphertext spaces.
    The one-time pad is defined by $(\Gen, \Enc, \Dec)$ as follows.
    \begin{itemize}
        \item $\Gen(1^\lambda)$: Output a uniformly random bitstring $\sk \in \{0,1\}^\lambda$.
        \item $\Enc_\sk(m)$: Encrypt $m \in \{0,1\}^\lambda$ as $c = m \oplus \sk$.
        \item $\Dec_\sk(c)$: Decrypt $c \in \{0,1\}^\lambda$ as $m = c \oplus \sk$.
    \end{itemize}
\end{definition}

The perfect secrecy of the one-time pad was first proven in~\cite{shannon1949communication}.
We will not repeat the proof here.

\section{Local Observables} \label{app: local}
We consider the task of predicting properties $\tr(O\rho)$, where $\rho$ is the unknown quantum state that we are given copies of and $O$ is a local observable with spectral norm bounded by $1$.
Here, we consider a local observable to be an observable acting on a constant number of qubits.
In this appendix, we provide a proof of Theorem~\ref{thm:local}, which characterizes the sample complexity for this problem.
We restate the theorem below.
\begin{theorem}[Local observables, detailed restatement of Theorem~\ref{thm:local}]
    There exists an analyst $\mathcal{A}$ and a density matrix $\rho$ on $n = \Omega(M2^{2000\sqrt{M}})$ qubits such that any $(0.99, 1-\frac{1}{2000\sqrt{M}})$-accurate quantum mechanism $\mathcal{M}$ estimating expectation values of $M$ adaptively chosen single-qubit observables requires at least 
    \begin{equation}
        N = \Omega(\sqrt{M})
    \end{equation}
    samples of the quantum state $\rho$.
    Meanwhile, for any $\epsilon,\delta \in (0,1/2)$ such that $\epsilon\delta \geq 4e^{-M/\log^2M \log\log^4 M}$, there exists an $(\epsilon,\delta)$-accurate quantum mechanism $\mathcal{M}$ that can estimate expectation values of $M$ adaptively chosen $k$-local observables using
    \begin{equation}
        N=\mathcal{O}\left(\frac{\min\{3^k\sqrt{M\log(1/\epsilon\delta)},k\log (n/\delta)\}}{\epsilon^2}\right)
    \end{equation}
    samples of the unknown $n$-qubit quantum state $\rho$.
    Moreover, the mechanism is computationally efficient and runs in time $\mathrm{poly}(N, n, \log(1/\delta))$ per observable.
\end{theorem}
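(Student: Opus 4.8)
\emph{Lower bound.} The plan is to embed the interactive fingerprinting code (IFPC) game of Algorithm~\ref{alg:adaptive} into the quantum interaction model, with the classical analyst playing the fingerprinting code $\mathcal{F}$ and the mechanism playing the adversary $\mathcal{P}$. Concretely, I would take $\rho$ to be the uniform mixture of $d = 2000N$ computational-basis states, one per ``user,'' so that measuring the $N$ copies $\rho^{\otimes N}$ in the $Z$ basis on a user register produces a random colluding set $\mathcal{S}\subseteq[d]$ of size at most $N$. For each round $j\in[M]$ I would allocate a fresh block of $2^d$ qubits, indexed by codewords $w\in\{0,1\}^d$, arranged so that, conditioned on the basis state encoding user $i$, qubit $(j,w)$ carries the deterministic value $(-1)^{w_i}$; then $\tr(Z_{(j,w)}\rho) = \tfrac1d\sum_{i\in[d]}(-1)^{w_i}$. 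In round $j$ the analyst takes the column $c^j\in\{0,1\}^d$ broadcast by $\mathcal{F}$, queries the single-qubit observable $Z_{(j,c^j)}$, receives the response $\hat o_j$, and returns the pirate bit $a^j = \indicator[\hat o_j<0]$ to $\mathcal{F}$. Two design choices are essential: a full block of $2^d$ qubits per round lets $\mathcal{F}$ broadcast an \emph{arbitrary} assignment in $\{-1,1\}^d$ (IFPC condition~1), and \emph{disjoint} blocks across rounds keep information revealed in one round from contaminating the expectation values queried in later rounds.

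\emph{Lower bound analysis.} Suppose toward a contradiction that some $(0.99,\,1-\tfrac1{2000\sqrt M})$-accurate mechanism exists. Whenever $\mathcal{F}$ broadcasts a constant column $c^j\equiv b$, the true value is $(-1)^b$, so accuracy forces $\hat o_j$ to have sign $(-1)^b$, hence $a^j = b$ and the pirate output is consistent at round $j$; thus $\theta^M = 0$. On the other hand, the interactive fingerprinting code of~\cite{steinke2015interactive} recalled above, applied with $d=2000N$ users and length $M=\Theta(N^2)$, guarantees $\theta^M\ge1$ unless more than $d/2000$ users are falsely accused, which a valid collusion-resilient code never does. This is the desired contradiction, so no accurate mechanism can answer $M=\Theta(N^2)$ adaptive single-qubit queries, i.e.\ $N=\Omega(\sqrt M)$; counting qubits ($2^d$ per round over $M$ rounds) gives $n=\Omega\!\big(M\,2^{2000\sqrt M}\big)$, and when $n$ is smaller only $d=\Theta(\log n)$ users can be embedded, giving the combined bound $\Omega(\min\{\sqrt M,\log n\})$. \textbf{The main obstacle} is that the IFPC guarantee is valid only when the adversary uses no information about \emph{accused} users, whereas the mechanism physically holds all $N$ sampled users and can even identify them by measuring the user register. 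Closing this gap requires randomizing the construction of $\rho$ (a hidden relabeling of users together with decoy users) so that, from the mechanism's view, accused users' watermark bits are statistically masked, and then a coupling argument showing that the reduction still simulates a legitimate IFPC adversary. This is the most delicate part of the proof and is where essentially all the technical work lies.

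\emph{Upper bound.} For the efficient positive result, the mechanism first collects a classical shadow~\cite{huang2020predicting} of $\rho$ using $N$ independent random single-qubit Pauli measurements; crucially this is done \emph{before} any observable is revealed, and each snapshot $\hat\rho_t$ is stored in $\mathcal{O}(n)$ bits (Appendix~\ref{sec:classical-shadow}). For a $k$-local $O$ with $\|O\|_\infty\le1$, the map $\hat\rho_t\mapsto\tr(O\hat\rho_t)$ is efficiently computable, has mean $\tr(O\rho)$, and takes values of magnitude $\mathcal{O}(3^k)$. The two terms in the bound correspond to two ways of answering the $M$ adaptive queries from this fixed snapshot dataset. \emph{(i) Non-adaptive route:} every $k$-local observable of unit spectral norm is a linear combination with bounded coefficients of the at most $(4n)^k$ $k$-local Pauli strings, which form a fixed, non-adaptively chosen family; so Theorem~\ref{thm:classical-shadow} (with $M\mapsto(4n)^k$ and bounded shadow norm $3^{\mathcal{O}(k)}$) lets the mechanism predict all of them, hence answer any adaptive $k$-local query, from $N=\mathcal{O}(k\log(n/\delta)/\epsilon^2)$ samples for $k=\mathcal{O}(1)$. \emph{(ii) Adaptive-data-analysis route:} rescaling $\hat\rho_t\mapsto\tr(O_i\hat\rho_t)$ to $[0,1]$ makes each query a statistical query of bounded range, so we feed them into a computationally efficient classical adaptive-data-analysis mechanism~\cite{bassily2015algorithmic,feldman2017generalization}, which answers $M$ adaptive statistical queries to accuracy $\alpha$ with probability $1-\delta$ using $\mathcal{O}(\sqrt{M\log(1/\alpha\delta)}/\alpha^2)$ samples, valid whenever $\alpha\delta$ exceeds roughly $e^{-M/\mathrm{polylog}(M)}$. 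Taking $\alpha=\Theta(\epsilon/3^k)$ yields the stated $\mathcal{O}(3^k\sqrt{M\log(1/\epsilon\delta)}/\epsilon^2)$ bound together with the stated lower-order condition on $\epsilon\delta$. Both routes are computationally efficient — snapshot collection, snapshot evaluation, the median-of-means of classical shadows, and the mechanism of~\cite{bassily2015algorithmic} all run in $\mathrm{poly}(N,n,\log(1/\delta))$ time per query — and taking the minimum of the two sample complexities gives the claimed bound.
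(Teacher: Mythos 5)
The upper bound is essentially the paper's argument: collect a fixed Pauli-measurement classical shadow up front, note $|\tr(O\hat\rho_t)|\le 3^k$ for $k$-local $O$, and then either (i) route all queries through a fixed family of $\mathcal{O}(n^k)$ Pauli terms via the non-adaptive shadow guarantee, or (ii) feed the bounded statistical queries $\hat\rho_t\mapsto\tr(O\hat\rho_t)$ into an adaptive data analysis mechanism. One minor slip: rescaling into $[0,1]$ and invoking the accuracy-$\alpha$ sample complexity $\mathcal{O}(\sqrt{M\log(1/\alpha\delta)}/\alpha^2)$ with $\alpha=\Theta(\epsilon/3^k)$ gives a $3^{2k}$ prefactor, not $3^k$; the paper instead applies the $C$-scaled-query result (Theorem~\ref{thm:adapt algo}, from Corollary~1.3 of~\cite{dagan2021boundednoise} applied to $C/N$-sensitive queries), whose dependence on the range $C$ is linear. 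For $k=\mathcal{O}(1)$ this does not affect the asymptotics, but it is worth tracking if $k$ is treated as a parameter.

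The lower bound, however, has a real gap — one you yourself flag but do not close. In your construction the analyst queries $Z_{(j,c^j)}$, so the codeword $c^j\in\{0,1\}^d$ is literally the qubit index that the mechanism observes as the query. The mechanism then learns $c^j$ in full, including the coordinates of accused users, which is exactly what the $\mathsf{IFPC}$ security model forbids: Algorithm~\ref{alg:adaptive} requires that $\mathcal{P}$ see only the restriction $c^j_{S^j}$ to the surviving colluders. Without fixing this, the reduction does not simulate a legitimate $\mathsf{IFPC}$ adversary, and the collusion-resilience guarantee cannot be invoked. Your remark that ``closing this gap requires randomizing the construction of $\rho$'' gestures at the fix but leaves out the construction, which is precisely where the paper's proof of Proposition~\ref{prop:local-lower} lives. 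Concretely, the paper draws $\rho$ from an ensemble $\{\rho_\ell\}$ indexed by an independent, uniformly random bijection $\sigma_{j}:\{0,1\}^d\to[2^d]$ for each round $j$, encodes watermark bit $q_i$ at qubit index $(\sigma_j(q),j)$, and has the analyst (who knows $\sigma_j$) query $Z_{(\sigma_j(q^j),j)}$. This random relabeling plays the role that one-time-pad encryption plays in~\cite{steinke2015interactive} and in the Pauli case: from the mechanism's view the queried index is uniform, and the only watermark bits recoverable from its $N$ samples are those for $i\in S^j$ (accused users have $q^j_i$ forced to $0$, indistinguishable from genuine $c^j_i=0$). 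The paper then couples the real attack (Algorithm~\ref{alg:real}) against a simulated attack (Algorithm~\ref{alg:simulated}) that manifestly sits inside the $\mathsf{IFPC}$ model, and argues the mechanism's views are identical. Your final step — that broadcasting a constant column $c^j\equiv b$ forces $a^j=b$ when the mechanism is accurate, contradicting the guaranteed inconsistency — also needs the slight refinement that false accusations are allowed up to $d/2000$, so ``consistency'' is only with users not yet falsely accused; the paper handles this by bounding $\mathbb{E}_\mathcal{D}[Z^j]\ge 1-2|T^j|/d\ge 1-1/500$ in the all-ones case. So: the framework, parameter choices ($d=2000N$, $M=\Theta(N^2)$, $n=\Omega(M2^{2000\sqrt M})$), and the role of disjoint blocks per round are all right, but the central secrecy mechanism and the indistinguishability argument are missing, and these carry most of the weight of the lower bound.
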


In Appendix~\ref{sec:attack}, we begin with a warmup showing how adaptivity can cause overfitting for the classical shadows protocol~\cite{huang2020predicting}.
Then, in Appendix~\ref{sec:lb}, we prove the sample complexity lower bound, which states that there does not exist a quantum mechanism that can accurately estimate $M$ adaptively chosen local observables given $\mathrm{poly\,log}(M)$ copies of $\rho$.
Finally, in Appendix~\ref{sec:classical shadow ub}, we give a quantum algorithm achieving a matching sample complexity upper bound for answering adaptively chosen local observables.

In both Appendices~\ref{sec:attack} and \ref{sec:lb}, we focus on the setting where we are given $N$ copies of a diagonal $n$-qubit density matrix $\rho$.
This diagonal density matrix can be viewed as the classical probability distribution $\mathcal{D}$ in the statistical query model. In addition, the analyst is constrained to only querying local observables $\{Z_i : i \in [n]\}$, where $Z_i$ is the Pauli $Z$ observable that acts only on the $i$th qubit. In this notation, we implicitly have the identity matrix on the remaining qubits. For each query $Z_i$, the mechanism aims to estimate $\tr(Z_i \rho)$.
We show that for any mechanism, there always exists a quantum state $\rho$ (of this diagonal form) and an analyst (even with this restricted power of only querying $Z_i$ observables) such that the mechanism fails to estimate the expectation value of the analyst's adversarially chosen observables.

In this setting, one way of viewing the problem is as a classical problem involving $n$ coins, whose joint distribution is dictated by the diagonal elements of $\rho$. The query $q(x)$ is the outcome of flipping one of the coins in sample $x$, and the goal would be to predict the true biases $q(\mathcal{D})$ of the queried coins.

\subsection{Adaptive Attack} \label{sec:attack}
First, we demonstrate the dangers of adaptivity by constructing an attack querying only single-qubit observables that is sufficient to cause the classical shadows protocol to fail after $\mathcal{O}(N)$ queries, where $N$ is the number of samples of the unknown quantum state $\rho$ the mechanism is given. 
We devise the analyst's adversarial strategy and a density matrix $\rho$ such that classical shadows fails.

Consider a density matrix $\rho$ on $n=M+2^M$ qubits, where recall $M$ is the number of queries the analyst makes. Here, the qubits are correlated such that the outcomes of the last $2^M$ qubits are completely determined by the first $M$. Denote the power set as $\mathcal{P}([M])\triangleq\{I_1,...,I_{2^M}\}$, i.e., $I_1 = \emptyset, I_2 = \{1\}, \dots, I_{2^M}=[M]$. We will simulate a classical probability distribution by defining $\rho$ to be the diagonal matrix
\begin{equation}
    \label{eq:rho-q}
    \rho \triangleq \sum_{Q\in \{0,1\}^n} p(Q) \ketbra{Q}{Q}
\end{equation}
where $Q=(Q_1,...,Q_n) \in \{0,1\}^n$ so that $\ket{Q}$ is a computational basis states. Here, we use subscripts to denote labels for the different qubits.
To construct an adversarial density matrix, we effectively encode the classical distribution from the feature selection attack of~\cite{dwork2015generalization,roth2017adaptive} into the density matrix.
Thus, we define the probability distribution $p(Q)$ as
\begin{equation}
    \label{eq:rho-dist}
    p(Q) = p(Q_1,\dots, Q_n) \triangleq
    \begin{cases}
        \frac{1}{2^M} & \text{if } \mathrm{MAJ}(\{Q_i : i\in I_j\})=Q_{M+j}, \forall j \in [2^M]\\
        0 & \text{otherwise}
    \end{cases},
\end{equation}
where $\mathrm{MAJ}(\{Q_i\mid i\in I_j\})$ denotes the majority element of the set $\{Q_i : i \in I_j\}$.
In other words,
\begin{equation} \label{eq: majority-rule}
    \mathrm{MAJ}(\{Q_i : i \in I_j\}) = \begin{cases}
        1 & \text{if } \sum_{i\in I_j}Q_i>0\\
        0 & \text{otherwise}
    \end{cases}.
\end{equation}
As a technical detail, the rule for $Q_{M+j}$ corresponding to $I=\emptyset$ can simply be set ahead arbitrarily, e.g., set to always be 0.
We can view the distribution $p(Q)$ of the qubits labeled by $Q = (Q_1,\dots, Q_n)$ as follows.
The first $M$ qubits labeled by $Q_1,\dots,Q_M$ are uniformly distributed over $\{0,1\}^M$ while the last $2^M$ qubits $Q_{M+1},\dots,Q_{M+2^M}$ are uniquely determined by the first $M$ qubits via a majority voting rule.

For this density matrix, we want to show that there exists an adversarial analyst that can cause the classical shadows protocol with random Pauli measurements to fail after $\mathcal{O}(N)$ queries.
This means that the classical shadows formalism for predicting local observables fails to protect against adaptive queries.
% Recall that in the classical case, the naive mechanism simply computes the empirical mean of its dataset evaluated on the query given by the analyst.
% In our case, the classical shadows protocol is a natural quantum analogue, which also computes an empirical mean.
% Specifically, the naive mechanism first measures each of its $N$ samples of $\rho$ in the computational basis, obtaining $Q^k\in \{0,1\}^n$ as the measurement outcome for the $k$th copy of $\rho$.
Recall that classical shadows for predicting local observables are produced by (i) repeatedly measuring each qubit of $\rho$ independently in a random Pauli basis and receiving output bitstring $\hat{\ket{b}}\in \{0,1\}^n$ and then (ii) computing $N$ classical snapshots of $\rho$ of the form $\hat{\rho}_k=\bigotimes_{j=1}^n (3U_j^{(k)\dag}\ketbra{\hat{b}_j^{(k)}}U_j^{(k)}-\mathbb{I})$, for $k=1,\dots, N$, where $U_j^{(k)}$ is the Pauli matrix with respect to which the $j$th qubit in the $k$th copy of $\rho$ was measured.
Then, given a queried observable $O$ from the analyst, the classical shadow mechanism estimates $\trace(O\rho)$ by returning $a(O)\triangleq\frac{1}{N}\sum_{k=1}^N \tr(O\hat{\rho_k})$, i.e., the empirical mean of the expectation values on each of the classical snapshots. \footnote{We remark that for local observables, the empirical mean can be used because the classical shadows output has a range bound in this case. However, the attack still works even if median of means is used.}
We refer to Appendix~\ref{sec:classical-shadow} and~\cite{huang2020predicting} for a more in-depth presentation of the classical shadow formalism.
In our restricted setting, the analyst only queries from the set of single-qubit observables $\{Z_i : i\in [n]\}$.
% Thus, $\expval{Z_i}{Q^k}$ is simply $-1$ if the $i$th bit if the bitstring $Q^k$ is $1$ and $0$ otherwise.

Finally, we describe the analyst's behavior, which causes classical shadows to fail.
The analyst performs the following procedure.
\begin{enumerate}
    \item Query $Z_1,...,Z_M$ and receive $a(Z_i)$ for $i\in [M]$.
    \item Let $I\triangleq\left\{i \in [M]:a(Z_i)\geq \frac{9}{\sqrt{N}}\right\}$. Then query $Z_{M+j^*}$ where $I_{j^*}=I$.
\end{enumerate}
Here, because we indexed the power set $\mathcal{P}([M])$ as $\mathcal{P}([M]) = \{I_1,\dots, I_{2^M}\}$ and $I \subseteq [M]$, there clearly must exist some $j^* \in [2^M]$ such that $I_{j^*} = I$.
We claim that the mechanism's answer $a(Z_{M+j^*})$ to the query $Z_{M+j^*}$ in Step 2 of the above procedure does not estimate the true expectation value $\tr(Z_{M+j^*}\rho)$ well for $\rho$ defined in Eq.~\eqref{eq:rho-dist}.

The intuition for the attack is that the analyst looks for $Z_i$ whose sample outcomes $a(Z_i)$ are especially biased towards the positive direction. We could also have chosen to have the analyst search for bias in the negative direction; as long as the sample outcomes are all biased in some direction, it does not matter which one.
Each such $Z_i$ will skew the empirical mean of every $Z_{M+j}$ where $i \in I_j$ in the positive direction (since each such $Z_{M+j}$ is correlated with $Z_i$ in our construction). Thus, $Z_{M+j}$ with $I_j=I$ will contain contributions from every such positively-biased $Z_i$, which will amplify its empirical bias towards the positive direction.

Formally, we have the following guarantee, which states that the analyst overfits after linearly many queries.
\begin{theorem}[Adaptive attack]
    \label{thm:adapt-attack}
    There is a constant $c$ such that if $M \geq c\max(N,\log(1/\delta))$ and $N \geq c\log(1/\delta)$, with probability $1-\delta$:\begin{equation}
        |a(Z_{M+j})-\trace(Z_{M+j}\rho)| \geq 0.99
    \end{equation}
\end{theorem}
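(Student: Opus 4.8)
The plan is to analyze the statistics of the classical shadows estimator directly, treating the diagonal state $\rho$ as the classical distribution $p(Q)$ of Eq.~\eqref{eq:rho-dist}. For a single-qubit observable $Z_i$ with $i \in [M]$, the estimator $a(Z_i) = \frac{1}{N}\sum_{k=1}^N \tr(Z_i\hat\rho_k)$ is an average of $N$ i.i.d.\ bounded random variables: each snapshot contributes $3\cdot(\pm 1)$ if the $i$th qubit of the $k$th copy happened to be measured in the $Z$ basis (probability $1/3$), and $0$ otherwise. Since $\tr(Z_i\rho)=0$ (the first $M$ qubits are uniform), each term has mean $0$ and the per-term variance is $\Theta(1)$, so $a(Z_i)$ is a mean-zero random variable with standard deviation $\Theta(1/\sqrt N)$.

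The first key step is an anti-concentration claim: for each $i \in [M]$, we have $\Pr[a(Z_i) \ge 9/\sqrt N] \ge c_0$ for an absolute constant $c_0 > 0$ (and similarly for the lower tail), essentially because a sum of $N$ i.i.d.\ bounded mean-zero variables with variance $\Theta(1)$ exceeds a constant multiple of its standard deviation with constant probability — a Berry--Esseen / Paley--Zygmund argument. Moreover, these events are \emph{independent} across $i \in [M]$ since the randomized measurement bases and outcomes on the first $M$ qubits are drawn independently per qubit per copy, and the first $M$ qubits of $\rho$ are independent uniform bits. Hence $|I| = |\{i : a(Z_i) \ge 9/\sqrt N\}|$ stochastically dominates a $\mathrm{Binomial}(M, c_0)$, so by a Chernoff bound, with probability $1 - \delta/3$ (using $M \ge c\log(1/\delta)$) we get $|I| \ge c_1 M$ for a constant $c_1$.

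The second key step is to bound $a(Z_{M+j^*})$ from below, where $I_{j^*} = I$. Writing $Q_{M+j^*} = \mathrm{MAJ}(\{Q_i : i \in I\})$, on each copy $k$ the qubit $M+j^*$ is measured in the $Z$ basis with probability $1/3$, and when it is, its outcome equals the majority of the \emph{true} bits $\{Q_i^{(k)} : i \in I\}$ of that copy. The subtle point is that conditioning on the event $a(Z_i)\ge 9/\sqrt N$ for $i \in I$ biases the true bits $Q_i^{(k)}$ toward $1$: roughly, an above-average empirical mean on qubit $i$ is evidence that more of the copies had $Q_i^{(k)}=1$. One quantifies this (e.g.\ via a second-moment / correlation computation, or by noting $a(Z_i)$ is positively correlated with $\frac1N\sum_k Q_i^{(k)}$) to show that, conditioned on $I$ and the first-round transcript, the bits $\{Q_i^{(k)}\}_{i\in I}$ are each biased to $1$ by an amount $\Omega(1/\sqrt{|I|}\cdot\sqrt N)\cdot$(something) — enough that, since $|I|=\Omega(M)=\Omega(N)$, the majority $Q_{M+j^*}^{(k)}$ equals $1$ with probability $\ge 1-o(1)$ for each $k$ (this is where $M \ge cN$ is used: the collective bias of $\Omega(M)$ bits, each skewed by $\Theta(1/\sqrt N)$ worth of "evidence", drives the majority to $1$ w.h.p.). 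Averaging over the $\approx N/3$ copies measured in the $Z$ basis, and using concentration (with $N \ge c\log(1/\delta)$ for the final $\delta/3$), we conclude $a(Z_{M+j^*}) \ge 0.99 \cdot 3 \cdot \frac13 \cdot(1-o(1)) \ge 0.99$, while $\tr(Z_{M+j^*}\rho)$: since $\rho$ is uniform over the first $M$ bits, $\mathrm{MAJ}$ of any nonempty subset has mean $0$ by symmetry (for $I=\emptyset$ it is $0$ by fiat), so $\tr(Z_{M+j^*}\rho)=0$ and the error is $\ge 0.99$. A union bound over the three failure events gives the claim.

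The main obstacle is the second step: carefully quantifying how conditioning on $a(Z_i) \ge 9/\sqrt N$ tilts the posterior on the true bits $Q_i^{(k)}$, and then propagating that tilt through the majority function over a growing set $I$ of size $\Omega(M)$. This requires being careful that the copies used to estimate $Z_i$ in round~1 are the \emph{same} copies whose bits feed the majority rule for $Z_{M+j^*}$ in round~2 (that is precisely the source of the overfitting), and controlling the correlated randomness of the measurement-basis choices across the two rounds. I expect this to reduce to a clean calculation once one sets up the right coupling between the empirical mean $a(Z_i)$ and the empirical bit-frequency $\frac1N\sum_k Q_i^{(k)}$, analogous to the classical feature-selection attack of~\cite{dwork2015generalization}, but the bookkeeping with the random Pauli bases is the part that needs care.
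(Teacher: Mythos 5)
Your overall plan mirrors the paper's: anti-concentration plus a Chernoff bound give $|I|=\Omega(M)$ with high probability (using $M\ge c\log(1/\delta)$); for $i\in I$ the sampled bits $Q_i^{(k)}$ are biased by the first-round selection; pushing this through the majority rule makes $Q_{M+j^*}^{(k)}$ overwhelmingly one value; and concentration transfers this to $a(Z_{M+j^*})$ (using $N\ge c\log(1/\delta)$). Steps one, three, and four line up with the paper. The gap is exactly the step you flag yourself, namely linking $a(Z_i)\ge 9/\sqrt N$ to the true bits $Q_i^{(k)}$, which you frame as a Bayesian posterior tilt requiring a ``second-moment / correlation computation'' and whose ``bookkeeping with the random Pauli bases'' you acknowledge you have not carried out. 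As written this step is asserted rather than proven, so the argument is incomplete.

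The paper closes this gap with a simpler device than the one you anticipate: it never reasons about a posterior at all. It lower-bounds the \emph{realized} empirical bit frequency on the actual sample directly in terms of the observed estimate by tracking which snapshots returned $+3$. Each $\tr(Z_i\hat\rho_k)\in\{\pm3\}$, and when qubit $i$ of copy $k$ was measured in the $Z$ basis (probability $1/3$) the sign pins $Q_i^{(k)}$ exactly; keeping only those contributions and discarding the noisy $X/Y$-basis ones gives, for every $i\in I$,
\[
  \frac{1}{N}\sum_{k=1}^N\mathbbm{1}\bigl(Q_i^{(k)}=1\bigr)\ \ge\ \frac{1}{3N}\sum_{k=1}^N\mathbbm{1}\bigl(\tr(Z_i\hat\rho_k)=3\bigr)\ \ge\ \frac{1}{9}\,a(Z_i)\ \ge\ \frac{1}{\sqrt N}.
\]
Summing over $i\in I$ gives $\E_{Q\sim T}\bigl[\sum_{i\in I}\mathbbm{1}(Q_i=1)\bigr]\ge |I|/\sqrt N$ with no conditional-distribution bookkeeping, after which a Chernoff bound yields $\Pr_{Q\sim T}[Q_{M+j^*}\ne 1]\le e^{-2|I|/N}$ (small since $|I|=\Omega(M)\ge \Omega(N)$), and a Hoeffding bound controls $|a(Z_{M+j^*})-q(Z_{M+j^*})|$. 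Two smaller issues in your sketch: the per-bit bias ``$\Omega(1/\sqrt{|I|}\cdot\sqrt N)\cdot(\text{something})$'' has a spurious $|I|$ dependence (the per-bit bound above is $\Omega(1/\sqrt N)$, uniform in $|I|$), and the arithmetic ``$0.99\cdot 3\cdot\tfrac13\cdot(1-o(1))$'' does not cleanly separate the $1/3$ of signal-carrying snapshots from the $2/3$ of pure-noise snapshots, which the paper handles by passing through the bit-frequency $q(Z_{M+j^*})$ and a separate Hoeffding step.
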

\begin{proof}
    This can essentially be reduced to the feature selection attack of  \cite{dwork2015generalization} and \cite{roth2017adaptive} (Lecture 3).

    % First, we describe the random variable $\hat{z}_k\triangleq\tr(Z_i\hat{\rho}_k)$, which is the evaluation of $Z_i$ on the snapshot computed on the $k$th sample of $\rho$ denoted as $\ketbra{Q^k}$.\LLL{classical snapshot isn't pure?} \JH{Each sample of the quantum state can be seen as pure (but we don't know which pure state it is )}\LLL{ok, would watch the wording snapshot since that implies its from classical shadow}
    % \begin{enumerate}
    %     \item With probability $1/3$, the evaluation on the classical snapshot matches $\expval{Z_i}{Q^k}$, i.e., it reads off the bit on the $i$th qubit of the sampled state $Q^k$. In this case, the chosen rotation on the $i$th qubit corresponds to the Pauli Z operator, i.e., $U_i=Z$. Then, given the $k$th sample $\ketbra{Q^k}$, measuring the $i$th bit on $U\ketbra{Q^k} U^\dag$ yields $\hat{b}_i=Q^k_i$ with probability $1$. Computing the classical shadow snapshot $\hat{\rho}_k$ and evaluating $\tr(Z_i\hat{\rho}_k)$ yields $3$ or $-3$ when $Q^k=1$ or $0$, respectively.
    %     \item With probability $2/3$, the evaluation is random. In this case, we rotate by the Pauli $X$ or $Y$ operators, and the measured bit is $\hat{b}_i=0$ or $1$ with equal probability. Then, $\tr(Z_i\hat{\rho}_k)$ yields $3$ or $-3$ with equal probability.
    % \end{enumerate}
    % \LLL{Attempt to rewrite the above}
    We first describe the random variable $\hat{z}_k \triangleq \tr(Z_i \hat{\rho}_k)$, which is the expectation value of $Z_i$ with respect to the $k$th classical snapshot $\hat{\rho}_k$.
    First, let's consider the case when the random Pauli measurement on the $i$th qubit is chosen to be $Z$.
    This occurs with probability $1/3$.
    Measuring the $i$th qubit of the $k$th copy of $\rho$ in the $Z$ basis yields the outcome $\ket{\hat{b}_i^{(k)}}$, which we denote as $\ket{\hat{b}_i^{(k)}} = \ket{Q^k_i}$ to distinguish it from other measurement results, in line with the notation in Equation~\eqref{eq:rho-q}.
    Measuring the other qubits $j \neq i$ of the $k$th copy of $\rho$ in an independent random Pauli bases yield outcomes $\ket{\hat{b}_j^{(k)}}$.
    Computing $\tr(Z_i \hat{\rho}_k)$ explicitly in this case, we have
    \begin{align}
        \hat{z}_k &= \tr(Z_i \hat{\rho}_k)\\
        &= \tr\left(\left(I \otimes \cdots \otimes I \otimes Z \otimes I \otimes \cdots \otimes I\right)\left(\bigotimes_{j=1}^n 3U_j^{(k)\dagger} \ketbra{\hat{b}_j^{(k)}}U_j^{(k)} - \mathbb{I}\right)\right)\\
        &= \tr(3Z (U_i^{(k)\dagger} \ketbra{Q_i^k}U_i^{(k)} - \mathbb{I})) \cdot 
        \prod_{j\neq i}\tr(3U_j^{(k)\dagger} \ketbra{\hat{b}_j^{(k)}}U_j^{(k)} - \mathbb{I}) \\
        &= \tr(3Z Z \ketbra{Q_i^k}Z - \mathbb{I})\\
        &= \begin{cases}
            3 & Q_i^k = 0\\
            -3 & Q_i^k = 1,
        \end{cases}
    \end{align}
    where in the fourth line, we used our assumption that in this case that the random Pauli measurement on the $i$th qubit is chosen to be $Z$.
    Meanwhile, a similar computation shows that if the random Pauli measurement on the $i$th qubit is $X$ or $Y$, the expectation value $\hat{z}_k$ evaluates to $3$ or $-3$ with equal probability.
    
    In summary, with probability $1/3$, the $i$th qubit is measured in the computational basis to be $Q_i^k$ so that $\hat{z}_k$ is $-3$ or $3$ if $Q_i^k$ is $1$ or $0$, respectively.
    With probability $2/3$, $\hat{z}_k$ is $3$ or $-3$ with equal probability.
    
    Thus, notice that for all $i$, $a(Z_i)=\frac{1}{N}\sum_{k=1}^N \tr(Z_i\hat{\rho}_k)$ is a (rescaled) binomial distribution taking values between $[-3,3]$, where the expected value is $0$ and the standard deviation is $3/\sqrt{N}$. This is because $Q^k_i$ takes values $1$ or $0$ with equal probability, so $\tr(Z_i\hat{\rho}_k)$ is $3$ or $-3$ with equal probability by the above analysis.

    We can use this to show that given $M$ large enough, then $|I|=\Omega(M)$. 
    Recall that a binomial random variable deviates from its mean by a constant factor of its standard deviation with constant probability, so
    \begin{equation}
        \Pr[i \in I]=\Pr[a(Z_i)\geq 9/\sqrt{N}]=\Omega(1).
    \end{equation}
    Denote this probability as $p$. This implies that $\mathbb{E}[|I|]=\Omega(M)$. Let $X_i$ be the random variable that is $1$ when $i \in I$ and $0$ otherwise, and let $X=\sum_{i=1}^M X_i=|I|$. By the Chernoff bound,
    \begin{equation}
        \Pr[X \leq pM - t] \leq \exp(-2t^2/M).
    \end{equation}
    By setting $t=pM/2$ and rearranging, we get that given $M \geq c\log(1/\delta)$ for some constant $c$, then $|I|=\Omega(M)$ with probability at least $1-\frac{\delta}{2}$.

    Assuming $|I|=\Omega(M)$, we will next show that with $M$ large enough and for a state $Q$ chosen uniformly at random from the sample, the odds of the adaptively chosen qubit $Q_{M+j}$ being $1$ is large. Denote $Q_{M+j}$ to be the qubit corresponding to $I$, where by our majority voting rule in Equation~\eqref{eq:rho-dist}, $Q_{M+j}=1$ if and only if $\sum_{i \in I} Q_i > 0$. Moreover, recall $i\in I$ if and only if $a(Z_i)=\frac{1}{N}\sum_{k=1}^N \tr(Z_i\hat{\rho_k})\geq \frac{9}{\sqrt{N}}$, where above we analyzed the random variable $\hat{z}_k$ in two cases: (i) with probability $1/3$ the $i$th qubit was measured in the $Z$ basis, in which case $\hat{z}_k=\tr(Z_i\hat{\rho_k})$ is $3$ if $Q^k_i=1$ and else $-3$, and (ii) with probability $2/3$ the $i$th qubit was measured in the $X$ or $Y$ basis, in which case $\hat{z}_k$ is $3$ or $-3$ with equal probability. 
    % \LLL{also changed this part:
    % Let $S = \{Q^1,\dots, Q^N\}$ denote the measurement results, where $Q^k$ is the bitstring outcome of measuring the $k$th copy of $\rho$ with random Pauli measurements.}
    Let $T=\{\ket{Q^1},\dots, \ket{Q^N}\}$ denote the computational basis state realizations of the $N$ given copies of the quantum state $\rho$.
    Thus, for every $i \in I$ and set $T$
    % \begin{equation}
    %     \E_{Q \sim S}[\mathbbm{1}(Q_i=1)] \geq \frac{1}{\sqrt{N}},
    % \end{equation}
    \begin{align}
        \E_{Q \sim T}[\mathbbm{1}(Q_i=1)] &\geq \frac{1}{3}\E_{Q\sim T}[\mathbbm{1}(\tr(Z_i\hat{\rho})=3)]\\
        &= \frac{1}{3N}\sum_{k=1}^N \mathbbm{1}(\tr(Z_i\hat{\rho}_k)=3)\\
        &\geq \frac{1}{9}a(Z_i)\\
        &\geq \frac{1}{\sqrt{N}},
    \end{align}
    where the first inequality follows by only considering case (i) for the random variable $\hat{z}_k$ above and discarding case (ii). Thus,
    \begin{equation}
        \E_{Q \sim T}\left[\sum_{i\in |I|}\mathbbm{1}(Q_i=1)\right] \geq \frac{|I|}{\sqrt{N}}.
    \end{equation}
    This implies that $Q_{M+j}=1$ unless $\sum_{i\in |I|}\mathbbm{1}(Q_i=1)$ differs from its expectation by more than $\frac{|I|}{\sqrt{N}}$. Using another Chernoff bound,
    \begin{align}
        \Pr_{Q\sim T}[Q_{M+j} \neq 1] &= \Pr[\sum_{i\in I}\mathbbm{1}(Q_i=1)\leq \mathbb{E}\left[\sum_{i\in I}\mathbbm{1}(Q_i=1)\right] - \frac{|I|}{\sqrt{N}}]\\
        &\leq \exp\left(-\frac{2|I|^2}{N\cdot |I|}\right)\\
        &= \exp\left(\frac{-2|I|}{N}\right)
    \end{align}
    If $|I|\geq N\ln(400)/2$, then $\Pr_{Q\sim S}[Q_{M+j} \neq 1] \leq 1/400$. This follows if $M \geq c\cdot N$, for $c$ large enough.
    We will now bound the difference the classical shadows output $a(Q_{M+j})$ and the empirical mean of measuring the $M+j$th qubit of the samples in the $Z$ basis $q(Z_{M+j})=\frac{1}{N}\sum_{Q\in S} [\mathbbm{1}(Q_{M+j}=1)] - \frac{1}{N}\sum_{Q\in S} [\mathbbm{1}(Q_{M+j}\neq 1)]$. Recall that with probability $1/3$ the expectation value $\tr(Z_{M+j} \hat{\rho}_k)$ falls under case (i), where it matches the output of measuring the $(M+j)$th qubit in the $Z$ basis, and with probability $2/3$ it falls under case (ii), where it randomly outputs $3$ or $-3$ with equal probability.
    % Consider the random variable $\tr(Z_{M+j}\hat{\rho})-\bra{Q^k}Z_{M+j}\ket{Q^k}$. Note that the random variable $X_i$ defined as taking value $2$ with probability $2/3$ and $-4$ with probability $1/3$ corresponds with this random variable when $\bra{Q^k}Z_{M+j}\ket{Q^k}=1$. If $\bra{Q^k}Z_{M+j}\ket{Q^k}=1$, then the signs of the outputs are flipped, i.e., it outputs $-2$ or $4$, so it corresponds to $-X_i$.
    % \LLL{also changed below}
    This is captured in the random variable $X_k = \tr(Z_{M+j}\hat{\rho}_k)-\bra{Q^k_{M+j}}Z_{M+j}\ket{Q^k_{M+j}}$, which is defined as follows.
    If $\bra{Q^k_{M+j}}Z_{M+j}\ket{Q^k_{M+j}}=1$, $X_k$ is $2$ with probability $2/3$ and $-4$ with probability $1/3$. 
    This is because if $\bra{Q^k_{M+j}}Z_{M+j}\ket{Q^k_{M+j}}=1$, then $Q^k_{M+j} = 0$ so that by case (i) with probability $1/3$, $\tr(Z_{M+j} \hat{\rho}_k) = 3$. Hence, $X_k = 2$ with probability $1/3$. By case (ii), with probability $2/3$, $\tr(Z_{M+j} \hat{\rho}_k)$ is $3$ or $-3$ with equal probability so that $X_k = 2$ with probability $1/3$ and $X_k = -4$ with probability $1/3$.
    In total, we have that $X_k = 2$ with probability $2/3$ and $X_k = -4$ with probability $1/3$.
    If $\bra{Q^k_{M+j}}Z_{M+j}\ket{Q^k_{M+j}}=-1$, then by a similar argument, $X_k$ is $-2$ with probability $2/3$ and $4$ with probability $1/3$.
    By applying Hoeffding's inequality,
    \begin{align*}
        \Pr[|a(Q_{M+j})-q(Z_{M+j})| \geq \frac{1}{200}] &= \Pr[|\sum_{k=1}^N \tr(Z_{M+j}\hat{\rho})-\bra{Q^k}Z_{M+j}\ket{Q^k}| \geq \frac{N}{200}]\\
        &\leq 2\exp{-\frac{N}{18(200)^2}}.
    \end{align*}
    Taking $N=c\log(1/\delta)$ with $c$ being a large enough constant yields $\Pr[|a(Q_{M+j})-q(Z_{M+j})| \geq \frac{1}{200}] \leq \delta/2$. Taking union bound over this event and the event that $|I|=\Omega(M)$, we get that with probability at least $1-\delta$,
    \begin{align}
        a(Z_{M+j}) &\geq q(Z_{M+j})-\frac{1}{200}\\
        &= \frac{1}{N}\sum_{Q\in S} [\mathbbm{1}(Q_{M+j}=1)] - \frac{1}{N}\sum_{Q\in S} [\mathbbm{1}(Q_{M+j}\neq 1)]-\frac{1}{200}\\
        &= \Pr_{Q\sim S}[Q_{M+j}=1] - \Pr_{Q\sim S}[Q_{M+j}\neq 1]-\frac{1}{200}\\
        &\geq \frac{199}{200} - \frac{1}{200}\\
        &\geq 0.99.
    \end{align}
    Also, since $Q^1...Q^M$ are uniformly distributed over $\{0,1\}^M$, $\trace(Z_{M+j}\rho)=0$. This yields the final result.
\end{proof}

\subsection{Numerical Experiment}
\label{sec:numerics}
We consider the same setting as the adaptive attack. For our numerical experiments, we fix the number of samples $N=10000$ and vary the number of queries $M$ that are asked. The quantum state is determined by the majority rule in Equation~\eqref{eq: majority-rule} and is uniform among pure states that satisfy the rule, as in Equation~\eqref{eq:rho-dist}. 
Recall that one can view this setting as a classical problem involving $n$ coins, whose joint distribution is dictated by the diagonal elements of $\rho$. Querying $Z_i$ observables gives the outcome of flipping one of the coins.
The goal is then to predict the true biases of the queried coins.
We simulate this equivalent classical problem in our numerical experiments.

We simulate the same adaptive attack, where $Z_1,...,Z_M$ are queried followed by an adaptive query $Z_{M+j^*}$. The adaptive error is taken as the error of the adaptively chosen query. We run this for $100$ random instances for $M=\{100, 200, 400, 800, 1600, 3200, 6400, 10000\}$ and compute the mean error and standard deviation for each $M$.  We also compare this to a non-adaptive querying rule running on the same instances, where the queries are fixed ahead of time. Specifically, we query $Z_1,...,Z_M$ and then $Z_{M+j}$ where $Q_{M+j}$ is conditioned on three-quarters of the first $M$ qubits. The non-adaptive error is computed by taking the maximum error over these queries. Our numerical results in Figure~\ref{fig:adaptive-attack} show that the adaptive error quickly becomes significantly higher than the non-adaptive error as $M$ grows.
This confirms our theoretical result from Theorem~\ref{thm:adapt-attack} and emphasizes how false discovery can occur in the experiments.

\subsection{Lower Bound}
\label{sec:lb}

In this section, we prove a sample complexity lower bound, which demonstrates that in general, we cannot design a mechanism that can accurately estimate $M$ adaptively chosen local observables given $\mathrm{poly\,log}(M)$ copies of the unknown quantum state $\rho$.
In particular, we show that it is impossible to obtain the logarithmic dependence on the number of queries $M$ present in the classical shadow formalism for the nonadaptive setting~\cite{huang2020predicting}.
We prove the following lower bound.

\begin{prop}[Local observables, lower bound]
    \label{prop:local-lower}
    There exists an analyst $\mathcal{A}$ and density matrix $\rho$ with system size $n=\Omega(M2^{2000\sqrt{M}})$ such that any $(0.99,1-\frac{1}{2000\sqrt{M}})$-accurate quantum mechanism $\mathcal{M}$ estimating expectation values of $M$ adaptively chosen single-qubit observables requires at least $N=\Omega(\sqrt{M})$ samples of the quantum state $\rho$.
\end{prop}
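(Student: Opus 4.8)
The plan is to reduce the claim to the collusion-resilience guarantee of interactive fingerprinting codes (IFPCs)~\cite{steinke2015interactive}, following the template used there for classical adaptive data analysis but carrying out the embedding so that the analyst only ever issues single-qubit $Z$ queries. I will take the colluding group to have size $d\triangleq 2000N$ and let the unknown state be the uniform mixture $\rho\triangleq\frac1d\sum_{i=1}^d\ketbra{\phi_i}{\phi_i}$ of $d$ computational-basis states, where $\ket{\phi_i}$ is a ``codebook register'' for user $i$. This register is split into $M$ disjoint blocks, one per round, and block $j$ contains one qubit for \emph{every} column vector $c\in\{0,1\}^d$; in the state $\ket{\phi_i}$ this qubit holds the bit $c_i$. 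Thus a single-qubit $Z$ measurement of the block-$j$, column-$c$ qubit on $\ket{\phi_i}$ returns the watermark $(-1)^{c_i}$, while its true expectation on $\rho$ is the population mean $\trace(Z_{(j,c)}\rho)=\frac1d\sum_{i=1}^d(-1)^{c_i}$, a quantity the analyst can compute unaided. Using a fresh block of $2^d$ qubits for each of the $M$ rounds yields the system size $n=\Theta(M2^{2000N})$ asserted in the statement, and is also the step that prevents a user's round-$j$ watermark from leaking through copies touched in other rounds.

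Next, the analyst $\mathcal A$ is defined to be the fingerprinting code $\mathcal F$ with a thin translation layer: in round $j$ it runs $\mathcal F$ on the transcript so far to obtain a column $c^j$, queries $Z$ on the block-$j$ qubit indexed by $c^j$, returns the mechanism's answer $a^j$ to $\mathcal F$, and records whom $\mathcal F$ accuses. Identifying the (at most $N$) distinct sampled basis states with the colluding set $\mathcal S$ --- whose size $\le N=d/2000$ matches the false-accusation budget of the IFPC --- it remains to verify the two IFPC conditions. Condition (1), that $\mathcal F$ may request any watermark assignment in $\{-1,1\}^d$, is immediate since there is a dedicated qubit for every $c$. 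Condition (2), that the mechanism (playing the IFPC adversary $\mathcal P$) only accesses watermarks of currently non-accused colluding users, is where the ``randomization in the selection of $\rho$'' is needed: I would randomize the construction --- e.g.\ a secret relabeling tying the codebook layout to the user identities --- so that, whatever quantum processing the mechanism performs on its $N$ copies and on the (obfuscated) queries it receives, it cannot single out or deanonymize an accused user nor link a user's watermarks across different rounds, so that its round-$j$ answer depends only on aggregate information $\frac{1}{|\mathcal S^j|}\sum_{i\in\mathcal S^j}(-1)^{c^j_i}$ about the non-accused colluding users, exactly as the IFPC interface demands. One may further use here that, since $\rho$ is diagonal and the queries are $Z$-observables, $\rho^{\otimes N}$ behaves as $N$ classical samples, so quantum memory gives no advantage and this becomes essentially a classical statement about ``coordinate queries''. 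Making condition (2) rigorous is the step I expect to be the main obstacle; it should follow from the per-round fresh qubits together with a perfect-secrecy/hybrid argument (for Pauli observables the analogue of this step is handled by the one-time pad of Definition~\ref{def:otp}).

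Finally, granting conditions (1)--(2), the IFPC guarantee of~\cite{steinke2015interactive} applies: against any adversary, with probability $1-\mathcal O(1/d)=1-\mathcal O(1/N)$ the code catches the adversary being inconsistent at some round $j^\ast=\mathcal O(N^2)$ while making at most $d/2000$ false accusations. I would translate ``caught inconsistent at round $j^\ast$'' into $|a^{j^\ast}-\trace(Z_{(j^\ast,c^{j^\ast})}\rho)|\ge 0.99$ using the robustness built into the IFPC (a surviving adversary must answer within a small constant of the non-accused colluding-user average, whereas an accurate mechanism must answer within $0.99$ of the population average, and $\mathcal F$'s columns can be chosen so that these two averages differ by essentially $1$). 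A short bookkeeping step then finishes the argument: if some mechanism were $(0.99,\,1-\tfrac1{2000\sqrt M})$-accurate with $N$ samples and $N$ were asymptotically smaller than $\sqrt M$, then $M$ would exceed the $\mathcal O(N^2)$ length of the IFPC, the whole fingerprinting game would fit within the $M$ rounds, and the $\Omega(1/N)$ failure probability it forces would exceed the permitted $\tfrac1{2000\sqrt M}$ --- a contradiction. Hence $N=\Omega(\sqrt M)$, which is Proposition~\ref{prop:local-lower}. Read classically, the same construction also yields a new lower bound for adaptively answering coordinate queries (cf.\ Corollary~\ref{cor:local_queries}).
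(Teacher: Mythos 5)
Your high-level plan matches the paper's: reduce to the IFPC guarantee of~\cite{steinke2015interactive}, dedicate one qubit per possible column $c\in\{0,1\}^d$ in each of $M$ fresh blocks, and let the analyst play $\mathcal F$ by issuing single-qubit $Z$ queries. The bookkeeping at the end (false-accusation budget $d/2000$, failure probability $\mathcal O(1/d)$, game length $\mathcal O(N^2)$) is also in line with the paper.

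The genuine gap is exactly the step you flag as the ``main obstacle,'' and as you've stated the construction it is not merely unproved but actually fails. You index block-$j$ qubits \emph{directly} by the column $c$, so the analyst's query is ``measure qubit $(j,c^j)$.'' That classical query label is handed to the mechanism verbatim, so the mechanism learns the \emph{entire} vector $c^j\in\{0,1\}^d$ --- including the entries for accused users and for users outside the colluding group --- which is precisely what IFPC condition (2) forbids the adversary $\mathcal P$ from seeing. Your suggested repair (``a secret relabeling tying the codebook layout to the user identities'') is in the right spirit but is not the same as what is needed: the issue is not deanonymizing a user, it is that the \emph{query index itself} leaks $c^j$. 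The paper's fix is to draw a fresh secret permutation $\sigma_j:\{0,1\}^d\to[2^d]$ for each block, build the state so that bit $(\sigma_j(q),j)$ holds $q_i$, and have the analyst query the physical index $\sigma_j(q^j)$; this randomization is baked into an ensemble $\{\rho_\ell\}$ from which $\rho$ is drawn uniformly. With $\sigma_j$ secret and independent per round, the physical index reveals nothing about $q^j$ beyond the bit pattern $q^j_{\mathcal S}$ that the mechanism can read off its own samples.

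Even with the permutation in place, you still owe the argument that the mechanism genuinely satisfies condition (2). The paper does this via a \emph{Real-vs-Simulated} hybrid (Algorithms~\ref{alg:real} and~\ref{alg:simulated}): in the simulated game the analyst knows $\mathcal S$, hands the mechanism fake samples built only from $c^j_{S^j}$, and chooses the queried index uniformly among $\{\sigma_j(q):q_{S^j}=c^j_{S^j}\}$; one then shows the mechanism's view (distribution over $\sigma_j$ and over queried indices, conditioned on its samples) is identical in the two games, so its answer distribution is identical, and the IFPC guarantee transfers. Your ``perfect-secrecy/hybrid argument'' gestures at this, but the fresh-block-per-round observation alone does not give it --- the per-round independence of $\sigma_j$ is what prevents linking queries across rounds, and the indistinguishability calculation is what licenses the reduction. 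That calculation, plus the exact form of the randomization, is the substance of the proof and is what is missing here.
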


Similarly to the adaptive attack in Appendix~\ref{sec:attack}, we consider the setting in which the analyst queries single-qubit observables $\{Z_i\}$ and the mechanism has access to $N$ copies of a diagonal density matrix $\rho$.
The idea behind the proof of this lower bound builds on lower bounds found in the classical adaptive statistical query setting~\cite{hardt2014preventing, dwork2015preserving,steinke2015interactive}.
In particular, \cite{steinke2015interactive} showed an $\Omega(\sqrt{M})$ sample complexity lower bound for the adaptive statistical query setting.
We provide a similar construction for an analyst interacting with a quantum mechanism, which achieves this lower bound even when the analyst only queries single-qubit observables.
The main tool for the proof will be \textit{interactive fingerprinting codes}~\cite{fiat2001dynamic}, which we reviewed in Section~\ref{sec:fingerprint}.
The argument is similar to the proof of a sample complexity lower bound for the adaptive statistical query setting~\cite{steinke2015interactive}.

The idea is to reduce our problem to \textsf{IFPC} by designing an analyst $\mathcal{A}$ such that if the mechanism $\mathcal{M}$ answers all of the analyst's queries accurately, then we contradict the definition of $\mathcal{F}$ being a collusion resilient fingerprinting code.
In particular, we design an attack on the mechanism that simulates the \textsf{IFPC} game (Algorithm~\ref{alg:adaptive}), where the mechanism $\mathcal{M}$ plays the role of the adversary $\mathcal{P}$ in the \textsf{IFPC} game.
We argue that in our attack, the mechanism $\mathcal{M}$ indeed adheres to the same conditions as the adversary $\mathcal{P}$ (i.e., only having access to information from the non-accused colluding users) so that the guarantees of the interactive fingerprinting codes hold.
These guarantees state that the fingerprinting code $\mathcal{F}$ can force the adversary $\mathcal{P}$ to be inconsistent.
Then, because $\mathcal{M}$ is the same as the adversary $\mathcal{P}$, this implies that $\mathcal{M}$ also answers one of the analyst's queries incorrectly.

\begin{proof}[Proof of Proposition~\ref{prop:local-lower}]
We first define the setup of the problem.
Throughout this proof, we use $\mathcal{A}$ for the analyst, $\mathcal{M}$ for the mechanism, $\mathcal{F}$ for the fingerprinting code, and $\mathcal{P}$ for the adversary in the \textsf{IFPC} game.
Recall that the mechanism is given copies of an unknown quantum state $\rho$, with which it is asked to predict $\tr(O_i \rho)$ for an (adaptively chosen) query $O_i$ from the analyst.
We first precisely define an adversarial choice of $\rho$.
Subsequently, we will present an algorithm for $\mathcal{A}$ that thwarts any mechanism, even when only querying single-qubit observables.

Let $d$ be the number of users in the setting of fingerprinting codes.
Let $M$ be the number of observables that the analyst queries.

\textbf{Defining the density matrix:}
We define the density matrix $\rho$ on $n$ qubits, where we specify $n$ shortly.
In fact, we define an ensemble of density matrices $\{\rho_\ell\}$, where the density matrix $\rho$ given to the mechanism is chosen from the ensemble uniformly at random.
Each density matrix $\rho_\ell$ in the ensemble is diagonal so that it corresponds to a classical distribution.
We effectively encode the adversarial choice of distribution from~\cite{steinke2015interactive} into this density matrix with technical changes to ensure the guarantees hold in our setting.
Moreover, note that we use a diagonal density matrix because the guarantees from interactive fingerprinting codes hold for classical adversaries (i.e., classical mechanisms).
We later remark that quantum mechanisms can be reduced to classical mechanisms.

Let $n = \lceil \log d \rceil + M2^d$.
We index the first $\lceil \log d \rceil$ qubits by an integer in $\{1,\dots, \lceil \log d \rceil\}$, and we index the remaining qubits by tuples $(k, j)$ such that $k \in [2^d], j \in [M]$.
This reflects the underlying structure that the last qubits are $M$ groups of $2^d$ qubits.
In this way, $j$ indexes the group of qubits and $k$ indexes the specific qubit within the $j$th group.
We describe the intuition for this structure later.

Similarly to the adaptive attack (Appendix~\ref{sec:attack}), since the density matrices $\rho_\ell$ in the ensemble we construct is diagonal, we can write
\begin{equation}
    \rho_\ell \triangleq \sum_{Q \in \{0,1\}^n} p_\ell(Q) \ketbra{Q},
\end{equation}
where $\ket{Q}$ is a computational basis state so that $Q \in \{0,1\}^n$.
Note that $\ell \in \{1,\dots, (2^d!)^M\}$, so our ensemble consists of $(2^d!)^M$ density matrices.
The reason for this will become clear later.
We can index the entries of the bitstring $Q$ as previously described:
\begin{equation}
    Q = Q_1\cdots Q_{\lceil \log d \rceil} Q_{(1,1)} \cdots Q_{(2^d,1)} \cdots Q_{(1,M)} \cdots Q_{(2^d,M)} \in \{0,1\}^n.
\end{equation}

The state $\ket{Q}$ can be viewed as follows.
The first $\lceil \log d \rceil$ qubits index the user from the setting of fingerprinting codes, where recall there are $d$ users.
Explicitly, each integer in $\{1,\dots, d\}$ can be represented by in binary using $\lceil \log d \rceil$ bits.
The bitstring $Q_1 \cdots Q_{\lceil \log d \rceil} \in \{0,1\}^{\lceil \log d \rceil}$ encodes this binary representation of the user's index $i \in \{1,\dots, d\}$\footnote{These $\lceil \log d \rceil$ qubits can actually be removed, but we include them for simplicity. They also do not affect the overall system size much.}.
There are also $M$ groups of $2^d$ qubits.
Namely, for $j \in [M]$, these are the groups of qubits corresponding to $Q_{(1,j)}\cdots Q_{(2^d, j)}$.
Each group corresponds to a round $j$ of the interaction between the analyst and mechanism, where at round $j$, the analyst will query a qubit in group $j$.

We want to define the distribution $p_\ell(Q)$ in a specific way such that the first $\lceil \log d \rceil$ qubits determine the state of the remaining ones.
Let $\sigma_{j,\ell}: \{0,1\}^d \to [2^d]$ be a random assignment of $q \in \{0,1\}^d$ to some index $k_j \in [2^d]$ at round $j$.
Here, we suggestively denote a bistring in $\{0,1\}^d$ by $q$, where as hinted before, the analyst will query one of the $2^d$ qubits in a given group $j$.
When we consider a randomly chosen $\rho$ from the ensemble $\{\rho_\ell\}$, we will often drop the $\ell$ subscript on $\sigma_{j,\ell}$ and simply write $\sigma_j$.
Each density matrix $\rho_\ell$ corresponds to a random assignment $\sigma_\ell$ defined by each of the assignments $\sigma_{1,\ell},\dots, \sigma_{M, \ell}$.
Namely,
\begin{equation}
    \sigma_\ell(q^1,\dots, q^{2^d}) = (\sigma_{1,\ell}(q^1),\dots,\sigma_{1,\ell}(q^{2^d}),\dots, \sigma_{M, \ell}(q^1),\dots, \sigma_{M,\ell}(q^{2^d})).
\end{equation}
Thus, $\ell \in \{1,\dots, (2^d!)^M\}$ because this is the number of random permutations of $[M2^d]$.
Define a set of binary strings
\begin{equation}
    \label{eq:r-ell}
    R_\ell \triangleq \{Q \in \{0,1\}^n : Q_{(\sigma_{j,\ell}(q), j)} = q_i\quad \forall q \in \{0,1\}^d,\;\forall j\in [M],\text{ where } Q_1\cdots Q_{\lceil \log d \rceil} = i \in [d]\}.
\end{equation}
This defines precisely how the first $\lceil \log d \rceil$ qubits determine the remaining ones.
Namely, if the first $\lceil \log d \rceil$ bits are a binary representation of the user's index $i \in [d]$, then the other bits in the bitstring $Q \in \{0,1\}^n$ must adhere to the rule that the bit indexed by $(\sigma_{j,\ell}(q), j)$ is equal to $q_i$ for all $q \in \{0,1\}^d$ and for all rounds $j \in [M]$.
With this, we can define $p_\ell(Q)$ as follows:
\begin{equation}
    p_\ell(Q) \triangleq \begin{cases}
        \frac{1}{d} & \text{if } Q \in R_\ell\\
        0 & \text{otherwise}.
    \end{cases}
\end{equation}
In this way, we have a uniform distribution over all users, and the remaining $M2^d$ qubits must satisfy the rule specified in the definition of $R_\ell$ (Eq.~\eqref{eq:r-ell}).

We have completed the definition of this ensemble $\{\rho_\ell\}$ of density matrices.
Recall that the density matrix $\rho$ given to the mechanism is chosen from this ensemble uniformly at random.

\begin{figure}[t]
    \centering
    \includegraphics[scale=0.2]{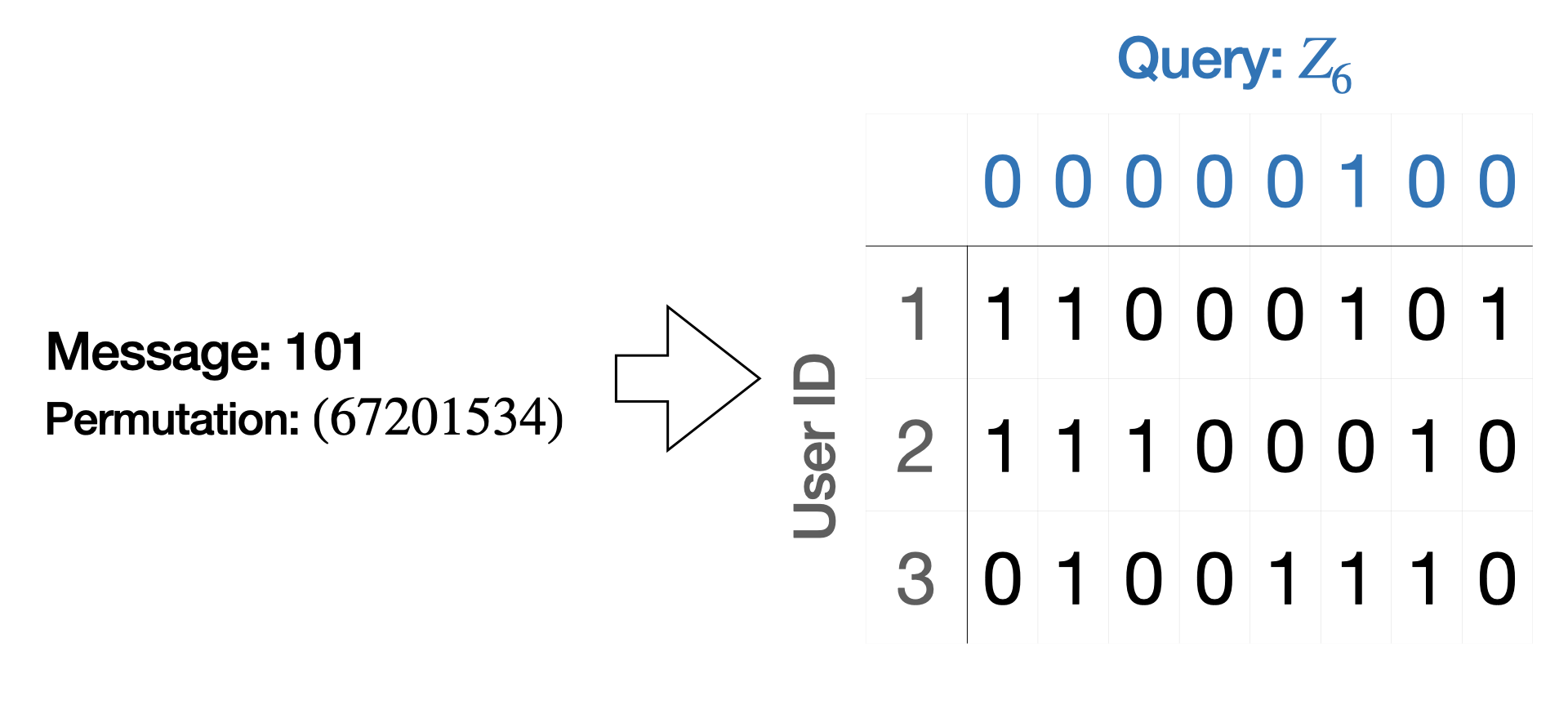}
    \caption{\textbf{Example construction for local observables.} We consider an example where the query (message) is $q^j=101$ and the randomly generated permutation is $\sigma_j = (6,7,2,0,1,5,3,4)$, where we denote each bitstring $q$ by its decimal representation. The table shows the quantum state, with each row being a computational basis state $\ket{Q}$ corresponding to a user (which is encoded in the first $\lceil \log d \rceil$ qubits of $\ket{Q}$). The permutation is encoded into the quantum state as shown in the table, e.g. the first column is $q=110$ (corresponding to $6$). The message is $q^j=101$ (corresponding to $5$) is on the $6$th entry of the permutation, so the analyst will query the single qubit observable $Z_6$.}
    \label{fig:local-observable}
\end{figure}

\textbf{Defining the analyst's behavior:}
With this, we can define the analyst's adversarial behavior.
Similarly to the adaptive attack (Appendix~\ref{sec:attack}), the analyst can cause the mechanism to answer queries inaccurately by querying only single-qubit observables.
In particular, the analyst will query single-qubit $Z$ observables.

Recall that in our setup of the density matrix, there are $M$ groups of $2^d$ qubits.
Each group corresponds to a round $j$ of the interaction between the analyst and mechanism, where at round $j$, the analyst will query the observable $Z_{(k, j)}$ for some $k \in [2^d]$.
Intuitively, at round $j$, the analyst queries a qubit chosen from $(1,j),\dots, (2^d, j)$ uniformly at random and independently for each round.
We describe this explicitly as follows.
Let $q^j \in \{0,1\}^d$ be the query specified by the analyst in round $j$.
Let $\sigma_{j}: \{0,1\}^d \to [2^d]$ be a random assignment of the query $q^j$ to some index $k_j \in [2^d]$ for round $j$.
Then, the analyst will query the observable $Z_{(\sigma_j(q^j),j)}$ at round $j$.
Note that this randomization via $\sigma_j$ is necessary to ensure that the guarantees of the fingerprinting code hold, and this plays the same role as encryption does in~\cite{steinke2015interactive}.
We justify this more formally later in the proof.

Now, we are ready to state the analyst's full attack.
This is given in Algorithm~\ref{alg:real}.

\begin{algorithm}
\caption{Real Attack (Local Version)}
\label{alg:real}
Let the number of users be $d=2000N$.\\
Sample some $n$-qubit quantum state $\rho$ from the ensemble $\{\rho_\ell\}$.\\
Give $\rho^{\otimes N}$ to the mechanism $\mathcal{M}$.\\
Initialize an $N$-collusion resilient fingerprinting code $\mathcal{F}$ for $d$ users and length $M=\mathcal{O}(N^2)$.\\
Let $T^0=\emptyset$.\\
\For{$j=1$ to $M$}{
  Let $c^j\in\{0,1\}^d$ be chosen by $\mathcal{F}$.\\
  Analyst $\mathcal{A}$ chooses query $q^j\in \{0,1\}^d$ such that $q^j_i=c^j_i$ if $i\not\in T^{j-1}$ and $0$ otherwise.\\
  $\mathcal{A}$ queries $Z_{(\sigma_j(q^j),j)}$ to $\mathcal{M}$, where $\sigma_j(q^j)$ denotes the index corresponding to $q^j$.\\
  $\mathcal{M}$ outputs response $a^j$, rounds $a^j$ to $\Bar{a}^j\in\{0,1\}$, and gives $\Bar{a}^j$ to $\mathcal{F}$.\\
  $\mathcal{F}$ accuses $I^j \subseteq [d]$.\\
  Set $T^j \gets T^{j-1} \cup I^j$.
}
\end{algorithm}

Here, we use $T^j$ to denote the set of users accused by the fingerprinting code $\mathcal{F}$ up to and including round $j$.
We also denote $S^j = \mathcal{S} \setminus T^j$, where $\mathcal{S}$ is the set of colluding users.
Thus, $S^j$ is the set of colluding users that have not yet been accused.

\textbf{Reduction to \textsf{IFPC}:}
We first show that our proposed attack follows the same interaction model as the \textsf{IFPC} game.
Crucially, we claim that from the mechanism's perspective, querying the observable $Z_{(\sigma_j(q^j),j)}$ that corresponds to $q^j$ (as in Algorithm~\ref{alg:real}) is effectively the same as querying a random observable $Z_{(i,j)}$ for which the observed $Q_{(i,j)}$ is the same as $c^j_{S^j}$ (as required in Algorithm~\ref{alg:adaptive}).
We also claim that this modification is enough to cause the game to operate in the \textsf{IFPC} model.

To show this, we construct a simulated attack, given in Algorithm \ref{alg:simulated}.
The simulated attack is merely fictitious, as it relies on the analyst knowing the set of colluding users $\mathcal{S}$.
The analyst first simulates sampling from the density matrix $\rho$ that would be given to the mechanism in the real attack (Algorithm~\ref{alg:real}) and gives this simulated data to the mechanism.
In this way, because the analyst knows $\mathcal{S}$, the dataset that is given to the mechanism in the simulated attack (Algorithm~\ref{alg:simulated}) only contains information about the colluding users.
Explicitly, the analyst creates computational basis states $\ket{Q}$, where $Q_1 \cdots Q_{\lceil \log d \rceil}$ is the binary representation of some colluding user's index $i \in \mathcal{S}$ and $Q_{\sigma_j(q), j} = q_i$ for all $q \in \{0,1\}^d, j \in [M]$.
The random assignments $\sigma_j$ are generated in the same manner as described previously (namely, the same as in the real attack (Algorithm~\ref{alg:real}).
Moreover, the simulated interaction with the sampled states were constructed with only knowledge of $c^j_{S^j}$ and contain no information about $c^j$ outside of $S^j$.
Thus, the simulated attack indeed operates in the correct \textsf{IFPC} interaction model required for fingerprinting code guarantees to hold.

\begin{algorithm}
\caption{Simulated Attack (Local Version)}\label{alg:simulated}
Let the number of users be $d=2000N$ and let $\mathcal{U}$ be the uniform distribution over $[d]$.\\
Choose colluding users $\mathcal{S} \sim \mathcal{U}^N$ and give $\mathcal{S}$ to the analyst $\mathcal{A}$.\\
$\mathcal{A}$ simulates sampling from the density matrix $\rho$ by generating $\mathcal{S}'=\{\ket{Q}\}$ corresponding to $\mathcal{S}$.\\
The analyst's procedure for generating $\mathcal{S}'$ is described further in the text.\\
Give $\mathcal{S}'$ to $\mathcal{M}$.\\
Initialize $N$-collusion resilient fingerprinting code $\mathcal{F}$ for $d$ users and length $M=\mathcal{O}(N^2)$.\\
Let $S^1=\mathcal{S}$.\\
\For{$j=1$ to $M$}{
  Let $c^j\in\{0,1\}^d$ be chosen by $\mathcal{F}$. Give $c^j_{S^j}$ to $\mathcal{A}$.\\
  $\mathcal{A}$ chooses a query at random from among $\{Z_{(\sigma_j(q),j)}:q_{S^j}=c^j_{S^j}\}$ and gives it to $\mathcal{M}$.\\
  $\mathcal{M}$ outputs response $a^j$. Round $a^j$ to $\Bar{a}^j\in\{0,1\}$, and give $\Bar{a}^j$ to $\mathcal{F}$.\\
  $\mathcal{F}$ accuses $I^j \subseteq [d]$. Set $S^{j+1} \gets S^j \setminus I^j$.
}
\end{algorithm}

To show that our proposed attack (Algorithm~\ref{alg:real}) follows the same interaction model as the \textsf{IFPC} game, we claim that the mechanism cannot distinguish between our proposed attack and the simulated attack (Algorithm~\ref{alg:simulated}), which as described must be in the \textsf{IFPC} interaction model.
We can show that any mechanism cannot distinguish between the simulated and real attacks by noting the following:
\begin{enumerate}
    \item The distribution over the assignments $\sigma_j$ are the same. 
    
    \item For any given round $j$, the distribution of the queried qubit is the same.
\end{enumerate}

The first point is true because the assignments were randomized uniformly before the first round in both the real and simulated attacks.

The second point is true because of the following argument.
For the real attack (Algorithm~\ref{alg:real}), the assignments $\sigma_j$ are randomized before the analyst selects $q^j$ (only the analyst knows this random seed).
Thus, from the mechanism's perspective, the queried qubit is effectively random, just as in the simulated attack.
Note also that we could not argue this if we only had one copy of $\{Q_{(\cdot,1)}\}$ (or if we did not randomize each of the $M$ groups independently).
If this were the case, the mechanism could learn whether they were receiving the same queries between rounds, so the indices would not truly be random.

Thus, the mechanism's views are identical in both attacks.
Since the distribution over the dataset and indices presented to the mechanism $\mathcal{M}$ are the same, the distribution over answers from $\mathcal{M}$ are also the same.
This implies that the probabilities of inconsistency are the same, i.e., \begin{equation}
    \Pr_{\mathrm{Real}}[\exists j \in [M]: \forall i \in [d],\;\Bar{a}^j\neq c^j_i] = \Pr_{\mathrm{Sim}}[\exists j \in [M]: \forall i \in [d],\; \Bar{a}^j\neq c^j_i]
\end{equation}
By the argument in~\cite{steinke2015interactive} for the algorithm for interactive fingerprinting codes, the distribution over challenges $c^j$ are also the same for the real and simulated attacks.
Thus, we have
\begin{equation}
    \Pr_{\mathrm{Real}}[|T^M\setminus \mathcal{S}|\leq n/2000] = \Pr_{\mathrm{Sim}}[|T^M\setminus \mathcal{S}|\leq n/2000]
\end{equation}
This completes the first part of the proof.

\textbf{Conclusion:}
To finish the proof, we need to show that the analyst can force the mechanism to be inaccurate after $\mathcal{O}(N^2)$ adaptive queries.
To do so, our argument follows that of the attack sketch (without cryptography) from Lecture 19 of \cite{roth2017adaptive} with a some modification.
Throughout the $M$ rounds of interaction between the analyst and mechanism, the guarantees of the fingerprinting code state that with probability $1-1/n$, the following hold:
\begin{enumerate}
    \item The number of false accusations does not exceed $n/2000$, i.e., $|T^j|\leq N+n/2000=n/1000$.

    \item The mechanism $\mathcal{M}$ was forced to be inconsistent, in which case either $c^j_i=1$ for all $i$ or $c^j=-1$ for all $i$.
\end{enumerate}
During these rounds, either
\begin{equation}
    \mathbb{E}_\mathcal{D}[Z^j] \geq 1-2\frac{|T^j|}{n} \geq 1 - \frac{1}{500} \quad \textrm{or} \quad \mathbb{E}_\mathcal{D}[Z^j] \leq -1,
\end{equation}
where $Z^j$ denotes the query that was asked in round $j$.
Suppose that the mechanism answers queries with error at most $\alpha = 0.99$.
Then, in the first case, the answer $a^j > 0$, and in the second case, $a^j < 0$.
In both cases, the rounded answers $\bar{a}^j$ are consistent.
However, the guarantees of the fingerprinting code state that the probability that $\bar{a}^j$ is consistent on every round is at most $1-1/n$.
Thus, the rest of the time, the mechanism was forced to answer the query with error $\alpha > 0.99$.
\end{proof}

We emphasize some observations we made throughout the proof.

\begin{remark}
    Note that the randomization of the qubits in our proposed attack (Algorithm \ref{alg:real}) plays an equivalent role to the encryption in \cite{steinke2015interactive}.
    Namely it ensures the secrecy requirement needed for the IFPC model for queries outside of $\mathcal{S}$.
\end{remark}
\begin{remark}
    The guarantee in this proof are for classical mechanisms $\mathcal{M}$.
    However, since any quantum mechanism can be simulated classically and the arguments of the proof hold without restriction to the time complexity of the mechanism, the guarantee holds for any quantum mechanism as well.
\end{remark}

We also point out that the same lower bounds hold for the following analogous classical adaptive data analysis problem.
Suppose the distribution $\mathcal{D}$ is over an $n$-dimensional data universe $\mathcal{X}\subseteq \{0,1\}^n$, and the analyst is constrained to using only \textit{$k$-sparse statistical queries} $q:\mathcal{X}\mapsto [0,1]$, which we define as functions depending on at most $k$ bits of the input, $k$ being a constant.
Given a dataset $\mathcal{S}=(x_1,...,x_N) \sim \mathcal{D}^N$, the goal of the classical mechanism is to accurately estimate $q(\mathcal{D})=\mathbb{E}_{x\sim \mathcal{D}}[q(x)]$ for $M$ adaptively-chosen queries $q$.
Since our proof for the local observables lower bound involves only classical constructions, we may recycle the proof to obtain new classical lower bounds for this sparse queries problem: considering an analyst querying $1$-sparse queries of the form $q_b(x)=x_b$, the query $q_b$ corresponds to the observable $Z_b$ and the classical distribution $\mathcal{D}$ can be mapped onto a density matrix.

\begin{corollary} [Sparse queries, lower bound]
\label{cor:local_queries}
    There exists an analyst $\mathcal{A}$ and classical distribution $\mathcal{D}$ acting on $\{0,1\}^n$, where $n=\Omega(M2^{2000\sqrt{M}})$ such that any $(0.99,1-\frac{1}{2000\sqrt{M}}$-accurate classical mechanism $\mathcal{M}$ answering $M$ adaptively chosen $1$-sparse statistical queries requires at least $N=\Omega(\sqrt{M})$ from $\mathcal{D}$.
\end{corollary}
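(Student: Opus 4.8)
\textbf{The plan} is to observe that the proof of Proposition~\ref{prop:local-lower} never exploits any genuinely quantum feature, so it can be read off verbatim as a statement about classical distributions and $1$-sparse statistical queries. Every state $\rho_\ell = \sum_{Q \in \{0,1\}^n} p_\ell(Q)\ketbra{Q}$ in the adversarial ensemble is diagonal in the computational basis, so it \emph{is} just the classical distribution $\mathcal{D}_\ell$ on $\{0,1\}^n$ with $\mathcal{D}_\ell(Q) = p_\ell(Q)$; the uniformly random choice of $\rho$ from $\{\rho_\ell\}$ becomes the uniformly random choice of $\mathcal{D}$ from $\{\mathcal{D}_\ell\}$. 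Moreover, from the point of view of a classical mechanism, being handed $\rho^{\otimes N}$ for a diagonal $\rho$ is the same as being handed a dataset $\mathcal{S} = (x_1,\dots,x_N) \sim \mathcal{D}^N$: the only informative operation on copies of a diagonal state is a computational-basis measurement, which returns an i.i.d.\ sample from $\mathcal{D}$. So the two sample-access models coincide.

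Next I would set up the query correspondence. For a diagonal state $\rho_{\mathcal{D}}$ and the single-qubit observable $Z_b$,
\begin{equation}
    \tr(Z_b \rho_{\mathcal{D}}) = \sum_{Q} \mathcal{D}(Q)(-1)^{Q_b} = 1 - 2\,\E_{x\sim\mathcal{D}}[x_b] = 1 - 2\,q_b(\mathcal{D}),
\end{equation}
where $q_b(x) = x_b$ is precisely the $1$-sparse statistical query associated with bit $b$. Hence a response estimating $q^j(\mathcal{D})$ to error $\epsilon$ is, through the fixed affine map $t \mapsto 1-2t$, a response estimating $\tr(Z_{b_j}\rho_{\mathcal{D}})$ to error $2\epsilon$, and conversely. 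The two interaction problems are therefore identical up to this rescaling of the query range ($[0,1] \leftrightarrow [-1,1]$) and of the accuracy parameter; this bookkeeping is the only point that requires any care.

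With these two identifications in hand, one simply runs the proof of Proposition~\ref{prop:local-lower} line by line. The analyst's real attack (Algorithm~\ref{alg:real}) and its simulated counterpart (Algorithm~\ref{alg:simulated}) only ever (i) generate computational-basis strings $Q$ consistent with the rule defining $R_\ell$, (ii) read off a single coordinate $Q_{(\sigma_j(q^j),j)}$ --- i.e.\ evaluate a $1$-sparse query --- and (iii) round the answer and feed it to the interactive fingerprinting code $\mathcal{F}$. The indistinguishability of the two attacks and the collusion-resilience guarantee of $\mathcal{F}$ then force any classical mechanism to answer some $q^j$ with error $\Omega(1)$ after $M = \mathcal{O}(N^2)$ rounds, once $d = 2000N$ and $n = \lceil\log d\rceil + M2^d = \Omega(M2^{2000\sqrt{M}})$. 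Since any classical mechanism for the $1$-sparse-query problem is exactly a classical mechanism of the type already handled in Proposition~\ref{prop:local-lower}, it inherits the same $\Omega(\sqrt{M})$ lower bound, which is the corollary. I do not expect a genuine obstacle here: beyond the affine rescaling noted above, the only thing to verify is that the ``dataset of $N$ i.i.d.\ samples'' model of the corollary coincides with the ``$N$ copies of a diagonal $\rho$'' model of the proposition, which it does by the observation in the first paragraph.
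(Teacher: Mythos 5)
Your proposal is correct and is essentially the paper's own argument: the paper proves Corollary~\ref{cor:local_queries} by the one‑sentence observation that the whole construction of Proposition~\ref{prop:local-lower} is already classical (diagonal $\rho$ = classical distribution, single‑qubit $Z_b$ = $1$‑sparse query, $N$ copies of $\rho$ = $N$ i.i.d.\ samples), so the lower bound transfers verbatim. You have simply spelled out the three identifications that the paper leaves implicit. One small caveat worth flagging: your careful note about the affine rescaling $t \mapsto 1-2t$ (carrying a factor of $2$ between the $[0,1]$ range of statistical queries and the $[-1,1]$ range of $\tr(Z_b\rho)$) is a real bookkeeping point that the paper glosses over by keeping the literal constant $0.99$ in both statements; the honest statement is that the constant accuracy parameter in the corollary should be adjusted by this factor of $2$, or equivalently one should take the $1$‑sparse query to be $q_b(x)=(-1)^{x_b}$ with range $\{-1,1\}$, in which case the correspondence is exact. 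Either way the asymptotic $N=\Omega(\sqrt{M})$ conclusion is unaffected, so this is a constant-level nit, not a gap.
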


\subsection{Upper Bound}
\label{sec:classical shadow ub}

In this section, we prove a sample complexity upper bound for answering adaptive queries specified by local observables.
Recall that in our setting, the analyst's queries are specified by a local observable $O_i$, and the mechanism wishes to estimate expectation values $\tr(O_i \rho)$ accurately given copies of the unknown quantum state $\rho$.
We design a mechanism that can successfully estimate these expectation values, even when the observable $O_i$ is chosen adaptively.
In the case where $O_i$ are chosen non-adaptively, the median of means~\cite{jerrum1986random,nemirovskij1983problem} protocol used in \cite{huang2020predicting} is sufficient. However, we will need to use a different algorithm for when $O_i$ are chosen adaptively.
Moreover, while the proof of our sample complexity lower bound (Appendix~\ref{sec:lb}) only required the analyst to query single-qubit observables, our algorithm works for an analyst querying any local observables.
This upper bound matches our lower bound from Proposition~\ref{prop:local-lower} exactly in $M$ scaling.
In particular, we prove the following proposition.

% \begin{theorem}[Local observables, upper bound]
%     \label{thm:local-upper}
%     There exists an algorithm that can estimate expectation values of $M$ adaptively chosen $k$-local observables with
%     \begin{equation}
%         N=\mathcal{O}\left(\frac{\min\{k^{3/2}9^k\sqrt{M\log\log M}\log^{3/2}(1/\epsilon\delta),k\log n\}}{\epsilon^2}\right)
%     \end{equation}
%     samples of the unknown $n$-qubit quantum state $\rho$. The algorithm runs in time $\mathrm{poly}(N,n,\log(1/\delta))$ per query.
% \end{theorem}
\begin{prop}[Local observables, upper bound]
    \label{prop:local-upper}
    For every $\epsilon,\delta \in (0,1/2)$ such that $\epsilon\delta \geq 4e^{-M/\log^2M \log\log^4M}$, there exists an $(\epsilon,\delta)$-accurate mechanism $\mathcal{M}$ that can estimate expectation values of $M$ adaptively chosen $k$-local observables with
    \begin{equation}
        N=\mathcal{O}\left(\frac{\min\{3^k\sqrt{M\log(1/\epsilon\delta)},k\log (n/\delta)\}}{\epsilon^2}\right)
    \end{equation}
    samples of the unknown $n$-qubit quantum state $\rho$. The mechanism runs in time $\mathrm{poly}(N, n, \log(1/\delta))$ per observable.
\end{prop}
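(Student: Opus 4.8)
The plan is to exhibit two mechanisms and let $\mathcal{M}$ run whichever uses fewer samples on the given $\epsilon,\delta,n,M$: an adaptive-data-analysis-on-classical-shadows mechanism giving the $\mathcal{O}(3^k\sqrt{M\log(1/\epsilon\delta)}/\epsilon^2)$ term, and a brute-force non-adaptive estimation of \emph{all} $k$-local observables giving the $\mathcal{O}(k\log(n/\delta)/\epsilon^2)$ term. The former is the one that matches the $\Omega(\sqrt{M})$ lower bound of Proposition~\ref{prop:local-lower}; the latter is better precisely when $M$ is very large relative to $n$.

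For the first mechanism, $\mathcal{M}$ measures each of its $N$ copies of $\rho$ in an independent, uniformly random single-qubit Pauli basis, producing classical shadows $\hat{\rho}_1,\dots,\hat{\rho}_N$ of the form in Equation~\eqref{eq:local-channel}; each $\hat{\rho}_j$ is stored in $\mathcal{O}(n)$ bits, and the collection is treated as a dataset $\mathcal{S}$ of $N$ i.i.d.\ draws from the ``shadow distribution'' $\mathcal{D}_\rho$. A query for a $k$-local observable $O_i$ becomes the statistical query $q_i(\hat{\rho})\triangleq\tr(O_i\hat{\rho})$; classical shadows are unbiased, so $\E_{\hat{\rho}\sim\mathcal{D}_\rho}[q_i(\hat{\rho})]=\tr(O_i\rho)$, and the random-Pauli shadow-norm bound gives $\Var[q_i]\le\norm{O_i}_{\mathrm{shadow}}^2=\mathcal{O}(3^k)$, with $q_i$ taking values in $[-3^k,3^k]$. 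I then run a classical adaptive data analysis algorithm for answering $M$ adaptively chosen low-variance statistical queries on $\mathcal{S}$ --- concretely the variance-calibrated mechanism of~\cite{feldman2018calibrating}, or the stability/differential-privacy-based mechanisms of~\cite{bassily2015algorithmic,feldman2017generalization} --- which, in the regime $\epsilon\delta\ge 4e^{-M/\log^2 M \log\log^4 M}$, answers every $q_i$ to additive error $\epsilon$ with probability $\ge 1-\delta$ using $N=\tilde{\mathcal{O}}(\sqrt{M\log(1/\epsilon\delta)}\cdot\sigma^2/\epsilon^2)$ samples with variance proxy $\sigma^2=\mathcal{O}(3^k)$, which is the claimed bound. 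Efficiency is immediate: since $\tr(3\ketbra{\psi}-\mathbb{I})=1$, each identity tensor factor of $\hat{\rho}_j$ contributes trivially, so $q_i(\hat{\rho}_j)$ reduces to a trace over the $\le k$ qubits in $\mathrm{supp}(O_i)$ and is evaluable in time $\mathrm{poly}(n,2^k)$, while the ADA mechanism runs in $\mathrm{poly}(N,\log(1/\delta))$ time per query.

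For the second mechanism, I use that any $k$-local $O$ with $\norm{O}_\infty\le 1$ expands in the Pauli basis on its $\le k$-qubit support as $O=\sum_P c_P P$ with $\sum_P|c_P|=2^{\mathcal{O}(k)}$, and that the set of all $k$-local Pauli operators on $n$ qubits has size $M'=\sum_{j\le k}\binom{n}{j}3^j\le (3n)^k$ and is fixed in advance --- so predicting it is a \emph{non-adaptive} task to which Theorem~\ref{thm:classical-shadow} applies directly. Each such Pauli has squared shadow norm $\le 3^k$ under random Pauli measurements, so $N=\mathcal{O}\big(3^k\log(M'/\delta)/\eta^2\big)=\mathcal{O}\big(3^k(k\log n+\log(1/\delta))/\eta^2\big)$ copies suffice to estimate every $k$-local Pauli expectation to error $\eta$ simultaneously with probability $\ge 1-\delta$; taking $\eta=2^{-\Theta(k)}\epsilon$ and summing errors over the expansion of each queried $O_i$ gives $|\hat{o}_i-\tr(O_i\rho)|\le\epsilon$, for a total of $\mathcal{O}(k\log(n/\delta)/\epsilon^2)$ samples after absorbing the $2^{\mathcal{O}(k)}$ factors (constants, since $k=\mathcal{O}(1)$). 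This mechanism is clearly efficient and its sample cost is $M$-independent.

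The main obstacle I anticipate is the first mechanism, and specifically the careful transfer of a classical ADA guarantee --- normally stated for $[0,1]$-valued queries against a fixed dataset --- into our setting so that (i) the dependence on $M$ is $\sqrt{M}$ rather than $M$, which forces the use of one of the ``optimal'' stability-based mechanisms rather than naive sample splitting; (ii) the locality $k$ enters only through the variance proxy $\mathcal{O}(3^k)$ rather than through the squared range $\mathcal{O}(9^k)$, which requires a variance-aware (not range-aware) mechanism together with the exact $k$-local shadow-norm estimate; and (iii) the precise polylogarithmic overheads of the chosen algorithm reproduce exactly the stated regime $\epsilon\delta\ge 4e^{-M/\log^2 M \log\log^4 M}$, i.e.\ the range in which the iterated-logarithm loss is benign. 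The remaining steps --- unbiasedness and variance of the shadow queries, efficiency of evaluating $\tr(O_i\hat{\rho}_j)$ on tensor-product snapshots, the Pauli-basis bookkeeping and union bound for the brute-force mechanism --- are routine.
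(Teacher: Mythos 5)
Your proposal matches the paper's proof in structure: run a $\sqrt{M}$-sample adaptive-data-analysis mechanism on random-Pauli classical shadows for the first bound, and non-adaptively estimate all $\mathcal{O}(n^k)$ $k$-local Paulis via Theorem~\ref{thm:classical-shadow} for the second, then take the minimum. Your anticipated obstacle (ii) --- that a range-aware mechanism would incur $9^k$ so you need a variance-aware one --- does not in fact arise here: the paper invokes Corollary~1.3 of~\cite{dagan2021boundednoise} applied to $C/N$-sensitive queries, which already gives sample complexity \emph{linear} in $C$, so bounding the shadow estimator in $[-3^k,3^k]$ (as you do, and as the paper verifies by a trace-norm computation) yields the $3^k$ factor directly without any variance argument.
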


The idea of the proof is to find the classical shadow (see Appendix~\ref{sec:classical-shadow} for an overview) of the unknown quantum state $\rho$ with respect to which we want to estimate expectation values.
Then, we can apply algorithms from classical adaptive data analysis~\cite{bassily2015algorithmic} to obtain our guarantees.
Here, the classical shadow formalism allows us to export our quantum problem to the classical setting, where we can leverage algorithms from the classical literature.
Moreover, in the case of local observables, classical shadows allow us to estimate expectation values of queried observables (computationally) efficiently.

We will require the following classical algorithm from adaptive data analysis~\cite{bassily2015algorithmic}.

% \begin{theorem} [Corollary 6.1~\cite{bassily2015algorithmic}] \label{thm:adapt algo}
%     There is an $(\epsilon,\delta)$-accurate mechanism $\mathcal{M}$ for $M$ adaptively chosen (scaled) statistical queries $q:\mathcal{X}\mapsto [-C,C]$ that uses
%     \begin{equation}
%         N=\mathcal{O}\left(\frac{C\sqrt{M\log\log M}\log^{3/2}(C/\epsilon\delta)}{\epsilon^2}\right)
%     \end{equation}
%     samples from $\mathcal{X}$. The mechanism runs in time $\mathrm{poly}(N,\log |\mathcal{X}|,\log(1/\delta))$ per query.
% \end{theorem}
\begin{theorem} [Corollary 1.3 in~\cite{dagan2021boundednoise}] \label{thm:adapt algo}
    For every $\epsilon,\delta \in (0,1/2)$ such that $\epsilon\delta \geq 4e^{-M/\log^2M \log\log^4M}$, there is an $(\epsilon,\delta)$-accurate mechanism $\mathcal{M}$ for $M$ adaptively chosen (scaled) statistical queries $q:\mathcal{X}\mapsto [-C,C]$ that uses
    \begin{equation}
        N=\mathcal{O}\left(\frac{C\sqrt{M\log(1/\epsilon\delta)}}{\epsilon^2}\right)
    \end{equation}
    samples from $\mathcal{X}$. The mechanism runs in time $\mathrm{poly}(N, n, \log(1/\delta))$ per observable.
\end{theorem}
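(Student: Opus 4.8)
The plan is to obtain the mechanism from the ``differential privacy $\Rightarrow$ generalization'' template of adaptive data analysis, with the per-query noise supplied by the bounded-noise distribution of Dagan and Kur~\cite{dagan2021boundednoise} (which is exactly what removes the logarithmic overhead of the usual Gaussian mechanism and yields the clean $\sqrt{M\log(1/\epsilon\delta)}$ scaling). The mechanism $\mathcal M$ is the natural noisy empirical mean: on receiving the $j$-th query $q^j:\mathcal X\to[-C,C]$ it computes $q^j(\mathcal S)=\tfrac1N\sum_{x\in\mathcal S}q^j(x)$ and releases $a^j=q^j(\mathcal S)+\xi^j$, where the $\xi^j$ are independent draws from a mean-zero law supported on a bounded interval, with scale $\sigma$ calibrated to a privacy budget $\epsilon_0$. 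Since changing one point of $\mathcal S$ moves $q^j(\mathcal S)$ by at most $2C/N$, taking $\sigma\asymp C/(\epsilon_0 N)$ (times the appropriate $\mathrm{polylog}(1/\delta_0)$ for the bounded-noise law) makes each round $(\epsilon_0,\delta_0)$-differentially private in the sample $\mathcal S$. I would keep the range as $[-C,C]$ throughout rather than rescaling to $[0,1]$, so that $C$ enters only linearly, through the sensitivity.

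Next I would do the privacy accounting and the on-sample accuracy bound in tandem. Composing the $M$ rounds by the optimal composition theorem for differential privacy (equivalently, by R\'enyi/concentrated-DP accounting, which is what Dagan--Kur actually use) makes the whole transcript $(\epsilon_{\mathrm{DP}},\delta_{\mathrm{DP}})$-DP with $\epsilon_{\mathrm{DP}}\asymp\epsilon_0\sqrt{M\log(1/\delta_{\mathrm{DP}})}$ and $\delta_{\mathrm{DP}}\asymp M\delta_0+\delta'$. For the on-sample error, each $\xi^j$ is sub-Gaussian with scale $\sigma$ and, crucially, bounded, so a union bound over the $M$ rounds gives $\max_j|a^j-q^j(\mathcal S)|\lesssim\sigma\sqrt{\log(M/\beta)}$ with probability $1-\beta$. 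It is the bounded support of $\xi^j$, together with R\'enyi-DP composition, that lets one avoid the extra $\mathrm{polylog}(M)$ and $\mathrm{polylog}(1/\delta)$ factors present in the Gaussian-mechanism analysis of~\cite{bassily2015algorithmic}.

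Then I would invoke the transfer theorem to pass from on-sample accuracy under differential privacy to on-population accuracy under adaptivity: a mechanism that is $(\epsilon_{\mathrm{DP}},\delta_{\mathrm{DP}})$-DP and $(\alpha_{\mathrm{samp}},\beta)$-accurate on $\mathcal S$ is $\big(O(\epsilon_{\mathrm{DP}}+\alpha_{\mathrm{samp}}+C\sqrt{\log(1/\beta)/N}),\,O(\beta+N\delta_{\mathrm{DP}})\big)$-accurate on $\mathcal D$. I would prove this by the ``monitor'' reduction of Bassily et al.~\cite{bassily2015algorithmic}: append to the adaptive interaction a monitor that outputs the round $j^\star$ maximizing $|a^{j^\star}-q^{j^\star}(\mathcal D)|$; the composite of interaction and monitor is still $(\epsilon_{\mathrm{DP}},\delta_{\mathrm{DP}})$-DP, so the single $[-C,C]$-query it emits behaves as if chosen independently of $\mathcal S$ up to the DP loss, and McDiarmid's (equivalently Hoeffding's) inequality for that query bounds $|q^{j^\star}(\mathcal S)-q^{j^\star}(\mathcal D)|$ --- hence $\max_j|a^j-q^j(\mathcal D)|$ --- by $O(\epsilon_{\mathrm{DP}}+C\sqrt{\log(1/\beta)/N})$. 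Finally I would balance parameters: set $\delta_{\mathrm{DP}},\delta',\beta\asymp\delta$ and $\delta_0\asymp\delta/M$, choose $\epsilon_0$ so that $\epsilon_{\mathrm{DP}}=\Theta(\epsilon)$, and pick $N$ so that the noise term $\sigma\sqrt{\log(M/\delta)}\asymp\frac{C\sqrt{M\log(1/\delta)}\,\mathrm{polylog}}{\epsilon N}$ and the sampling term $C\sqrt{\log(1/\delta)/N}$ are both $O(\epsilon)$. The noise term forces $N=\Omega\!\big(C\sqrt{M\log(1/\epsilon\delta)}/\epsilon^2\big)$ and the sampling term forces $N=\Omega\!\big(C^2\log(1/\delta)/\epsilon^2\big)$; the hypothesis $\epsilon\delta\ge4e^{-M/\log^2 M\log\log^4 M}$ is exactly the regime that makes the first bound dominate (and in which the bounded-noise mechanism's logarithmic savings are realized), giving $N=O\!\big(C\sqrt{M\log(1/\epsilon\delta)}/\epsilon^2\big)$ as claimed. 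The running-time claim is immediate: per query the mechanism evaluates $q^j$ on $N$ points and samples once from an explicit bounded distribution, which is $\mathrm{poly}(N,n,\log(1/\delta))$.

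I expect the crux --- the reason this is more than a Chernoff bound --- to be getting the $\sqrt M$ rather than $M$ dependence on the number of queries: a naive union bound fails because the queries are chosen adaptively as a function of previous answers, and recovering near-optimal sample complexity requires both the transfer/monitor machinery (to convert differential privacy into generalization under adaptivity) and optimal DP composition combined with the bounded-noise distribution of~\cite{dagan2021boundednoise} (to keep the residual logarithmic factors down to the stated $\sqrt{\log(1/\epsilon\delta)}$).
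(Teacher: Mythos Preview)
The paper does not prove this theorem at all: it is stated as an imported black-box result (Corollary~1.3 of~\cite{dagan2021boundednoise}), and the only remark following it is that it is applied to $C/N$-sensitive queries. There is therefore no ``paper's own proof'' to compare your proposal against.

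Your sketch is a reasonable outline of how the cited result is obtained and captures the main ingredients correctly: the noisy-empirical-mean mechanism, sensitivity $2C/N$, privacy composition across $M$ rounds, and the monitor/transfer argument of~\cite{bassily2015algorithmic} to pass from sample accuracy to population accuracy under adaptivity. You also correctly identify the role of the bounded-noise distribution in shaving the extraneous $\mathrm{polylog}(M,1/\delta)$ factors that the Gaussian mechanism incurs. One caveat: Dagan and Kur's actual argument does not literally go through $(\epsilon,\delta)$-DP composition plus the standard transfer theorem; they work with a refined stability notion tailored to their bounded-noise law, and the precise way the condition $\epsilon\delta\ge 4e^{-M/\log^2 M\log\log^4 M}$ enters is specific to their analysis rather than a simple ``first term dominates'' balancing. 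So while your high-level template is right, the details that eliminate the last logarithmic factors are more delicate than your sketch suggests. For the purposes of this paper, though, none of that matters --- the theorem is used, not proved.
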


Here, we use Corollary 1.3 in~\cite{dagan2021boundednoise} applied to $C/N$-sensitive queries.
We use this theorem to prove Proposition~\ref{prop:local-upper}.

\begin{proof}[Proof of Proposition~\ref{prop:local-upper}]
    To derive the upper bound for local observables, we apply the algorithm of Theorem~\ref{thm:adapt algo} on classical shadows data obtained from random Pauli measurements.
    Recall from Appendix~\ref{sec:classical-shadow} that classical shadows obtained via random Pauli measurements allow us to predict expectation values of local observables well, as desired.

    First, we claim that the setting of the classical shadow problem aligns with the classical statistical query model from Appendix~\ref{sec:stat-query}\footnote{Note that in Appendix~\ref{sec:stat-query}, we consider statistical queries $q : \mathcal{X} \to [0,1]$, but this setup is the same for scaled statistical queries $q : \mathcal{X} \to [-C,C]$.}.
    Recall that in the classical statistical query model, we wish to design an algorithm (the mechanism) to answer statistical queries $q: \mathcal{X} \to [-C,C]$, whose true value is given by
    \begin{equation}
        q(\mathcal{D}) = \E_{x \sim \mathcal{D}}[q(x)],
    \end{equation}
    where $\mathcal{D}$ is an unknown distribution over the data universe $\mathcal{X}$. The mechanism has access to a sample $\mathcal{S} = \{x_1,\dots, x_N\} \sim \mathcal{D}^N$, which it can use in order to estimate $q(\mathcal{D})$.
    Moreover, statistical queries must satisfy
    \begin{equation}
        \E_{\mathcal{S} \sim \mathcal{D}^N} [q(\mathcal{S})] = q(\mathcal{D}),
    \end{equation}
    where $q(\mathcal{S}) = \frac{1}{N}\sum_{x_i \in \mathcal{S}} q(x_i)$.

    The setting of the classical shadow formalism for predicting local observables is indeed analogous to the statistical query model as follows
    Let the data universe $\mathcal{X}$ be defined as $\{0,1\}^m$ for $m = \mathcal{O}(n)$, where $n$ is the system size.
    One classical snapshot of the $n$-qubit unknown quantum state $\rho$ from random Pauli measurements can be stored in $m$ bits, as justified in Appendix~\ref{sec:classical-shadow}.
    Let $\mathcal{D}$ be the distribution from which the classical shadow data is sampled from.
    Namely, $\mathcal{D}$ is the distribution over the random choice of Pauli gates and the probabilistic outcomes from measuring $\rho$.
    Recall from Appendix~\ref{sec:classical-shadow} that we can write
    \begin{equation}
        \label{eq:true-exp-rho}
        \rho = \E\left[\mathcal{M}^{-1}\left(U^\dagger \hat{\ket{b}}\hat{\bra{b}} U\right)\right] = \E_{\hat{\rho} \sim \mathcal{D}}[\hat{\rho}],
    \end{equation}
    where $U$ is a random tensor product of $n$ single-qubit Pauli gates, $\hat{b} \in \{0,1\}^n$ is the result of measuring $U\rho U^\dagger$ in the computational basis, and $\mathcal{M}$ is a quantum channel.
    Here, we denote
    \begin{equation}
        \label{eq:rho-hat}
        \hat{\rho} = \mathcal{M}^{-1}\left(U^\dagger \hat{\ket{b}}\hat{\bra{b}} U\right).
    \end{equation}
    Then, our sample $\mathcal{S} \sim \mathcal{D}^N$ is the classical shadow $\mathcal{S} = \mathsf{S}(\rho; N) = \{\hat{\rho}_1,\dots, \hat{\rho}_N\}$ from repeating the random measurement procedure $N$ times to obtain $N$ classical snapshots.
    The queries are functions $o: \mathcal{X} \to [-C,C]$ (we justify later why the outputs falls in $[-C,C]$) corresponding to an observable $O$, which are evaluated on a classical snapshot $\hat{\rho} \in \mathcal{X}$ as $o(\hat{\rho}) = \tr(O \hat{\rho})$.
    The true value of the query is given by
    \begin{equation}
        o(\mathcal{D}) = \E_{\hat{\rho} \sim \mathcal{D}}[o(\hat{\rho})] = \E_{\hat{\rho} \sim \mathcal{D}}[\tr(O \hat{\rho})] = \tr(O \rho),
    \end{equation}
    where the last equality follows by Equation~\eqref{eq:true-exp-rho}.
    This aligns well with the classical statistical query model, since we wish to estimate $o(\mathcal{D}) = \tr(O \rho)$.
    Moreover, these functions are indeed statistical queries, as we have
    \begin{equation}
        \E_{\mathcal{S} \sim \mathcal{D}^N}[o(\mathcal{S})] = \frac{1}{N}\sum_{j=1}^N \E_{\mathcal{S} \sim \mathcal{D}^N}[\tr(O \hat{\rho}_j)] = o(\mathcal{D}),
    \end{equation}
    where $o(\mathcal{S}) = \frac{1}{N}\sum_{j=1}^N o(\hat{\rho}_j)$.
    To complete this analogy with the classical statistical query model, it remains to show that $o(\hat{\rho}) \in [-C,C]$ for some constant $C$ and for $\hat{\rho} \in \mathcal{X}$ when $O$ is a local observable.
    Let $\hat{\rho}$ be a classical snapshot as in Equation~\eqref{eq:rho-hat}.
    In the case of predicting local observables, the random unitary $U$ utilized to construct this classical snapshot is a random tensor product of $n$ single-qubit Pauli gates $U = U_1 \otimes \cdots \otimes U_n$.
    Moreover, from Appendix~\ref{sec:classical-shadow}, the quantum channel $\mathcal{M}$ has a specific form:
    \begin{equation}
        \hat{\rho} = \bigotimes_{j=1}^n \left(3U_j^\dag \ketbra{\hat{b}_j}U_j-\mathbb{I}\right) \quad \mathrm{where} \quad \hat{b}_1,...,\hat{b}_n \in \{0,1\}.
    \end{equation}
    We want to show that $|o(\hat{\rho})| = |\tr(O \hat{\rho})|$ is bounded by a constant for a local observable $O$.
    Let $O = \tilde{O} \otimes \mathbb{I}^{\otimes (n-k)}$ be a $k$-local observable, where we assume without loss of generality that $O$ acts nontrivially only on the first $k$ qubits.
    Then, we can compute the expectation value $\tr(O\hat{\rho})$ by plugging in Equation~\eqref{eq:local-channel}:
    \begin{align}
        \trace(O\hat{\rho}) &= \trace\left(O\bigotimes_{j=1}^n \left(3U_j^\dag \ketbra{\hat{b}_j}U_j-\mathbb{I}\right)\right)\\
        &=\trace\left(\left(\Tilde{O}\bigotimes_{j=1}^k \left(3U_j^\dag \ketbra{\hat{b}_j}U_j-\mathbb{I}\right)\right)\otimes \left(\bigotimes_{j=k+1}^n \left(3U_j^\dag \ketbra{\hat{b}_j}U_j-\mathbb{I}\right)\right)\right)\\
        &= \trace\left(\left(\Tilde{O}\bigotimes_{j=1}^k \left(3U_j^\dag \ketbra{\hat{b}_j}U_j-\mathbb{I}\right)\right)\right)\tr\left( \left(\bigotimes_{j=k+1}^n \left(3U_j^\dag \ketbra{\hat{b}_j}U_j-\mathbb{I}\right)\right)\right)\\
        &=\trace\left(\Tilde{O}\bigotimes_{j=1}^k \left(3U_j^\dag \ketbra{\hat{b}_j}U_j-\mathbb{I}\right)\right).
    \end{align}
    In the first line, we used Equation~\eqref{eq:local-channel}.
    In the second line, we used that $O = \tilde{O} \otimes \mathbb{I}^{\otimes (n-k)}$.
    In the third line, we used the multiplicative properties of the trace.
    Finally, in the last line, we used that $\tr(\hat{\rho}) = 1$, where we can view this the second term in the third line as a classical snapshot over $n-k$ qubits.
    Now, by properties of the trace norm, we also have
    \begin{align}
        |\trace(O\hat{\rho})| &\leq \norm{\Tilde{O}\bigotimes_{j=1}^k \left(3U_j^\dag \ketbra{\hat{b}_j}U_j-\mathbb{I}\right)}_1\\
        &\leq \norm{\Tilde{O}}_\infty \cdot \prod_{j=1}^k \norm{3U_j^\dag \ketbra{\hat{b}_j}U_j-\mathbb{I}}_1 \\
        &\leq 3^k,
    \end{align}
    where in the last line we used the assumption that the spectral norm of $\tilde{O}$ is bounded by $1$.
    Thus, we have that $|o(\hat{\rho})| \leq C = 3^k$, where this is a constant because $k = \mathcal{O}(1)$ by definition of a local observable.
    
    With this, we have now shown that the classical shadow framework fits into the classical statistical query setting.
    Thus, we can apply any classical algorithm for answering adaptive statistical queries to derive our rigorous guarantees.
    Applying Theorem~\ref{thm:adapt algo}, we see that in order to estimate expectation values of $M$ adaptively chosen $k$-local observables, we can use
    \begin{equation}
    N=\mathcal{O}\left(\frac{3^k\sqrt{M\log(1/\epsilon\delta)}}{\epsilon^2}\right)
    \end{equation}
    samples of the unknown $n$-qubit quantum state $\rho$.
    This gives us the first term in the minimum in the statement of Proposition~\ref{prop:local-upper}.

    To obtain the second term of $k \log n$, we note that any observable $O$ can be decomposed in the Pauli basis as $O = \sum_{P \in \{I, X, Y, Z\}^{\otimes n}} \alpha_P P$ for some coefficients $\alpha_P$.
    For a $k$-local observable, there are only $3^k \cdot \binom{n}{k} = \mathcal{O}(n^k)$ possible different Pauli terms $P$ in this decomposition.
    One can predict expectation values of all of these observables with respect to the unknown state $\rho$ using the guarantees of the classical shadow formalism, which requires
    \begin{equation}
        N = \mathcal{O}\left(\frac{k\log (n/\delta)}{\epsilon^2}\right).
    \end{equation}
    This follows from Theorem~\ref{thm:classical-shadow}, where the shadow norm is constant for local observables (Proposition S3 in \cite{huang2020predicting}).
    Because these $3^k \cdot \binom{n}{k}$ observables are fixed beforehand, we can use these guarantees.
    Expectation values of any (adaptively chosen) local observable can be expressed in terms of expectations of these $\mathcal{O}(n^k)$ observables, so we can predict any (adaptively chosen) local observable using this classical shadow data.
    Taking the minimum of these two expressions yields the sample complexity claim.

    Finally, the time complexity follows because, as discussed previously, each classical snapshot $\hat{\rho}$ can be stored efficiently in classical memory using $\mathcal{O}(n)$ bits.
\end{proof}

\section{Pauli Observables}
\label{app: pauli}

In this section, we shift our attention to focus on predicting properties $\tr(O\rho)$, where $\rho$ is the unknown $n$-qubit quantum state that we are given copies of and $O \in \{I,X,Y,Z\}^{\otimes n}$ is a Pauli observable.
This type of observable was not explicitly considered in the classical shadow formalism~\cite{huang2020predicting}.
However,~\cite{huang2021information} designed a quantum machine learning model uses $N = \mathcal{O}(\log(M)/\epsilon^4)$ copies of $\rho$ to predict expectation values of $M$ $n$-qubit Pauli observables.
We analyze how the rigorous guarantees change when the Pauli observables are allowed to be chosen adaptively.
In this appendix, we prove Theorem~\ref{thm:pauli}, which provides sample complexity upper and lower bounds for this task.
We restate the theorem below.
\begin{theorem}[Pauli observables, detailed restatement of Theorem~\ref{thm:pauli}]
    There exists an analyst $\mathcal{A}$ and a density matrix $\rho$ on $n = \Omega(M^{3/2})$ qubits such that any $(0.99, 1-\frac{1}{2000\sqrt{M}})$-accurate quantum mechanism $\mathcal{M}$ estimating expectation values of $M$ adaptively chosen Pauli observables requires at least 
    \begin{equation}
        N = \Omega(\sqrt{M})
    \end{equation}
    samples of the quantum state $\rho$.
    Meanwhile, for any $\epsilon,\delta \in (0,1/2)$ such that $\epsilon\delta \geq 4e^{-M/\log^2M \log\log^4 M}$, there exists an $(\epsilon,\delta)$-accurate quantum mechanism $\mathcal{M}$ that can estimate expectation values of $M$ adaptively chosen Pauli observables using
    \begin{equation}
        N = \mathcal{O}\left(\frac{\min\{\sqrt{M\log(1/\epsilon\delta)}, n\}}{\epsilon^4}\right)
    \end{equation}
    samples of the unknown $n$-qubit quantum state $\rho$.
    Moreover, the mechanism is computationally efficient and runs in time $\mathrm{poly}(N, n, \log(1/\delta))$ per observable.
\end{theorem}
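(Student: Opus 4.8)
Following the template of Proposition~\ref{prop:local-lower}, I would reduce the lower bound to the guarantee of an interactive fingerprinting code (the construction of \cite{steinke2015interactive} recalled in Appendix~\ref{sec:fingerprint}): the analyst plays the code $\mathcal{F}$, the mechanism plays the adversary $\mathcal{P}$, and the $d = 2000N$ colluding users are encoded into a diagonal density matrix $\rho$ given to the mechanism. The key departure from the local case is the encoding. I would take $\rho$ to be a uniform mixture of $d$ computational basis states $\ket{Q_1},\dots,\ket{Q_d}$, with a fresh block of $\Theta(d)$ qubits devoted to each of the $M = \mathcal{O}(N^2)$ rounds, so the total system size is $n = \Theta(Md) = \Theta(M^{3/2})$, and with the $Q_i$ chosen so that within each round's block the vectors $\{Q_i\}$ are linearly independent over $\mathbb{F}_2$. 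Then, given the target watermark assignment $c^j \in \{-1,1\}^d$ broadcast by $\mathcal{F}$ in round $j$, the analyst solves a linear system over $\mathbb{F}_2$ for a string $b^j \in \{0,1\}^n$ supported on round $j$'s block with $(-1)^{b^j \cdot Q_i} = c^j_i$ for every $i$, and queries the Pauli observable $Z_{b^j} \in \{I,Z\}^{\otimes n}$; since $\tr(Z_{b^j}\ketbra{Q_i}) = (-1)^{b^j\cdot Q_i}$, every user's watermark is reproduced exactly, which establishes the first IFPC condition within $\mathrm{poly}(M)$ qubits. For the second IFPC condition --- that the mechanism only accesses the watermarks of the non-accused colluding users --- I would, as sketched in Section~\ref{mainsec: proof ideas}, encrypt the user identities with a one-time pad (Definition~\ref{def:otp}): the $Q_i$ are generated using a secret key known only to the analyst, so by perfect secrecy a raw query $b^j$ carries no information about the watermark values, and one can exhibit a simulated attack analogous to Algorithm~\ref{alg:simulated} that produces the identical mechanism view while depending only on $c^j_{S^j}$. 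The conclusion then mirrors Proposition~\ref{prop:local-lower}: the IFPC guarantee forces the mechanism to answer some query inconsistently within $\mathcal{O}(N^2)$ rounds, contradicting $0.99$-accuracy, so $N = \Omega(\sqrt{M})$; when $n$ is too small to host the full construction, truncating the number of rounds and users yields the $\Omega(n^{1/3})$ branch of the minimum, and the same construction gives a new classical lower bound for adaptively chosen parity queries, analogous to Corollary~\ref{cor:local_queries}.

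\textbf{Upper bound.} For the $\sqrt{M}$ branch I would run the two-step quantum machine learning algorithm of \cite{huang2021information} (Appendix~\ref{sec:nonadapt-pauli}), but replace the empirical mean in its first step by the classical adaptive-data-analysis mechanism of Theorem~\ref{thm:adapt algo}. The observation is that the first step --- estimating $|\tr(P\rho)|^2$ from Bell-measurement data $S^{(t)}$ via the statistic $q_P(S^{(t)}) = \prod_{k=1}^n \tr((\sigma_k \otimes \sigma_k) S^{(t)}_k) \in \{\pm 1\}$ --- is exactly a statistical query in the adaptive statistical query model over the distribution of Bell outcomes, with $\E[q_P(S^{(t)})] = |\tr(P\rho)|^2$. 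Applying Theorem~\ref{thm:adapt algo} with range bound $C = 1$ and target accuracy $\epsilon^2$ (so that taking a square root gives accuracy $\epsilon$ on $|\tr(P\rho)|$, using $|\sqrt{x}-\sqrt{y}| \le \sqrt{|x-y|}$) costs $\mathcal{O}(\sqrt{M\log(1/\epsilon\delta)}/\epsilon^4)$ copies. The second step --- determining $\mathrm{sign}(\tr(P\rho))$ by a coherent Pauli-eigenvalue measurement and majority vote --- is already resistant to adaptivity by the quantum union bound, as noted in Appendix~\ref{sec:nonadapt-pauli}, and costs only $\mathcal{O}(\log(M/\delta)/\epsilon^2)$ copies in quantum memory that can be reused for every query, which is subsumed by the first bound. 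For the $n$ branch, I would instead run the non-adaptive algorithm of Theorem~\ref{thm:nonadapt-pauli} on the fixed collection of all $4^n$ Pauli observables, which requires $\mathcal{O}(\log(4^n/\delta)/\epsilon^4) = \mathcal{O}(n/\epsilon^4)$ copies; since this predicts every Pauli observable simultaneously and non-adaptively, it trivially answers any adaptively chosen sequence, and each individual prediction is computed on the fly from the stored classical and quantum data in $\mathrm{poly}(N,n)$ time. Taking the minimum of the two bounds gives $N = \mathcal{O}(\min\{\sqrt{M\log(1/\epsilon\delta)}, n\}/\epsilon^4)$ and $\mathrm{poly}(N,n,\log(1/\delta))$ time per observable, as claimed.

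\textbf{Main obstacle.} The delicate step is the second IFPC condition in the lower bound. In the local case, randomizing which of $2^d$ qubits encodes each assignment hides the analyst's query from the mechanism; here a Pauli query $Z_{b^j}$ exposes its support $b^j$ directly, so secrecy must instead come from encrypting the user identities, and one must verify carefully that the real and simulated attacks induce indistinguishable mechanism views --- in particular that the mechanism cannot link queries across rounds, which is why each of the $M$ rounds is given an independently keyed, fresh block of qubits. Getting this argument right is precisely what allows the qubit count to drop from $2^{\Theta(\sqrt{M})}$ in the local case to $\mathrm{poly}(M)$ here. On the upper-bound side, the only point that needs care is confirming that the magnitude-estimation subroutine is genuinely a statistical query in the formal adaptive sense, and that the sign-determination subroutine's reuse of quantum memory across adaptive queries is justified by the quantum union bound rather than a fresh union bound per query.
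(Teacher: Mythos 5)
Your upper bound argument matches the paper's: replace the magnitude-estimation step of the Huang--Kueng--Preskill Pauli algorithm (Appendix~\ref{sec:nonadapt-pauli}) with the adaptive statistical-query mechanism of Theorem~\ref{thm:adapt algo} applied to the $\{\pm1\}$-valued queries $q_P$, observe that the sign-determination step already tolerates adaptivity via the quantum union bound, and take the minimum with the non-adaptive run over all $4^n$ Paulis. That part is correct.

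The lower bound, however, has a genuine gap in the encoding, and it occurs at precisely the step you flagged as the main obstacle. You and the paper share the high-level reduction to interactive fingerprinting codes with $d = 2000N$ users and one fresh $\Theta(d)$-qubit block per round, but your way of realizing a watermark vector $c^j \in \{-1,1\}^d$ as a $\{I,Z\}^{\otimes n}$ query breaks the second IFPC condition. In your construction the analyst solves one global linear system $b^j \cdot Q_i = (1-c^j_i)/2 \pmod 2$ over all $d$ users, and the resulting map $c^j \mapsto b^j$ is a bijection when the block has dimension $d$ (and a surjection onto cosets when it is larger). Consequently $b^j = 0$ if and only if $c^j$ is the all-ones vector. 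The IFPC trap rounds are exactly the rounds where $\mathcal{F}$ issues a constant watermark, so a mechanism that is supposed to see only $c^j_{S^j}$ can nevertheless detect whether the full $c^j$ is constant simply by observing whether $b^j = 0$. This violates the secrecy required for the real and simulated attacks to be indistinguishable, so the collusion-resilience guarantee cannot be invoked. Your appeal to perfect secrecy of the one-time pad does not save this: randomizing the $Q_i$ with a secret key does not make $b^j$ a bitwise OTP ciphertext of the watermark, and the $c^j \mapsto b^j$ map demonstrably fails perfect secrecy in the constant-watermark case. The paper's construction sidesteps the problem by giving each user $i$ a dedicated two-qubit slot per round and setting $b^j_{(2i-1:2i,j)} = \Enc^1_{\sk^j_i}(c^j_i)$ with an independent key $\sk^j_i$ embedded only in that user's sampled state; for $i \notin \mathcal{S}$ the two bits of $b^j$ in slot $i$ are then uniformly random independent of $c^j_i$, which is the clean bit-by-bit OTP secrecy. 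Repairing your proposal would essentially require replacing the global linear system with per-user encrypted slots of this form, i.e., adopting the paper's encoding.
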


The techniques used in this section are similar to Appendix~\ref{app: local}.
In Appendix~\ref{app:pauli-lower}, we prove the sample complexity lower bound, which states that in general, there does not exist a quantum mechanism that can accurately estimate $M$ adaptively chosen Pauli observables with $\mathrm{poly\,log}(M)$ copies of $\rho$.
In Appendix~\ref{app:pauli-upper}, we present a quantum algorithm achieving a matching sample complexity upper bound for answering adaptively chosen Pauli observables.

\subsection{Lower Bound}
\label{app:pauli-lower}

In this section, we prove the sample complexity lower bound, which states that in general, there does not exist a quantum mechanism that can accurately estimate $M$ adaptively chosen Pauli observables with $\mathrm{poly\,log}(M)$ copies of $\rho$.
This gives a stark contrast to the algorithm in~\cite{huang2021information}, which can accomplish this task using $\mathcal{O}(\log M/\epsilon^4)$ copies of $\rho$.
We prove the following lower bound.

\begin{prop}[Pauli observables, lower bound]
    \label{prop:pauli-lower}
    There exists an analyst $\mathcal{A}$ and a density matrix $\rho$ on $n = \Omega(M^{3/2})$ qubits such that any $(0.99, 1-\frac{1}{2000\sqrt{M}})$-accurate quantum mechanism $\mathcal{M}$ estimating expectation values of $M$ adaptively chosen Pauli observables requires at least 
    \begin{equation}
        N = \Omega(\sqrt{M})
    \end{equation}
    samples of the quantum state $\rho$.
\end{prop}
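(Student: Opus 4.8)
The plan is to mirror the proof of Proposition~\ref{prop:local-lower} and reduce the task to the interactive fingerprinting code (IFPC) game of Algorithm~\ref{alg:adaptive}. I would have the analyst $\mathcal{A}$ internally run the $N$-collusion resilient interactive fingerprinting code $\mathcal{F}$ of~\cite{steinke2015interactive} for $d=2000N$ users and length $M=\mathcal{O}(N^2)$, while a mechanism $\mathcal{M}$ that is $(0.99,1-\tfrac{1}{2000\sqrt M})$-accurate is forced to play the adversary $\mathcal{P}$: on every round its rounded response would be consistent with some user's watermark, contradicting collusion resilience, and hence $N=\Omega(\sqrt M)$. As in the local case, the colluding users are sampled i.i.d.\ from $[d]$ and encoded as computational-basis states of a \emph{diagonal} $\rho$, so the whole interaction is classical (any quantum mechanism reduces to a classical one), and the same construction also yields a new classical lower bound for adaptively chosen parity/XOR queries (Corollary~\ref{cor:parity_queries}).

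The two departures from the local construction, which together bring the required system size down from $\Omega(M2^{\sqrt M})$ to $\Omega(M^{3/2})$, are the following. \emph{(i) Parity packing.} Rather than devoting one qubit to each watermarking assignment $q\in\{0,1\}^d$ (costing $2^d$ qubits per round), I would use only $L=\Theta(d)$ qubits per round and exploit that a Pauli observable $Z_b\in\{I,Z\}^{\otimes n}$ on a diagonal state computes the parity $\sum_i b_iQ_i\bmod 2$. Concretely, in block $j$ each user $i\in[d]$ carries a uniformly random string $r^{(i,j)}\in\{0,1\}^L$ (with $L=\Theta(d)$ chosen large enough that the $d$ strings in each block are linearly independent with high probability), and $Q^{(i)}=(\mathrm{bin}(i),r^{(i,1)},\dots,r^{(i,M)})$. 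To ``query'' a watermarking assignment $\tilde c^j\in\{0,1\}^d$, the analyst, who knows all the $r^{(i,j)}$, solves the linear system $\langle \beta^j,r^{(i,j)}\rangle=\tilde c^j_i$ for all $i\in[d]$ and queries the Pauli $Z_{\beta^j}$ supported on block $j$; its expectation is then $\tfrac1d\sum_{i\in[d]}(-1)^{\tilde c^j_i}$. With $M$ blocks of $\Theta(d)=\Theta(N)$ qubits each, $n=\mathcal{O}(Md)=\mathcal{O}(M^{3/2})$. \emph{(ii) Encryption.} In place of the randomization of $\rho$ used for local observables, I would use a private-key encryption scheme (Appendix~\ref{app:crypto}, e.g.\ in the spirit of the one-time pad of Definition~\ref{def:otp}) to enforce the IFPC secrecy condition that $\mathcal{M}$, like $\mathcal{P}$, only accesses the watermarks of \emph{non-accused} colluders. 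The fresh random strings $r^{(i,j)}$ play the role of per-round keys: from its $N$ samples $\mathcal{M}$ learns $r^{(i,j)}$ only for its $N$ colluders (out of $d=2000N$ users), so from $\beta^j$ it can recover $\tilde c^j_i=\langle\beta^j,r^{(i,j)}\rangle$ only for those colluders, and those values are $0$ for the already-accused ones by the zeroing rule of Algorithm~\ref{alg:real}; allocating a fresh key block for each round prevents cross-round leakage.

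With these ingredients, the rest follows the template of the local-observables proof. I would exhibit a \emph{simulated attack} in which the analyst is handed $\mathcal{S}$ and fabricates $\rho^{\otimes N}$ directly from $\mathcal{S}$ and the $c^j_{S^j}$ alone (in round $j$ solving only the $\le N$ constraints coming from the held colluders' rows and choosing $\beta^j$ uniformly among the solutions), which is manifestly within the IFPC interaction model; then I would argue that $\mathcal{M}$'s view is identically distributed in the real and simulated attacks — the colluder indices are i.i.d.\ from $[d]$ in both, the random keys are fresh and per-index-consistent in both, and, conditioned on $\mathcal{M}$'s view, the analyst's $\beta^j$ has the same conditional law in both (uniform subject to the held colluders' constraints) because the non-colluders' keys are random and unknown to $\mathcal{M}$. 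Consequently the inconsistency and false-accusation events have the same probabilities, so the IFPC guarantee of~\cite{steinke2015interactive} transfers: after $M=\mathcal{O}(N^2)$ rounds $\mathcal{M}$ is forced to be inconsistent on some round unless more than $d/2000$ users are falsely accused. Finally, as in Proposition~\ref{prop:local-lower}, on a round where $\mathcal{M}$ is forced inconsistent the queried $Z^j$ satisfies $\mathbb{E}_\rho[Z^j]\ge 1-\tfrac{1}{500}$ or $\mathbb{E}_\rho[Z^j]\le -1+\tfrac{1}{500}$, and inconsistency means $\mathcal{M}$'s rounded answer sits on the wrong side of $0$; a response with error $<0.99$ would round consistently, a contradiction, so $\mathcal{M}$ must answer some query with error $>0.99$.

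The main obstacle is the secrecy step, i.e., establishing the real/simulated view equivalence in this polynomial-qubit regime. Unlike the local case, where each query lived on a private qubit and the random placement $\sigma_j$ hid everything, here $\mathcal{M}$ can measure an entire round-$j$ block and thereby learn the complete restriction of the key matrix $(r^{(i,j)})_i$ to its $N$ colluders together with their indices; one must verify that this — combined with seeing all ciphertexts $\beta^1,\dots,\beta^M$ and the transcript — still leaks nothing about $c^j_i$ for $i\notin S^j$. This reduces to checking that the inner-product encoding is (statistically close to) perfectly secret under the induced conditioning, which forces care in the choice of $L$ (linear independence of the $d$ keys per block, and i.i.d.-uniform columns so that real and simulated samples match exactly) and in having the analyst select $\beta^j$ uniformly among solutions rather than deterministically. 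Once this view-equivalence is in place, the remaining steps are a routine adaptation of the local-observables argument.
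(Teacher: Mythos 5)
Your proposal follows the same high-level template as the paper's proof: embed $d=2000N$ users into a diagonal $\rho$ with one $\Theta(d)$-qubit block per round (so $n = \Theta(Md) = \Theta(M^{3/2})$), run the IFPC of~\cite{steinke2015interactive}, establish real/simulated view-equivalence so collusion resilience transfers, and conclude that a forced-inconsistent round witnesses error $>0.99$. Where you genuinely diverge is the encryption gadget used to enforce the IFPC secrecy condition. The paper gives each user a \emph{dedicated} pair of qubits per round encoding $\Enc^1_{\sk^j_i}(1)$ under a bit-wise one-time pad, so that the ciphertext coordinate $b^j_{(2i-1:2i,j)}$ for $i \notin \mathcal{S}$ is uniformly random and independent of $q^j_i$; perfect secrecy is then an immediate, exact consequence of Definition~\ref{def:otp}, with no residual conditions to check. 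Your construction instead stores a random $L$-bit key $r^{(i,j)}$ per user in one shared $L$-qubit block and chooses $\beta^j$ uniformly among solutions of $\langle\beta^j,r^{(i,j)}\rangle = \tilde c^j_i$ for all $i$. This has the same $\Theta(d)$ qubits per round, but the ciphertext is now jointly constrained by all $d$ plaintext bits, so the secrecy claim becomes a non-trivial conditional-distribution calculation: one must marginalize $\beta^j$ over the unknown non-colluder rows and show it is uniform on the colluder-constrained coset independently of $(\tilde c^j_i)_{i\notin\mathcal{S}}$, modulo a linear-dependence failure that $L = d + \Theta(\log(Md))$ suppresses. You rightly flag this as the crux and list the correct ingredients (uniform choice of $\beta^j$, full-rank keys per block), and the calculation does close — each non-colluder constraint halves the admissible key rows while the solution set has size $2^{L-d}$, giving marginal probability $2^{-(L-|\mathcal{S}|)}$ regardless of the hidden plaintexts — but you do not carry it out, and it is exactly the step the paper's per-bit one-time pad renders trivial. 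Net: same route, a different and somewhat heavier encryption primitive that buys nothing in qubit count and costs extra work precisely at the secrecy step you single out as the main obstacle.
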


This proof is similar to the proof of Proposition~\ref{prop:local-lower}, which gave a sample complexity lower bound for the task of predicting local observables.
Again, this proof relies on interactive fingerprinting codes~\cite{fiat2001dynamic} (see Appendix~\ref{sec:fingerprint}) and is similar to lower bounds in classical adaptive data analysis~\cite{steinke2015interactive,dwork2015preserving,hardt2014preventing}.
Moreover, our construction achieves this lower bound even when the analyst can only query Pauli observables $\{I, Z\}^{\otimes n}$ and when the mechanism is given $N$ copies of a diagonal $n$-qubit density matrix $\rho$.

Our proof will require some cryptographic concepts, which we review in Appendix~\ref{app:crypto}.

The idea is again to reduce our problem to \textsf{IFPC} by designing an analyst $\mathcal{A}$ such that if the mechanism $\mathcal{M}$ answers all of the analyst's queries accurately, then we contradict the definition of $\mathcal{F}$ being a collusion resilient fingerprinting code.
We argue that in our attack, the mechanism $\mathcal{M}$ adheres to the same conditions as the adversary $\mathcal{P}$ in the \textsf{IFPC} game (i.e., only having access to information from the non-accused colluding users) so that the guarantees of the interactive fingerprinting codes hold.
These guarantees state that the fingerprinting code $\mathcal{F}$ can force the adversary $\mathcal{P}$ to be inconsistent, which also implies that $\mathcal{M}$ also answers one of the analyst's queries incorrectly.

In this case, to show that $\mathcal{M}$ only has access to information from the non-accused colluding users, we use encryption, similarly to~\cite{steinke2015interactive}.
Namely, we will encrypt all information given to the mechanism via a private-key encryption scheme (reviewed in Appendix~\ref{app:crypto}) and encode the secret keys needed to decrypt in the samples given to the mechanism.
In this way, the mechanism can only decrypt the information corresponding to the colluding users, and all other information is hidden by the security of the private-key encryption scheme.

% will try first without the sketch
% In this section, we follow the presentation in~\cite{roth2017adaptive}.
% Namely, we first sketch the proof, with an adversarial construction of the density matrix and analyst to provide intuition.
% However, this proof sketch will be incomplete, as it is unclear if the mechanism $\mathcal{M}$ indeed adheres to the same conditions as the adversary $\mathcal{P}$ as required.
% In the actual proof, we fix this issue by using encryption.

\begin{proof}[Proof of Proposition~\ref{prop:pauli-lower}]
    We follow the same proof structure as the proof of Proposition~\ref{prop:local-lower}.
    Throughout the proof, we use $\mathcal{A}$ for the analyst, $\mathcal{M}$ for the mechanism, $\mathcal{F}$ for the fingerprinting code, and $\mathcal{P}$ for the adversary in the \textsf{IFPC} game.
    Let $d$ be the number of users in the setting of fingerprinting codes.
    Let $M$ be the number of observables that the analyst queries.
    Also, throughout the proof, we use the notation $x_{i:j}$, where $x = (x_1,\dots, x_m)$ is a vector, to denote $x_{i:j} \triangleq (x_i, \dots, x_j)$, where $j \geq i$. 
    
    Recall that the mechanism is given copies of an unknown quantum state $\rho$, with which it is asked to predict $\tr(O_i \rho)$ for an (adaptively chosen) query $O_i$ from the analyst.
    We first define an encryption scheme that we will use.
    We then precisely define the adversarial choice of $\rho$ based on this encryption scheme.
    Subsequently, we will present an algorithm for $\mathcal{A}$ that thwarts any mechanism, even when only querying Pauli observables in $\{I, Z\}^{\otimes n}$.

    \begin{figure}[t]
    \centering
    \includegraphics[scale=0.2]{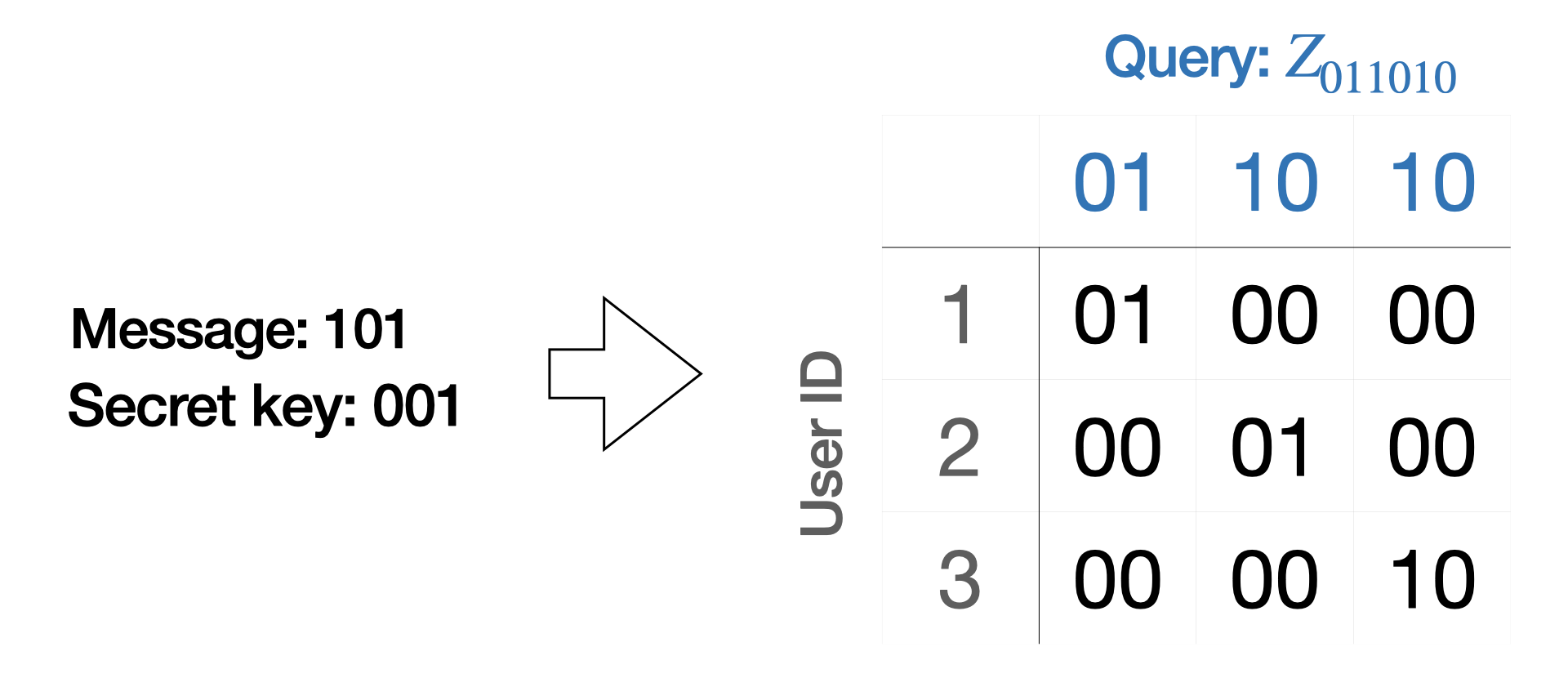}
    \caption{\textbf{Example construction for Pauli observables.} We consider an example where the query (message) is $q^j=101$ and the randomly generated secret key is $001$. The table shows the quantum state, with each row being a computational basis state $\ket{Q}$ corresponding to a user (which is encoded in the first $\lceil \log d \rceil$ qubits of $\ket{Q}$). The secret key is encoded into the quantum state as shown in the table. We group qubits into pairs in each for clarity, which are defined by the encoding rule for the density matrix. By the encryption rule, the analyst will query the Pauli observable $Z_{b^j}$ where $b^j=011010$. One can verify that evaluating $Z_{b^j}$ on each state recovers the original message.}
    \label{fig:pauli-observable}
    \end{figure}
    
    \textbf{Encryption scheme:} The encryption scheme is a version of the one-time pad (Definition~\ref{def:otp}).
    Namely, we use a variation of the one-time pad with $(\Gen, \Enc, \Dec)$ defined as folows. Let $\lambda = d$ be the security parameter.
    Let the key and message spaces be $\{0,1\}^d$ and let the ciphertext space be $\{0,1\}^{2d}$.
    \begin{itemize}
        \item $\Gen(1^d)$: Output a uniformly random bitstring $\sk \in \{0,1\}^d$.
        \item $\Enc_\sk(m)$: Encrypt $m \in \{0,1\}^d$ as $c = c_1\cdots c_{2d} \in \{0,1\}^{2d}$, where
        \begin{equation}
            c_i = \begin{cases}
                10 & \text{if } m_i \oplus \sk_i = 1\\
                01 & \text{if } m_i \oplus \sk_i = 0
            \end{cases}.
        \end{equation}
        \item $\Dec_\sk(c)$: On input $c \in \{0,1\}^{2d}$, parse $c$ as $c = c_{1:2}\cdots c_{2d-1:2d}$, i.e., consider consecutive pairs of bits in $c$. 
        Map $c$ to $c' = c_1'\cdots c_d'$, where
        \begin{equation}
            c'_i = \begin{cases}
                1 & \text{if } c_i = 10\\
                0 & \text{if } c_i = 01
            \end{cases}.
        \end{equation}
        Decrypt $c$ as $m = c' \oplus \sk \in \{0,1\}^d$.
    \end{itemize}
    Note that this is indeed the same as the one-time pad, but with an extra step that maps the output of encryption to a bitstring of length $2d$ and correspondingly maps back for decryption.

    \textbf{Defining the density matrix:} 
    We define the density matrix $\rho$ on $n$ qubits, where we specify $n$ shortly.
    This density matrix is diagonal so that it corresponds to a classical distribution.
    As in the proof of Proposition~\ref{prop:local-lower}, we encode an adversarial choice of classical distribution into this density matrix that takes inspiration from~\cite{steinke2015interactive}.
    Moreover, note that we use a diagonal density matrix because the guarantees from interactive fingerprinting codes hold for classical adversaries (i.e., classical mechanisms).
    We later remark that quantum mechanisms can be reduced to classical mechanisms.

    Let $n = \lceil \log d \rceil + 2Md$.
    We index the first $\lceil \log d \rceil$ qubits by an integer in $\{1,\dots, \lceil \log d \rceil\}$, and we index the remaining qubits by tuples $(k,j)$ such that $k \in [2d], j \in [M]$.
    This reflects the underlying structure that the last qubits are $M$ groups of $2d$ qubits.
    In this way, $j$ indexes the group of qubits and $k$ indexes the specific qubit within the $j$th group.
    We describe the intuition for this structure later.

    Since $\rho$ is a diagonal density matrix, we can write
    \begin{equation}
        \rho \triangleq \sum_{Q \in \{0,1\}^n} p(Q) \ketbra{Q},
    \end{equation}
    where $\ket{Q}$ is a computational basis state so that $Q \in \{0,1\}^n$.
    We can index the entries of the bitstring $Q$ as previously described:
    \begin{equation}
        Q = Q_1 \cdots Q_{\lceil \log d \rceil} Q_{(1,1)} \cdots Q_{(2d,1)} \cdots Q_{(1,M)} \cdots Q_{(2d,M)} \in \{0,1\}^n.
    \end{equation}
    The state $\ket{Q}$ can be viewed as follows.
    The first $\lceil \log d \rceil$ qubits index the user from the setting of fingerprinting codes, where recall there are $d$ users.
    Explicitly, each integer in $\{1,\dots, d\}$ can be represented by in binary using $\lceil \log d \rceil$ bits.
    The bitstring $Q_1 \cdots Q_{\lceil \log d \rceil} \in \{0,1\}^{\lceil \log d \rceil}$ encodes this binary representation of the user's index $i \in \{1,\dots, d\}$.
    There are also $M$ groups of $2d$ qubits.
    Namely, for $j \in [M]$, these are the groups of qubits corresponding to $Q_{(1,j)}\cdots Q_{(2d, j)}$.
    Each group corresponds to a round $j$ of the interaction between the analyst and mechanism.
    Moreover, each pair of qubits corresponding to $Q_{(2i-1,j)}$ and $Q_{(2i, j)}$ represent user $i$ in round $j$.

    We want to define the distribution $p(Q)$ in a specific way such that the first $\lceil \log d \rceil$ qubits determine the state of the remaining ones and such that it encodes the secret keys for our encryption scheme.
    Let $\sk^j \leftarrow \Gen(1^d)$ be the secret key for round $j \in [M]$.
    We generate a new secret key for each round of interaction between the analyst and mechanism.
    We use $\sk^j_i$ to denote the $i$th bit of the secret key $\sk^j$ for round $j$, i.e., the one-bit secret key for the $i$th user in round $j$.
    We define the following sets of binary strings:
    \begin{equation}
        R_0 \triangleq \{Q \in \{0,1\}^n : Q_{(2i-1,j)}Q_{(2i,j)} = 10,\; \forall j \in [M], \text{ where } Q_1 \cdots Q_{\lceil \log d \rceil} = i \in [d] \text{ and } \sk_i^j = 0\}
    \end{equation}
    \begin{equation}
        R_1 \triangleq \{Q \in \{0,1\}^n : Q_{(2i-1,j)}Q_{(2i,j)} = 10,\; \forall j \in [M], \text{ where } Q_1 \cdots Q_{\lceil \log d \rceil} = i \in [d] \text{ and } \sk_i^j = 1\}
    \end{equation}
    \begin{equation}
        R_2 \triangleq \{Q \in \{0,1\}^n : Q_{(2i - 1, j)} Q_{(2i, j)} = 00,\quad \forall i \in [d],\; \forall j \in [M]\}
    \end{equation}
    \begin{equation}
        \label{eq:R}
        R \triangleq R_0 \cup R_1 \cup R_2.
    \end{equation}
    With this, we can define $p(Q)$ as follows
    \begin{equation}
        p(Q) \triangleq \begin{cases}
            \frac{1}{d} & \text{if } Q \in R\\
            0 & \text{otherwise}
        \end{cases}.
    \end{equation}
    This defines precisely how the secret key is encoded into the density matrix.
    We have completed the definition of this density matrix $\rho = \sum_{Q \in \{0,1\}^n} p(Q) \ketbra{Q}$.

    \textbf{Defining the analyst's behavior:} With this, we can define the analyst's adversarial behavior.
    We will show that the analyst can cause the mechanism to answer queries inaccurately when querying Pauli observables in $\{I, Z\}^{\otimes n}$.
    We use $Z_b$ for $b \in \{0,1\}^n$ to denote the observable $Z_b = O_1 \otimes \cdots \otimes O_n$, where $O_i = Z$ if $b_i = 1$ and $O_i = \mathbb{I}$ otherwise.
    Moreover, we use the notation $b^j \in \{0,1\}^n$ to denote the choice of query in the $j$th round.
    We denote $b^j_{(i,j)}$ to denote the entry of the vector $b^j$ corresponding to the qubit indexed by $(i,j)$.
    We also use the notation $b^j_{(i:i', j)}$ to denote the entries of the vector $b^j$ corresponding to the qubits indexed by $(i, j),(i+1,j),\dots, (i', j)$ for $i' \geq i$ and $i, i' \in [2d]$.

    Again, note that the guarantees from interactive fingerprinting codes will only give us guarantees for classical mechanisms.
    Thus, we argue that these types of queries also have a classical analogue.
    Namely, measuring the $n$-qubit state $\rho$ is equivalent to flipping $n$ biased coins and obtaining a sample outcome $x \in \{0,1\}^n$.
    Then, querying the observable $Z_b$ is the same as querying the parity over a selected group of coins indexed by $b$, i.e., $p(b,x) = b \cdot x \bmod 2$.

    The analyst's full attack is given in Algorithm~\ref{alg:real2}.
    \begin{algorithm}
    \caption{Real Attack (Pauli Version)}\label{alg:real2}
    Let the number of users be $d=2000N$.\\
    Let $(\Gen, \Enc, \Dec)$ be the encryption scheme defined in the text.\\
    Generate secret keys $\sk^j \leftarrow \Gen(1^d)$ for each round $j \in [M]$.\\
    Give $\rho^{\otimes N}$ to the mechanism $\mathcal{M}$ for $\rho$ defined as in the text.\\
    Initialize $N$-collusion resilient fingerprinting code $\mathcal{F}$ for $n$ users and length $M=\mathcal{O}(N^2)$.\\
    Let $T^0=\emptyset$.\\
    \For{$j=1$ to $M$}{
      Let $c^j\in\{0,1\}^d$ be chosen by $\mathcal{F}$.\\
      Define $q^j\in \{0,1\}^d$ such that $q^j_i=c^j_i$ if $i\not\in T^{j-1}$ and $0$ otherwise.\\
      Analyst $\mathcal{A}$ chooses query $Z_{b^j}$ where $b^j_{(1:2d,j)}=\Enc(\sk^j,q^j)$ and $b^j$ is $0$ for the other entries.\\
      $\mathcal{A}$ queries $Z_{b^j}$ to $\mathcal{M}$.\\
      $\mathcal{M}$ outputs response $a^j$, rounds $a^j$ to $\Bar{a}^j\in\{0,1\}$, and gives $\Bar{a}^j$ to $\mathcal{F}$.\\
      $\mathcal{F}$ accuses $I^j \subseteq [d]$. Set $T^j \gets T^{j-1} \cup I^j$.
    }
    \end{algorithm}
    Here, we use $T^j$ to denote the set of users accused by the fingerprinting code $\mathcal{F}$ up to and including round $j$.
    We also denote $S^j = \mathcal{S} \setminus T^j$, where $\mathcal{S}$ is the set of colluding users.
    Thus, $S^j$ is the set of colluding users that have not yet been accused.
    
    Note that this attack is almost the same as Algorithm~\ref{alg:real} from the local observables sample complexity lower bound.
    The only difference is that the queries that the analyst $\mathcal{A}$ gives to the mechanism $\mathcal{M}$ are encrypted.
    In particular, in round $j$, given a query $q^j \in \{0,1\}^d$, the analyst instead gives the query $Z_{b^j}$ to the mechanism, where $b^j_{(1:2d, j)} = \Enc_{\sk^j}(q^j)$ and $b^j$ is $0$ for the remaining entries.
    This is so that the mechanism only has access to information from the set of colluding users $S^j$ that have not yet been accused, as we will justify shortly.
    This encryption replaces the random permutations from the proof of Proposition~\ref{prop:local-lower}.

    We also justify that for an input query $q^j$, encrypting it to obtain $b^j$, and querying $Z_{b^j}$ as in Algorithm~\ref{alg:real2} still results in the same query $q^j$.
    Let $Q^{(i)} \in \{0,1\}^n$ be the bitstring from measuring $\rho$ in the computational basis, where $Q^{(i)}_1 \cdots Q^{(i)}_{\lceil \log d \rceil} = i$ and let $z^j_{(2i-1:2i, j)} = Q^{(i)}_{(2i-1,j)}Q_{(2i,j)}$.
    Let $(\Gen^1, \Enc^1, \Dec^1)$ be the encryption scheme $(\Gen, \Enc, \Dec)$ defined above but for security parameter $\lambda = 1$.
    In other words, we use $(\Gen^1, \Enc^1, \Dec^1)$ to denote the special case of $(\Gen, \Enc, \Dec)$ when the key space and message space are $\{0,1\}$ and the ciphertext space is $\{0,1\}^2$.
    Let $\sk^j \leftarrow \Gen(1^d)$.
    Note that because $\sk^j$ is generated uniformly at random in $\Gen$, then each bit $\sk_i^j$ of $\sk^j$ is also a key that could be generated from $\Gen^1$.
    Then, notice that by definition of the encryption scheme and the set $R$ (defined in Equation~\eqref{eq:R}), we have $z^j_{(2i-1:2i, j)} = \Enc^1_{\sk_i^j}(1)$.
    Moreover, we can also write
    \begin{equation}
        \Dec^1_{\sk_i^j}(c) = \Enc^1_{\sk_i^j}(1) \cdot c \bmod 2,
    \end{equation}
    which can be easily checked by definition of the encryption scheme.
    Recall that querying the observable $Z_b$ is the same as querying the parity $p(b,x) = b \cdot x \bmod 2$, where $x$ is a sample outcome from measuring $\rho$.
    In this case, then by definition of $b^j$ in Algorithm~\ref{alg:real2}, we have
    \begin{equation}
        b^j_{(2i-1:2i, j)} \cdot z^j_{(2i-1:2i, j)} = \Dec_{\sk_i^j}(\Enc_{\sk_i^j}(q_i^j)) = q_i^j.
    \end{equation}
    In this way, we see that querying $Z_{b^j}$ is the same as querying the unencrypted $q^j$.
    Thus, the encryption here is only hiding information but not obscuring what we wish to query.

    \textbf{Reduction to \textsf{IFPC}:}
    We now show that our proposed attack follows the same interaction model as the \textsf{IFPC} game.
    Crucially, we claim that the mechanism can only access information from the set of colluding users $S^j$ that have not been accused, i.e., the information $c^j_{S^j}$ (as required in Algorithm~\ref{alg:interact}).

    To show this, we construct a simulated attack, given in Algorithm~\ref{alg:simulated2}.
    The simulated attack is merely fictitious, as it relies on the analyst knowing the set of colluding users $\mathcal{S}$.
    In particular, for $i \notin \mathcal{S}$, the analyst's simulated attack will set $q_i^j = 0$, which is not possible in the real attack because the analyst does not know $\mathcal{S}$.
    The analyst first simulates sampling from the density matrix $\rho$ that would be given to the mechanism in the real attack (Algorithm~\ref{alg:real2}) and gives this simulated data to the mechanism.
    In this way, because the analyst knows $\mathcal{S}$, the dataset that is given to the mechanism in the simulated attack (Algorithm~\ref{alg:simulated}) only contains information about the colluding users.
    Explicitly, the analyst creates computational basis states $\ket{Q}$, where $Q \in R$ (defined in Equation~\eqref{eq:R}), where $i$ is restricted to be in $\mathcal{S}$ instead.
    Moreover, the simulated interaction with the sampled states were constructed with only knowledge of $c^j_{S^j}$ and contain no information about $c^j$ outside of $S^j$.
    Thus, the simulated attack indeed operates in the correct \textsf{IFPC} interaction model required for fingerprinting code guarantees to hold.

    \begin{algorithm}
    \caption{Simulated Attack (Pauli Version)}\label{alg:simulated2}
    Let the number of users be $d=2000N$ and let $\mathcal{U}$ be the uniform distribution over $[d]$.\\
    Let $(\Gen, \Enc, \Dec)$ be the encryption scheme defined in the text.\\
    Choose colluding users $\mathcal{S} \sim \mathcal{U}^N$ and give $\mathcal{S}$ to the analyst $\mathcal{A}$.\\
    $\mathcal{A}$ generates secret keys $\sk^j \leftarrow \Gen(1^d)$ for each round $j \in [M]$.\\
    $\mathcal{A}$ simulates sampling from the density matrix $\rho$ by generating $\mathcal{S}'=\{\ket{Q}\}$ corresponding to $\mathcal{S}$.\\
    Give $\mathcal{S}'$ to $\mathcal{M}$.\\
    Initialize $N$-collusion resilient fingerprinting code $\mathcal{F}$ for $n$ users and length $M=\mathcal{O}(N^2)$.\\
    Let $S^1=\mathcal{S}$.\\
    \For{$j=1$ to $M$}{
      Let $c^j\in\{0,1\}^d$ be chosen by $\mathcal{F}$. Give $c^j_{S^j}$ to $\mathcal{A}$.\\
      Define $q^j \in \{0,1\}^d$ such that $q^j_i=c^j_i$ if $i \in S^j$ and $0$ otherwise.\\
      Analyst $\mathcal{A}$ chooses query $Z_{b^j}$ where $b^j_{(1:2d,j)}=\Enc(\sk^j,q^j)$ and $b^j$ is $0$ for the other entries.\\
      $\mathcal{A}$ queries $Z_{b^j}$ to $\mathcal{M}$.\\
      $\mathcal{M}$ outputs response $a^j$. Round $a^j$ to $\Bar{a}^j\in\{0,1\}$, and give $\Bar{a}^j$ to $\mathcal{F}$.\\
      $\mathcal{F}$ accuses $I^j \subseteq [d]$. Set $S^{j+1} \gets S^j \setminus I^j$.
    }
    \end{algorithm}

    To show that our proposed attack (Algorithm~\ref{alg:real2}) follows the same interaction model as the \textsf{IFPC} game, we claim that the mechanism cannot distinguish between our proposed attack and the simulated attack (Algorithm~\ref{alg:simulated2}), which as described must be in the \textsf{IFPC} interaction model.
    We can show that any mechanism cannot distinguish between the simulated and real attacks by noting the following:
    \begin{enumerate}
        \item The distribution over the measurement outcomes for qubits indexed by $(k, j)$ for $k \in [2d], j \in [M]$ is the same.
        \item For a given round $j$, the distribution of the queries $b^j$ is the same.
    \end{enumerate}

    The first point is true because both the users (by definition of $\rho$) and secret keys are chosen uniformly at random in both the simulated and real attacks.
    Together, these completely determine the rest of the $2dM$ qubits, so their distirbutions are the same.

    The second point is true because of the following argument.
    Notice that the mechanism only sees the secret keys for $i\in \mathcal{S}$, i.e., it observes $z^j_{(2i-1:2i, j)}=\Enc^1(\sk^j_i,1)$ only for $i\in \mathcal{S}$.
    This is clear by construction.
    Moreover, notice that the only difference between the real and simulated attacks with respect to the query $q^j$ is that $q^j$ (and thus $b^j$) is only different outside of $\mathcal{S}$.
    Thus, although the mechanism can know the secret keys for $i \in \mathcal{S}$, $b^j$ is in fact the same for $i \in \mathcal{S}$, so the distribution is the same in this case.
    On the other hand, for $i \notin \mathcal{S}$, the queries $q^j$ may be different.
    However, recall that $b^j$ is the encryption of $q^j$, where our encryption scheme is defined to be a variant of the one-time pad.
    Thus, by the perfect secrecy of the one-time pad, then the distribution over $b^j$ are identical in this case as well.

    Thus, the mechanism's views are identical in both attacks.
    This implies that the probabilities of inconsistency are the same, i.e., \begin{equation}
        \Pr_{\mathrm{Real}}[\exists j \in [M]: \forall i \in [d],\;\Bar{a}^j\neq c^j_i] = \Pr_{\mathrm{Sim}}[\exists j \in [M]: \forall i \in [d],\; \Bar{a}^j\neq c^j_i].
    \end{equation}
    The rest of the proof is the same as the proof of Proposition~\ref{prop:local-lower}.
\end{proof}

Just as in the local observables case, there is also an analogous classical adaptive data analysis problem for which we may recycle our proof in order to obtain a new lower bound.
Here, the distribution $\mathcal{D}$ is over an $n$-dimensional data universe $\mathcal{X}\subseteq \{0,1\}^n$, and the analyst queries from the family of \textit{parity queries}, which are defined the same way as noted previously in this section: $q_b(x)=p(b,x)=b\cdot x$, where $b\in \{0,1\}^n$ and $x\in \mathcal{X}$.
Our proof for the Pauli observables lower bound relies only on classical constructions, so we may recycle our proof: the parity query $q_b$ corresponds to the observable $Z_b$ and the classical distribution $\mathcal{D}$ can be mapped onto a density matrix.

\begin{corollary} [parity queries, lower bound]
\label{cor:parity_queries}
    There exists an analyst $\mathcal{A}$ and classical distribution $\mathcal{D}$ acting on $\{0,1\}^n$, where $n=\Omega(M^{3/2})$ such that any $(0.99,1-\frac{1}{2000\sqrt{M}}$-accurate classical mechanism $\mathcal{M}$ answering $M$ adaptively chosen parity queries requires at least $N=\Omega(\sqrt{M})$ from $\mathcal{D}$.
\end{corollary}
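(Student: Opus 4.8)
The plan is to observe that the proof of Proposition~\ref{prop:pauli-lower} is already, at its core, a purely classical argument, so the corollary follows by reading off that classical content and discarding the quantum packaging. Recall that the adversarial state $\rho = \sum_{Q \in \{0,1\}^n} p(Q)\ketbra{Q}$ built there is diagonal; it therefore \emph{is} a distribution $\mathcal{D}$ over $\{0,1\}^n$, with $\Pr_{x\sim\mathcal{D}}[x=Q] = p(Q)$, on the same number of qubits $n = \lceil \log d\rceil + 2Md$, which equals $\Omega(M^{3/2})$ once $d = 2000N$ and $M = \mathcal{O}(N^2)$. Moreover, the analyst there only ever queries observables $Z_b \in \{I,Z\}^{\otimes n}$, and---as the excerpt already notes---measuring such an observable amounts to evaluating the parity $p(b,x) = b\cdot x \bmod 2$ on a sample $x \sim \mathcal{D}$, so $\tr(Z_b\rho)$ corresponds (under the affine map between $\{0,1\}$-valued and $\pm 1$-valued parities) to the parity query $q_b(\mathcal{D}) = \E_{x\sim\mathcal{D}}[b\cdot x \bmod 2]$ in the adaptive statistical query model of Appendix~\ref{sec:stat-query}.

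First I would make this dictionary explicit: $\mathcal{D}\leftrightarrow$ the diagonal of $\rho$; parity query $q_b \leftrightarrow$ observable $Z_b$; the mechanism's response to $q_b$ $\leftrightarrow$ its response to $Z_b$; and I would check that the accuracy thresholds line up under the reparametrization, so that a $(0.99, 1-\tfrac{1}{2000\sqrt{M}})$-accurate classical mechanism for parity queries maps to exactly the kind of mechanism whose failure the Pauli lower bound rules out. Then I would transcribe Algorithms~\ref{alg:real2} and~\ref{alg:simulated2} into this classical language: the one-time-pad encryption of the \textsf{IFPC} challenges $c^j$ into the query index $b^j$, the encoding of the round-$j$ secret keys $\sk^j$ into the coordinates of $x$, and the accusation bookkeeping $T^j, S^j$ are all manifestly classical operations that used neither quantum memory nor quantum measurement. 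Finally, I would rerun the reduction to $\mathsf{IFPC}_{N,d,M}$ verbatim: the two indistinguishability claims (the distribution over the $2dM$ key-carrying coordinates is user- and key-uniform in both the real and simulated attacks; the distribution over $b^j$ is identical by perfect secrecy of the one-time pad) carry over unchanged, so an accurate classical mechanism would make the adversary $\mathcal{P}$ consistent on every round while $\mathcal{F}$ falsely accuses at most $d/2000$ users, contradicting the existence of collusion resilient interactive fingerprinting codes (Theorem~2.2 of~\cite{steinke2015interactive}); hence $M = \mathcal{O}(N^2)$, i.e.\ $N = \Omega(\sqrt{M})$.

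I expect no substantive obstacle here---this is a translation exercise, and indeed the classical statement is the logical core from which the quantum Proposition~\ref{prop:pauli-lower} was itself derived (via the remark that any quantum mechanism can be simulated classically). The only points needing genuine care are bookkeeping: (i) pinning down the accuracy constant, since a $\{0,1\}$-valued parity query satisfies $q_b(\mathcal{D}) = (1 - \tr(Z_b\rho))/2$, so one should either phrase the corollary with parity queries valued in $\{-1,+1\}$ (matching $\tr(Z_b\rho)$ exactly) or absorb the harmless factor of two into the constants; and (ii) confirming that no step of the proof of Proposition~\ref{prop:pauli-lower}---in particular the claim that querying $Z_{b^j}$ is equivalent to querying the unencrypted $q^j$, which was already phrased in terms of parities from the outset---uses anything beyond the diagonal structure now being assumed outright. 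Neither of these is more than a line of verification.
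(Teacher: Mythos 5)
Your proposal is correct and takes essentially the same approach as the paper, which introduces Corollary~\ref{cor:parity_queries} with precisely the remark that the proof of Proposition~\ref{prop:pauli-lower} uses only classical constructions (a diagonal $\rho$, classical one-time-pad encryption, and parity-valued observables $Z_b$), so the parity query $q_b$ can be identified with $Z_b$ and $\mathcal{D}$ with the diagonal of $\rho$. You spell out the dictionary and the $\{0,1\}$-vs-$\{\pm 1\}$ reparametrization more explicitly than the paper does, but the logical content is the same reading-off of the classical core of the Pauli lower bound.
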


\subsection{Upper Bound}
\label{app:pauli-upper}

In this section, we prove a sample complexity upper bound for predicting expectation values $\tr(P_i\rho)$ for $i \in [M]$ of adaptively chosen $n$-qubit Pauli observables $P_i \in \{I,X,Y,Z\}^{\otimes n}$ with respect to an unknown quantum state $\rho$.
We design a mechanism that can successfully estimate these expectation values, even when the observable $P_i$ is chosen adaptively.
In the case where $P_i$ are chosen non-adaptively, one can use the algorithm in~\cite{huang2021information}.
However, we will need to use a different algorithm for when $P_i$ are chosen adaptively.
Moreover, while the proof of our sample complexity lower bound (Appendix~\ref{app:pauli-lower}) only required the analyst to query observables in $\{I, Z\}^{\otimes n}$, our algorithm works for an analyst querying any Pauli observables.
This upper bound matches our lower bound from Proposition~\ref{prop:pauli-lower} exactly in $M$ scaling.
In particular, we prove the following proposition.

\begin{prop}[Pauli observables, upper bound]
    \label{prop:pauli-upper}
    For every $\epsilon,\delta \in (0,1/2)$ such that $\epsilon\delta \geq 4e^{-M/\log^2M \log\log^4M}$, there is exists an $(\epsilon,\delta)$-accurate mechanism $\mathcal{M}$ that can estimate expectation values of $M$ adaptively chosen Pauli observables using
    \begin{equation}
        N=\mathcal{O}\left(\frac{\min\{\sqrt{M\log(1/\epsilon\delta)},n\log(1/\delta)\}}{\epsilon^4}\right)
    \end{equation}
    samples of the unknown quantum state $\rho$. The algorithm runs in time $\mathrm{poly}(N,n,\log(1/\delta))$ per query.
\end{prop}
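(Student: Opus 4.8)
The plan is to mirror the structure of the local-observables upper bound (Proposition~\ref{prop:local-upper}): export the quantum estimation task to a classical adaptive-statistical-query problem by producing, once and for all, a dataset of classical measurement outcomes, and then invoke the classical adaptive data analysis algorithm of Theorem~\ref{thm:adapt algo}. The dataset here is not classical shadows but the Bell-basis measurement data $S^{(1)},\dots, S^{(N_1)}$ from the first step of the algorithm of~\cite{huang2021information} reviewed in Appendix~\ref{sec:nonadapt-pauli}. Recall that a single round of Bell measurements on $\rho\otimes\rho$ yields $S^{(t)}=\{S_k^{(t)}\}_{k\in[n]}$, each $S_k^{(t)}$ an element of a fixed finite set; this data is collected obliviously, with no knowledge of which Paulis will be queried, so it can serve as the sample $\mathcal{S}\sim\mathcal{D}^{N_1}$ in the statistical query model, where $\mathcal{D}$ is the distribution over Bell measurement outcomes.

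First I would observe that for each Pauli $P=\sigma_1\otimes\cdots\otimes\sigma_n$ the function $q_P(S^{(t)})=\prod_{k=1}^n \tr((\sigma_k\otimes\sigma_k)S_k^{(t)})\in\{\pm1\}$ from Equation~\eqref{eq:pauli-query} is a bounded (scaled) statistical query with $\E[q_P(S^{(t)})]=|\tr(P\rho)|^2$, so estimating $|\tr(P\rho)|^2$ for adaptively chosen $P$ is exactly an instance of the adaptive statistical query problem with range $C=1$. Applying Theorem~\ref{thm:adapt algo}, I get an $(\epsilon^2,\delta)$-accurate mechanism using $N_1=\mathcal{O}(\sqrt{M\log(1/\epsilon\delta)}/\epsilon^4)$ rounds of Bell measurements (hence $2N_1$ copies of $\rho$) that estimates $|\tr(P_i\rho)|^2$ to additive error $\epsilon^2$, and thus $|\tr(P_i\rho)|$ to error $\mathcal{O}(\epsilon)$, for all $M$ adaptively chosen $P_i$. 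The second step of~\cite{huang2021information}—the coherent sign estimation via majority vote on $N_2=\mathcal{O}(\log(M/\delta)/\epsilon^2)$ copies of $\rho$ held in quantum memory—is, as remarked at the end of Appendix~\ref{sec:nonadapt-pauli}, already adaptivity-resistant: by the quantum union bound the same $N_2$ copies can be reused for every queried $P_i$ without much disturbance, so the sign $\mathrm{sign}(\tr(P_i\rho))$ is recovered correctly for all $i$ simultaneously with probability $1-\delta$. Combining magnitude and sign gives $\hat p_i$ with $|\hat p_i-\tr(P_i\rho)|\le\epsilon$ for all $i$, using $N=2N_1+N_2=\mathcal{O}(\sqrt{M\log(1/\epsilon\delta)}/\epsilon^4)$ copies; the $\mathrm{poly}(N,n,\log(1/\delta))$ per-query runtime is inherited from Theorem~\ref{thm:adapt algo} and the efficiency of computing $q_P$. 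To obtain the other term in the minimum, $n\log(1/\delta)/\epsilon^4$, I would simply run the non-adaptive algorithm of Theorem~\ref{thm:nonadapt-pauli} on the set of all $4^n$ Pauli observables—this is a fixed, non-adaptive set, so the guarantee applies with $\log(4^n/\delta)=\mathcal{O}(n+\log(1/\delta))$, yielding estimates of $\tr(P\rho)$ for every Pauli $P$ at once and hence for any adaptively chosen $P_i$. Taking the better of the two mechanisms gives the stated bound.

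The main obstacle I anticipate is the bookkeeping around \emph{composing the two steps under adaptivity}: the analyst sees the final prediction $\hat p_i$, which combines the adaptively-protected magnitude estimate and the (reused) sign estimate, so I need to argue that feeding back $\hat p_i$ rather than just $q_P(\mathcal{S})$ does not break the statistical-query abstraction underlying Theorem~\ref{thm:adapt algo}. This is handled by noting that the sign-estimation subroutine's output depends only on the $N_2$ fresh quantum-memory copies and the queried $P_i$, not on the Bell-measurement dataset $\mathcal{S}$, so conditioned on the (high-probability) event that all signs are correct, the transcript seen by the magnitude mechanism is a deterministic post-processing of its own outputs and the analyst's queries—exactly the adaptive interaction Theorem~\ref{thm:adapt algo} is stated for. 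A minor secondary point is that $q_P$ takes values in $\{\pm1\}$ while Theorem~\ref{thm:adapt algo}'s statement is phrased for range $[-C,C]$; this is immediate with $C=1$, and the $\epsilon^2$ vs. $\epsilon$ rescaling (error $\epsilon^2$ in $|\tr(P\rho)|^2$ translates to $\mathcal{O}(\epsilon)$ error in $|\tr(P\rho)|$ since $|a^2-b^2|\le\epsilon^2$ with $|a|,|b|\le1$ gives $||a|-|b||\le\epsilon^2/(|a|+|b|)$, or more simply $||a|-|b||\le\epsilon$ when the difference of squares is at most $\epsilon^2$) accounts for the $1/\epsilon^4$ in the final bound, consistent with the $\epsilon^{-4}$ appearing in Theorem~\ref{thm:nonadapt-pauli}.
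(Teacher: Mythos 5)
Your proposal is correct and follows essentially the same route as the paper's proof: identify the Bell-measurement data as the classical dataset, note that $q_P$ is a bounded statistical query with $C=1$, apply Theorem~\ref{thm:adapt algo} to the magnitude estimation, observe that the sign-estimation step is already adaptivity-resistant via the quantum union bound and does not leak information about the Bell dataset, and obtain the $n$-dependent term by non-adaptively predicting all $4^n$ Paulis. Your closing discussion of why composing the two steps preserves the statistical-query abstraction matches the paper's own remark on this point.
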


The idea of the proof is similar to the proof of Proposition~\ref{prop:local-upper}.
Namely, we can replace a classical subroutine in the algorithm in~\cite{huang2021information} with an algorithm from classical adaptive data analysis~\cite{bassily2015algorithmic} (see Theorem~\ref{thm:adapt algo}).
This subroutine is the only part of the algorithm from~\cite{huang2021information} that is susceptible to adaptivity, so by utilizing the algorithm in~\cite{bassily2015algorithmic} instead, we can protect against this.
We refer to Appendix~\ref{sec:nonadapt-pauli} for a review of the algorithm from~\cite{huang2021information}.

\begin{proof}[Proof of Proposition~\ref{prop:pauli-upper}]
    Recall from Appendix~\ref{sec:nonadapt-pauli} that the algorithm from~\cite{huang2021information} estimates $\tr(P_i\rho)$ in two steps: (1) estimate the magnitude $\lvert\trace(P_i\rho)\rvert^2$ and (2) determine $\mathrm{sign}(\trace(P_i\rho))$.

    As discussed in Appendix~\ref{sec:nonadapt-pauli}, the first step is completed by estimating expectations of
    \begin{equation}
        q_P(S^{(t)})=\prod_{k=1}^n \trace\left((\sigma_k\otimes\sigma_k)S^{(t)}_k\right) \in \{\pm 1\},
    \end{equation}
    where $S^{(t)} = \{S_k^{(t)}\}_{k \in [n]}$ are classical measurement results with $t \in [N_1]$ the $t$'th round of measurement and $P = \sigma_1 \otimes \cdots \sigma_n$ (Equation~\eqref{eq:pauli-query} in Appendix~\ref{sec:nonadapt-pauli}).
    \cite{huang2021information} shows that the expectation value of $q_P$ is exactly the magnitude $\lvert\trace(P\rho)\rvert^2$ and thus estimates this expectation to within $\epsilon^2$ error using the empirical mean of $q_P$.
    However, taking the empirical mean is not sufficient in the adaptive case, so we will need to use a different algorithm, similarly to the proof of Proposition~\ref{prop:local-upper}.

    Our key observation is to notice that $q_P$ is a statistical query with values in $\{-1,1\}$.
    Thus, we can apply the algorithm from Theorem~\ref{thm:adapt algo} with $C = 1$ to estimate the expectation of each $q_{P_i}$, i.e., the magnitudes $\lvert \tr(P_i \rho)\rvert^2$.
    Applying Theorem~\ref{thm:adapt algo}, we can estimate the magnitudes $\lvert \tr(P_i \rho)\rvert^2$ for $M$ adaptively chosen Pauli observables $P_i$ to within $\epsilon^2$ error using
    \begin{equation}
        2N_1=\mathcal{O}\left(\frac{\sqrt{M\log(1/\epsilon\delta)}}{\epsilon^4}\right)
    \end{equation}
    copies of the unknown quantum state $\rho$.

    Note that this still holds despite the subsequent post-processing of the estimation, i.e. taking the square root and estimating $\mathrm{sign}(\trace(P\rho))$, since none of them reveal to the analyst any new information about the dataset of measurement outcomes $\{S^{(t)}_k\}$ (where $\mathrm{sign}(\trace(P\rho))$ is estimated using a separate set of samples).
    Thus, effectively, the analyst only receives information about the dataset $\{S^{(t)}_k\}$ from the output of the Theorem \ref{thm:adapt algo} algorithm.
    Hence, $N_1$ samples as defined above are indeed sufficient to predict magnitudes even for adaptively chosen observables.
    
    As remarked in Appendix~\ref{sec:nonadapt-pauli}, the second step of determining $\mathrm{sign}(\tr(P\rho))$ is already resistant to adaptivity since it reuses copies of $\rho$ via an argument involving the quantum union bound~\cite{aaronson2006qma}.
    Thus, we can use the same
    \begin{equation}
        N_2 = \mathcal{O}\left(\frac{\log(M/\delta)}{\epsilon^2}\right)
    \end{equation}
    as in the original algorithm~\cite{huang2021information} for determining $\mathrm{sign}(\tr(P_i\rho))$ for $1\leq i \leq M$.
    Putting everything together, the total sample complexity is
    \begin{equation}
        \label{eq:pauli1}
        N = 2N_1 + N_2 = \mathcal{O}\left(\frac{\sqrt{M\log(1/\epsilon\delta)}}{\epsilon^4}\right).
    \end{equation}
    To obtain the second term in the minimum in the statement of Proposition~\ref{prop:pauli-upper}, we can also protect against adaptivity by simply using the original non-adaptive algorithm~\cite{huang2021information} to predict expectation values of all $4^n$ Pauli observables.
    By Theorem~\ref{thm:nonadapt-pauli}, this requires
    \begin{equation}
        \label{eq:pauli2}
        N = \mathcal{O}\left(\frac{n\log(1/\delta)}{\epsilon^4}\right).
    \end{equation}
    Because this guarantees that the algorithm will accurately estimate expectation values of all possible Pauli observables, it does not matter if an observable is chosen adaptively.
    Taking the minimum of the two expressions Equations~\eqref{eq:pauli1} and~\eqref{eq:pauli2} gives the claim.

    Finally, the computational complexity follows because the dataset of measurement outcomes $S^{(t)}$ can be stored efficiently in classical memory using $2n$ bits.
    This is clear because, as described in Appendix~\ref{sec:nonadapt-pauli}, $S^{(t)}$ is obtained by measuring $\rho \otimes \rho$ in the Bell basis.
\end{proof}

\section{Adaptive algorithms using classical shadows} \label{app: acs}
In this section, we introduce a series of algorithms for answering adaptive queries using classical shadows~\cite{huang2020predicting}.
Namely, when observables are chosen adaptively, we provide new algorithms for shadow tomography~\cite{aaronson2018shadow} in Appendix~\ref{sec:adapt cs} and quantum threshold search (originally called secret acceptor in~\cite{aaronson2016complexity}) in Appendix~\ref{sec:thresh}.
We will use these algorithms as components of the our algorithm for predicting adaptively chosen bounded-Frobenius-norm properties in Appendix~\ref{app: bfo}.
However, we also believe them to be sufficiently interesting algorithms in their own right, as they apply to arbitrary observables and their non-adaptive versions are often used as subroutines in common quantum algorithms.
Throughout this section, we assume that we have access to a classical shadow dataset $\mathcal{S}(\rho;N)=\{\hat{\rho}_1,...,\hat{\rho}_N\}$, where $\rho$ is the unknown quantum state that we have samples of.
We refer the reader to Appendix~\ref{sec:classical-shadow} for a review of the classical shadow formalism.

\subsection{Adaptive classical shadow tomography} \label{sec:adapt cs}
We again consider receiving adaptively chosen observables $O_1,...,O_M$ and are tasked with estimating $o_i=\trace(O_i\rho)$ for all $1\leq i \leq M$ given samples of the unknown quantum state $\rho$.
Previously, in Appendix~\ref{sec:classical shadow ub}, we saw that if $o_i(\hat{\rho})=\trace(O_i\hat{\rho})$ is bounded in some range $[-C,C]$ where $C$ is a constant and $\hat{\rho}$ is a classical snapshot of $\rho$, then we can use classical adaptive data analysis algorithms~\cite{bassily2015algorithmic} that apply to statistical queries $q:\mathcal{X}\mapsto [-C,C]$ to obtain rigorous sample complexity guarantees. However, in general, it is not guaranteed that the estimator $o_i(\hat{\rho})$ will have bounded range. For instance, in the case of bounded-Frobenius-norm observables, $o_i$ has an exponentially large range, and we are only given that its variance is bounded by a constant $B$ (Lemma 1 and Proposition 1 in~\cite{huang2020classical}).
For classical shadows predicting arbitrary observables, we need to consider algorithms that apply to more general estimators. We consider two algorithms for this task, one of which is computationally efficient.
These two algorithms have different sample complexity guarantees, so we present both, which may be useful in different practical scenarios.
Both of our algorithms utilize differentially private algorithms from~\cite{feldman2017generalization}.

\subsubsection{Computationally efficient mechanism}
\label{sec:comp-eff-mech}
First, we present a computationally efficient algorithm for predicting adaptively chosen properties with sample complexity $\tilde{O}(\sqrt{M})$.
The algorithm is simply a modification of the original median of means~\cite{jerrum1986random,nemirovskij1983problem} protocol used in \cite{huang2020predicting}, where the only difference is that it performs a truncation step and then estimates the median in a differentially private manner. The algorithm can be outlined as follows: 
\begin{enumerate}
    \item Break the set of classical shadows into $K$ batches of size $N$, and compute
    \begin{equation}
        \{\hat{o}_i^{(1)},...,\hat{o}_i^{(K)}\} \quad \mathrm{where} \quad \hat{o}_i^{(k)}=\frac{1}{N}\sum_{j=N(k-1)+1}^{Nk}\trace(O_i\hat{\rho}_j).
    \end{equation}
    \item Let $B=\max_{i=1,...,M}\|O_i-\frac{\trace(O_i)}{2^n}\mathbb{I}\|^2_{\mathrm{shadow}}$. Truncate each $\hat{o}_i^{(k)}$ to lie in the range $[-2\sqrt{B/N}-1,2\sqrt{B/N}+1]$, i.e., collect
    \begin{equation}
        \{\Tilde{o}_i^{(1)},...,\Tilde{o}_i^{(K)}\} \quad \mathrm{where} \quad \Tilde{o}_i^{(k)}=\begin{cases}
        -2\sqrt{B/N}-1 & \text{if } \hat{o}_i^{(k)} < -2\sqrt{B/N}-1\\
        2\sqrt{B/N}+1 & \text{if } \hat{o}_i^{(k)} > 2\sqrt{B/N}+1\\
        \hat{o}_i^{(k)} & \text{otherwise}
    \end{cases}
    \end{equation}
    \item Use any differentially private algorithm for computing the approximate median of $\{\Tilde{o}_i^{(1)},...,\Tilde{o}_i^{(K)}\}$. For example, this can be done via an application of the exponential mechanism \cite{4389483} (e.g., \cite{10.1145/1993636.1993743,feldman2017generalization}). Then, return the result $\hat{o}_i(N,K)$. 
\end{enumerate}

Our sample complexity guarantee then follows directly from Theorem 1.3 in~\cite{feldman2017generalization} and the remarks following it.
This sample complexity upper bound is looser than that of Theorem \ref{thm:adapt algo} that we used in Appendix~\ref{sec:classical shadow ub}.
In particular, it has an additional $\log M$ factor.
However, it applies more generally and allows us to bound the sample complexity in terms of the variance of the estimator.
In the setting of classical shadows, the variance is bounded by the shadow norm (Lemma 1 in~\cite{huang2020predicting}).
Thus, we have the following result.

\begin{theorem}[Computationally efficient adaptive classical shadows]\label{thm:dp cs}
    For a fixed measurement primitive $\mathcal{U}$, a sequence of adaptively chosen $2^n\times 2^n$ Hermitian matrices $O_1,...,O_M$, and accuracy parameters $\epsilon,\delta\in [0,1]$, set
    \begin{equation}
        K=\mathcal{O}\left(\sqrt{M\log(1/\delta)}\log(M/\delta \epsilon)\right) \quad \mathrm{and} \quad N=\mathcal{O}\left(\frac{1}{\epsilon^2}\max_{i=1,...,M}\|O_i-\frac{\trace(O_i)}{2^n}\mathbb{I}\|^2_{\mathrm{shadow}}\right),
    \end{equation}
    where $\|\cdot\|_{\mathrm{shadow}}$ is the shadow norm. Then, there exists an efficient algorithm using $NK$ independent classical shadows that can accurately estimate all the properties:
    \begin{equation}
        |\hat{o}_i(N,K)-\trace(O_i\rho)| \leq \epsilon \quad \textrm{for all}\ 1\leq i \leq M
    \end{equation}
    with probability at least $1-\delta$. The running time is $\mathrm{poly}(N,m,\log(1/\epsilon))$, where $m$ is the number of bits required to store a classical shadow.
\end{theorem}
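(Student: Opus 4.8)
The plan is to verify that the three-step algorithm above is exactly an instance of the general differentially-private adaptive-estimation framework of~\cite{feldman2017generalization}, and then to invoke Theorem~1.3 there as a black box. The only work is in checking that the hypotheses of that theorem are met and in tracking how the shadow-norm bound on the variance propagates into the stated parameters. First I would recall the key property of classical shadows established in~\cite{huang2020predicting}: each empirical mean $\hat{o}_i^{(k)} = \frac{1}{N}\sum_j \trace(O_i\hat\rho_j)$ over a batch of $N$ independent snapshots is an unbiased estimator of $\trace(O_i\rho)$ with variance at most $\frac{1}{N}\|O_i - \frac{\trace(O_i)}{2^n}\mathbb{I}\|^2_{\mathrm{shadow}} \le B/N$. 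Hence by Chebyshev each $\hat{o}_i^{(k)}$ lies within $2\sqrt{B/N}$ of the true value with probability at least $3/4$; choosing $N = \mathcal{O}(B/\epsilon^2)$ makes this confidence interval have half-width $\mathcal{O}(\epsilon)$ as required.

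Next I would argue that the truncation in Step~2 is harmless: since $|\trace(O_i\rho)| \le \|O_i\|_\infty \le 1$ and the untruncated estimator is within $2\sqrt{B/N}$ of this value with probability $\ge 3/4$, the truncated value $\tilde o_i^{(k)}$ still lies within $2\sqrt{B/N} + 1$ of $\trace(O_i\rho)$ with probability $\ge 3/4$, and truncation can only move a misbehaving sample \emph{toward} the target interval, never away. So the family $\{\tilde o_i^{(k)}\}_{k}$ is a set of $K$ i.i.d. bounded samples, each a $(3/4)$-confidence estimate of $\trace(O_i\rho)$ with the same accuracy up to constants. This is precisely the input format for the stability-based (differentially private) median aggregation analyzed in~\cite{feldman2017generalization,10.1145/1993636.1993743}: the exponential mechanism computing an approximate median of the $K$ truncated batch-estimates is differentially private, and differential privacy composes gracefully under adaptive reuse of the sample, which is the source of the $\sqrt{M}$ (rather than $M$) dependence. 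Plugging in $K = \mathcal{O}(\sqrt{M\log(1/\delta)}\,\log(M/\delta\epsilon))$ is exactly the setting in which Theorem~1.3 of~\cite{feldman2017generalization} guarantees that with probability $\ge 1-\delta$ the outputs $\hat o_i(N,K)$ are simultaneously within $\mathcal{O}(\epsilon)$ of $\trace(O_i\rho)$ for all $M$ adaptively chosen $O_i$; rescaling constants absorbs the $\mathcal{O}$.

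Finally I would address the running time. Each classical snapshot $\hat\rho_j$ can be stored in $m$ bits (for local observables $m = \mathcal{O}(n)$, for bounded-Frobenius-norm observables via Clifford shadows $m = \mathcal{O}(n^2)$), and computing $\trace(O_i\hat\rho_j)$ for each of the $NK$ snapshots, truncating, and running the exponential mechanism over the resulting $K$ scalars all take time $\mathrm{poly}(N,K,m)$ per query; the $\log(1/\epsilon)$ appears only in the discretization granularity needed for the exponential mechanism to output the median to additive precision $\epsilon$. The main obstacle, and the only step requiring genuine care, is confirming that the truncation does not break the unbiasedness/stability hypotheses of~\cite{feldman2017generalization} — one must ensure the truncated estimator still concentrates around the true value and that the symmetric truncation window $[-2\sqrt{B/N}-1,\,2\sqrt{B/N}+1]$ is wide enough to contain a $(3/4)$-mass of the batch distribution; everything else is bookkeeping on the parameters of a cited theorem.
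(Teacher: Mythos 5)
Your proposal matches the paper's proof essentially step for step: the paper's argument is precisely to batch the shadows into $K$ groups of $N$, use the shadow-norm variance bound from Lemma~1 of~\cite{huang2020predicting} to get a $3/4$-confidence estimate per batch, truncate to a bounded window, and then invoke Theorem~1.3 of~\cite{feldman2017generalization} for the differentially private approximate-median aggregation, exactly as you do. Your additional remarks (Chebyshev justification, the observation that truncation can only pull outliers inward, the accounting for the $\log(1/\epsilon)$ discretization in the exponential mechanism) are correct elaborations of steps the paper leaves implicit, not a different route.
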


Here, a measurement primitive is an ensemble of unitaries that is tomographically complete~\cite{huang2020predicting}.

\subsubsection{Private multiplicative weights}
A simple modification of the algorithm from Appendix~\ref{sec:comp-eff-mech} can yield the desired $\mathcal{O}(\log(M))$ sample complexity scaling for predicting adaptively-queried observables.
However, we exchange this favorable sample complexity at the cost of computational complexity, as this new algorithm is computationally inefficient.

In particular, to achieve the $\mathcal{O}(\log M)$ sample complexity, one only needs to modify Step 3 in the algorithm from Appendix~\ref{sec:comp-eff-mech}.
Specifically, the problem of finding the approximate median in Step 3 can be reduced to answering a sequence of statistical queries via another algorithm (e.g., \cite{feldman2017dealing,feldman2017generalization}). Then, one can simply apply the private multiplicative weights algorithm \cite{5670948} to answer these statistical queries, which gives an alternative method for solving Step 3. 
This same approach gives a computationally inefficient algorithm in~\cite{feldman2017generalization}, so our results in the classical shadow setting follow directly from Theorem 5.4 in~\cite{feldman2017generalization}.

\begin{theorem}[Computationally inefficient adaptive classical shadows] \label{thm: tomography protocol}
    For a fixed measurement primitive $\mathcal{U}$, a sequence of adaptively chosen $2^n\times 2^n$ Hermitian matrices $O_1,...,O_M$, and accuracy parameters $\epsilon,\delta\in [0,1]$, there exists an algorithm that uses
    \begin{equation}
        N=\mathcal{O}\left(\frac{\sqrt{m\log(1/\delta)}\cdot \log(M/\epsilon\delta)}{\epsilon^3}\cdot \max_{i=1,...,M}\|O_i-\frac{\trace(O_i)}{2^n}\mathbb{I}\|^3_{\mathrm{shadow}}\right)
    \end{equation}
    samples to accurately estimate all the properties:
    \begin{equation}
        |\hat{o}_i-\trace(O_i\rho)| \leq \epsilon \quad \textrm{for all}\ 1\leq i \leq M
    \end{equation}
    with probability at least $1-\delta$. Here, $m$ is the number of bits needed to store a classical shadow.
\end{theorem}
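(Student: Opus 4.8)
The plan is to realize this mechanism as a black-box instantiation of the classical reduction underlying Theorem~5.4 of~\cite{feldman2017generalization}, with the classical shadow space serving as the data universe. First, exactly as in the proof of Theorem~\ref{thm:dp cs}, I would regard the classical shadow dataset $\mathsf{S}(\rho; N) = \{\hat\rho_1,\dots,\hat\rho_N\}$ as $N$ i.i.d.\ draws from a fixed distribution $\mathcal{D}$ over the shadow universe $\mathcal{X} = \{0,1\}^m$; this is legitimate because both the measurement primitive $\mathcal{U}$ and the state $\rho$ are fixed, so the randomness in distinct snapshots is independent. Each adaptively chosen observable $O_i$ then induces a real-valued query $\hat\rho \mapsto \tr(O_i\hat\rho)$ whose population mean is exactly $\tr(O_i\rho)$ by the unbiasedness of classical shadows, and whose variance is at most $B \triangleq \max_i \norm{O_i - \tfrac{\tr(O_i)}{2^n}\mathbb{I}}_{\mathrm{shadow}}^2$ by Lemma~1 of~\cite{huang2020predicting}. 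After rescaling each query by $1/\sqrt{B}$ we are left with a sequence of adaptively chosen queries of variance at most $1$, which we must answer to accuracy $\epsilon/\sqrt{B}$.

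Second, since these queries control only the variance and not the range of the estimator, I would wrap them in the differentially private median-of-means estimator of Section~\ref{sec:comp-eff-mech}: split the $N$ shadows into $K$ batches of size $\Theta(B/\epsilon^2)$, form and truncate the batch means, and compute an approximate median of the truncated batch means in a differentially private fashion. Taking the batch size $\Theta(B/\epsilon^2)$ ensures (by Chebyshev) that each truncated batch mean lies within $\epsilon$ of $\tr(O_i\rho)$ with probability at least $2/3$, so any approximate median of the $K$ truncated batch means is an $O(\epsilon)$-accurate estimate. The only change from Theorem~\ref{thm:dp cs} is in the median step: instead of applying the exponential mechanism directly, I would reduce the approximate-median computation to answering a short, adaptively chosen sequence of counting queries over the dataset of batch means --- queries of the form ``what fraction of the truncated batch means lies below a threshold $t$?'', resolved by a binary search over $t$ --- precisely as in~\cite{feldman2017dealing,feldman2017generalization}. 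Across all $M$ observables this produces $\tilde{O}(M)$ such counting queries in total.

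Third, I would answer the resulting counting queries using the private multiplicative weights algorithm of~\cite{5670948}, whose sample complexity for answering $M'$ adaptively chosen linear queries over a universe of size $2^m$ scales as $\sqrt{m}$ rather than as $\sqrt{M'}$; composing this with the transfer theorem that differential privacy plus a fresh sample implies generalization (cf.~\cite{bassily2015algorithmic,DBLP:journals/corr/DworkFHPRR14}) is exactly the content of Theorem~5.4 in~\cite{feldman2017generalization}. Instantiating that theorem with data universe $\{0,1\}^m$ and variance parameter $B$, and undoing the $1/\sqrt{B}$ rescaling --- which converts the $1/\epsilon^3$ dependence into $B^{3/2}/\epsilon^3$ --- yields the stated bound $N = \tilde{O}\!\big(\sqrt{m\log(1/\delta)}\,\log(M/\epsilon\delta)\,B^{3/2}/\epsilon^3\big)$, with correctness and the accuracy/confidence guarantee inherited directly from the classical analysis.

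The main obstacle is not any single calculation but verifying that the classical shadow estimators genuinely meet the hypotheses of the classical machinery: that the snapshots are bona fide i.i.d.\ samples from a fixed distribution, that the only available control is on the variance (so the median-of-means wrapper is truly needed and the batch size must be $\Theta(B/\epsilon^2)$ to make the truncation bias negligible relative to $\epsilon/\sqrt{B}$), and --- crucially --- that the private-multiplicative-weights black box is applied to the derived counting queries over $\{0,1\}^m$ rather than to the observables themselves, which is what replaces the $\sqrt{M}$ of Theorem~\ref{thm:dp cs} by $\sqrt{m}$. The remaining work is the routine bookkeeping of tracking the $\sqrt{B}$ normalization and the polynomial-in-$1/\epsilon$ losses through this composition of reductions.
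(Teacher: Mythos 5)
Your proposal follows the paper's route essentially exactly: both modify Step~3 of the computationally-efficient mechanism (Theorem~\ref{thm:dp cs}) by reducing the differentially private approximate-median computation to a sequence of statistical/counting queries over the shadow universe $\{0,1\}^m$, answer those with private multiplicative weights, and invoke the generalization transfer theorem, all of which is packaged as Theorem~5.4 of~\cite{feldman2017generalization}. The paper's write-up is terser (it simply cites that theorem), while you spell out the rescaling-by-$\sqrt{B}$, the $\Theta(B/\epsilon^2)$ batch size, and the median-to-counting-queries binary search; these details are consistent with what the cited reference does, so there is no substantive difference in approach.
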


In particular, when the measurement primitive $\mathcal{U}$ consists of random $n$-qubit Clifford circuits and tensor products of random single-qubit Clifford circuits, then $m=\mathcal{O}(n^2)$ and $m=\mathcal{O}(n)$, respectively. Note that if $m=\mathcal{O}(n^2)$, this improves upon the shadow tomography protocol of \cite{badescu2020improved} by a factor of $\log M / \epsilon$ at the cost of the shadow norm term. We will use this algorithm as one of the two main components for our algorithm in Appendix~\ref{app: bfo}.

\subsection{Threshold search with classical shadows}
\label{sec:thresh}
In this section, we consider a problem called \emph{threshold search} and provide a new algorithm solving this problem.
Threshold search was first introduced by~\cite{aaronson2016complexity} where it was known as ``secret acceptor.''
In this problem, we receive a sequence of adaptively chosen queries $(O_1,\theta_1),...,(O_M,\theta_M)$ in an online fashion. Here, $O_i$ are the observables and $\theta_i\in [0,1]$ are \emph{thresholds}. Given a budget $\ell$, the task is to correctly verify whether $o_i=\trace(O_i\rho)$ exceeds the threshold $\theta_i$. Moreover, we only need to do this until the number of times the threshold is exceeded goes past the budget $\ell$ (though if this never happens then we must correctly provide answers for all $M$ queries). We introduce a new threshold search algorithm based on the classical shadows formalism, which will be used as another main component of the algorithm in Appendix~\ref{app: bfo}.

As in the previous sections, we will apply a classical adaptive data analysis algorithm (the sparse vector algorithm~\cite{10.1145/1536414.1536467,10.1561/0400000042}) to estimators on classical shadows, though with an additional preprocessing step.
For a given classical shadow dataset $\mathcal{S}(\rho;N)$, the algorithm first truncates each $o_i(\hat{\rho})$ to the range $[-T,T]$, where $T=O(\log(1/\epsilon))$. Then, it defines a statistical query estimator of $\trace(O_i\rho)$ as the sample mean of these truncated values and runs the sparse vector algorithm \cite{10.1145/1536414.1536467} using this estimator. We do the truncation step because the sparse vector algorithm works only for statistical (or, more generally, low-sensitivity) queries. Explicitly, the steps of the algorithm are as follows:
\begin{enumerate}
    \item \label{item:truncate} Gather classical shadows data $\mathcal{S}(\rho; N)=\{\hat{\rho}_1,...,\hat{\rho}_N\}$.
    Let
    \begin{equation}
        \label{eq:si-rho-hat}
        s_i(\hat{\rho})=\mathrm{sign}(\trace(O_i\hat{\rho}))\cdot \min\{|\trace(O_i\hat{\rho})|,T\}
    \end{equation}
    denote the truncation of $o_i(\hat{\rho})$ to the interval $[-T,T]$.
    \item \label{item:sparse} Run the sparse vector algorithm over the statistical queries
    \begin{equation}
        s_i(\mathcal{S})=\frac{1}{N}\sum_{j=1}^N s_i(\hat{\rho}_{(j)}).
    \end{equation}
\end{enumerate}

It is known that the sparse vector algorithm has the following guarantees, which can be seen by, e.g., applying the transfer theorem in \cite{bassily2015algorithmic} to Theorem 5.2 in \cite{feldman2017generalization}.

\begin{theorem} [\cite{10.1145/1536414.1536467,10.1561/0400000042}] \label{thm:sv}
    Given parameters $\epsilon,\delta\in [0,1]$ and a budget $\ell \in \mathbb{N}$, there is an algorithm using 
    \begin{equation}
        K=\mathcal{O}\left(\frac{T\sqrt{\ell}\log(M)\log(1/\epsilon\delta)^{3/2}}{\epsilon^2}\right)
    \end{equation}
    samples that responds to an adaptively chosen sequence of query / threshold pairs $(q_1,\theta_1),...,(q_M,\theta_M)$, where $q_i:\mathcal{X}\mapsto [-T,T]$. With probability at least $1-\delta$, the algorithm correctly does the following for all $i\in [M]$:
    \begin{enumerate}
        \item If $q_i(\mathcal{D})>\theta_i$, output "No".
        \item If $q_i(\mathcal{D})\leq \theta_i-\epsilon$, output "Yes".
        \item If the number of "No"s exceed $\ell$, terminate the algorithm.
    \end{enumerate}
\end{theorem}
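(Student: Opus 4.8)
The plan is to obtain Theorem~\ref{thm:sv} by running the standard \emph{sparse vector} mechanism (AboveThreshold / NumericSparse~\cite{10.1145/1536414.1536467,10.1561/0400000042}) on the empirical means of the queries and then converting its on-sample guarantee into a population guarantee via the transfer theorem of~\cite{bassily2015algorithmic}. Given a dataset $\mathcal{S}=(x_1,\dots,x_K)\sim\mathcal{D}^K$ and an adaptively chosen pair $(q_i,\theta_i)$ with $q_i:\mathcal{X}\to[-T,T]$, set $q_i(\mathcal{S})=\tfrac1K\sum_{j=1}^K q_i(x_j)$. Replacing one data point changes $q_i(\mathcal{S})$ by at most $2T/K$, so the maps $\mathcal{S}\mapsto q_i(\mathcal{S})$ are $(2T/K)$-sensitive; we feed these quantities, together with the thresholds (shifted by $\epsilon/2$ to absorb slack), into the sparse vector mechanism run with a privacy parameter $\varepsilon_{\mathrm{DP}}$ to be fixed, and output its ``Yes''/``No'' labels, halting after $\ell$ ``No''s.

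First I would invoke the two standard properties of sparse vector, in the form recorded as Theorem~5.2 of~\cite{feldman2017generalization}: (i) the mechanism is $(\varepsilon_{\mathrm{DP}},\delta_{\mathrm{DP}})$-differentially private, and (ii) with probability at least $1-\beta$ over its coins it is \emph{accurate on the sample} --- for every query it answers, the label correctly reports whether $q_i(\mathcal{S})$ exceeds $\theta_i$, up to an additive gap $\alpha=\mathcal{O}\!\big(\tfrac{T}{K}\cdot\tfrac{\sqrt{\ell}\,\log(M/\beta)}{\varepsilon_{\mathrm{DP}}}\big)$ in the comparison. The crucial point, and the reason the budget enters as $\sqrt{\ell}$ rather than $\ell$, is that the privacy loss of sparse vector is charged only at the (at most $\ell$) above-threshold events, so advanced composition over those $\ell$ events contributes only a $\sqrt{\ell}$ factor.

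Next I would apply the transfer theorem of~\cite{bassily2015algorithmic}: a mechanism answering an adaptively chosen sequence of low-sensitivity queries that is simultaneously $(\varepsilon_{\mathrm{DP}},\delta_{\mathrm{DP}})$-differentially private and accurate on the sample is also accurate on the population, i.e.\ with high probability $|q_i(\mathcal{S})-q_i(\mathcal{D})|\le\gamma$ for all $i$ at once, provided $\delta_{\mathrm{DP}}\lesssim\beta\varepsilon_{\mathrm{DP}}$ and $K=\tilde{\Omega}\!\big(\tfrac{T\log(M/\beta)}{\varepsilon_{\mathrm{DP}}\,\gamma}\big)$ (rescaling the $[0,1]$-valued statement of the transfer theorem by the range $T$). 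Chaining this with the on-sample accuracy shows the reported labels are correct with respect to $q_i(\mathcal{D})$ up to total slack $\mathcal{O}(\alpha+\gamma)$, which is absorbed into the $\epsilon$-gap in the theorem statement. It then remains to choose parameters: take $\gamma=\Theta(\epsilon)$, $\beta=\Theta(\delta)$, $\delta_{\mathrm{DP}}$ correspondingly small, and balance $\varepsilon_{\mathrm{DP}}=\Theta(\epsilon/\sqrt{\log(1/\epsilon\delta)})$ so that $\alpha=\Theta(\epsilon)$ as well; substituting into the dominant requirement $K=\mathcal{O}\!\big(\tfrac{T\sqrt{\ell}\,\log(M/\beta)}{\epsilon\,\varepsilon_{\mathrm{DP}}}\big)$ yields $K=\mathcal{O}\!\big(\tfrac{T\sqrt{\ell}\,\log M\,\log(1/\epsilon\delta)^{3/2}}{\epsilon^2}\big)$, as claimed.

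I expect the main obstacle to be bookkeeping rather than anything conceptual: carefully tracking the rescalings between the $[-T,T]$-valued queries here, the $(2T/K)$-sensitive empirical means actually fed to sparse vector, and the $[0,1]$-normalized conventions in the cited sparse-vector and transfer-theorem statements; and, most importantly, confirming that the $\ell$-dependence enters only through the advanced-composition $\sqrt{\ell}$ factor, which hinges on the privacy analysis charging loss only at above-threshold events rather than at all $M$ queries. A secondary check is that both the privacy and the on-sample accuracy of sparse vector hold against a fully adaptive analyst, so that the transfer theorem --- which is tailored to exactly this adaptive setting --- applies as a black box.
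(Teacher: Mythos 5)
Your proposal is essentially the paper's own argument: the paper does not re-derive this theorem but states that it "can be seen by, e.g., applying the transfer theorem in \cite{bassily2015algorithmic} to Theorem 5.2 in \cite{feldman2017generalization}," which is precisely the sparse-vector-plus-transfer-theorem recipe you lay out, including the key observation that the $\sqrt{\ell}$ factor comes from composing privacy loss only over the $\ell$ above-threshold events. Your caveat about bookkeeping the $T$-rescalings between the $[-T,T]$-valued queries and the $[0,1]$-normalized conventions of the cited theorems is well placed, but the approach itself is the one the paper uses.
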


Using this, we can obtain the following guarantees for our threshold search algorithm described above.

\begin{theorem}[Threshold search] \label{thm:nts}
    Given parameters $\epsilon,\delta\in [0,1]$ and a budget parameter $\ell \in \mathbb{N}$, there is an algorithm using 
    \begin{equation}
        N=\mathcal{O}\left(\frac{\sqrt{B\ell}\log(B)\log(M)\log(1/\epsilon\delta)^{5/2}}{\epsilon^2}\right) \quad \left[B=\max_{i=1,...,M}\tr(O_i^2)\right]
    \end{equation}
    samples that responds to an adaptively chosen sequence of observable/threshold pairs $(O_1,\theta_1),\dots,(O_M,\theta_M)$. With probability at least $1-\delta$, the algorithm correctly does the following for all $i\in [M]$:
    \begin{enumerate}
        \item If $\trace(O_i\rho)>\theta_i$, always output "No".
        \item If $\trace(O_i\rho)\leq \theta_i-\epsilon$, always output "Yes".
        \item If the number of "No"s exceed $\ell$, terminate the algorithm.
    \end{enumerate}
\end{theorem}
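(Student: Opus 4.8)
The plan is to reduce Theorem~\ref{thm:nts} to the sparse vector algorithm of Theorem~\ref{thm:sv} run on a suitably truncated classical-shadow estimator. First I would instantiate the classical shadows from uniform POVM measurements described in Appendix~\ref{sec:uniform-povm}, taking each of the $N$ snapshots $\hat\rho_1,\dots,\hat\rho_N$ to be a single measurement of $\rho$ in the uniform POVM passed through the inverse channel of Equation~\eqref{eq: classical shadow}, so that $\mathbb{E}[\hat\rho_j]=\rho$ and the single-shot estimator $o_i(\hat\rho)=\tr(O_i\hat\rho)$ obeys the subexponential moment bounds of Lemma~\ref{lem: moment bound} (equivalently, the Bernstein-type tail behind Proposition~\ref{prop: concentration}): $\Pr[|o_i(\hat\rho)-\tr(O_i\rho)|\ge\tau]\le 2e^{-c\tau/\sqrt{B}}$ for all $\tau\gtrsim\sqrt{B}$, where $B=\max_i\tr(O_i^2)$. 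It is precisely here that the uniform-POVM variant is needed rather than the Clifford/Pauli version, since the latter yields only a second-moment (variance) bound, which is too weak to control the truncation bias below.

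Next I would set $T=\Theta(\sqrt{B}\log(B/\epsilon))$ and define the truncated estimator $s_i(\hat\rho)=\mathrm{sign}(o_i(\hat\rho))\cdot\min\{|o_i(\hat\rho)|,T\}$, together with the bounded statistical query $s_i(\mathcal S)=\tfrac{1}{N}\sum_j s_i(\hat\rho_j)$, whose mean is $s_i(\mathcal D)=\mathbb{E}_{\hat\rho}[s_i(\hat\rho)]$. The key technical step is the truncation-bias bound: since $\|O_i\|_\infty\le1$ forces $\mathbb{E}[o_i(\hat\rho)]=\tr(O_i\rho)\in[-1,1]$, integrating the subexponential tail above gives
\begin{equation}
  \bigl|\mathbb{E}[s_i(\hat\rho)]-\tr(O_i\rho)\bigr|\le\mathbb{E}\!\left[(|o_i(\hat\rho)|-T)\,\mathbbm{1}(|o_i(\hat\rho)|>T)\right]\le\mathcal{O}\!\left(\sqrt{B}\,e^{-cT/\sqrt{B}}\right)\le\frac{\epsilon}{4}
\end{equation}
for the stated choice of $T$. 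This is a worst-case estimate over all $O$ with $\tr(O^2)\le B$ and $\|O\|_\infty\le1$, so it is unaffected by the observables being chosen adaptively; and since each $s_i$ takes values in the fixed interval $[-T,T]$, the classical-shadow data fits the statistical query model exactly as in the proof of Proposition~\ref{prop:local-upper}, with the truncation simply baking low sensitivity into $s_i$.

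Finally I would run the sparse vector algorithm of Theorem~\ref{thm:sv} on the query/threshold pairs $(s_i,\ \theta_i-\tfrac{\epsilon}{4})$ with accuracy parameter $\epsilon/2$ and budget $\ell$. Using $|s_i(\mathcal D)-\tr(O_i\rho)|\le\epsilon/4$: if $\tr(O_i\rho)>\theta_i$ then $s_i(\mathcal D)>\theta_i-\tfrac{\epsilon}{4}$, triggering "No"; if $\tr(O_i\rho)\le\theta_i-\epsilon$ then $s_i(\mathcal D)\le\theta_i-\tfrac{3\epsilon}{4}=(\theta_i-\tfrac{\epsilon}{4})-\tfrac{\epsilon}{2}$, triggering "Yes"; and the "No"-budget termination is inherited directly. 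Theorem~\ref{thm:sv} uses $\mathcal{O}(T\sqrt{\ell}\log(M)\log(1/\epsilon\delta)^{3/2}/\epsilon^2)$ samples and fails with probability at most $\delta$, and since the bias bound is deterministic this is the only failure mode. Substituting $T=\Theta(\sqrt{B}\log(B/\epsilon))$ and absorbing $\log(B/\epsilon)=\mathcal{O}(\log(B)\log(1/\epsilon))$ yields $N=\mathcal{O}\bigl(\sqrt{B\ell}\,\log(B)\log(M)\log(1/\epsilon\delta)^{5/2}/\epsilon^2\bigr)$, as claimed. I expect the main obstacle to be the second paragraph: keeping the truncation level $T$ only $\mathrm{polylog}(B/\epsilon)$ times $\sqrt{B}$ — which is what produces the $\sqrt{B}$ rather than $B$ in the final bound — genuinely requires the exponential tail of the uniform-POVM shadows, so the argument must go through Lemma~\ref{lem: moment bound}/Proposition~\ref{prop: concentration} rather than a naive Chebyshev bound.
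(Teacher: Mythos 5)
Your proposal is correct and follows essentially the same route as the paper's proof: take classical shadows from uniform POVM measurements (needed for the Bernstein-type tail of Proposition~\ref{prop: concentration}), truncate each single-shot estimator to $[-T,T]$ with $T=\tilde\Theta(\sqrt{B})$, show the resulting truncation bias is at most a constant fraction of $\epsilon$, and feed the truncated statistical queries into the sparse vector algorithm of Theorem~\ref{thm:sv} with shifted thresholds. The only differences are cosmetic: you split the error budget as $\epsilon/4+\epsilon/2$ whereas the paper uses $\epsilon/3+\epsilon/3$, and the paper derives the bias bound via an explicit chain of triangle/Cauchy--Schwarz steps culminating in Lemma~\ref{lem: moment bound} and Proposition~\ref{prop: concentration} rather than the more telegraphic "integrate the tail" phrasing; both yield the same $T=\Theta(\sqrt{B}\log(B/\epsilon))$ and the claimed sample complexity. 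You also correctly identify the crux — that the uniform-POVM shadows' exponential tail (not a Chebyshev bound) is what keeps $T$ at $\sqrt{B}$ up to logs.
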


Previously, the quantum threshold search algorithm~\cite{badescu2020improved} was used to solve this problem, with the best-known sample complexity as $\mathcal{O}(\ell\log^2 M/\epsilon^2)$.
Our algorithm improves the sample complexity of quantum threshold search by a factor of $\sqrt{\ell}\log M$ in exchange for a term that scales with the bound on the Frobenius norm.
In order to obtain this guarantee, we require that the classical shadows data $\mathcal{S}(\rho; N)$ gathered is obtained via the uniform POVM procedure described in Appendix~\ref{sec:uniform-povm} rather than the canonical classical shadow procedure~\cite{huang2020predicting}.
This is due to its stronger concentration properties, which allow us to obtain a better sample complexity guarantee.
If one were to use canonical classical shadows~\cite{huang2020predicting}, a similar proof can obtain a sample complexity guarantee of $N = \tilde{\mathcal{O}}(B\sqrt{\ell}\log(M)/\epsilon^3)$.
Importantly, using classical shadows with uniform POVM allows us to reduce the scaling with $1/\epsilon$ from $1/\epsilon^3$ to $1/\epsilon^2$.  

\begin{proof}[Proof of Theorem~\ref{thm:nts}]
We first show that the bias of each $s(\hat{\rho}_{(j)})$ defined in Equation~\eqref{eq:si-rho-hat} is small (which implies the bias of $s_i(\mathcal{S})$ is small).
We write $o = \tr(O \hat{\rho})$ and $s = s(\hat{\rho})$ for brevity.
Letting $\mu=\mathbb{E}[o] \in [-1,1]$, we have that the bias $|\mathbb{E}[o]-\mathbb{E}[s]|$ satisfies the following.
\begin{align}
    &|\mathbb{E}[o]-\mathbb{E}[s]|\\
    &=|\mathbb{E}[(o-T)\mathbb{I}(o > T)-(o+T)\mathbb{I}(o < -T)]|\\
    &\leq |\mathbb{E}[(o-\mu-(T-1))\mathbb{I}(o-\mu > T-1)-(o-\mu+T-1)\mathbb{I}(o-\mu < -T+1)]|\\
    &\leq |\mathbb{E}[(o-\mu)\mathbb{I}(o-\mu > T-1)]|+|\mathbb{E}[(o-\mu)\mathbb{I}(o-\mu<-T+1)]|+(T-1)\Pr[|o-\mu|>T-1] \label{eq:26}\\
    &\leq \sqrt{\mathbb{E}(o-\mu)^2\Pr[o-\mu>T-1]} + \sqrt{\mathbb{E}(o-\mu^2)\Pr[o-\mu<-T+1]}+(T-1)\Pr[|o-\mu|>T-1] \label{eq:27}\\
    &\leq \sqrt{8B} \cdot \sqrt{2\Pr[|o-\mu|>T-1]}+(T-1)\Pr[|o-\mu|>T-1] \label{eq:28}\\
    &\leq 4\sqrt{B}\cdot \sqrt{2\exp(-\frac{(T-1)^2}{16B+4\sqrt{B}(T-1)})}+(T-1)\cdot 2\exp(-\frac{(T-1)^2}{16B+4\sqrt{B}(T-1)}) \label{eq:29}
\end{align}
where Equation\eqref{eq:26} follows from triangle inequality, Equation~\eqref{eq:27} from the Cauchy-Schwarz inequality, Equation~\eqref{eq:28} from the Cauchy-Schwarz inequality and the second moment bound in Lemma~\ref{lem: moment bound}, and Equation~\eqref{eq:29} from Proposition~\ref{prop: concentration}.
Setting $T=1+\sqrt{CB}$, it can be verified that if $C\geq 80^2(\log B+4)^2$, then 
\begin{equation}
    T-1 \leq \exp(\frac{(T-1)^2}{32B+8\sqrt{B}(T-1)}).
\end{equation}
Moreover, if $C\geq (40^2\log(12/\epsilon))^2$, then
\begin{equation} \label{eq: exp bound}
    2\exp(\frac{(T-1)^2}{32B+8\sqrt{B}(T-1)}) \leq \epsilon/6.
\end{equation}
Combining these two, we obtain that for $\epsilon < 1$, setting $T=1+40\sqrt{B}\log(48/\epsilon)(\log B + 4)$ yields
\begin{equation}
    (T-1)\cdot 2\exp(-\frac{(T-1)^2}{16B+4\sqrt{B}(T-1)}) \leq \epsilon/6.
\end{equation}
A similar computation shows that if $C \geq 40^2\log(48B/\epsilon)^2$,
\begin{equation}
    4\sqrt{B}\cdot \sqrt{2\exp(-\frac{(T-1)^2}{16B+4\sqrt{B}(T-1)})} \leq \epsilon/6.
\end{equation}
Putting everything together, choosing $T=1+40\sqrt{B}\log(48/\epsilon)(\log B + 4)$ yields
\begin{equation}
    |\mathbb{E}[o]-\mathbb{E}[s]| \leq \epsilon/3.
\end{equation}
This tells us that the truncation of the expectation value to lie in the interval $[-T, T]$ as in Step~\ref{item:truncate} of the algorithm is still close to the true expectation value $\tr(O_i \rho)$.

In Step~\ref{item:sparse} of the algorithm, we run the sparse vector algorithm from Theorem~\ref{thm:sv} over the dataset $\mathcal{\Tilde{S}}$ and queries $s_i$.
Notice that Theorem~\ref{thm:sv} can be seen as the classical analogue of quantum threshold search, where in quantum threshold search $q_i(\mathcal{D})=\trace(O_i\rho)$. For our algorithm, we instead define $q_i(\mathcal{D})$ to be the truncated estimator $s_i:\mathcal{X}\mapsto [-T,T]$, where $\mathcal{X}$ is the space of classical shadows. Moreover, for a given query $(O_i,\theta_i)$, we instantiate the sparse vector algorithm with parameters $\epsilon/3,\delta$ and pass in query/threshold pairs $(s_i,\theta_i-\epsilon/3)$. Then, the guarantees of the sparse vector algorithm imply those of Theorem \ref{thm:nts}:
\begin{enumerate}
    \item Suppose $o_i > \theta_i$. Since $o_i = o_i - s_i + s_i \leq \epsilon/3 + s_i$, then $s_i > \theta_i - \epsilon/3$ $\Longrightarrow$ output "No".
    \item Suppose $o_i \leq \theta_i-\epsilon$. Since $o_i \geq s_i - \epsilon/3$, then $s_i \leq \theta_i - 2\epsilon/3$ $\Longrightarrow$ output "Yes".
\end{enumerate}
The final sample complexity follows from Theorem \ref{thm:sv}.
\end{proof}

\subsubsection{Extension to verifying epsilon closeness}
We include an extension of the threshold search algorithm to the problem of verifying whether $\theta_i$ is close to $\trace(O_i\rho)$. This variant will be used for our problem of predicting expectation values of observables in Appendix~\ref{app: bfo}.

\begin{corollary} \label{lem:nts}
    Consider the same inputs as Theorem \ref{thm:nts}. There is an algorithm using 
    \begin{equation}
        N=\mathcal{O}\left(\frac{B\sqrt{\ell}\log(B)\log(M)\log(1/\epsilon\delta)^{5/2}}{\epsilon^2}\right) \quad \left[B=\max_{i=1,...,M}\tr(O_i^2)\right]
    \end{equation}
    samples and with probability at least $1-\delta$ correctly implements the following for up to $\ell$ mistakes:
    \begin{enumerate}
    \item If $|\trace(O_i\rho)-\theta_i|>\epsilon$, always declare a mistake.
    \item If $|\trace(O_i\rho)-\theta_i|\leq \frac{3}{4}\epsilon$, always pass.
    \item If a mistake was declared, supply a value $\mu'_i$ satisfying $|\trace(O_i\rho)-\mu'_i|\leq \epsilon/4$.
\end{enumerate}
\end{corollary}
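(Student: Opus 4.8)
The plan is to reduce $\epsilon$-closeness verification to two one-sided threshold searches run in tandem, together with an auxiliary classical-shadow estimator that is consulted only on the rounds where a mistake is declared. Concretely, on input $(O_i,\theta_i)$ I would feed the threshold-search algorithm of Theorem~\ref{thm:nts} (instantiated with accuracy parameter $\epsilon/4$ and budget $\ell$) the two query/threshold pairs $(O_i,\ \theta_i+\epsilon)$ and $(-O_i,\ -\theta_i+\epsilon)$, interleaved into a single stream of length $2M$; since $(-O_i)^2=O_i^2$, the Frobenius-norm bound $B$ is unaffected. On round $i$ I ask the first pair; if it answers ``No'' I declare a mistake (and skip the second pair for that round); otherwise I ask the second pair, declaring a mistake iff it answers ``No'', and declaring ``pass'' only if both answered ``Yes''.

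Correctness of (1) and (2) would then follow directly from the threshold-search guarantees: if $\trace(O_i\rho)>\theta_i+\epsilon$ the first query must answer ``No'' (a mistake), while if $\trace(O_i\rho)<\theta_i-\epsilon$ then $\trace(-O_i\rho)>-\theta_i+\epsilon$ forces the second query to ``No'' (a mistake), so (1) holds; and if $|\trace(O_i\rho)-\theta_i|\le\tfrac34\epsilon$ then $\trace(O_i\rho)\le\theta_i+\tfrac34\epsilon$ forces the first query to ``Yes'' and $\trace(-O_i\rho)\le-\theta_i+\tfrac34\epsilon$ forces the second to ``Yes'', giving a ``pass'', so (2) holds. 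The key accounting point is that every declared mistake consumes exactly one ``No'' in the combined stream, so a budget of $\ell$ on the single threshold-search instance caps the number of mistakes at $\ell$, as required; this step costs $\tilde{\mathcal{O}}\!\left(\sqrt{B\ell}\,\log B\,\log M\,\log(1/\epsilon\delta)^{5/2}/\epsilon^2\right)$ samples by Theorem~\ref{thm:nts} (with $2M$ absorbed into the logarithm).

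For part (3) I would maintain an independent classical-shadow dataset and an independent instance of the (computationally efficient) adaptive classical-shadows mechanism of Theorem~\ref{thm:dp cs}; for observables with $\trace(O_i^2)\le B$ the relevant shadow-norm-squared factor in that theorem is $\mathcal{O}(B)$. Whenever a mistake is declared on round $i$, I query this mechanism for an estimate $\mu'_i$ of $\trace(O_i\rho)$ to accuracy $\epsilon/4$ and output it; on ``pass'' rounds I reveal nothing from it. The crucial observation --- and the reason the final bound carries $\sqrt{\ell}$ rather than $\sqrt{M}$ --- is that this estimator only ever discloses an answer on the at most $\ell$ mistake rounds, so it behaves as an adaptive-data-analysis mechanism answering a stream of length $\ell$; applying the argument of Theorem~\ref{thm:dp cs} with $M$ replaced by $\ell$ and shadow-norm-squared replaced by $\mathcal{O}(B)$ gives a sample cost of $\tilde{\mathcal{O}}\!\left(B\sqrt{\ell}\,\log(\ell/\epsilon\delta)/\epsilon^2\right)$ at failure probability $\delta/2$. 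Running both components with failure probability $\delta/2$ and taking a union bound yields the stated guarantee, and the total sample complexity is the sum of the two budgets, which is $\tilde{\mathcal{O}}\!\left(B\sqrt{\ell}\,\log B\,\log M\,\log(1/\epsilon\delta)^{5/2}/\epsilon^2\right)$ since the $B\sqrt{\ell}$ term dominates $\sqrt{B\ell}$ for $B\ge1$.

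I expect the main obstacle to be making this ``length-$\ell$'' claim rigorous: one must verify that coupling the auxiliary estimator to the threshold search --- consulting it exactly on the rounds that are flagged as mistakes --- does not secretly expose its dataset on ``pass'' rounds, i.e.\ that the event triggering a disclosure is a function only of information already revealed to the analyst (the threshold-search outputs and the previously disclosed $\mu'$-values), so that on non-mistake rounds the estimator's state is, from the analyst's viewpoint, untouched. This is precisely the ``pay only for disclosed answers'' phenomenon underlying the sparse-vector and threshold-search algorithms, but it must be re-established for the composed interaction; once this is in place, the remainder is routine bookkeeping of constants and logarithmic factors.
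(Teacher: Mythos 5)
Your proposal is essentially correct and follows the same two-component structure as the paper's proof: reduce $\epsilon$-closeness to a pair of one-sided threshold-search queries fed into a single instance of Theorem~\ref{thm:nts} with accuracy $\epsilon/4$ and budget $\ell$ on a stream of length $2M$, and separately invoke the adaptive classical-shadows mechanism of Theorem~\ref{thm:dp cs} (with accuracy $\epsilon/4$) to supply $\mu'_i$ on the at most $\ell$ mistake rounds. The one genuine difference is your choice of the second simulated pair: you use $(-O_i,\ -\theta_i+\epsilon)$, whereas the paper uses $(\mathbb{I}-O_i,\ 1-\theta_i+\epsilon)$. Your choice is actually the cleaner one, because $\tr((-O_i)^2)=\tr(O_i^2)\leq B$ trivially, so the Frobenius-norm hypothesis of Theorem~\ref{thm:nts} is preserved verbatim, whereas $\tr((\mathbb{I}-O_i)^2)=2^n-2\tr(O_i)+\tr(O_i^2)$ can be exponentially large; the paper's choice is only saved because the truncation argument inside Theorem~\ref{thm:nts} depends on the variance of $\tr(O\hat\rho)$, which is shift-invariant and hence the same for $O$ and $\mathbb{I}-O$, but this requires re-opening that proof rather than applying the theorem as stated. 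Your concern at the end about the ``length-$\ell$'' accounting for the auxiliary estimator is a fair one; the paper handles it the same way you propose (run the estimator on an independent batch and only disclose on mistake rounds, so the effective query stream it answers has length $\ell$), without further elaboration, so this is a shared informality rather than a gap in your argument relative to the paper's.
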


\begin{proof}
    The proof for implementing the verification step (i.e., conditions 1 and 2) are the same as in \cite{badescu2020improved}. For each query $(O_i,\theta_i)$, we can simply give      ``simulated inputs'' $(O_i,\theta_i+\epsilon), (\mathbb{I}-O_i,1-\theta_i+\epsilon)$ to the algorithm from Theorem \ref{thm:nts} with parameters $\epsilon/4$, $\delta/2$, $2M$, and $\ell$. With probability at least $1-\delta/2$, we get the correct outputs for the threshold search algorithm. We let the algorithm operate as follows.
\begin{enumerate}
    \item If threshold search declares a mistake on either one of the queries, i.e. 
    \begin{equation}
        \trace(O_i\rho) > \theta_i + \epsilon - \frac{1}{4}\epsilon \quad \mathrm{or} \quad \trace(\mathbb{I}-O_i\rho) > 1-\theta_i + \epsilon - \frac{1}{4}\epsilon,
    \end{equation}
    then the algorithm will correctly declare a mistake for query $(O_i,\theta_i)$.
    \item If threshold search passed on both queries, i.e. 
    \begin{equation}
        \trace(O_i\rho) \leq \theta_i + \epsilon \quad \mathrm{or} \quad \trace(\mathbb{I}-O_i\rho) \leq 1-\theta_i + \epsilon,
    \end{equation}
    then the algorithm will correctly output ``pass.''
\end{enumerate}
It can be verified that these give the correct outputs in conditions 1 and 2. The sample complexity for implementing this is given in Theorem \ref{thm:nts}.

In order to provide the correction $\mu'_i$ when mistakes are made, we can simply run the Theorem~\ref{thm:dp cs} algorithm with accuracy parameters $\epsilon/4, \delta$ in parallel whenever a mistake is made (can be run on a separate batch or on all the samples). This requires
\begin{equation}
    N=\mathcal{O}\left(\frac{B\sqrt{\ell\log(1/\delta)}\log(\ell/\delta\epsilon)}{\epsilon^2}\right) \quad [B=\max_{i=1,...,M} \tr(O_i^2)]
\end{equation}
samples. Combining this with the sample complexity in Theorem \ref{thm:nts} yields the final result.
\end{proof}

\section{Bounded-Frobenius-norm observables} \label{app: bfo}
We consider predicting expectation values of $M$ adaptively chosen observables $O_i$ with bounded-Frobenius-norm, i.e., $\max_i \trace(O_i^2) \leq B$. 
We also consider the subcase of predicting low-rank observables.
For this case, we show that we can design an algorithm for the mechanism that achieves a sample complexity upper bound that scales as $\mathcal{O}(\log(M))$. To help simplify the presentation, we first prove a sample-efficient algorithm for predicting adaptively chosen single-rank observables. Then we extend the ideas to answering adaptively chosen bounded-Frobenius-norm observables and low-rank observables.

\subsection{Single Rank Observables} \label{sec: sro}

We can obtain the following rigorous sample complexity guarantee for predicting adaptively chosen single-rank observables.

\begin{theorem}[Single rank observables] \label{thm:single rank}
    There exists an algorithm that can answer $M$ adaptively chosen single rank projectors $\{\ketbra{\psi^i}\}$ to $\epsilon$ error with probability at least $1-\delta$ using 
    \begin{equation}
        N = \mathcal{O}\left(\frac{\log(M)\log(1/\epsilon\delta)^{3/2}\log(1/\epsilon)}{\epsilon^3}\right)
    \end{equation}
    samples of $\rho$.
\end{theorem}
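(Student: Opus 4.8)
The plan is to instantiate the student–teacher architecture outlined in Section~\ref{mainsec: proof ideas}, using the adaptive classical shadow tomography of Theorem~\ref{thm: tomography protocol} for the student and the $\epsilon$-closeness verification of Corollary~\ref{lem:nts} for the teacher. The mechanism maintains a subspace $S$ (initially trivial) together with a low-dimensional density matrix $\hat\rho$ supported on $S$, learned by running Theorem~\ref{thm: tomography protocol} in a Hilbert space of dimension $D=\dim S$. On receiving a query $\ket{\psi^i}$, the student forms the projection $\ket{\psi^i_S}=P_S\ket{\psi^i}$ and predicts $\hat o^i=\bra{\psi^i_S}\hat\rho\ket{\psi^i_S}$; it then asks the teacher, running Corollary~\ref{lem:nts} with threshold $\theta_i=\hat o^i$ on the observable $\ketbra{\psi^i}$, whether $\hat o^i$ is $\epsilon$-close to $\bra{\psi^i}\rho\ket{\psi^i}$. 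If the teacher passes, the mechanism outputs $\hat o^i$; if the teacher declares a mistake, the mechanism outputs the correction $\mu'_i$ (which Corollary~\ref{lem:nts} guarantees is within $\epsilon/4$ of $\bra{\psi^i}\rho\ket{\psi^i}$) and the student sets $S\leftarrow\mathrm{span}(S\cup\{\ket{\psi^i}\})$, re-running the low-dimensional shadow tomography on the enlarged subspace so that $\bra{\psi^i_S}\hat\rho\ket{\psi^i_S}=\bra{\psi^i_S}\rho\ket{\psi^i_S}$ as in observation~(2).

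First I would argue correctness conditional on the two subroutines succeeding: on a "pass" round the output is $\epsilon$-close by the teacher's guarantee, and on a "mistake" round the output is the correction, within $\epsilon/4<\epsilon$. So everything reduces to bounding the number of mistakes, which fixes the budget $\ell$ passed to the teacher and the dimension $D$ used by the student. This is the mistake-bound lemma: take the potential $\Phi=\tr(P_S\rho)\in[0,1]$ and show each mistake increases it by $\Omega(\epsilon^2)$. Writing $\ket{\psi^i}=\ket{\psi^i_S}+\ket{\psi^i_\perp}$ with $\ket{\psi^i_\perp}=(\mathbb{I}-P_S)\ket{\psi^i}$, a Cauchy--Schwarz estimate on the cross term gives
\begin{equation}
    \left|\bra{\psi^i}\rho\ket{\psi^i}-\bra{\psi^i_S}\rho\ket{\psi^i_S}\right|\leq \bra{\psi^i_\perp}\rho\ket{\psi^i_\perp}+2\sqrt{\bra{\psi^i_\perp}\rho\ket{\psi^i_\perp}}\leq 3\sqrt{\bra{\psi^i_\perp}\rho\ket{\psi^i_\perp}},
\end{equation}
so a mistake (where the left side is $\Omega(\epsilon)$, after absorbing the teacher's $\tfrac34\epsilon$/$\epsilon$ tolerance window and the small tomographic error of $\hat\rho$ into constants) forces $\bra{\psi^i_\perp}\rho\ket{\psi^i_\perp}=\Omega(\epsilon^2)$. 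Since the enlarged projector satisfies $P_{S'}\geq P_S+\ketbra{\psi^i_\perp}/\lVert\psi^i_\perp\rVert^2$ and $\lVert\psi^i_\perp\rVert\leq 1$, the potential jumps by at least $\bra{\psi^i_\perp}\rho\ket{\psi^i_\perp}=\Omega(\epsilon^2)$; hence there are at most $\ell=\mathcal{O}(1/\epsilon^2)$ mistakes and $\dim S=\mathcal{O}(1/\epsilon^2)$ throughout.

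With $\ell=\mathcal{O}(1/\epsilon^2)$ in hand, the accounting is routine. Each $O_i=\ketbra{\psi^i}$ is a rank-one projector, so $\tr(O_i^2)=1$ and Corollary~\ref{lem:nts} implements the teacher with $\tilde{\mathcal{O}}(\sqrt{\ell}\,\log M/\epsilon^2)=\tilde{\mathcal{O}}(\log M/\epsilon^3)$ samples. The student runs Theorem~\ref{thm: tomography protocol} in dimension $D=\mathcal{O}(1/\epsilon^2)$, whose classical shadows require only $m=\mathcal{O}(\mathrm{polylog}(1/\epsilon))$ bits and whose rank-one observables have constant shadow norm, costing $\tilde{\mathcal{O}}(\sqrt{m}\,\log M/\epsilon^3)=\tilde{\mathcal{O}}(\log(1/\epsilon)\,\log M/\epsilon^3)$ samples; since the classical-shadow data supports adaptive queries, re-running on successively enlarged $S$ does not require fresh copies beyond this. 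Summing the two contributions and tracking the $\delta$-dependence through both subroutines, together with a union bound over their failure events and a constant rescaling of $\delta$, yields $N=\mathcal{O}(\log(M)\log(1/\epsilon\delta)^{3/2}\log(1/\epsilon)/\epsilon^3)$.

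The main obstacle is the combination of (i) the mistake-bound step above with (ii) the claim that the student's shadow tomography genuinely operates in dimension $\mathcal{O}(1/\epsilon^2)$ rather than $2^n$: one has to verify, via observations~(1)--(2), that learning only the low-dimensional $\hat\rho$ suffices for accuracy on all "pass" rounds and is consistent with the potential argument, and that the iterative dimension expansion can reuse a single classical-shadow dataset. A secondary subtlety is the mild circularity between the tomographic error (which depends on $\ell$) and the mistake count (which feeds $\ell$), which is resolved because the mistake bound is robust to constant-factor perturbations of the accuracy parameters; and finally one must confirm that the extension to the $\epsilon$-closeness verification in Corollary~\ref{lem:nts} indeed supplies corrections of the stated accuracy while staying within the budget.
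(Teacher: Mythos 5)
Your proposal matches the paper's proof in all essentials: the same student--teacher architecture (Algorithm~\ref{alg:sr}), with the student implemented via the adaptive shadow tomography of Theorem~\ref{thm: tomography protocol} on a low-dimensional padded state $\hat\rho$, the teacher via the $\epsilon$-closeness variant of threshold search (Corollary~\ref{lem:nts}), a mistake bound of $\mathcal{O}(1/\epsilon^2)$ feeding both the threshold-search budget $\ell$ and the subspace dimension $\dim S$, and the same $N=N_{ST}+N_{TS}$ accounting. The reduction to a hypothetical low-dimensional state, the use of a rank-one shadow norm, and the $m=\mathcal{O}(\mathrm{polylog}(1/\epsilon))$ bit count all line up with Lemma~\ref{lem:st}.

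The one place you take a (mildly) different route is the mistake bound. The paper proves it via Lemma~\ref{lem:kbound} (at most $1/\epsilon'$ orthonormal vectors can have $\rho$-weight exceeding $\epsilon'$) together with the Cauchy--Schwarz estimate in Lemma~\ref{lem:mistake bound}, which isolates the \emph{normalized} orthogonal direction $\ket{\phi_\perp}$ and shows $\expval{\rho}{\phi_\perp}=\Omega(\epsilon^2)$. You instead track the potential $\Phi=\tr(P_S\rho)\in[0,1]$ and show each mistake increases it by $\bra{\psi_\perp}\rho\ket{\psi_\perp}=\Omega(\epsilon^2)$, using $P_{S'}-P_S=\ketbra{\psi_\perp}/\lVert\psi_\perp\rVert^2$ and $\lVert\psi_\perp\rVert\leq1$. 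These are the same counting argument seen from two angles (the paper's Lemma~\ref{lem:kbound} is exactly the statement that $\Phi$ cannot exceed $1$), but the potential formulation is a touch cleaner: it dispenses with a separate lemma and avoids carrying the $|\alpha_\perp|$ dependence through. Your Cauchy--Schwarz chain
$\lvert\bra{\psi}\rho\ket{\psi}-\bra{\psi_S}\rho\ket{\psi_S}\rvert\leq\bra{\psi_\perp}\rho\ket{\psi_\perp}+2\sqrt{\bra{\psi_\perp}\rho\ket{\psi_\perp}}\leq3\sqrt{\bra{\psi_\perp}\rho\ket{\psi_\perp}}$
is correct (using $\bra{\psi_S}\rho\ket{\psi_S}\leq1$ for the cross term and $\bra{\psi_\perp}\rho\ket{\psi_\perp}\leq1$ for the final step) and yields the same $\Omega(\epsilon^2)$ lower bound as the paper's sharper constant $2\sqrt{\bra{\psi_\perp}\rho\ket{\psi_\perp}}$. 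In short, your argument is valid and slightly more streamlined in the mistake-bound step, but it is not a structurally different proof.
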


In particular, we designed the algorithm in Algorithm~\ref{alg:sr} for achieving this sample complexity upper bound.
In Algorithm~\ref{alg:sr}, the tomography protocol $\mathcal{P}$ can be any protocol that implements shadow tomography, for example, the original shadow tomography protocol~\cite{aaronson2018shadow} or the algorithms from Appendix~\ref{sec:adapt cs}.
Similarly, the threshold search (TS) protocol can be any protocol which solves the threshold search problem (reviewed in Appendix~\ref{sec:thresh}), e.g., the quantum threshold search algorithm~\cite{badescu2020improved} or the algorithms from Appendix~\ref{sec:thresh}.
Depending on which algorithm is used to implement the tomography protocol or threshold search, we can obtain different sample complexity guarantees.
By using the new algorithms in Appendix~\ref{app: acs}, we obtain the sample complexity in Theorem~\ref{thm:single rank}.

\begin{algorithm}
\caption{Mechanism for Predicting Single-Rank Observables}\label{alg:sr}
Set $S\leftarrow\{\}$ and initialize the tomography protocol $\mathcal{P}$.\\
Start an instance of threshold search (TS) with parameters $\epsilon, \delta$.\\
\For{$i=1$ to $M$}{
  Receive rank one observable $O=\ketbra{\psi^i} = U_i\ket{0^n}$ and construct estimate $\hat{o}_i \approx \bra{\psi^i_S}\rho\ket{\psi^i_S}$ using $\mathcal{P}$, where $\psi^i_S=\mathrm{Proj}_S(\psi^i)$.\\
  Feed $\hat{o}_i$ to TS.\\
  \uIf{\textit{TS} declares a mistake}{
  Set $S \leftarrow S \cup \{\ket{\psi^i}\}$.\\
  Receive new answer $o'_i$ from TS and return $o'_i$.\\
  Update $\mathcal{P}$ and restart TS.\\
  }\Else{
  Return $\hat{o}_i$.
  }
}
\end{algorithm}

Our algorithm follows a similar framework to that in \cite{badescu2020improved}, which is based on Aaronson's shadow tomography protocol \cite{aaronson2018shadow}. Specifically, we implement a student-teacher type interaction: the student responds to a queried observable $O$ by constructing a guess $\hat{o}_i$ for its expectation value $o_i=\trace(O_i\rho)$ and then passing the guess to the teacher. The teacher either "passes", indicating that the guess is close to the true expectation value, or he declares a "mistake" and supplies a different value $o'_i$ that is closer to the true expectation value.

In Algorithm~\ref{alg:sr}, the teacher is implemented using a threshold search algorithm.
Then, we can obtain the following guarantees for the student algorithm.

\begin{theorem} \label{thm:mistake bound}
    Suppose the following teacher properties are satisfied for all $i \in [M]$:
    \begin{itemize}
        \item If $|\trace(O_i\rho)-\hat{o}_i|>\epsilon$, the teacher declares a mistake.
        \item If $|\trace(O_i\rho)-\hat{o}_i|\leq\frac{3}{4}\epsilon$, the teacher always passes.
        \item If the teacher declares a mistake, the supplied value $o'_i$ satisfies $|\trace(O_i\rho)-o'_i|\leq \frac{1}{4}\epsilon$.
    \end{itemize}
    There is an algorithm for the student using $\Tilde{\mathcal{O}}(\log M/\epsilon^3)$ copies of $\rho$ that provides guesses $\hat{o}_i$ for $\trace(O_i\rho)$ while making at most $\mathcal{O}(1/\epsilon^2)$ mistakes, with probability at least $1-\delta$.
\end{theorem}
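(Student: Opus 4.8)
## Proof Proposal for Theorem~\ref{thm:mistake bound}

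The plan is to implement the student via a \emph{low-dimensional shadow tomography subroutine} together with an \emph{iteratively growing subspace} $S$, and to show that the number of ``mistake'' declarations by the teacher is bounded using a potential-function argument on $\dim S$. The student maintains a subspace $S$ (initially empty) and a running low-dimensional approximation $\hat\rho$ of the ``projected'' state; on each query $\ketbra{\psi^i}$ it computes $\ket{\psi_S^i} = \mathrm{Proj}_S(\ket{\psi^i})$, runs the tomography protocol to obtain $\hat o_i \approx \bra{\psi_S^i}\hat\rho\ket{\psi_S^i}$, and returns this to the teacher. The key structural facts are the two observations stated in the excerpt: (1) for any $\rho$ there is a low-dimensional subspace $S^\star$ so that projecting any pure state into $S^\star$ changes its $\rho$-expectation by at most (say) $\epsilon/8$; and (2) given any fixed subspace $S$ there is a genuine low-dimensional state $\hat\rho$ with $\bra{\psi_S}\hat\rho\ket{\psi_S} = \bra{\psi_S}\rho\ket{\psi_S}$ for all $\ket\psi$. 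Fact (2) tells us the tomography subroutine is solving a well-posed problem of a size equal to $\dim S$; fact (1) tells us that once $S \supseteq S^\star$ the student is essentially exactly right.

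First I would establish the \textbf{mistake bound} $\mathcal O(1/\epsilon^2)$, which is the Mistake Bound lemma quoted just before the theorem and the heart of the argument. Suppose the teacher declares a mistake on query $i$: by the assumed teacher properties, $|\bra{\psi^i}\rho\ket{\psi^i} - \hat o_i| > \epsilon$. If the tomography subroutine is accurate (which we arrange below with high probability), then $\hat o_i$ is within, say, $\epsilon/8$ of $\bra{\psi_S^i}\rho\ket{\psi_S^i}$, so $|\bra{\psi^i}\rho\ket{\psi^i} - \bra{\psi_S^i}\rho\ket{\psi_S^i}| \geq \epsilon - \epsilon/8 > \epsilon/2$. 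Writing $\ket{\psi^i} = \ket{\psi_S^i} + \ket{\psi_\perp^i}$ with $\ket{\psi_\perp^i} \perp S$, expanding $\bra{\psi^i}\rho\ket{\psi^i}$ and using $\|\rho\|_\infty \le 1$, a short computation shows that $\|\ket{\psi_\perp^i}\|^2 \ge c\,\epsilon$ for an absolute constant $c$ — i.e.\ the new query has a large component orthogonal to the current $S$. When we then set $S \leftarrow S \cup \{\ket{\psi^i}\}$, the new direction added to $S$ carries weight $\Omega(\epsilon)$ under $\rho$ (more precisely, $\tr(\rho\, \Pi_{S_{\mathrm{new}}}) \ge \tr(\rho\, \Pi_{S_{\mathrm{old}}}) + \Omega(\epsilon)$, after orthonormalizing). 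Since $\tr(\rho\,\Pi_S) \le \tr\rho = 1$ always, the number of mistakes is at most $\mathcal O(1/\epsilon)$ — and with a slightly more careful accounting of the ``lost'' weight one gets the stated $\mathcal O(1/\epsilon^2)$; I would be careful to track whether the correct exponent comes from the trace budget argument (giving $1/\epsilon$) or from a second-moment/variance argument over the accumulated corrections (giving $1/\epsilon^2$), and match the constant $3/4$ vs.\ $1/4$ slack in the teacher properties accordingly.

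Next I would handle the \textbf{sample complexity}. Since the number of mistakes is at most $\ell = \mathcal O(1/\epsilon^2)$, the subspace $S$ never has dimension larger than $\ell + 1 = \mathcal O(1/\epsilon^2)$, so the tomography subroutine only ever runs on a Hilbert space of dimension $\mathcal O(1/\epsilon^2)$ — crucially, \emph{independent of $n$}. Invoking the computationally-inefficient adaptive classical-shadows tomography algorithm (Theorem~\ref{thm: tomography protocol}) on this low-dimensional instance, with $m = \mathcal O(\log^2(\dim S)) = \tilde{\mathcal O}(1)$ and shadow norm $\mathcal O(1)$ for the rank-one projectors, gives sample complexity $\tilde{\mathcal O}(\log M / \epsilon^3)$ for the student's predictions. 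One subtlety is that $S$ changes over the run, so strictly speaking one reinitializes or expands the tomography instance each time a mistake occurs; since there are only $\mathcal O(1/\epsilon^2)$ such events and each re-run costs $\tilde{\mathcal O}(\log M/\epsilon^3)$ samples, and these can draw from a fresh batch, the total is still $\tilde{\mathcal O}(\log M/\epsilon^3)$ up to the polylog factors recorded in the theorem statement. The failure probability $\delta$ is controlled by a union bound over the $\mathcal O(1/\epsilon^2)$ tomography instances, absorbed into the $\log(1/\epsilon\delta)$ terms.

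The main obstacle I expect is the precise mistake-counting argument — specifically, pinning down the exponent in $\mathcal O(1/\epsilon^2)$ and verifying that the ``correct-prediction'' hypothesis in the Mistake Bound lemma (``a student that correctly predicts $\bra{\psi_S}\rho\ket{\psi_S}$'') is genuinely satisfied by the tomography subroutine at the accuracy level needed. This requires carefully propagating three distinct error tolerances (the teacher's $\epsilon$ vs.\ $\tfrac34\epsilon$ vs.\ $\tfrac14\epsilon$ thresholds, the tomography subroutine's internal accuracy, and the subspace-truncation error from observation (1)) so that they compose correctly, and making sure that the teacher's supplied corrections $o_i'$ are used only to decide when to grow $S$ and not fed back into $\hat\rho$ — matching the ``indirect role'' remark in the excerpt. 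The rest (the trace-budget potential argument, the dimension bound on $S$, and plugging into Theorem~\ref{thm: tomography protocol}) is routine once these bookkeeping issues are settled.
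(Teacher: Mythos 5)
There are two genuine gaps.

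\textbf{The mistake-counting exponent.} You assert that on a mistake the orthogonal component satisfies $\|\psi_\perp^i\|^2 \ge c\epsilon$ and that the new direction carries weight $\Omega(\epsilon)$ under $\rho$, so the trace budget would give $\mathcal O(1/\epsilon)$, with $\mathcal O(1/\epsilon^2)$ requiring some ``more careful accounting.'' This is backwards. The correct deduction (the paper's Lemma on the mistake bound) decomposes $\ket{\psi}=\ket{\psi_S}+\alpha_\perp\ket{\phi_\perp}$ with $\ket{\phi_\perp}$ a unit vector orthogonal to $S$, and uses Cauchy--Schwarz in the inner product $\langle u,v\rangle_\rho = \bra{u}\rho\ket{v}$ together with $\mathrm{tr}(\rho)=1$ to show
\begin{equation}
\left|\bra{\psi}\rho\ket{\psi}-\bra{\psi_S}\rho\ket{\psi_S}\right| \le 2\,|\alpha_\perp|\sqrt{\bra{\phi_\perp}\rho\ket{\phi_\perp}} .
\end{equation}
Since a mistake forces the left side to exceed $\tfrac{3}{8}\epsilon$ and $|\alpha_\perp|\le 1$, the only guaranteed conclusion is $\bra{\phi_\perp}\rho\ket{\phi_\perp} \ge \tfrac{9}{256}\epsilon^2$, i.e.\ weight $\Omega(\epsilon^2)$, \emph{not} $\Omega(\epsilon)$; indeed one can have $|\alpha_\perp|\approx 1$ and $\bra{\phi_\perp}\rho\ket{\phi_\perp}\approx\epsilon^2$. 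The trace budget ($\mathrm{tr}(\rho)=1$ and the elementary Lemma bounding the number of orthonormal directions with weight $>\eta$ by $1/\eta$) then yields $\mathcal O(1/\epsilon^2)$ mistakes directly; there is no ``second-moment'' refinement needed, and your intermediate $\mathcal O(1/\epsilon)$ step is unsupported because neither $\|\psi_\perp\|^2$ nor $\bra{\phi_\perp}\rho\ket{\phi_\perp}$ need be $\Omega(\epsilon)$ individually.

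\textbf{The sample-complexity accounting.} You propose to ``reinitialize or expand the tomography instance each time a mistake occurs,'' drawing ``from a fresh batch'' of samples each time, and claim the total stays $\tilde{\mathcal O}(\log M/\epsilon^3)$. With $\mathcal O(1/\epsilon^2)$ mistakes and each re-run costing $\tilde{\mathcal O}(\log M/\epsilon^3)$ fresh samples, this multiplies out to $\tilde{\mathcal O}(\log M/\epsilon^5)$, which is worse than the theorem claims. The paper avoids this by collecting a single classical-shadow dataset up front and running \emph{one} instance of the adaptive shadow-tomography algorithm (the private-multiplicative-weights-based protocol of Theorem~\ref{thm: tomography protocol}) on a low-dimensional classical representation of the shadows: when $S$ grows, only classical post-processing is updated, and the differential privacy of that protocol is what shields the same quantum samples against the adaptive sequence of projected queries. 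The instance is initialized once with a dimension budget $m=\mathcal O(\lceil\log(k+1)\rceil^2)$ large enough to accommodate $k=\mathcal O(1/\epsilon^2)$ expansions in hindsight. Recasting your argument to match this single-dataset structure is essential to get the stated $\tilde{\mathcal O}(\log M/\epsilon^3)$.

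The high-level skeleton (iteratively growing subspace $S$, low-dimensional tomography subroutine, potential/trace argument, union bound over events) is the same as the paper's, and your observation (2) matches Lemma~\ref{lem:st}. But both quantitative steps above need to be fixed before the proof goes through.
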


The algorithm for the student is essentially to project $\ket{\psi}$ to the space spanned by $S$ and guess the expectation value for the projected state. For a given observable $O=\ketbra{\psi}$, let $\ket{\psi}=\ket{\psi_S}+\ket{\psi_\perp}$, where $\ket{\psi_S}$ is the projection of $\ket{\psi}$ onto the space spanned by $S$ and $\ket{\psi_\perp}$ is the component orthogonal to the space spanned by $S$. Given $S=\{\ket{\psi_1},...,\ket{\psi_k}\}$ is spanned by an orthonormal basis $\phi_1,...,\phi_k$ at the current round, the student will estimate \begin{equation}
    s=\bra{\psi_S}\rho\ket{\psi_S}=\sum_{\substack{i\in [k]\\j\in[k]}} \alpha_i\Bar{\alpha}_j \bra{\phi_j}\rho\ket{\phi_i}
\end{equation}
where $\ket{\psi_S}=\sum_{i=1}^k \alpha_i \ket{\phi_i}$. We show that if the student can estimate $s_i$ for all observables $O_1,...,O_M$ to a suitable precision, then he will make at most $\mathcal{O}(1/\epsilon^2)$ mistakes.

\begin{lemma} \label{lem:kbound}
    Suppose $\rho$ is a density matrix, i.e. $\trace(\rho)=1$ and $\rho \succeq 0$. If $\ket{\phi_1},\ket{\phi_2},...,\ket{\phi_k}$ are orthonormal states and $\bra{\phi_i}\rho\ket{\phi_i}>\epsilon$ $\forall i$, then $k\leq \frac{1}{\epsilon}$.
\end{lemma}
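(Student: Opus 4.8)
The plan is to package the $k$ scalar lower bounds $\bra{\phi_i}\rho\ket{\phi_i} > \epsilon$ into a single trace inequality by summing against the projector onto $\mathrm{span}(\ket{\phi_1},\dots,\ket{\phi_k})$. Concretely, I would set $P \triangleq \sum_{i=1}^k \ketbra{\phi_i}$. Since the $\ket{\phi_i}$ are orthonormal, $P$ is an orthogonal projector, so $0 \preceq P \preceq \mathbb{I}$ in PSD ordering.

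First, I would compute $\tr(P\rho) = \sum_{i=1}^k \bra{\phi_i}\rho\ket{\phi_i} > k\epsilon$, using linearity of the trace and the hypothesis that each term exceeds $\epsilon$. Next, I would bound the same quantity from above: because $\mathbb{I} - P \succeq 0$ and $\rho \succeq 0$, the product $\tr((\mathbb{I}-P)\rho) \geq 0$, hence $\tr(P\rho) \leq \tr(\rho) = 1$ using $\tr(\rho)=1$. (Equivalently, $P \preceq \mathbb{I}$ and $\rho \succeq 0$ imply $\tr(P\rho)\leq \tr(\mathbb{I}\rho)$.) Combining the two bounds gives $k\epsilon < 1$, i.e. $k < 1/\epsilon$, which in particular yields $k \leq 1/\epsilon$ as claimed.

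There is no real obstacle here; the only subtlety worth stating carefully is the monotonicity fact that $0 \preceq A \preceq B$ and $C \succeq 0$ imply $\tr(AC) \leq \tr(BC)$, which is what licenses both $\tr(P\rho) \geq 0$ and $\tr(P\rho) \leq \tr(\rho)$. This follows from writing $C = C^{1/2}C^{1/2}$ and using $\tr(C^{1/2}(B-A)C^{1/2}) \geq 0$ since $C^{1/2}(B-A)C^{1/2} \succeq 0$. Everything else is a one-line calculation, so the write-up should be just a few lines.
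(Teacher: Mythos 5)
Your argument is correct and is essentially the paper's proof, phrased via the projector $P=\sum_i\ketbra{\phi_i}$ rather than by extending $\{\ket{\phi_i}\}$ to a full orthonormal basis; both come down to $k\epsilon < \sum_i\bra{\phi_i}\rho\ket{\phi_i} \le \tr(\rho)=1$. Nothing to add.
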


\begin{proof}
    For any orthonormal basis $\{\ket{\phi_1},...,\ket{\phi_{2^n}}\}$, we have $\sum_{i=1}^{2^n} \trace(\ketbra{\phi_i}\rho) = \trace(\rho)=1$. The result follows immediately.
\end{proof}

\begin{lemma} \label{lem:mistake bound}
    Suppose the student has access to an oracle that can always produce an estimate $\hat{o}$ such that $|\bra{\psi_S}\rho\ket{\psi_S}-\hat{o}|<\frac{3}{8}\epsilon$. Then, the oracle makes at most $\frac{256}{9}\frac{1}{\epsilon^2}$ mistakes.
\end{lemma}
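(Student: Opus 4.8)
The plan is to \emph{charge each mistake to a newly added dimension of $S$} and then invoke Lemma~\ref{lem:kbound}: since $\rho$ has unit trace, $S$ cannot accumulate too many orthonormal directions of appreciable $\rho$-weight, and I want to show every mistake contributes exactly such a direction. Concretely, the goal is to prove that whenever the teacher declares a mistake in round $i$, the component $\ket{\psi^i_\perp}$ of the queried state $\ket{\psi^i}$ orthogonal to the \emph{current} $S$ satisfies $\bra{\psi^i_\perp}\rho\ket{\psi^i_\perp} > \tfrac{9}{256}\epsilon^2$; the normalized direction $\ket{\psi^i_\perp}/\lVert\ket{\psi^i_\perp}\rVert$ appended to $S$ then has $\rho$-weight at least $\bra{\psi^i_\perp}\rho\ket{\psi^i_\perp} > \tfrac{9}{256}\epsilon^2$ (using $\lVert\ket{\psi^i_\perp}\rVert\le 1$), and Lemma~\ref{lem:kbound} caps the number of such vectors, hence the number of mistakes, at $\tfrac{256}{9}\cdot\tfrac1{\epsilon^2}$.

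First I would fix a round $i$ on which a mistake is declared and turn this into a separation between the true value and the projected value. Since the teacher always passes when $|\tr(O_i\rho)-\hat o_i|\le\tfrac34\epsilon$ (Theorem~\ref{thm:mistake bound}), a mistake forces $|\tr(O_i\rho)-\hat o_i| > \tfrac34\epsilon$; combining this with the oracle guarantee $|\hat o_i-\bra{\psi^i_S}\rho\ket{\psi^i_S}|<\tfrac38\epsilon$ through the triangle inequality yields $|\tr(O_i\rho)-\bra{\psi^i_S}\rho\ket{\psi^i_S}| > \tfrac38\epsilon$, where $\ket{\psi^i_S}$ is the projection of $\ket{\psi^i}$ onto the current $S$.

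Next I would convert this into the promised lower bound on $\bra{\psi^i_\perp}\rho\ket{\psi^i_\perp}$. Writing $\ket{\psi^i}=\ket{\psi^i_S}+\ket{\psi^i_\perp}$ and using $O_i=\ketbra{\psi^i}$, expand $\tr(O_i\rho)-\bra{\psi^i_S}\rho\ket{\psi^i_S} = \bra{\psi^i_\perp}\rho\ket{\psi^i}+\bra{\psi^i_S}\rho\ket{\psi^i_\perp}$ and bound each term by Cauchy--Schwarz for the positive semidefinite form $(u,v)\mapsto\bra{u}\rho\ket{v}$, using $\bra{\psi^i}\rho\ket{\psi^i}\le 1$ and $\bra{\psi^i_S}\rho\ket{\psi^i_S}\le\lambda_{\max}(\rho)\le 1$; this gives $|\tr(O_i\rho)-\bra{\psi^i_S}\rho\ket{\psi^i_S}|\le 2\sqrt{\bra{\psi^i_\perp}\rho\ket{\psi^i_\perp}}$. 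Together with the previous step, $2\sqrt{\bra{\psi^i_\perp}\rho\ket{\psi^i_\perp}}>\tfrac38\epsilon$, i.e. $\bra{\psi^i_\perp}\rho\ket{\psi^i_\perp}>\tfrac9{256}\epsilon^2$ — in particular it is strictly positive, so $S$ genuinely grows. Running this over all mistakes and applying Lemma~\ref{lem:kbound} to the resulting orthonormal family of directions finishes the argument.

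The two triangle inequalities and the Cauchy--Schwarz step are routine; the point that needs the most care is ensuring the charging is legitimate. One must verify that a mistake really implies $\ket{\psi^i_\perp}\neq 0$, so a \emph{new} dimension is added rather than an old one re-selected; that the appended normalized direction carries $\rho$-weight at least $\bra{\psi^i_\perp}\rho\ket{\psi^i_\perp}$ (which is where $\lVert\ket{\psi^i_\perp}\rVert\le 1$ enters); and that appending vectors to $S$ by Gram--Schmidt leaves the previously chosen orthonormal basis vectors unchanged, so that after $k$ mistakes all $k$ weight lower bounds hold simultaneously for a single orthonormal basis of $S$ — precisely the hypothesis Lemma~\ref{lem:kbound} requires.
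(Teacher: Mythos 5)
Your proposal is correct and follows essentially the same argument as the paper: a mistake plus the oracle guarantee forces $|\bra{\psi}\rho\ket{\psi}-\bra{\psi_S}\rho\ket{\psi_S}|>\tfrac38\epsilon$, Cauchy--Schwarz on the PSD form $\bra{u}\rho\ket{v}$ converts this to $\bra{\psi_\perp}\rho\ket{\psi_\perp}>\tfrac{9}{256}\epsilon^2$, and Lemma~\ref{lem:kbound} caps the number of new orthonormal directions at $\tfrac{256}{9\epsilon^2}$. The only cosmetic difference is that you apply Cauchy--Schwarz directly to the two-term split $\bra{\psi_\perp}\rho\ket{\psi}+\bra{\psi_S}\rho\ket{\psi_\perp}$, whereas the paper first expands $\ket{\psi_S}$ in an orthonormal basis of $S$ before bundling back up --- both yield the same intermediate bound $2\sqrt{\bra{\psi_\perp}\rho\ket{\psi_\perp}}$.
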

\begin{proof}
    Suppose the teacher declares a mistake on observable $\ketbra{\psi}$, where $\ket{\psi}$ has unit norm, i.e. $|\bra{\psi}\rho\ket{\psi}-\hat{o}|\geq \frac{3}{4}\epsilon$. By the oracle guarantees, 
    \begin{equation}
        |\bra{\psi}\rho\ket{\psi}-\hat{o}|\leq |\bra{\psi}\rho\ket{\psi}-\bra{\psi_S}\rho\ket{\psi_S}|+|\bra{\psi_S}\rho\ket{\psi_S}-\hat{o}|\leq |\bra{\psi}\rho\ket{\psi}-\bra{\psi_S}\rho\ket{\psi_S}|+\frac{3}{8}\epsilon.
    \end{equation}
    Thus, $|\bra{\psi}\rho\ket{\psi}-\bra{\psi_S}\rho\ket{\psi_S}| \geq \frac{3}{8}\epsilon$. Let $\{\ket{\phi_i}\}_{i\in[k]}$ span $S$ and $\ket{\phi_\perp}$ be the normalized state corresponding to $\ket{\psi_\perp}$, so we can write the decompositions $\ket{\psi_S}=\sum_{i=1}^k\alpha_i\ket{\phi_i}$ and $\ket{\psi_\perp}=\alpha_\perp \ket{\phi_\perp}$. We then have
    \begin{align}
        |\bra{\psi}\rho\ket{\psi}-\bra{\psi_S}\rho\ket{\psi_S}| &= |\bra{\psi_\perp}\rho\ket{\psi_\perp}+\bra{\psi_S}\rho\ket{\psi_\perp}+\bra{\psi_\perp}\rho\ket{\psi_S}|\\
        &= |\alpha_\perp \Bar{\alpha}_\perp\bra{\phi_\perp}\rho\ket{\phi_\perp}+\sum_{i=1}^k[\alpha_\perp\Bar{\alpha}_i\bra{\phi_i}\rho\ket{\phi_\perp}+\alpha_i\Bar{\alpha}_\perp\bra{\phi_\perp}\rho\ket{\phi_i}]|\\
        &\leq |\alpha_\perp|^2\bra{\phi_\perp}\rho\ket{\phi_\perp}+2\sum_i|\alpha_\perp||\alpha_i||\bra{\phi_\perp}\rho\ket{\phi_i}|
    \end{align}
    where the first and second inequality follows from triangle inequality, respectively. We can bound $|\bra{\phi_\perp}\rho\ket{\phi_i}|$ as follows:
    \begin{align}
        |\bra{\phi_\perp}\rho\ket{\phi_i}|&=|\bra{\phi_\perp}(\rho^\frac{1}{2})^\dag\rho^\frac{1}{2}\ket{\phi_i}|\\
        &=|\bra{\rho^\frac{1}{2}\phi_\perp}\ket{\rho^\frac{1}{2}\phi_i}|\\
        &\leq \sqrt{\bra{\rho^\frac{1}{2}\phi_\perp}\ket{\rho^\frac{1}{2}\phi_\perp}\bra{\rho^\frac{1}{2}\phi_i}\ket{\rho^\frac{1}{2}\phi_i}}\\
        &=\sqrt{\bra{\phi_\perp}\rho\ket{\phi_\perp}\bra{\phi_i}\rho\ket{\phi_i}},
    \end{align}
    where the inequality follows by Cauchy-Schwarz inequality.
    Thus, plugging into the previous equation, we have
    \begin{align*}
        |\bra{\psi}\rho\ket{\psi}-\bra{\psi_S}\rho\ket{\psi_S}| &\leq |\alpha_\perp|^2\bra{\phi_\perp}\rho\ket{\phi_\perp}+2\sum_i|\alpha_\perp||\alpha_i|\sqrt{\bra{\phi_\perp}\rho\ket{\phi_\perp}\bra{\phi_i}\rho\ket{\phi_i}}\\
        &=|\alpha_\perp|\sqrt{\bra{\phi_\perp}\rho\ket{\phi_\perp}}(|\alpha_\perp|\sqrt{\bra{\phi_\perp}\rho\ket{\phi_\perp}}+2\sum_i|\alpha_i|\sqrt{\bra{\phi_i}\rho\ket{\phi_i}})\\
        &\leq |\alpha_\perp|\sqrt{\bra{\phi_\perp}\rho\ket{\phi_\perp}}\sqrt{(|\alpha_\perp|^2+2\sum_i|\alpha_i|^2)(\bra{\phi_\perp}\rho\ket{\phi_\perp}+2\sum_i\bra{\phi_i}\rho\ket{\phi_i})}\\
        &\leq 2|\alpha_\perp|\sqrt{\bra{\phi_\perp}\rho\ket{\phi_\perp}}
    \end{align*}
    where the second to last inequality follows by Cauchy-Schwarz inequality, and the last inequality follows since $\trace(\rho)=1$ and $\sum_{i=1}^k|\alpha_i|^2+|\alpha_\perp|^2=1$ (i.e. $|\psi|$ has unit norm). Since $2|\alpha_\perp|\sqrt{\bra{\phi_\perp}\rho\ket{\phi_\perp}}>\frac{3}{8}\epsilon$, then $\bra{\phi_\perp}\rho\ket{\phi_\perp}>\frac{9\epsilon^2}{256|\alpha_\perp|^2}$. This implies that every time we make a mistake, we add a new component $\ket{\phi_i}$ orthonormal to the current $S$ with expectation value at least $\frac{9\epsilon^2}{256|\alpha_\perp|^2}$. By Lemma \ref{lem:kbound} and the fact that $|\alpha_\perp|\leq 1$, this can happen at most $\frac{256}{9}\frac{1}{\epsilon^2}$ times.
\end{proof}

\subsubsection{Estimating $\bra{\psi_S}\rho\ket{\psi_S}$} \label{sec:online learning}
We now incorporate the tomography protocol from Theorem~\ref{thm: tomography protocol} and show that it can be used to predict $\bra{\psi_S}\rho\ket{\psi_S}$ efficiently.

\begin{lemma} \label{lem:st}
    \begin{equation}
        N_{ST}=\mathcal{O}\left(\frac{\sqrt{\log(1/\delta)}\cdot \log(M/\epsilon\delta)\log(1/\epsilon)}{\epsilon^3}\right)
    \end{equation}
    samples suffice to estimate $s_i=\trace(\ketbra{\psi^i_S}\rho)$ within $\epsilon$ error for all $1\leq i \leq M$ with probability at least $1-\delta$.
\end{lemma}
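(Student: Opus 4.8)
The plan is to reduce estimating $s_i = \bra{\psi_S^i}\rho\ket{\psi_S^i}$ to running the computationally inefficient adaptive classical shadow tomography algorithm from Theorem~\ref{thm: tomography protocol} on a \emph{low-dimensional} instance, exploiting the fact that the running subspace $S$ maintained by the student has dimension at most $k = \mathcal{O}(1/\epsilon^2)$ by Lemma~\ref{lem:mistake bound} (equivalently Lemma~\ref{lem:kbound}). First I would observe that all the queried projected states $\ket{\psi_S^i}$ live in $\mathrm{span}(S)$, and since $S$ only grows (a new vector is appended each time a mistake is declared, and there are at most $\mathcal{O}(1/\epsilon^2)$ mistakes), we can work throughout in the ambient low-dimensional space: at the point when the subspace has orthonormal basis $\phi_1,\dots,\phi_k$, write $\ket{\psi_S^i} = \sum_{a=1}^k \alpha_a^i \ket{\phi_a}$ so that
\begin{equation}
    s_i = \sum_{a,b \in [k]} \alpha_a^i \bar\alpha_b^i \bra{\phi_b}\rho\ket{\phi_a} = \tr\!\big(\ketbra{\psi_S^i}\,\rho\big).
\end{equation}
Each $\ketbra{\psi_S^i}$ is a (rank-one, hence bounded-Frobenius-norm) observable with $\tr((\ketbra{\psi_S^i})^2) = \|\psi_S^i\|^4 \le 1$ and spectral norm $\le 1$, so the hypotheses of Theorem~\ref{thm: tomography protocol} are met with $\max_i \|O_i - \tfrac{\tr O_i}{2^n}\mathbb{I}\|_{\mathrm{shadow}}^2 = \mathcal{O}(1)$ when the measurement primitive is random $n$-qubit Clifford circuits (for which the shadow norm of a bounded-Frobenius-norm observable is constant, by the analysis in~\cite{huang2020predicting}).

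The key step is to argue that $m$, the number of bits to store a classical shadow, can be taken to be $\mathcal{O}(\mathrm{poly}\log(1/\epsilon))$ rather than $\mathrm{poly}(n)$. This is where the low-dimensional structure is essential: because every observable ever queried acts within the fixed $k$-dimensional subspace $\mathrm{span}(S)$ with $k = \mathcal{O}(1/\epsilon^2)$, we only need a classical description of $\rho$ restricted to (a slightly enlarged version of) that subspace. Concretely, I would have the student run classical shadow tomography on the effective $\mathcal{O}(\log(1/\epsilon))$-qubit system obtained by an isometry onto $\mathrm{span}(S)$ — using key observations (1) and (2) from the Proof Ideas section, there is a low-dimensional state $\hat\rho$ reproducing $\bra{\psi_S}\rho\ket{\psi_S}$ for all $\psi$, so the shadow need only record measurement outcomes of a dimension-$2^{\mathcal{O}(\log(1/\epsilon))}$ system. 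Hence $m = \mathcal{O}(\log^2(1/\epsilon))$ (the $n^2$ bound for Clifford shadows becomes $(\log(1/\epsilon))^2$ on the effective system). Plugging $m = \tilde{\mathcal{O}}(1)$, $M$ queries, accuracy $\epsilon$, and constant shadow norm into the bound of Theorem~\ref{thm: tomography protocol},
\begin{equation}
    N_{ST} = \mathcal{O}\!\left(\frac{\sqrt{m\log(1/\delta)}\cdot \log(M/\epsilon\delta)}{\epsilon^3}\cdot \max_i \|O_i\|_{\mathrm{shadow}}^3\right) = \mathcal{O}\!\left(\frac{\sqrt{\log(1/\delta)}\cdot\log(M/\epsilon\delta)\log(1/\epsilon)}{\epsilon^3}\right),
\end{equation}
absorbing the $\sqrt{m} = \mathcal{O}(\log(1/\epsilon))$ factor into the stated $\log(1/\epsilon)$ term. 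The adaptivity-resistance needed here is exactly what Theorem~\ref{thm: tomography protocol} provides, since the queries $\ketbra{\psi^i}$ (and hence their projections) are adaptively chosen.

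The main obstacle I anticipate is making rigorous the claim that the dimension of the tomography instance stays $\mathcal{O}(1/\epsilon^2)$ throughout, and that restarting/updating the tomography protocol $\mathcal{P}$ each time $S$ grows does not blow up the sample budget. Two issues must be handled carefully: (i) the subspace $S$ is itself adaptively and randomly determined by the interaction, so the ``effective qubit number'' is a random variable — I would fix an a priori bound $k_{\max} = \mathcal{O}(1/\epsilon^2)$ from Lemma~\ref{lem:mistake bound}, allocate the tomography instance for dimension $2^{\lceil \log k_{\max}\rceil}$ up front, and condition on the high-probability event that no more than $k_{\max}$ mistakes occur; (ii) the tomography protocol of Theorem~\ref{thm: tomography protocol} is designed to answer a \emph{single} adaptive stream of $M$ queries with one fixed sample set, so rather than literally "restarting" $\mathcal{P}$ I would feed it the enlarged sequence of projected-observable queries on the (fixed, maximal) low-dimensional space — the estimates $\hat o_i$ are always obtained as $\tr(\ketbra{\psi_S^i}\hat\rho)$ for the protocol's current low-dimensional reconstruction, so no genuine restart is needed and the $\tilde{\mathcal{O}}(\log M/\epsilon^3)$ bound applies across all $M$ rounds. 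The bookkeeping that the projections $\psi_S^i$ computed against the \emph{growing} basis are consistent with a single low-dimensional tomography target $\hat\rho$ (guaranteed by observation (2)) is the delicate point, and I would state it as a short claim before invoking Theorem~\ref{thm: tomography protocol}.
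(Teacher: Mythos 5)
Your proposal follows the paper's own proof essentially step by step: you invoke the mistake bound of Lemma~\ref{lem:mistake bound} to cap $|S|$ at $k = \mathcal{O}(1/\epsilon^2)$, use the existence of a low-dimensional $\hat\rho$ (dimension $2^{\lceil\log(k+1)\rceil}$) reproducing $\bra{\psi_S}\rho\ket{\psi_S}$ to reduce to a low-dimensional tomography instance, and then plug $m = \mathcal{O}(\log^2 k) = \mathcal{O}(\log^2(1/\epsilon))$ together with a constant Clifford shadow norm into Theorem~\ref{thm: tomography protocol}, arriving at the same bound. The ``delicate point'' you flag in (ii) — that the tomography instance is allocated once at the maximal dimension and the growing $\psi_S^i$ are fed in as queries over that single fixed low-dimensional problem rather than literally restarting $\mathcal{P}$ — is indeed exactly how the paper handles it, so this is the same argument rather than a gap.
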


\begin{proof}
    Suppose the size of the set $S$ is at most $k$. The problem from the tomography protocol's point of view can be framed as follows: at each round, receive a low-dimensional query $\ket{\psi_S}$, which is $\ket{\psi}$ projected onto the low-dimensional space spanned by $S$. The goal is to estimate $\bra{\psi_S}\rho\ket{\psi_S}$. We show that the full tomography problem involving the $2^n \times 2^n$ quantum state $\rho$ is equivalent to a separate tomography problem in this low-dimensional space. Specifically, we claim that there is a quantum state $\hat{\rho}$ of dimension $2^{\lceil \log (k+1) \rceil}$ such that $\bra{\psi_S}\hat{\rho}\ket{\psi_S}=\bra{\psi_S}\rho\ket{\psi_S}$. If this is the case, then as long as we express $\ket{\psi_S}$ in a basis of $S$, then we can simply run our tomography algorithm on the separate problem with the low-dimensional representation of $\ket{\psi_S}$ as input.

    Recall that the set $S$ is constructed iteratively in Algorithm~\ref{alg:sr}, where each time the teacher declares a mistake, a state is added to $S$.
    Let $S_i$ be $S$ after the $i$th mistake, and suppose that, in hindsight, there exists a fixed orthonormal basis $\Phi_i=\{\ket{\phi_1},...,\ket{\phi_i}\}=\Phi_{i-1} \cup \{\ket{\phi_i}\}$ of $S_i$ for all $1\leq i \leq k$. Such a basis can clearly be constructed via Gram-Schmidt orthonormalization.

    We can use this basis to show the claim that there exists some $\hat{\rho}$ such that $\expval{\hat{\rho}}{\psi_S} = \expval{\rho}{\psi_S}$.
    To see that the claim is true, consider the $k \times k$ sub-matrix representing $\rho$ in the basis $\Phi_k$, i.e. $\hat{\rho}_S=\{\bra{\phi_i}\rho\ket{\phi_j}\}_{1\leq i,j \leq k}$. Note that $\bra{\psi_S}\hat{\rho}_S\ket{\psi_S}=\bra{\psi_S}\rho\ket{\psi_S}$ since $\ket{\psi_S} \in \mathrm{span}(S)$. Finally, we only need to "pad" $\hat{\rho}_S$ with additional dimensions to make it a valid quantum state (i.e., unit trace, positive semi-definite, and Hermitian). Specifically, we can obtain $\hat{\rho}$ of dimension $2^{\lceil \log (k+1) \rceil}$ by adding additional dimensions where the additional diagonal elements are non-negative to make $\hat{\rho}$ trace $1$. The off-diagonal elements can be set to $0$, which would make $\hat{\rho}$ positive semi-definite and Hermitian since $\hat{\rho}_S$ is positive semi-definite and Hermitian. Moreover, $\bra{\psi_S}\hat{\rho}\ket{\psi_S}=\bra{\psi_S}\hat{\rho}_S\ket{\psi_S}$.

    Thus, we can now consider the problem of predicting $\bra{\psi_S}\hat{\rho}\ket{\psi_S}$, with $\hat{\rho}$ being a $2^{\lceil \log (k+1) \rceil}$ dimensional quantum state. This can be solved using the adaptive tomography algorithm from Theorem~\ref{thm: tomography protocol}. In this case, the number of bits needed to represent a classical shadow obtained with Clifford measurements for this hypothetical state is $m=\mathcal{O}(\lceil \log (k+1) \rceil^2)$.
    We can outline the algorithm as follows:
    \begin{enumerate}
        \item Obtain a classical shadow dataset of size
        \begin{equation}
            N_{ST}=\mathcal{O}\left(\frac{\sqrt{\log(1/\delta)}\cdot \log(M/\epsilon\delta)\log(1/\epsilon)}{\epsilon^3}\right)
        \end{equation}
        using Clifford measurements.
        \item Prepare an instance of the tomography algorithm from Theorem~\ref{thm: tomography protocol} initialized with dimension $m=\mathcal{O}(\lceil \log (k+1) \rceil^2)$. This also requires $N_{ST}$ samples.
        \item At any round $i$, express $\ket{\psi_{S_i}}$ in $i$ dimensions and feed it as input to the tomography algorithm. This ensures that in hindsight every query $\ket{\psi_{S_i}}$ was expressed in the subspace $\mathrm{span}(S_k)$ so that we have a proper reduction to the low-dimensional problem specified previously. Then return the output of the tomography algorithm.
    \end{enumerate}

    By the mistake bound of Lemma~\ref{lem:mistake bound}, $k=\mathcal{O}(1/\epsilon^2)$. Combining this with Theorem~\ref{thm: tomography protocol} yields the result of Lemma~\ref{lem:st}.
\end{proof}

With this, we can provide a simple proof of Theorem~\ref{thm:mistake bound}

\begin{proof}[Proof of Theorem~\ref{thm:mistake bound}]
   Combining Lemma \ref{lem:st} with Lemma \ref{lem:mistake bound} yields Theorem \ref{thm:mistake bound}. 
\end{proof}

Finally, we can prove our guarantee for single rank observables by combining all of our results.

\begin{proof}
    The teacher can be implemented directly by our threshold search algorithm in Theorem~\ref{thm:nts}. Combined with the mistake bound in Theorem~\ref{thm:mistake bound} yields a sample complexity of 
    \begin{equation}
        N_{TS}=\mathcal{O}\left(\frac{\log(M)\log(1/\epsilon\delta)^{5/2}}{\epsilon^3}\right)
    \end{equation}
    for implementing threshold search. Summing this with the sample complexity for implementing the tomography algorithm
    \begin{equation}
            N_{ST}=\mathcal{O}\left(\frac{\sqrt{\log(1/\delta)}\cdot \log(M/\epsilon\delta)\log(1/\epsilon)}{\epsilon^3}\right)
        \end{equation}
    yields the desired guarantee:
    \begin{equation}
        N = N_{ST} + N_{TS} = \mathcal{O}\left(\frac{\log(M)\log(1/\epsilon\delta)^{5/2}\log(1/\epsilon)}{\epsilon^3}\right).
    \end{equation}
\end{proof}

\subsection{Extension to bounded-Frobenius-norm observables} \label{sec:ebfo}

Having solved the single rank observable case, we now describe how to extend the algorithm to predicting bounded-Frobenius-norm observables. 
We obtain the following guarantees.

\begin{theorem} \label{thm: bfo ub}
    For a sequence of adaptively chosen observables $O_1,...,O_M$ satisfying $\trace(O_i^2)\leq B$ for all $i$ and accuracy parameters $\epsilon,\delta\in [0,1]$, there exists an algorithm that uses
    \begin{equation}
        N=\mathcal{O}\left(\frac{B^{2}\log(M)\log(1/\epsilon\delta)^{5/2}\log(B/\epsilon)}{\epsilon^4}\right)
    \end{equation}
    samples of $\rho$ to accurately predict every property:
    \begin{equation}
        |\hat{o}_i-\trace(O_i\rho)|\leq \epsilon \quad \textrm{for all}\ 1\leq i \leq M
    \end{equation}
    with probability at least $1-\delta$.
\end{theorem}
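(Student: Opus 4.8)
The plan is to run the student--teacher mechanism of Algorithm~\ref{alg:sr} essentially unchanged, replacing the rank-one primitives by their bounded-Frobenius-norm analogues as sketched in Section~\ref{mainsec: proof ideas}. First I would add a preprocessing step that, given a query $O_i$ with eigendecomposition $O_i=\sum_j \lambda_j^{(i)}\ketbra{\phi_j^{(i)}}$, truncates it to $O_i^{\geq}$ by discarding all eigenstates with $|\lambda_j^{(i)}|<\theta$ for a threshold $\theta=\Theta(\epsilon)$; since $\tr((O_i^{\geq})^2)\le\tr(O_i^2)\le B$, this truncated observable has rank $R=\mathcal{O}(B/\epsilon^2)$ and $|\tr(O_i\rho)-\tr(O_i^{\geq}\rho)|\le\theta$. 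The student maintains a subspace $S$ and outputs $\hat o_i$ estimating $\tr(P O_i^{\geq}P\rho)$ with $P=\mathrm{Proj}_S$, computed by a low-dimensional instance of the adaptive shadow tomography algorithm of Theorem~\ref{thm: tomography protocol}, exactly as in Lemma~\ref{lem:st} (the operator $P O_i^{\geq} P$ written in an orthonormal basis of $S$ is a $(\dim S)$-dimensional observable of Frobenius norm at most $B$, and there is a $(\dim S)$-dimensional state $\hat\rho$ reproducing $\tr(P O_i^{\geq} P\rho)$). The teacher would be the threshold-search-with-corrections algorithm of Corollary~\ref{lem:nts} run on the \emph{untruncated} $O_i$: it declares a mistake when $|\tr(O_i\rho)-\hat o_i|>\epsilon$, always passes when the gap is at most $\tfrac34\epsilon$, and on a mistake returns a value within $\epsilon/4$ of $\tr(O_i\rho)$. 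On each mistake the student adds \emph{all} eigenstates of $O_i^{\geq}$ to $S$ (orthonormalizing via Gram--Schmidt) and restarts the tomography subroutine.

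The heart of the argument, and the step I expect to be the main obstacle, is the mistake bound --- the bounded-Frobenius-norm analogue of Lemma~\ref{lem:mistake bound} --- showing the teacher declares at most $\mathcal{O}(B^2/\epsilon^4)$ mistakes. The plan here is: on a mistake on $O=O_i$, combine the teacher gap, the $\epsilon/8$ accuracy of the tomography subroutine, and the $\theta=\epsilon/16$ truncation error to get $|\tr(O^{\geq}\rho)-\tr(PO^{\geq} P\rho)|>\epsilon/2$. Expanding $O^{\geq}=\sum_{j\in J}\lambda_j\ketbra{\phi_j}$ with $|J|\le R$ and decomposing each $\ket{\phi_j}=\ket{\phi_{j,S}}+\ket{\phi_{j,\perp}}$, the left side equals $\sum_{j\in J}\lambda_j(\bra{\phi_j}\rho\ket{\phi_j}-\bra{\phi_{j,S}}\rho\ket{\phi_{j,S}})$, and the Hilbert-space Cauchy--Schwarz bound from Lemma~\ref{lem:mistake bound} gives $|\bra{\phi_j}\rho\ket{\phi_j}-\bra{\phi_{j,S}}\rho\ket{\phi_{j,S}}|\le 2\sqrt{\bra{\phi_{j,\perp}}\rho\ket{\phi_{j,\perp}}}$. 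A second Cauchy--Schwarz over $j$, using $\sum_j\lambda_j^2\le\tr(O^2)\le B$, then yields $\sum_{j\in J}\bra{\phi_{j,\perp}}\rho\ket{\phi_{j,\perp}}=\Omega(\epsilon^2/B)$. I would close with a potential argument on $\Phi=\tr((\mathbb{I}-P)\rho)$, which starts at most $1$: adding the $\ket{\phi_j}$ to $S$ enlarges $S$ by $V=\mathrm{span}\{\ket{\phi_{j,\perp}}\}_{j\in J}$, so $\Phi$ drops by $\tr(\Pi_V\rho)\ge\max_{j}\bra{\phi_{j,\perp}}\rho\ket{\phi_{j,\perp}}\ge\frac1{|J|}\sum_j\bra{\phi_{j,\perp}}\rho\ket{\phi_{j,\perp}}=\Omega(\epsilon^2/(BR))=\Omega(\epsilon^4/B^2)$ (using $\Pi_V\ket{\phi_{j,\perp}}=\ket{\phi_{j,\perp}}$, and normalizing $\ket{\phi_{j,\perp}}$ only loses a factor $\le1$). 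Hence $\mathcal{O}(B^2/\epsilon^4)$ mistakes. The delicate points I anticipate are (i) choosing the truncation threshold and all the $\epsilon$-slacks so the chain of inequalities closes, and (ii) handling that the projected eigenstates $\ket{\phi_{j,\perp}}$ are \emph{not} mutually orthogonal, which forces the $\tfrac1{|J|}$-averaging step and is exactly where the extra factor $R$ (hence $B^2/\epsilon^4$ rather than the student's $B^{3/2}/\epsilon^3$) enters.

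Finally I would assemble the sample complexities. As in Lemma~\ref{lem:st}, the final $\dim S=\mathcal{O}(R\cdot B^2/\epsilon^4)=\mathrm{poly}(B/\epsilon)$, so each low-dimensional Clifford classical shadow needs only $m=\mathrm{polylog}(B/\epsilon)$ bits and the low-dimensional observables have shadow norm $\mathcal{O}(\sqrt B)$; plugging into Theorem~\ref{thm: tomography protocol} gives the student $N_{ST}=\Tilde{\mathcal{O}}(B^{3/2}\log M/\epsilon^3)$. For the teacher, instantiate Corollary~\ref{lem:nts} with budget $\ell=\mathcal{O}(B^2/\epsilon^4)$ (the mistake bound above), precision $\Theta(\epsilon)$, and failure probability $\Theta(\delta)$, for $N_{TS}=\Tilde{\mathcal{O}}(B\sqrt\ell\log M/\epsilon^2)=\Tilde{\mathcal{O}}(B^2\log M/\epsilon^4)$. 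A union bound over the two subroutines' failure events, together with the observation that correctness of the student's estimates and the teacher properties of Corollary~\ref{lem:nts} imply $|\hat o_i-\tr(O_i\rho)|\le\epsilon$ on every round, then gives the claimed $N=N_{ST}+N_{TS}=\Tilde{\mathcal{O}}(B^2\log M/\epsilon^4)$, matching the stated bound up to logarithmic factors.
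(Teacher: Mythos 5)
Your proposal is correct and follows essentially the same route as the paper's Appendix~\ref{sec:ebfo}: truncate the queried observable, project its retained eigenstates onto the learned subspace $S$, estimate via the low-dimensional shadow tomography subroutine of Theorem~\ref{thm: tomography protocol}, validate with the threshold-search-with-corrections teacher of Corollary~\ref{lem:nts}, and expand $S$ by the truncated eigenstates on a mistake. The only deviations are cosmetic bookkeeping: you run the teacher against the untruncated $\trace(O_i\rho)$ rather than $\trace(\hat O_i\rho)$ (which just shifts where the $\mathcal{O}(\epsilon)$ truncation slack is absorbed), and you phrase the $\mathcal{O}(B^2/\epsilon^4)$ mistake bound as a potential argument on $\tr((\mathbb{I}-P)\rho)$ with a double Cauchy--Schwarz and an averaging step, whereas the paper argues by contradiction that at least one eigenstate has gap exceeding $3\epsilon^2/(32B)$ and then invokes Lemmas~\ref{lem:mistake bound}--\ref{lem:kbound}; unwinding either version gives $\max_j \bra{\phi_{j,\perp}}\rho\ket{\phi_{j,\perp}}=\Omega(\epsilon^4/B^2)$ and hence the same mistake count, and the assembled sample complexities $N_{ST}=\Tilde{\mathcal{O}}(B^{3/2}\log M/\epsilon^3)$ and $N_{TS}=\Tilde{\mathcal{O}}(B^2\log M/\epsilon^4)$ match the paper's.
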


Thus, we are able to predict expectation values of adaptively chosen bounded-Frobenius-norm observables with the desired scaling of $\log(M)$, independent of system size.
When compared to protocols for adaptive shadow tomography~\cite{badescu2020improved}, we remove the system size dependence in exchange for an extra factor of $1/\epsilon$.

To achieve this sample complexity upper bound, we modify the algorithm from Appendix~\ref{sec: sro}.
The only modifications are (1) we define a processing step that ``projects'' the queried observable $O$ to $S$ and (2) when a mistake is made, every eigenstate of $O$ with a sufficiently large (in magnitude) eigenvalue is added to $S$. In more detail, the algorithm is given in Algorithm~\ref{alg:bf} and we outline it as follows.

\begin{algorithm}
\caption{Mechanism for bounded-Frobenius-norm observables}\label{alg:bf}
Set $S\leftarrow\{\}$ and initialize tomography protocol $\mathcal{P}$ with accuracy parameter $\frac{3}{16}\epsilon$ and failure probability $\delta/2$.\\
Start an instance of threshold search (TS) with parameters $\epsilon/2, \delta/2$.\\
\For{$i=1$ to $M$}{
  Receive truncated observable $\hat{O}_i=\sum_{j:|w_j|>\epsilon/2} w_j\ketbra{\psi^j}$.\\
  Project every eigenstate of $\hat{O}_i$ to $\mathrm{span}(S)$ and construct $\Tilde{O}_i=\sum_{j:|w_j|>\epsilon/2} w_j\ketbra{\psi^j_S}$, where $\psi^j_S=\mathrm{Proj}_S(\psi^j)$ is the projection of the eigenstate $\psi^j$ to $S$.\\
  Receive prediction $\hat{o}_i \approx \trace(\Tilde{O}_i\rho)$ from $\mathcal{P}$. Feed $\hat{o}_i$ to TS.\\
  \uIf{TS declares a mistake}{
  Set $S \leftarrow S \cup \{\ket{\psi^j}:|w_j|>\epsilon/2\}$.\\
  Receive new answer $o'_i$ from TS and return $\hat{o}_i=o'_i$.\\
  Update $\mathcal{P}$ and restart TS.\\
  }\Else{
  Return $\hat{o}_i$.
  }
}
\end{algorithm}

For a given observable $O$, consider its eigenvalue decomposition $O=\sum_{i=1}^{2^n} w_i\ketbra{\psi^i}$, where $\sum_{i=1}^{2^n} |w_i|^2 \leq B$. Let $\hat{O}=\sum_{i:|w_i|>\epsilon/2}|w_i|\ketbra{\psi^i}$ be its truncated eigenvalue decomposition, i.e. we include only eigenstates whose eigenvalues have magnitude greater than $\epsilon/2$ and throw out the rest. Note that there are at most $4B/\epsilon^2$ such eigenstates. At every round, the algorithm projects every eigenstate of the truncated observable $\hat{O}$ to $\mathrm{span}(S)$ and constructs $\Tilde{O}=P_S\hat{O}P_S=\sum_{i:|w_i|>\epsilon/2}w_i\ketbra{\psi^i_S}$, where $P_S$ is the projection operator to $S$ and $\psi^i_S$ is the projection of eigenstate $\psi^i$ to $S$.
Then it uses the tomography protocol from Theorem~\ref{thm: tomography protocol} with $m=4B/\epsilon^6$, accuracy parameter $\epsilon/2$, and failure probability $\delta/2$ to generate prediction $\Tilde{o}\approx\trace(\Tilde{O}\rho)$. If a mistake is made, every eigenstate in $\hat{O}$ is added to $S$. The teacher is implemented by the threshold search algorithm in Theorem~\ref{thm:nts} with accuracy parameter $\epsilon$ and failure probability $\delta/2$.

With this sketch of the algorithm, we can prove Theorem~\ref{thm: bfo ub}, relying on some results from the previous section.

\begin{proof}[Proof of Theorem~\ref{thm: bfo ub}]
    The probability that either threshold search or the tomography protocol fails is at most $\delta$ by union bound. Thus, for the rest of the proof, we assume that both protocols correctly give outputs across all $M$ rounds. Hence, every $\hat{o}$ received from the shadow tomography protocol satisfies $|\hat{o}-\trace(\Tilde{O}\rho)| \leq \frac{3}{16}\epsilon$. Now suppose QTS declares a mistake, so we have $|\hat{o}-\trace(\hat{O}\rho)|>3\epsilon/8$. Using triangle inequality:
    \begin{equation}
        \frac{3}{8}\epsilon < |\hat{o}-\trace(\hat{O}\rho)| \leq |\hat{o}-\trace(\Tilde{O}\rho)|+|\trace(\Tilde{O}\rho)-\trace(\hat{O}\rho)| \leq \frac{3}{16}\epsilon + |\trace(\Tilde{O}\rho)-\trace(\hat{O}\rho)|
    \end{equation}
    Thus, $|\trace(\Tilde{O}\rho)-\trace(\hat{O}\rho)|>3\epsilon/16$. We claim that there is at least one eigenstate $\ket{\psi}$ in the truncated observable $\hat{O}$ such that 
    \begin{equation} \label{eq:eig lb}
        |\bra{\psi_S}\rho\ket{\psi_S}-\bra{\psi}\rho\ket{\psi}|>\frac{3\epsilon^2}{32B}.
    \end{equation}
    We can see this by noting that if this is not true for any eigenstate of $\hat{O}$, then this gives us a contradiction:
    \begin{align}
        \left|\trace(\Tilde{O}\rho)-\trace(\hat{O}\rho)\right| &= \left|\sum_{i:|w_i|>\epsilon/2} w_i(\bra{\psi^i_S}\rho\ket{\psi^i_S}-\bra{\psi^i}\rho\ket{\psi^i})\right|\\
        &\leq \sum_{i:|w_i|>\epsilon/2} |w_i|\left|\bra{\psi^i_S}\rho\ket{\psi^i_S}-\bra{\psi^i}\rho\ket{\psi^i}\right|\\
        &\leq \frac{3\epsilon^2}{32B} \sum_{i:|w_i|>\epsilon/2} |w_i|\\
        &\leq \frac{3}{16}\epsilon,
    \end{align}
    where the first inequality is by triangle inequality, the second is by our assumption, and the third by Cauchy-Schwarz inequality and the Frobenius-norm bound.

    Since a mistake was declared, then every eigenstate in $\hat{O}$ gets added to $S$. We added at least one eigenstate such that $|\bra{\psi_S}\rho\ket{\psi_S}-\bra{\psi}\rho\ket{\psi}|>3\epsilon^2/32B$, so by the mistake bound in Lemma \ref{lem:mistake bound}, a mistake can happen at most $\mathcal{O}(B^2/\epsilon^4)$ times. To obtain the $m$ parameter in the tomography protocol from Theorem~\ref{thm: tomography protocol}, note that every time a mistake is made we add at most $4B/\epsilon^2$ states to $S$. Combining this with the mistake bound, $|S|\leq \mathcal{O}(B^3/\epsilon^6)$, so we set $m=\lceil\log(\mathcal{O}(B^3/\epsilon^6)+1)\rceil^2 = \mathcal{O}(\log(B/\epsilon)^2)$. Moreover, we emphasize that the Frobenius bound still holds for each $\Tilde{O}=P_S\hat{O}P_S$: using the fact that $P_S^2=P_S$ and $I-P_S$ is a projector,
\begin{equation}
    \tr(A^2)-\tr((P_SA)^2) = \tr(A^2)-\tr(AP_SA) = \tr(A(I-P_S)A) = \tr(((I-P_S)A)^2) \geq 0.
\end{equation}
Thus, $\|P_SA\|_F \leq \|A\|_F$. It can similarly be shown that $\|AP_S\|_F \leq \|A\|_F$. 
In sum,
\begin{equation}
    \|P_S\hat{O}P_S\|_F \leq \|\hat{O}\|_F \leq \|O\|_F \leq B.
\end{equation}
Putting these together, the sample complexity of the tomography protocol predicting the projected observables $\Tilde{O}_1,...,\Tilde{O}_M$ is
    \begin{equation}
        N_{ST}=\mathcal{O}\left(\frac{B^{3/2}\sqrt{\log(1/\delta)}\cdot \log(M/\epsilon\delta)\log(B/\epsilon)}{\epsilon^3}\right),
    \end{equation}
    by Theorem~\ref{thm: tomography protocol}.
    The sample complexity of the threshold search algorithm verifying the output of the tomography protocol is
    \begin{equation}
        N_{TS}=\mathcal{O}\left(\frac{B^{2}\log(M)\log(1/\epsilon\delta)^{5/2}}{\epsilon^4}\right),
    \end{equation}
    by Theorem~\ref{thm:nts}.
    Combining these two bounds yields the final result, which guarantees that for every $\hat{o}_i$ returned by the mechanism, $|\hat{o}_i-\trace(\hat{O}_i\rho)|\leq \epsilon/2$ for all $1\leq i \leq M$ with probability at least $1-\delta$. Finally, this implies that the total error is bounded by $\epsilon$:
    \begin{align}
        |\hat{o}_i-\trace(O_i\rho)| &\leq |\hat{o}_i-\trace(\hat{O}_i\rho)|+|\trace(\hat{O}_i\rho)-\trace(O_i\rho)|\\
        &\leq \epsilon/2 + \sum_{j:|w_j|<\epsilon/2}|w_j| \bra{\psi^j}\rho\ket{\psi^j}\\
        &\leq \frac{\epsilon}{2}+\frac{\epsilon}{2}\sum_{j:|w_j|<\epsilon/2} \bra{\psi^j}\rho\ket{\psi^j}\\
        &\leq \frac{\epsilon}{2}+\frac{\epsilon}{2}\\
        &=\epsilon.
    \end{align}
    Thus, the total sample complexity is given by
    \begin{equation}
        N = N_{ST} + N_{TS} = \mathcal{O}\left(\frac{B^{2}\log(M)\log(1/\epsilon\delta)^{5/2}\log(B/\epsilon)}{\epsilon^4}\right)
    \end{equation}
\end{proof}

\subsection{Extension to low-rank observables} \label{sec:lr ub}

We also consider the subcase of predicting expectation values of low-rank observables.
In this case, we can again achieve a sample complexity upper bound scaling logarithmically in the number of observables $M$, independent of system size.
Moreover, we obtain a strict improvement over the best-known sample complexity for shadow tomography, given in \cite{badescu2020improved}.
This can be done by directly applying Algorithm \ref{alg:bf}, with the slight modification that it is not necessary to do the truncation step in this case.

\begin{theorem}
    For a sequence of adaptively chosen observables $O_1,...,O_M$ with rank at most $R$ and accuracy parameters $\epsilon,\delta\in [0,1]$, there exists an algorithm that uses
    \begin{equation}
        N=\mathcal{O}\left(\frac{R^{2}\log(M)\log(1/\epsilon\delta)^{5/2}\log(R/\epsilon)}{\epsilon^3}\right)
    \end{equation}
    samples of $\rho$ to accurately predict every property:
    \begin{equation}
        |\hat{o}_i-\trace(O_i\rho)|\leq \epsilon \quad \textrm{for all}\ 1\leq i \leq M
    \end{equation}
    with probability at least $1-\delta$.
\end{theorem}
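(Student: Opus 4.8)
The plan is to run Algorithm~\ref{alg:bf} essentially unchanged, dropping only the truncation step: since $O_i$ has rank at most $R$, its eigenvalue decomposition $O_i = \sum_{j=1}^{R} w_j \ketbra{\psi^j}$ already has at most $R$ terms, and with $\norm{O_i}_\infty \le 1$ we obtain the two facts $\tr(O_i^2) = \sum_j w_j^2 \le R$ and, by Cauchy--Schwarz, $\sum_j |w_j| \le \sqrt{R\sum_j w_j^2} \le R$. So in Algorithm~\ref{alg:bf} we take $\hat O_i = O_i$, and on a mistake we add \emph{all} eigenstates $\{\ket{\psi^j}\}_{j\in[R]}$ of $O_i$ to $S$, forming $\tilde O_i = P_S O_i P_S$ as before. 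The student predicts $\hat o_i \approx \tr(\tilde O_i \rho)$ via the tomography protocol of Theorem~\ref{thm: tomography protocol}, and the teacher is the threshold-search routine of Theorem~\ref{thm:nts} together with Corollary~\ref{lem:nts}, each instantiated with the appropriate constant-factor accuracy parameters.

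The heart of the argument is a refined mistake bound. Exactly as in the proof of Theorem~\ref{thm: bfo ub}, when the teacher declares a mistake we have $|\hat o_i - \tr(O_i\rho)| > \tfrac{3}{8}\epsilon$, and combining with the tomography guarantee $|\hat o_i - \tr(\tilde O_i\rho)| \le \tfrac{3}{16}\epsilon$ via the triangle inequality yields $|\tr(\tilde O_i\rho) - \tr(O_i\rho)| > \tfrac{3}{16}\epsilon$. Expanding $\tr(\tilde O_i\rho) - \tr(O_i\rho) = \sum_j w_j\big(\bra{\psi^j_S}\rho\ket{\psi^j_S} - \bra{\psi^j}\rho\ket{\psi^j}\big)$ and using $\sum_j |w_j| \le R$, there must be at least one eigenstate $\ket{\psi}$ of $O_i$ with $|\bra{\psi_S}\rho\ket{\psi_S} - \bra{\psi}\rho\ket{\psi}| > \tfrac{3\epsilon}{16R}$. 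Feeding this gap into the computation of Lemma~\ref{lem:mistake bound} (which depends only on the magnitude of the gap), the fresh orthonormal component added to $S$ satisfies $\bra{\phi_\perp}\rho\ket{\phi_\perp} = \Omega(\epsilon^2/R^2)$, so by Lemma~\ref{lem:kbound} the teacher declares at most $\ell = \mathcal{O}(R^2/\epsilon^2)$ mistakes. Since each mistake adds at most $R$ states, $|S| \le \mathcal{O}(R^3/\epsilon^2)$, so the low-dimensional tomography runs with $m = \lceil\log(\mathcal{O}(R^3/\epsilon^2)+1)\rceil^2 = \mathcal{O}(\log(R/\epsilon)^2)$, and by the projection inequality $\norm{P_S A P_S}_F \le \norm{A}_F$ from the proof of Theorem~\ref{thm: bfo ub}, each $\tilde O_i$ still obeys $\tr(\tilde O_i^2) \le R$.

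It then remains to assemble the sample complexity. Instantiating Theorem~\ref{thm: tomography protocol} with $m = \mathcal{O}(\log(R/\epsilon)^2)$ and shadow-norm bound $\mathcal{O}(R^{3/2})$ gives a student cost $N_{ST} = \tilde{\mathcal{O}}(R^{3/2}\log(M)/\epsilon^3)$, and instantiating the threshold-search bound of Theorem~\ref{thm:nts} via Corollary~\ref{lem:nts} with budget $\ell = \mathcal{O}(R^2/\epsilon^2)$ and Frobenius bound $R$ gives a teacher cost $N_{TS} = \tilde{\mathcal{O}}(R\cdot\sqrt{\ell}\cdot\log(M)/\epsilon^2) = \tilde{\mathcal{O}}(R^2\log(M)/\epsilon^3)$, which dominates. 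A union bound over the (at most two) failure events, together with the fact that with no truncation $\hat O_i = O_i$ so the returned value directly obeys $|\hat o_i - \tr(O_i\rho)| \le \epsilon$ with no extra truncation-error term, finishes the proof, giving total sample complexity $N = N_{ST} + N_{TS} = \mathcal{O}\big(R^2\log(M)\log(1/\epsilon\delta)^{5/2}\log(R/\epsilon)/\epsilon^3\big)$. The main obstacle is precisely the mistake-bound step: one must check that splitting the aggregate deviation $|\tr(\tilde O_i\rho)-\tr(O_i\rho)|$ among the eigenstates costs only the $\sum_j|w_j|\le R$ factor (not a worse dependence coming from the rank or the spectrum), and that adding several eigenstates simultaneously—some of which may already lie in $S$—does no harm, since those contribute $\ket{\psi^j_S}=\ket{\psi^j}$ and thus no new direction, so every mistake still yields at least one fresh orthonormal direction with the claimed weight $\Omega(\epsilon^2/R^2)$.
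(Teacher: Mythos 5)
Your proposal is correct and follows essentially the same route as the paper's proof: drop the truncation step, use $\sum_j |w_j| \le R$ to strengthen the per-eigenstate gap from $\Omega(\epsilon^2/B)$ to $\Omega(\epsilon/R)$, apply Lemma~\ref{lem:mistake bound} to obtain the mistake bound $\ell = \mathcal{O}(R^2/\epsilon^2)$, and assemble $N_{ST}$ and $N_{TS}$ via Theorem~\ref{thm: tomography protocol} and Corollary~\ref{lem:nts} with the same parameter choices. The only cosmetic difference is that you justify the Frobenius bound $\tr(\tilde O_i^2)\le R$ directly through $\norm{P_S A P_S}_F \le \norm{A}_F$, whereas the paper points to rank preservation $\mathrm{rank}(\tilde O_i)\le R$; both observations yield what is needed for the shadow-norm and threshold-search bounds.
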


\begin{proof}
    The analysis is the exact same as that for Theorem \ref{thm: bfo ub}, except that in Equation \eqref{eq:eig lb}, the guarantee is instead there is at least one eigenstate $\ket{\psi}$ in the observable $\hat{O}$ such that
\begin{equation}
    |\bra{\psi_S}\rho\ket{\psi_S}-\bra{\psi}\rho\ket{\psi}|>\frac{3\epsilon}{16R},
\end{equation}
which follows because $\sum_i |w_i| \leq R$. In this case, the new mistake bound is $\mathcal{O}(R^2/\epsilon^2)$.
The new sample complexity of the threshold search algorithm is 
\begin{equation}
    N_{TS}=\mathcal{O}\left(\frac{R^{2}\log(M)\log(1/\epsilon\delta)^{5/2}}{\epsilon^3}\right),
\end{equation}
by Theorem~\ref{thm: tomography protocol}.
Moreover, whenever a mistake is made, we add at most $R$ eigenstates to $S$, so $|S|\leq \mathcal{O}(R^3/\epsilon^2)$.
Moreover, the rank bound is preserved for $\Tilde{O}$, i.e. $\mathrm{rank}(\Tilde{O})=\mathrm{rank}(P_S\hat{O}P_S) \leq R$, due to the following property: for any matrices (with dimensions properly aligned) $A$ and $B$, $\mathrm{rank}(AB) \leq \min\{\mathrm{rank}(A),\mathrm{rank}(B)\}$.
So the sample complexity of the tomography protocol is
\begin{equation}
    N_{ST} = \mathcal{O}\left(\frac{R^{3/2}\sqrt{\log(1/\delta)} \log(M/\epsilon\delta)\log(R/\epsilon)}{\epsilon^3}\right),
\end{equation}
by Theorem~\ref{thm:nts}.
The result follows from combining these bounds:
\begin{equation}
    N = N_{ST} + N_{TS} = \mathcal{O}\left(\frac{R^{2}\log(M)\log(1/\epsilon\delta)^{5/2}\log(R/\epsilon)}{\epsilon^3}\right).
\end{equation}
\end{proof}

\begin{remark} \label{rmk: substitute}
    In order to implement the threshold search and tomography components, one could have also used existing algorithms used in shadow tomography, e.g. the quantum threshold search algorithm \cite{badescu2020improved} and Aaronson's shadow tomography algorithm \cite{aaronson2018shadow,badescu2020improved}. By tracing through the same arguments, we would have also achieved a system-size independent upper bound. However, we found that the scaling was worse, where the sample complexity was $\Tilde{\mathcal{O}}(B^2\log^2(M)/\epsilon^6)$ for bounded-Frobenius-norm observables and $\Tilde{\mathcal{O}}(R^2\log^2(M)/\epsilon^4)$ for low-rank observables.
\end{remark}

\end{document}